\definecolor{darkblue}{rgb}{0.1,0.1,.7}
\definecolor{purple}{rgb}{0.6,0,0.6}
\definecolor{orange}{rgb}{0.9,0.6,0}
\definecolor{llgray}{rgb}{0.9,0.9,1}
\definecolor{dgreen}{rgb}{0,0.5,0}
\definecolor{outsideyellow}{rgb}{0.96,0.86,0.41}
\definecolor{insideyellow}{rgb}{0.99,0.96,0.82}
\definecolor{observerred}{rgb}{0.93,0.27,0.13}
\definecolor{mattergreen}{rgb}{0,0.63,0.29}
\title{\begin{center}
    The gravitational path integral\\ from  an observer's point of view
\end{center}} 
\author[1]{Ahmed I. Abdalla,}
\author[1]{Stefano Antonini,}
\author[1]{Luca V.~Iliesiu,}
\author[2]{Adam Levine}
\def\Hp{\cH_{\text{pert}}}
\def\Hnonp{\cH_{\text{non-pert}}}
\def\Hrelp{\cH_{\text{pert}}^{\text{rel}}}
\def\Hr{\cH_{\text{non-pert}}^{\text{rel}}}
\def\HO{\cH^\text{obs}}
\def\Oo{\mathcal{O}_O}
\def\Hrl{\cH_{\text{non-pert,l}}^{\text{rel}}}
\def\Hrr{\cH_{\text{non-pert,r}}^{\text{rel}}}
\def\HrL{\cH_{\text{non-pert,L}}^{\text{rel}}}
\def\HrR{\cH_{\text{non-pert,R}}^{\text{rel}}}
\def\OR{\cO_O}
\DeclareMathOperator*{\sumint}{%
\mathchoice%
  {\ooalign{$\displaystyle\sum$\cr\hidewidth$\displaystyle\int$\hidewidth\cr}}
  {\ooalign{\raisebox{.14\height}{\scalebox{.7}{$\textstyle\sum$}}\cr\hidewidth$\textstyle\int$\hidewidth\cr}}
  {\ooalign{\raisebox{.2\height}{\scalebox{.6}{$\scriptstyle\sum$}}\cr$\scriptstyle\int$\cr}}
  {\ooalign{\raisebox{.2\height}{\scalebox{.6}{$\scriptstyle\sum$}}\cr$\scriptstyle\int$\cr}}
}
\newcommand{\inlinefig}[2][10]{
    \raisebox{-0.5\totalheight}{\includegraphics[height=#1\fontcharht\font`\B]{#2}}
}
\affiliation[1]{Center for Theoretical Physics and Department of Physics, University of California, Berkeley, California 94720, U.S.A.}
\affiliation[2]{Center for Theoretical Physics, Massachusetts Institute of Technology, \\Cambridge, MA 02139, USA}
\emailAdd{aiabdalla@berkeley.edu}
\emailAdd{santonini@berkeley.edu}
\emailAdd{liliesiu@berkeley.edu}
\emailAdd{arlevine@mit.edu}
\abstract{

One of the fundamental problems in quantum gravity is to describe the experience of a gravitating observer in generic spacetimes. In this paper, we develop a framework 
for describing non-perturbative physics relative to an observer using the gravitational path integral.
We apply our proposal to an observer that lives in a closed universe and one that falls behind a black hole horizon. We find that the Hilbert space that describes the experience of the observer is much larger than the Hilbert space in the absence of an observer. In the case of closed universes, the Hilbert space is not one-dimensional, as calculations in the absence of the observer suggest. Rather, its dimension scales exponentially with $G_N^{-1}$. Similarly, from an observer’s perspective, the dimension of the Hilbert space in a two-sided black hole is increased. We compute various observables probing the experience of a gravitating observer in this Hilbert space. We find that an observer experiences non-trivial physics in the closed universe in contrast to what it would see in a one-dimensional Hilbert space. In the two-sided black hole setting, our proposal implies that non-perturbative corrections to effective field theory for an infalling observer are suppressed until times exponential in the black hole entropy, resolving a recently-raised puzzle in black hole physics. While the framework that we develop is exemplified in the toy-model of JT gravity, 
most of our analysis can be extended to higher dimensions
and, in particular, to generic spacetimes not admitting a conventional holographic description, 
such as cosmological universes or black hole interiors.

}
\date{\today}
\begin{document}

\maketitle

\parskip=3pt

\section{Introduction}
\label{sec:intro}

The description of cosmology and black hole interiors remains one of the most important open problems in quantum gravity. The last few decades have seen significant progress in our understanding of gravitational holography \cite{Susskind:1994vu,tHooft:1993dmi,Maldacena:1997re,Witten:1998qj,Aharony:1999ti}
and the gravitational path integral \cite{gibbons,Hartle:1983ai,Maldacena:2004rf,Lewkowycz:2013nqa,Faulkner:2013ana,Marolf:2020xie,Saad:2018bqo,Saad:2019lba,Saad:2021rcu,Saad:2021uzi,Penington:2019kki,Almheiri:2019qdq,Blommaert:2021fob,Blommaert:2022ucs,Boruch:2023trc,Iliesiu:2021are,Iliesiu:2022kny,Marolf:2022ybi,Colafranceschi:2023moh,Marolf:2024jze,Balasubramanian:2022gmo,Balasubramanian:2022lnw,Sasieta:2022ksu}.
These techniques are especially suited for answering global questions that can be clearly formulated from a boundary perspective. Examples include correlation functions between boundary points and entanglement entropies of subsystems in the dual theory \cite{Ryu2006a,Ryu2006b,Hubeny:2007xt,Engelhardt:2014gca}. On the other hand, it is often unclear how to describe the local experience of an observer that, like us, is part of the gravitational spacetime. This is particularly true for observers falling into black holes \cite{Almheiri:2012rt,Susskind:2012rm,Almheiri:2013hfa,Harlow:2013tf,Stanford:2022fdt,IliLev24,Blommaert:2024ftn} or observers in spacetimes that do not have asymptotic boundaries, for example in cosmology \cite{LevSha22, Maldacena:2004rf,McInnes:2004nx, Cooper:2018cmb,Antonini:2019qkt,Marolf:2020xie,VanRaamsdonk:2020tlr,VanRaamsdonk:2021qgv,Antonini:2022blk,Antonini:2022ptt,Antonini:2022fna,Sahu:2023fbx,Chakravarty:2024bna,Antonini:2024bbm,Betzios:2024oli,VanRaamsdonk:2024sdp,Antonini:2024mci,Sahu:2024ccg,Chandrasekaran:2022cip,Strominger:2001pn,Coleman:2021nor,dsds,Araujo-Regado:2022gvw,McFadden:2009fg}. This is an obstacle that we need to overcome if we are to achieve a description of quantum gravity that explains our own experience.

In recent years, progress has been made towards understanding the experience of observers in gravitating spacetimes 
\cite{Anninos:2011af, Anninos:2011zn,Leutheusser:2022bgi,Witten:2021unn,Chandrasekaran:2022eqq,Chandrasekaran:2022cip,Witten:2023qsv,Witten:2023xze,Gesteau:2023hbq,Kolchmeyer:2024fly,Chen:2024rpx,Kudler-Flam:2024psh,DeVuyst:2024pop} 
but these approaches have not incorporated non-perturbative quantum gravity effects into their descriptions of local physics experienced by the observer. 
These effects, which are captured by the gravitational path integral and holography, are critical to calculating various properties of the quantum gravity Hilbert space and the observables defined on it \cite{Penington:2019kki,Almheiri:2019qdq,Saad:2019lba,Stanford:2022fdt,IliLev24,Blommaert:2024ftn,Boruch:2024kvv,Balasubramanian:2022gmo,Balasubramanian:2022lnw,Antonini:2023hdh}. It is plausible that they also play a role in describing physics experienced by an observer, but no concrete proposal exists for a non-perturbative treatment of this issue. The goal of this paper is to present such a proposal. We use the gravitational path integral to determine properties of the non-perturbative quantum gravity Hilbert space needed to describe the experience of a gravitating observer and to study the observables that such an observer could measure.

Our proposal is motivated by several shortcomings of the current approach to the gravitational path integral when describing the experience of an observer. An important example is the physics of closed universes. The inner products between different closed universe states computed by the gravitational path integral have large fluctuations due to non-perturbative effects \cite{Marolf:2020xie,Antonini:2023hdh,Antonini:2024mci,Usatyuk:2024mzs, Usatyuk:2024isz}, signaling a drastic breakdown of effective field theory. Notably, even inner products between states in which an observer exists and states in which it does not are $O(1)$, a fact signaled by large statistical fluctuations in these inner-products. Therefore, the observer itself is not well-defined at all times.  This is troublesome because we need a well-defined notion of an observer not subject to large fluctuations in order to describe physics from its point of view. A further drastic consequence of these large fluctuations is that the quantum gravity Hilbert space for a closed universe is one-dimensional \cite{Marolf:2020xie,Usatyuk:2024mzs,McNamara:2020uza}. This result is problematic if our goal is to describe the experience of an observer living in a closed universe, because a local observer in a closed universe should experience non-trivial physics. In fact, experimental data do not rule out the possibility that we are living in a closed universe \cite{DiValentino:2019qzk,Handley:2019tkm,Planck:2018vyg}, and our local experience should be unaffected by the global properties of the universe. Therefore, an observer should have access to a much larger Hilbert space than the one-dimensional Hilbert space suggested by the gravitational path integral arguments. 

A non-perturbative description of an observer's experience is also necessary to address the long-standing issue regarding the fate of an observer falling into a black hole \cite{Almheiri:2012rt,Susskind:2012rm,Almheiri:2013hfa,Harlow:2013tf,Stanford:2022fdt,IliLev24,Blommaert:2024ftn}. This is particularly important for very old black holes, for which non-perturbative effects are expected to give large corrections to effective field theory.
Various proposals have attempted to quantify how effective field theory fails \cite{Stanford:2022fdt,IliLev24,Blommaert:2024ftn}, but none of them develop the framework that understands the problem in the Hilbert space relevant for describing the experience of the infalling observer.

We propose to compute inner-products and their moments using the gravitational path integral while imposing that the observer of interest always exists between the bra and the ket. Consider an initial and a final slice on which the boundary conditions associated with the bra and the ket are defined. In our proposal, only geometries for which there is a foliation continuously interpolating between these two slices in which the observer is present on every slice contribute to the gravitational path integral. We shall motivate our proposal by showing that, among other benefits, it excludes non-physical contributions to transition probabilities. This prescription allows us to compute transition probabilities in the presence of the observer as well as compute observables dressed to the observer's worldline. In practice, our proposal can be achieved by requiring that, when computing moments of a given overlap, the observer's worldline always connects a given bra to the corresponding ket, see e.g.~Figure \ref{fig:newbc}, whereas other matter worldlines are allowed to connect arbitrary bras and kets. This prescription should be contrasted with the usual rules for the global gravitational path integral, in which worldlines associated with all operator insertions, including those creating and annihilating an observer, can connect between arbitrary bras and kets.\footnote{Notice that this prescription for the gravitational path integral, which we will review in Section \ref{sec:review}, is precisely what causes large fluctuations in the inner product between closed universe states, consequently causing the notion of an observer to be ill-defined.}

\begin{figure}[h]
    \centering
    \includegraphics[width=\linewidth]{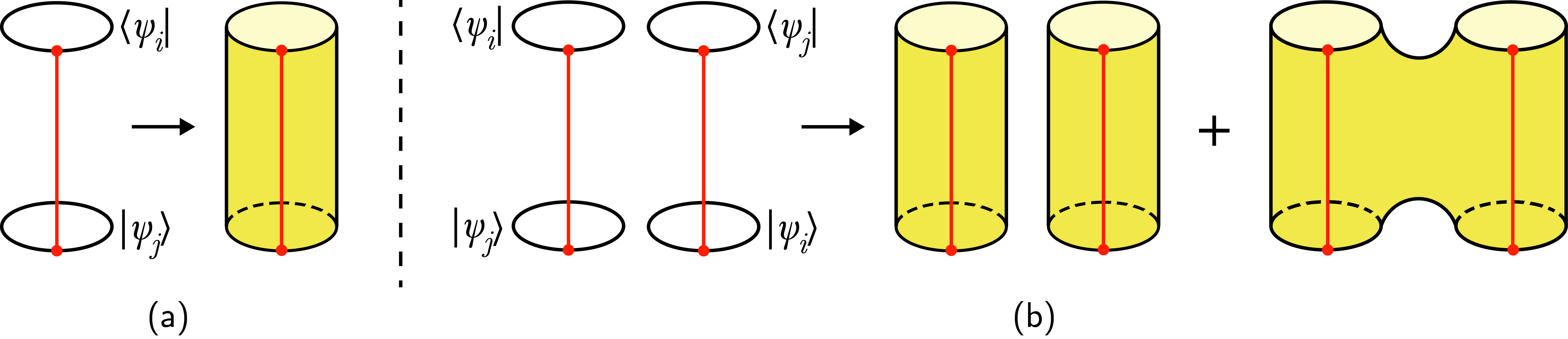}
    \caption{Our proposal for the gravitational path integral in the presence of an observer. The worldline of the observer (depicted in red) must connect a bra and the corresponding ket when computing moments of an overlap $\langle\psi_i|\psi_j\rangle^n$. This rule can be seen as an additional boundary condition to impose when summing over geometries. (a) The overlap $\langle\psi_i|\psi_j\rangle$ between two closed universe states in pure JT gravity is unaffected by our new rules. (b) The square of an overlap $|\langle\psi_i|\psi_j\rangle|^2$ between closed universe states. The gravitational path integral sums over all possible geometries satisfying the boundary conditions at the asymptotic boundaries and at the worldline of the observer. We depict here the leading disconnected contribution and the first subleading term, a genus-zero connected geometry.}
    \label{fig:newbc}
\end{figure}

We emphasize that our proposal is relevant when asking questions related to the experience of a given gravitating observer. We are building Hilbert spaces that describe physics \textit{with respect to} some subsystem. This justifies a different prescription for the observer's worldvolume compared to other operator insertions. We decide
which worldvolumes to treat in this way based on the quantities we are interested in computing. For example, 
to capture the experience of two bulk observers in 2D gravity, both of their worldlines would need to 
connect bras to their corresponding kets. In contrast, if we are interested in global questions that do not depend on the presence of a bulk observer---e.g. the entropy of Hawking radiation of a subsystem of the dual holographic theory, boundary correlation functions, etc.---
we should treat all worldlines and matter fields in the same way and use the usual rules for the gravitational path integral in the absence of an observer. 
This implies that most results obtained in recent years using the non-perturbative gravitational path integral remain unchanged.

As we will see, our proposal succeeds in describing non-trivial physics in a closed universe and in giving new, sensible answers to questions about the experience of observers infalling into black hole horizons. Furthermore, it shifts the attention away from the global properties of the spacetime, rather focusing on the non-perturbative physics experienced by a local, gravitating observer. For these reasons, we believe it represents a new, promising framework for the description of quantum cosmology and more generally, for the description of spacetimes without a conventional holographic description.

\subsection{Outline and summary of results}

In this paper, we focus our attention on two-dimensional Jackiw-Teitelboim (JT) gravity with a negative cosmological constant \cite{Jackiw:1984je,Teitelboim:1983ux,Saad:2018bqo}. This theory is known to be dual to a random matrix integral \cite{Saad:2019lba}. We will comment on the extension of our results to de Sitter-JT gravity and higher dimensions in Section \ref{sec:discussion}. 
This paper is structured as follows.

In Section \ref{sec:review}, we review the relevant properties of AdS-JT gravity (with and without matter) and the gravitational path integral in the absence of an observer. The results we obtain in this framework offer a ``global picture'' for the theory since we do not isolate the subsystem of the observer in our gravitational theory. 
In particular, we discuss the dimensions and properties of the perturbative and non-perturbative quantum gravitational Hilbert spaces in a closed universe and in a two-sided black hole. Readers familiar with JT gravity can skip to Section \ref{sec:reviewclosed}, where we give a new derivation, based on a resolvent calculation, of the fact that the non-perturbative Hilbert space of closed universes $\Hnonp$ is one-dimensional \cite{Marolf:2020xie,Usatyuk:2024mzs,McNamara:2020uza}.

In Section \ref{sec:setup}, we explain our proposal for the modified rules of the gravitational path integral in the presence of an observer.  In our setup, states belong to the tensor product space $\cH^{\text{rel}}\otimes\HO$, where states in $\HO$ describe the observer and states in the relational Hilbert space $\cH^{\text{rel}}$ describe the states on the spatial slices of the gravitational theory with which the observer interacts. We then show how the statistics of overlaps for such states are modified with the new rules in the gravitational path integral, and consequently define the perturbative and non-perturbative relational Hilbert spaces, $\Hrelp$ and $\Hr$,  which are specific instances of $\cH^{\text{rel}}$ with different choices of inner-product.
%
We discuss various possible bases of states for these Hilbert spaces, and explain how to define states labeled by the time read by the observer's clock, with respect to which observables can be dressed.

In Section \ref{sec:hilbert}, we compute the dimension of the Hilbert space $\Hr$ for a closed universe and a two-sided black hole,\footnote{There, we will assume that the gravitational theory is coupled to additional matter sources.} using a resolvent calculation \cite{Penington:2019kki,Hsin:2020mfa,Boruch:2023trc,Boruch:2024kvv,Balasubramanian:2022gmo,Balasubramanian:2022lnw,Antonini:2023hdh} 
and the new rules for the gravitational path integral. We find that the closed universe Hilbert space is non-trivial and has dimension
\be 
\dim\left(\Hr\right)=d^2 \quad \text{ as opposed to } \quad \dim\left(\Hnonp\right)=1
\label{eq:dimension-CU}
\ee 
where $d$ is determined by the density of states in JT gravity and is exponential in $G_N^{-1}$. 
Similarly, the Hilbert space of the two-sided black hole is also larger in the presence of an observer, 
\be 
\dim\left(\Hr\right)=d^4\quad  \text{ as opposed to } \quad \dim\left(\Hnonp\right)=d^2.
\label{eq:dimension-BH}
\ee 
Note that the entropy of the observer is bounded above by $\log(\dim(\HO))$, and, in particular, this limits how much information they can learn about the properties of $\Hr$.\footnote{To quantify the amount of information that the observer can learn about $\Hr$ one can use the maximum possible von Neumann entropy of the observer among all states within $\Hr\otimes \HO$. This is given by $\min(\log\dim(\HO),\ \log\dim (\Hr) )$ with $\dim\left(\Hr\right)$ given in \eqref{eq:dimension-CU} for the closed universe and \eqref{eq:dimension-BH} for the two-sided black hole.   Nevertheless, choosing different states in $\Hr \otimes \HO$ yields different results for the observables that the observer can measure, and, for this reason, it is valuable to be able to understand the full relational Hilbert space that is entangled with the observer.}

To further characterize the properties of $\Hr$, we show (details can be found in Appendix \ref{app:factorisation}) that in both cases, $\Hr$ factorises into a tensor product of two Hilbert spaces that, in two-dimensional gravity, capture the degrees of freedom on the left and right side of the observer (and two additional Hilbert spaces associated with the left and right boundaries in the two-sided black hole cases). 

In Section \ref{sec:nullstates}, we show that the inner product defined by our new rules is positive-semi-definite. Similar to the results of \cite{Penington:2019kki,Almheiri:2019qdq,IliLev24}, we find that the set of perturbatively-defined states is an overcomplete basis for $\Hr$. This implies the presence of null states under the inner product defined by the non-perturbative gravitational path integral. We discuss these null states and comment on the non-isometric map \cite{Akers:2021fut,Akers:2022qdl,Antonini:2024yif} between $\Hrelp$ and $\Hr$.

Sections \ref{sec:hilbert} and \ref{sec:nullstates} make manifest a puzzling feature of quantum gravity: by isolating a subsystem -- the observer -- within the gravitational theory, we consequently restrict the set of geometries that are included in the gravitational path integral, and actually \emph{increase} the dimension of the resulting Hilbert space. This phenomenon is, of course, at the heart of the Page curve calculations \cite{Almheiri:2019qdq, Penington:2019kki, Penington:2019npb, Hartman:2020khs}, but we are seeing it arise again in the context of the observer's Hilbert space, $\Hr$. Importantly, even though one might be tempted to think that the observer's Hilbert space can be isolated via a linear projection onto geometries in which an observer is present, Section \ref{sec:hilbert} shows that this cannot be the case; in fact, the Hilbert space from an observer's point of view is larger whereas a projection always reduces the Hilbert space dimension. Furthermore, these results imply that observables that act linearly (state-independently) on $\Hr$ can only be represented \emph{non-linearly} (state-dependently) on $\Hnonp$ \cite{Papadodimas:2012aq,Papadodimas:2013jku,Akers:2022qdl,Antonini:2024yif}. This is in agreement with arguments that the infalling observer's experience should not be fully describable by linear operators on $\Hnonp$ \cite{MarPol15,Papadodimas:2012aq,Papadodimas:2013jku}.

In Section \ref{sec:observables}, we study several examples of observables defined on $\Hr$ that are relevant to probe the experience of a gravitating observer. In the closed universe setup, we study correlation functions along the observer's worldline and find that they are non-trivial. In particular, away from the cosmological singularities, they do not receive large non-perturbative corrections and, therefore, agree with the perturbative result. This is in contrast with the global perspective, in which the one-dimensional Hilbert space for the closed universe implies that all observables are trivially multiples of the identity, and the perturbative result receives large non-perturbative corrections. We then study several interesting properties of the perturbative correlators and point out that the structure of their divergences could be used to probe the cosmological singularity.
In the two-sided black hole case, we study two observables relevant for describing the physics of an infalling observer: the length of the Einstein-Rosen bridge \cite{Stanford:2022fdt,IliLev24,Chen:2024rpx, StaSus14} and the center-of-mass collision energy of an observer with a shockwave past the horizon  \cite{IliLev24}. Both of these observables were studied in detail in \cite{IliLev24} in the global picture---namely, as observables on $\Hnonp$. For the length, we find that the result obtained in the global picture is substantially unmodified: non-perturbative corrections become important at times $t=O(e^{S_0})$, and cause the length to plateau. For the center-of-mass collision energy, we also find that non-perturbative corrections to the perturbative result are only important at times $t=O(e^{S_0})$. This is in contrast with recent puzzling results in the global picture \cite{IliLev24} (reviewed in Section \ref{sec:casimir}), where non-perturbative effects become important at linear times $t=O(S_0)$, suggesting a breakdown of effective field theory roughly at the Page time. Our result resolves this puzzle and signals that perturbative effective field theory remains valid up to exponential times, assuming our proposal is correct.

In Section \ref{sec:discussion}, we discuss our results, their generalization to higher dimensions, and other future directions. We discuss the relationship between our proposal and other frameworks for the description of physics with respect to a gravitating observer.
We also consider what lessons can be drawn from our results for the definition of a holographic dual theory able to capture the observer's experience and suggest that such a theory should live on the worldline of the observer. Finally, we explain how our results can be extended to describe the experience of an observer in de Sitter spacetime. 

Additional technical details relevant to the main sections of this paper can be found in the Appendices.

\section{Review: non-perturbative effects in quantum gravity}
\label{sec:review}
Non-perturbative corrections to the gravitational path integral have recently proven essential to the description of many quantum gravity phenomena. Described colloquially as ``wormhole corrections" after the work of \cite{Maldacena:2004rf}, these effects have been used to explain the Page curve \cite{Penington:2019kki,Almheiri:2019qdq},  factorization \cite{Saad:2021uzi,Blommaert:2021fob,Iliesiu:2021are} and factorisation \cite{Harlow:2020fpj,Boruch:2024kvv}, and black hole microstate counting \cite{Penington:2019kki,Almheiri:2019qdq,Iliesiu:2022kny,Balasubramanian:2022gmo,Balasubramanian:2022lnw,Boruch:2023trc}. In this Section, we provide a review of how wormholes can be used to build a fully non-perturbative Hilbert space for JT gravity with and without matter and how to compute the dimension of the Hilbert space in a closed universe. As explained in \cite{Marolf:2020xie,Usatyuk:2024mzs,McNamara:2020uza}, we find that the Hilbert space of closed universes is far too small to adequately capture states of interest.

This Section adopts and reviews the global perspective of previous literature. For our proposed modifications to these rules in the presence of an observer, see Section \ref{sec:setup}.

\subsection{The perturbative Hilbert space}
\label{sec:pert_h}

We take our theory to be JT gravity \cite{Jackiw:1984je,Teitelboim:1983ux,Mertens:2022irh} with a negative cosmological constant and matter \cite{Penington:2023dql,Kolchmeyer:2023gwa,Iliesiu:2024cnh}. This is a 2D theory of a metric $g$ coupled to a dilaton $\Phi$ and matter fields $\phi$ on a manifold $M$, possibly with boundary. In Euclidean signature the action takes the form 
\begin{equation}
    I[g,\Phi,\phi]=I_{EH}[g]+I_{\Phi}[g,\Phi]+I_{m}[g,\phi].
    \label{eq:action}
\end{equation}
where
\begin{equation}
    I_{EH}[g]=-S_0\chi(M) \qquad\text{ and }\qquad I_{\Phi}[g,\Phi]=-\frac{1}{2}\left(\int_M \Phi(R+2)+I_{\Phi,\partial M}[h,\Phi_b]\right),
\end{equation}
where $\chi(M)$
is the Euler characteristic,
$h$ the restriction of $g$ to $\partial M$, and $\Phi_b$ the value of $\Phi$ on $\partial M$. The boundary dilaton action depends on the theory in question. The action of the matter fields $I_{m}[g,\phi]$ is left unspecified except to demand that matter does not directly couple to the dilaton. We build up the Hilbert space associated with this action by considering each term separately. First is $I_{EH}$, the purely gravitational part of the action. This term is a topological invariant. For now, we take $S_0$ to be a tunable parameter and consider the limit $S_0\to\infty$. In more general theories of gravity, $S_0$ would be replaced by a dimensionless parameter scaling with $G_N^{-1}$, the inverse of Newton's constant. The resultant quantum mechanics can be made exact to all orders in loop corrections \cite{Yang:2018gdb,Kitaev:2018wpr} but will have topological fluctuations suppressed. We call this the \textit{perturbative} limit of the theory. In the \textit{non-perturbative} limit, we take $S_0$ to be a large but finite parameter. 

The leading contribution to the gravitational path integral with action \eqref{eq:action} has the topology of a disk. Let us consider the dilaton action with respect to this background, starting with the two-sided black hole. We set 
\begin{equation}
       I_{\Phi,\partial M}[h,\Phi_b]=2\int_{\partial M}\Phi_b (K-1),
\end{equation}
choosing Dirichlet boundary conditions for the dilaton and Neumann boundary conditions for the extrinsic curvature. We then have
\begin{equation}
    ds^2=h_{ij}dx^idx^j=\frac{1}{\epsilon^2}d\tau^2 \qquad\text{and}\qquad \Phi|_{\partial M}=\frac{\Phi_b}{\epsilon}
\end{equation} 
with $\tau\sim \tau+\beta$, where $\beta$ is the renormalized asymptotic boundary length, which remains finite as we take $\epsilon\to 0$. We also choose units such that $\Phi_b=1$.
What remains is a two-dimensional phase space. Semiclassically, these parameters are $\ell$, the geodesic length of a Cauchy slice connecting the left and right boundaries, and its conjugate momentum. Time evolution on the left and right boundaries are generated by the same ADM Hamiltonian
\begin{equation}\label{eq:constraint}
    H_L=H_R=-\frac{1}{2}\partial_{\ell}^2+2e^{-\ell}.
\end{equation}
The constraint $H_0=H_L-H_R=0$ is a consequence of the gauged $SL(2,\R)$ symmetry.  The resultant Hilbert space is familiar
\begin{equation}\label{eq:H0}
    \cH_0=\left\{\ket{\ell};\ \ell\in\R\text{ and }\braket{\ell}{\ell'}=\delta(\ell-\ell')\right\}=L^2(\R)
\end{equation}
and is manifestly the space of gauge invariants $H_0\ket{\ell}=0$.

We can build generic wavefunctions for any state $\ket{\psi}\in\cH_0$ by taking overlaps
\begin{equation}
    \psi(\ell)=\braket{\ell}{\psi} \qquad\text{and}\qquad \braket{\psi}{\psi}=\int d\ell\ |\psi(\ell)|^2.
\end{equation}
Let us introduce two other useful spanning sets of $\cH_0$. The states of either set can be considered Hartle-Hawking wavefunctions as they represent semiclassical spacetimes with specific boundary conditions. First is the fixed energy basis\footnote{
    We use $s$ and $E$ interchangeably throughout. The reader is encouraged to consider $s$ as the momentum associated with $E$, but in all cases, we simply have $s=\sqrt{2E}$.
}
\begin{equation}
 \cH_0=\left\{\ket{E}=\ket{s\equiv\sqrt{2E}};\ E\in\R^+\text{ and }\braket{E}{E'}=\frac{\delta(s-s')}{\rho_0(s)}\right\}   
\end{equation} 
defined by wavefunctions 
\begin{equation}\label{eqn:lengthwf}
    \Tilde{\varphi}_s(\ell)\equiv \braket{\ell}{E}=4K_{2is}(4e^{-\ell/2})
\end{equation}
that diagonalize $H_L$ and $H_R$. Here $K_\alpha(x)$ is the modified Bessel function of the second-kind. The relevant density of states is 
\begin{equation}
     \rho_0(s)=\frac{s}{2\pi^2}\sinh(2\pi s) \qquad\text{or}\qquad \rho_0(E)=\frac{\rho_0(s=\sqrt{2E})}{\sqrt{2E}}.
     \label{eq:rhonot}
\end{equation}
Of course, the true density of states is $\rho=e^{S_0}\rho_0$. Throughout this paper, we extract factors of $e^{S_0}$ from the density of states and include them explicitly in our calculations to clarify which terms are relevant at each order in the genus expansion. 

We also consider the set of states $\{\ket{\beta};\ \beta\in\R^+\}$ defined through fixed asymptotic length boundary conditions in the gravitational path integral 
\begin{equation}\label{eq:wvfn_beta_l}
    \varphi_\beta(\ell)\equiv e^{-S_0/2} \braket{\ell}{\beta}\equiv\int ds\rho_0(s)\ e^{-\frac{\beta s^2}{2}}\Tilde{\varphi}_s(\ell)=e^{-S_0}\times\inlinefig{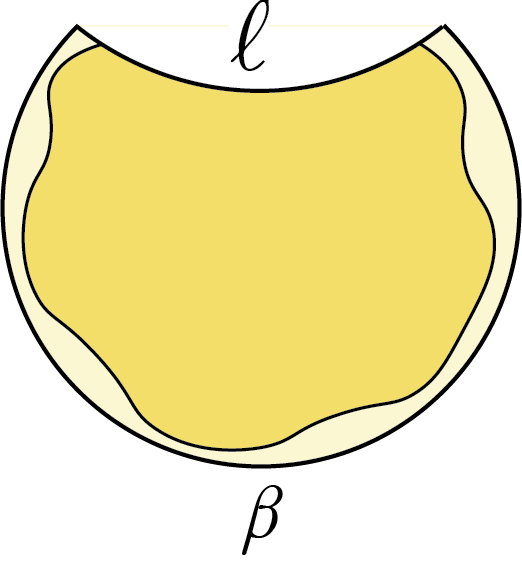}.
\end{equation}
The states in this basis are not orthogonal to each other. Their overlaps compute partition functions
\begin{equation}\label{eq:z}
    Z(\beta=\beta_1+\beta_2)=\braket{\beta_1}{\beta_2}=e^{S_0}\int ds\rho_0(s)\ e^{-\frac{(\beta_1+\beta_2)s^2}{2}} = \inlinefig{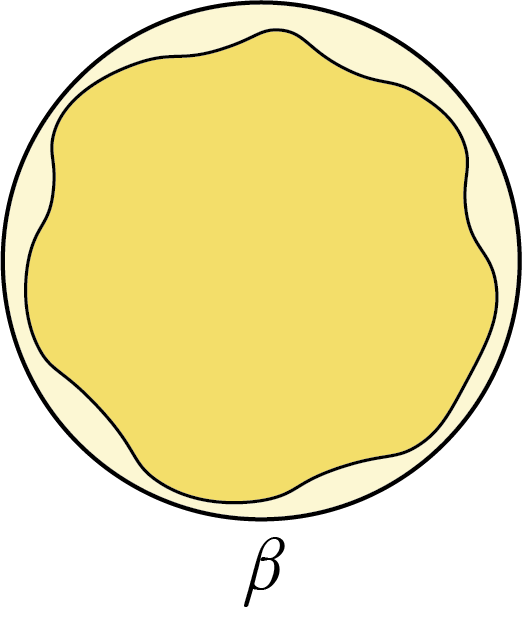}.
\end{equation}
for a spacetime of fixed asymptotic length $\beta=\beta_1+\beta_2$. Note the need to manually reintroduce factors of $e^{S_0}$ because of the convention for $\rho$ established above. The diagrammatic notation in equations \eqref{eq:wvfn_beta_l} and \eqref{eq:z} reflects the fact that the $\{\ket{\ell}\}$ and $\{\ket{\beta}\}$ bases are related by bulk evolution through the path integral. As we will discuss further in Section \ref{sec:setup}, since the Wheeler-DeWitt evolution which governs bulk dynamics is pure gauge \cite{Isham:1992ms}, this amounts to a change of basis rather than actual dynamics. For future reference we also define $Z_n(\beta_1,\cdots,\beta_n)$ as the generalization of $Z$ to a geometry with $n$-boundaries. At the perturbative level this is just the product of $n$-disk partition functions 
\be\begin{aligned}\label{eq:zn_pert}
    Z_{n}(\beta_1,\cdots,\beta_n) &= \inlinefig{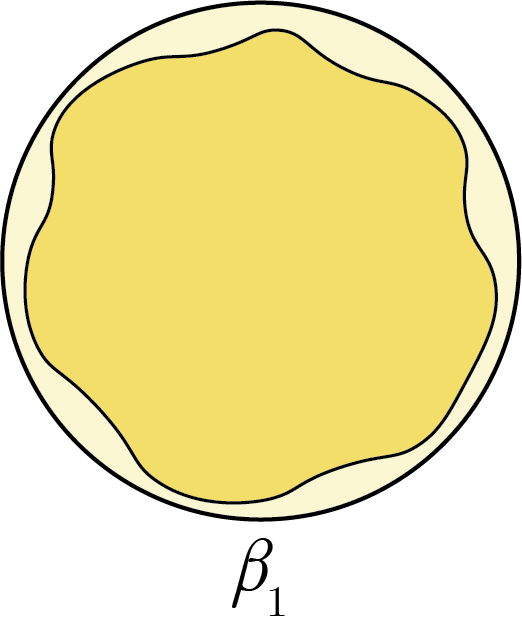}\times\cdots\times\inlinefig{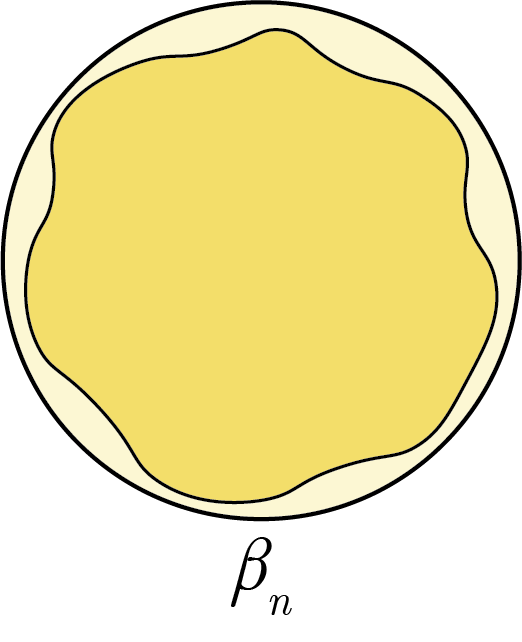} \quad , \\
    &=\prod_{i=1}^n Z(\beta_i).
\end{aligned}\ee

Let us now consider the Hilbert space of closed universes. We work with spacetimes whose spatial slices are compact \cite{Hartle:1983ai,Marolf:2020xie,Usatyuk:2024mzs}. Setting
\begin{equation}
    I_{\Phi,\partial M}[h,\Phi_b]=0
\end{equation}
gives closed Big Bang-Big Crunch cosmologies in Lorentzian signature. In Euclidean signature, this leads to wormholes connecting two asymptotic AdS boundaries, as shown in Figure \ref{fig:closeduniverse}. Again, the phase space is two-dimensional. The Hilbert space is spanned by states of the form
\begin{equation}
    \cH_0=\left\{\ket{b};\ b\in\R^+\text{ and }\braket{b}{b'}\equiv\inlinefig[6]{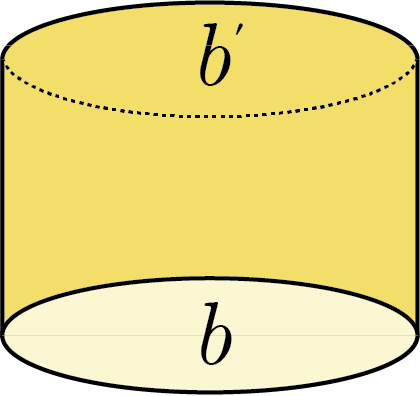}=\frac{1}{b}\delta(b-b')\right\}
\end{equation}
where $b$ is the length of the maximal slice of the closed universe in Lorentzian signature (or, equivalently, of the minimal slice of the wormhole in Euclidean signature). This slice is depicted in dark blue in Figure \ref{fig:closeduniverse}.  The diagrammatic notation captures that the inner product for closed universes is defined through the path integral. We can also define states of fixed asymptotic boundary length in Euclidean signature through the path integral 
\begin{equation}\label{eq:b_beta}
    \varphi_\beta(b)=\braket{b}{\beta}=\inlinefig[6]{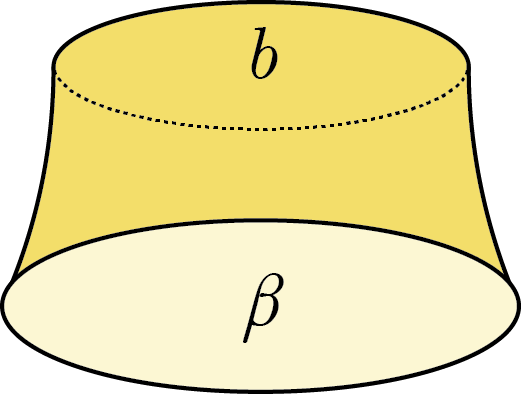}.
\end{equation}
In the absence of matter, such wavefunctions are dominated by vanishingly small values of $b$. 

\begin{figure}
    \centering
    \includegraphics[height=17\fontcharht\font`\B]{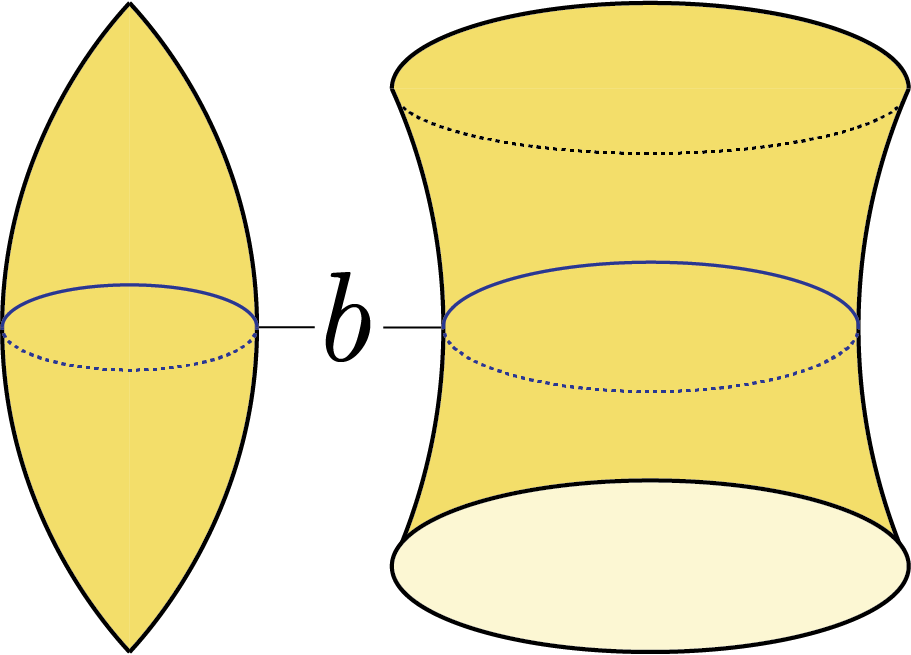}
    \caption{(Left) A Lorentzian Big Bang-Big Crunch closed universe with a Cauchy slice of maximal length $b$ (depicted in blue). (Right) Its Euclidean counterpart has two AdS asymptotic boundaries connected by a Euclidean wormhole. The geodesic slice of minimal length $b$ is identical to the maximal slice in Lorentzian signature. This length is vanishingly small on-shell when the wormhole is not supported by matter and is finite in the presence of matter.}
    \label{fig:closeduniverse}
\end{figure}

So far, we have analyzed pure JT gravity. Now we add matter starting with the two-sided black hole as the background. Semiclassically, we quantize the matter fields on a Cauchy slice of fixed length $\ell$. These fields come in representations of the background spacetime isometries, which in this case are given by the group $SL(2,\R)$. For now, we will suppress internal matter degrees of freedom such that it suffices to consider
\begin{equation}
    \cH_{m}=\C\oplus\bigoplus_{\Delta}\mathcal{D}^+_\Delta.
\end{equation}
Here $\C$ refers to the trivial representation of $SL(2,\R)$ and each $\mathcal{D}^+_\Delta$ corresponds to the block descendant from a ``primary" operator $\mathcal{O}_\Delta$ of scaling dimension $\Delta$ living on the boundary.\footnote{
    A representation theorist would call $\Delta$ a weight and $\mathcal{D}^+_\Delta$ the set of roots descendant from this weight. To a quantum mechanic, $\mathcal{D}^+_\Delta$ is simply the set of states with some total angular moment $\Delta$. Today, we are geometers.
} We ignore the trivial representation throughout this work. Let $H$, $P$, and $B$ be the generators of global time translations, spatial translations, and boosts in $SL(2,\R)$.\footnote{
    In terms of angular momentum operators, we have
    \begin{equation}
        H=iL_z \qquad P=\frac{L_--L_+}{2} \qquad B=\frac{iL_--iL_+}{2}
    \end{equation}
} The Casimir determining the representation is
\begin{equation}
    C_{\Delta} =H^2-P^2-B^2=\Delta(\Delta-1).
    \label{eq:reviewcasimir}
\end{equation}
Alternatively, $\Delta$ is the smallest $H$ eigenvalue in $\mathcal{D}^+_\Delta$. We choose the lowest weight representation so that the total energy is bounded from below.

Let us label different states in a representation $\mathcal{D}^+_\Delta$ by an integer $m$ corresponding to their $H$ eigenvalue. Since the dilaton does not interact with the matter sector, the new Hilbert space is just a tensor product 
\begin{equation}
\label{eq:matter_constraint}
    \Hp=\cH_{m}\otimes\cH_0 =\{\ket{\Delta;\ell,m};\ \braket{\Delta;\ell,m}{\Delta';\ell',m'}=\delta_{\Delta,\Delta'}\delta(\ell-\ell')\delta_{m,m'}\}.
\end{equation}
The constraints are also changed by the presence of matter. For one thing, it is no longer true that the difference of boundary ADM Hamiltonians annihilates physical states. Instead \cite{Iliesiu:2024cnh}
\begin{equation}
\begin{aligned}
    H_L &= \frac{1}{2}(-i\partial_\ell+\frac{1}{2}P)^2+(H-B)e^{-\ell/2}+2e^{-\ell},\\
    H_R &= \frac{1}{2}(-i\partial_\ell -\frac{1}{2}P)^2+(H+B)e^{-\ell/2}+2e^{-\ell},
\end{aligned}
\end{equation}
and the new $SL(2,\R)$ constraints depend on the action of the symmetries on the matter fields. One way to implement these constraints is to construct $\Hp$ as the space of co-invariants drawn from the tensor product of $\cH_m$ and the unconstrained pure JT Hilbert space \cite{Penington:2023dql,Held:2024rmg}. 

It turns out that we can rewrite \cite{Kolchmeyer:2023gwa} \begin{equation}
    \Hp = L^2(\R^+)_0\oplus\bigoplus_{\Delta}L^2(\R^+)_L\otimes L^2(\R^+)_R.    
\end{equation}
The first $L^2(\R^+)_0$ is the vacuum sector of the theory where we have no matter insertions. It is precisely $\cH_0$ when states are labeled by their energies. We'll be ignoring this sector so that we can focus our attention on the quantum matter. The left and right Hamiltonians defined above give a complete set of mutually commuting operators in a given representation
\begin{equation}
    [C,H_L]=[C,H_R]=[H_L,H_R]=0.
\end{equation} 
The direct sum is over representations where $L^2(\R^+)_L$ and $L^2(\R^+)_R$ refer to the energy on the left and right boundaries, respectively. In analogy with the vacuum sector, we label these states as
\begin{equation}\begin{split}\label{eq:hpert_matter}
    \Hp=\bigg\{\ket{\Delta;E_L,E_R}&=\ket{\Delta;s_L\equiv\sqrt{2E_L},s_R\equiv\sqrt{2E_R}}; \\
    &\braket{\Delta;E_L,E_R}{\Delta';E_L',E_R'}=\delta_{\Delta,\Delta'}\frac{\delta(E_L-E_L')}{\rho_0(E_L)}\frac{\delta(E_R-E_R')}{\rho_0(E_R)}\gamma_\Delta(E_L,E_R)\bigg\}
\end{split}\end{equation}
where the Hilbert space is defined up to a choice of normalization $\gamma_\Delta$ for our states. We choose the normalization such that overlaps calculate physically meaningful observables
\begin{equation}\label{eq:gamma}
    \gamma_\Delta(E_L,E_R)=\prod_{\pm,\pm}\frac{\Gamma(\Delta\pm i\sqrt{2E_L}\pm i\sqrt{2E_R})}{2^{2\Delta-1}\Gamma(2\Delta)}=\mel{E_L}{e^{-\Delta\ell}}{E_R}.
\end{equation}
In the first equality, we take a product over the four choices of $\pm$, and in the second, we calculate the overlap in $\cH_0$. We define a $\Delta$-sector specific wavefunction $\Tilde{\varphi}^\Delta_{E_L,E_R}(\ell,m)$ to act as the change-of-basis metric between our two representations of $\Hp$
\begin{equation}\label{eq:matter_basis_change}
    \ket{\Delta;E_L,E_R}=\sumint_{m}d\ell\  \Tilde{\varphi}^\Delta_{E_L,E_R}(\ell,m)\ket{\Delta;\ell,m}.
\end{equation} 

As in the case without matter, we can define a non-orthogonal set of Laplace conjugate states
\begin{equation}
    \ket{\Delta;\beta_L,\beta_R}=e^{S_0/2}\int ds_Lds_R\rho_0(s_L)\rho_0(s_R)\ e^{-\frac{\beta_Ls^2_L}{2}-\frac{\beta_Rs_R^2}{2}}\ket{\Delta;s_L,s_R}.
\end{equation}
With our choice of normalization, overlaps compute thermal two-point functions \cite{Mertens:2022irh}. That is 
\be\begin{aligned}\label{eq:pert_ip}
    & \braket{\Delta_i;\beta_{L}^{(1)},\beta_{R}^{(1)}}{\Delta_j;\beta_{L}^{(2)},\beta_{R}^{(2)}} = \delta_{\Delta_i,\Delta_j}\inlinefig{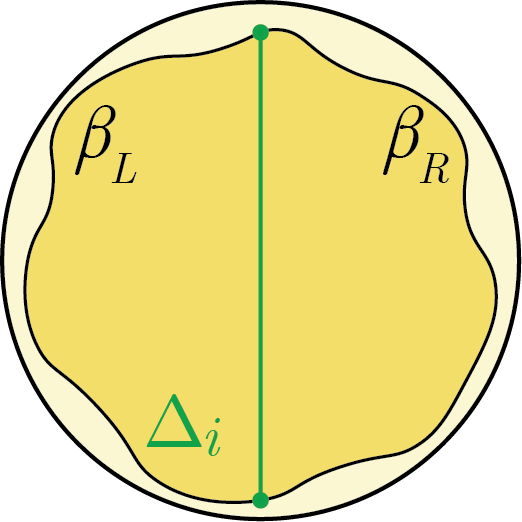}\\
    &=\delta_{\Delta_i,\Delta_j} e^{S_0}\int ds_Lds_R\rho_0(s_L)\rho_0(s_R)e^{-\frac{\beta_L s_L^2}{2}}e^{-\frac{\beta_Rs_R^2}{2}}\gamma_{\Delta}(s_L,s_R).
\end{aligned}\ee
where $\beta_L=\beta_{L}^{(1)}+\beta_{L}^{(2)}$ and $\beta_R=\beta_{R}^{(1)}+\beta_{R}^{(2)}$.
Alternatively, we can define 
\begin{equation}
    \braket{\Delta_i;\beta_{L}^{(1)},\beta_{R}^{(1)}}{\Delta_j;\beta_{L}^{(2)},\beta_{R}^{(2)}}=\delta_{\Delta_i,\Delta_j}\mel{\mathcal{O}_\Delta}{e^{-\beta_LH_L-\beta_RH_R}}{\mathcal{O}_\Delta}
\end{equation}
where $\ket{\mathcal{O}_\Delta}=\ket{\Delta;\beta_L=\beta_R=0}$ is prepared by acting on the maximally entangled state with a primary $\mathcal{O}_\Delta$.\footnote{
    This state is not normalizable unless we restrict ourselves to a finite energy window. States of this form will be useful when defining inner products in matrix theory, and we revisit them in Section \ref{sec:nullstates}.
} We can generalize the $n$-boundary partition function to the inclusion of matter. This is a multi-trace operator in the sense of \cite{Saad:2019lba} 
\be
\begin{aligned}\label{eq:zn_matter_pert}
    &Z_{n}^{i_1,i_1',\cdots,i_n,i_n'}(\beta_{1,L},\beta_{1,R},\cdots,\beta_{n,L},\beta_{n,R}) = \delta_{\Delta_{i_1},\Delta_{i_1'}}\inlinefig{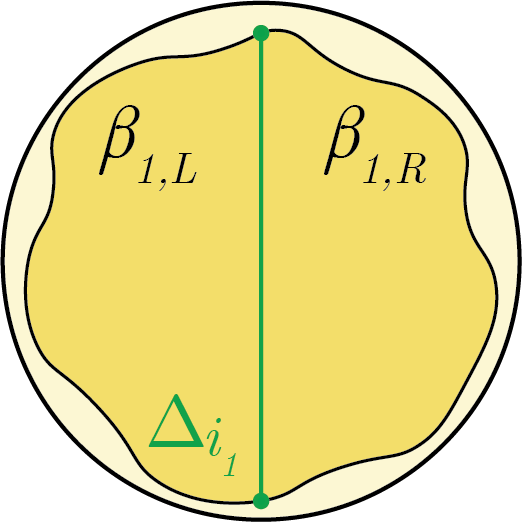}\times\cdots\times\delta_{\Delta_{i_n},\Delta_{i_n'}}\inlinefig{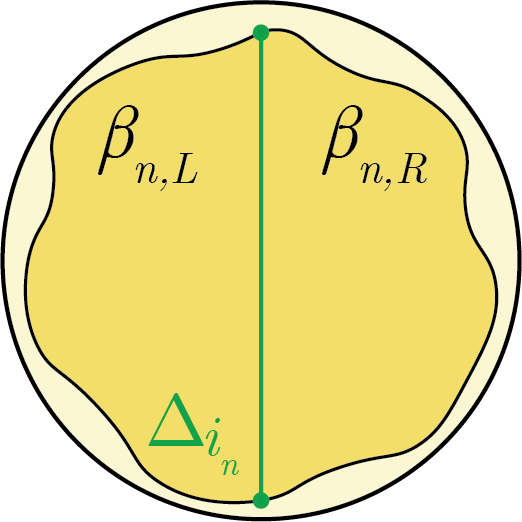},\\
    &=\prod_{k=1}^n \left[ \delta_{\Delta_{i_k},\Delta_{i_k'}} e^{S_0}\int ds_Lds_R\rho_0(s_L)\rho_0(s_R)e^{-\frac{\beta_{k,L} s_L^2}{2}}e^{-\frac{\beta_{k,R}s_R^2}{2}}\gamma_{\Delta_{i_k}}(s_L,s_R)\right].
\end{aligned}
\ee
The superscripts indicate the matter flavor indices $i_1,i_1',\cdots,i_n,i_n'$ and $\beta_{k,L}$, $\beta_{k,R}$ are the total left and right boundary lengths for each overlap. Here, we associate the flavor indices $i_k$ with different scaling dimensions $\Delta_{i_k}$ of the matter operators, but they could alternatively be associated with an internal symmetry of the matter sector. We choose matter fields whose one-point function vanishes so that we are forced to contract matter indices in all diagrams. Again, at the perturbative level, this is just the product of single-trace operators.

The analysis proceeds similarly for closed universes. We start by including matter insertions in the preparation of the state of the closed universe. Like the two-sided black hole, the new Hilbert space is a tensor product 
\begin{equation}
    \Hp=\cH_m\otimes\cH_0=\left\{\ket{\Delta;\beta};\ \braket{\Delta;\beta}{\Delta';\beta'}=\delta_{\Delta,\Delta'}\inlinefig{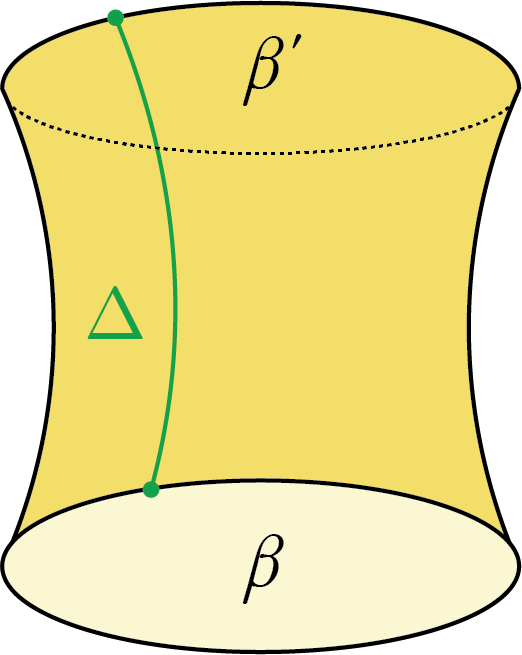}\right\}
\end{equation}
where, again, the inner product is defined through the path integral. Closed universe states can be labeled by a matter scaling dimension $\Delta$ and the asymptotic length of the boundary $\beta$. 
These states are normalized such that their overlaps compute the double-trace operator
\be\label{eq:z_closed}
    \braket{\Delta;\beta}{\Delta;\beta} =(\Tr e^{-\beta H}\mathcal{O}_\Delta )^2.
\ee
where $H$ generates asymptotic boundary time evolution. Unlike pure JT gravity, the Euclidean wormholes associated with closed universes are stabilized in the presence of matter, and wavefunctions are dominated by finite values of $b$ \cite{Usatyuk:2024mzs}. We discuss how these states can be represented on more general Cauchy slices (other than asymptotic boundaries) in Section \ref{sec:setup}.

We can explicitly compute these overlaps by slicing the path integrals open along the matter insertion. With fixed energy boundary conditions, a geodesic worldline ending on the insertion of a dimension $\Delta$ matter operator contributes a factor $e^{-\Delta\ell/2}\Tilde{\varphi}_s(\ell)$ to the path integral. 
We consider
\be\begin{aligned}
    \braket{\Delta;\beta'}{\Delta;\beta} &= \int d\ell e^{-\Delta\ell}\ \inlinefig{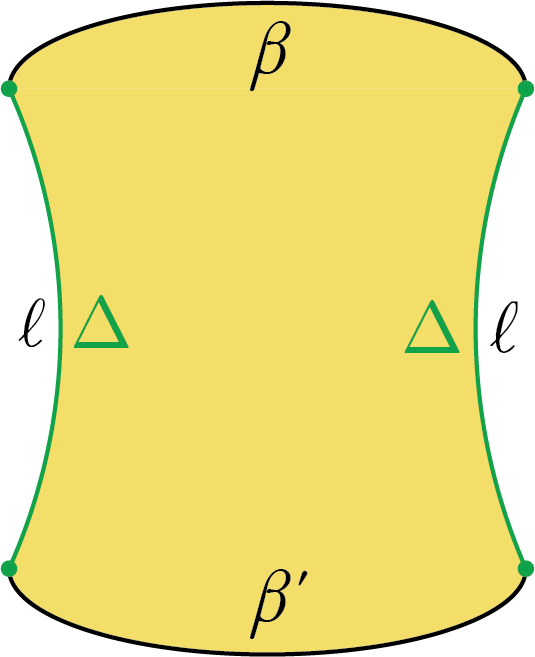} \\
    &= \int ds\rho_0(s)d\ell e^{-\Delta\ell}\  \Tilde{\varphi}_s(\ell)e^{-\beta s^2/2}\Tilde{\varphi}_s(\ell)e^{-\beta' s^2/2},
\end{aligned}\ee
where we have treated the exponential factors as a modification to the $\ell$-measure. In the first line, we cut open the path integral of equation \eqref{eq:z_closed} along the matter geodesic, and, in the second line, we reinterpreted the resulting diagram as an integral. We could alternatively depict the integral as a trace over the energy to emphasize how the boundary conditions form a closed cycle. This rewriting is conducive to the dual matrix integral \cite{Saad:2019lba,JafKol22} where the parameter $s$ is replaced by a random matrix.  We discuss the semiclassical interpretation of the matrix model below. 

\subsection{Non-perturbative effects}\label{sec:non-pert_h}

The perturbative Hilbert spaces we have discussed so far are infinite-dimensional. To see how this might change in the non-perturbative case, we consider an analogy inspired by \cite{Akers:2022qdl}. Suppose we are given a set of unit vectors $\{\ket{v_i}\}_{i=1}^K$ randomly pulled from an inner product space $\Hnonp$. The span of these vectors determines a new Hilbert space $\cH(K)$ 
with inner product inherited from $\Hnonp$. We are tasked with uncovering the size of $\Hnonp$ by conducting experiments in $\cH(K)$. If $\text{dim}(\Hnonp)\gg1$ then any two vectors are very likely orthogonal. That is 
\begin{equation}
    \overline{\braket{v_i}{v_j}}=\delta_{ij}.
\end{equation}
The overline $\overline{\cdot}$ denotes expectation values taken over many realizations of $\cH(K)$. We can quantify how clearly this expectation captures the underlying vector space statistics by considering higher-order moments. The second moment is
\begin{equation}
    \overline{\abs{\braket{v_i}{v_j}}^2}=\delta_{ij}+O\left(\frac{1}{\text{dim}(\cH(K))}\right)
\end{equation}
which means that the variance roughly computes the size of the Hilbert space
\begin{equation}\label{eq:dimH_estimate}
    \sigma^2=\overline{\abs{\braket{v_i}{v_j}}^2}-\left|\overline{\braket{v_i}{v_j}}\right|^2=O\left(\frac{1}{\text{dim}(\cH(K))}\right).
\end{equation}
This implies that we should not always expect to find $\text{dim}(\cH(K))=K$. Instead, $\text{dim}(\cH(K))$ saturates as $K$ approaches $\dim(\Hnonp)$ and the likelihood of encountering linear dependence between two vectors approaches $O(1)$. 

Non-perturbative quantum gravity produces a situation quite analogous to the toy model above. Let us focus our attention on an arbitrary but finite energy window. Quantum field theory on a fixed background teaches us to expect infinite-dimensional Hilbert spaces. This is the perturbative quantum gravity limit introduced in the previous section. Non-perturbative quantum gravity confounds this expectation by assigning a large but finite entropy $S_0$ to regions of spacetime. A basis in the perturbative Hilbert space must become overcomplete in the finite, non-perturbative Hilbert space.\footnote{
    We make a technical distinction between $\cH$ and $\Hp$. Since states in either space can be determined by analogous boundary conditions, we employ a slight abuse of notation and use identical labels for the states in either space, but the inner products on these spaces are distinct, so they are not equivalent as Hilbert spaces. In reality, there is a non-isometric map $V:\Hp\to\Hnonp$ that lets us elevate a subset $\cH\subset\Hp$ to a subspace $V(\cH)\subset\Hnonp$. This Section studies how the kernel of this map grows as a function of $K$.
} We have not yet developed the technology to compute exact overlaps in the non-perturbative Hilbert space of quantum gravity $\Hnonp$.\footnote{
    A notable exception is AdS/CFT, where overlaps in quantum gravity are given by well-understood overlaps in a dual CFT.
} But in the specific case of JT gravity, non-perturbative wormhole effects can be captured by the random matrix integral of \cite{Saad:2019lba}. Accordingly, wormhole corrections to overlaps computed using the gravitational path integral capture the statistical fluctuations of the inner product, viewed as a random variable drawn from the matrix ensemble \cite{Marolf:2020xie,Marolf:2024jze,Saad:2019lba,Saad:2018bqo,Saad:2021rcu,Saad:2021uzi,Stanford:2019vob}.

Let us consider some examples of wormhole corrections in the two-sided black hole case. Non-perturbative effects modify the inner product:\footnote{
One might hope that these effects are small enough that the states in the set $\{\ket{\ell}\}$ are still all linearly independent in $\Hnonp$. We will soon see that this is not true. The states span the full Hilbert space, but now the set is overcomplete. We explicitly discuss linear dependence and the emergence of null states in Section \ref{sec:nullstates}.
}
\be\begin{aligned}
    \overline{\braket{\ell}{\ell'}} &= \delta(\ell-\ell')+ \inlinefig[6]{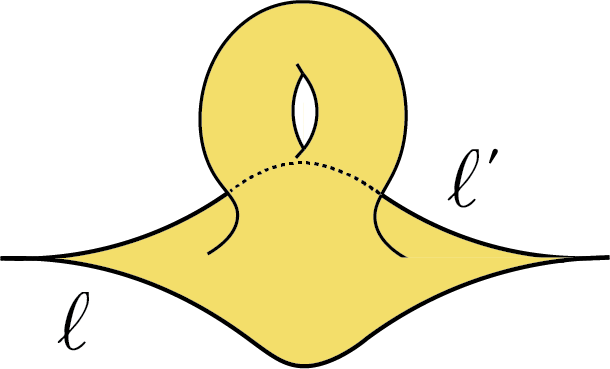} + O(e^{-4S_0}), \\
    \overline{\braket{b}{b'}} &= \frac{1}{b}\delta(b-b')+ \inlinefig[6]{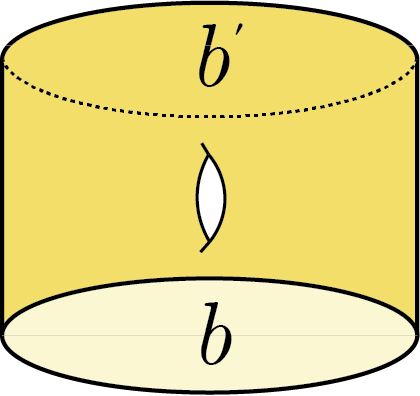} + O(e^{-4S_0})
\end{aligned}\ee
Moving forward, we exclusively use the overline $\overline{\cdot}$ to denote expectation values computed using the full gravitational path integral. We can also consider how non-perturbative effects modify $Z_n$. While the leading order term is still the product of $n$-disks given in equation \eqref{eq:zn_pert}, we also pick up connected terms like
\begin{equation}\label{eq:zn_nonpert}
    \overline{Z_n(\beta_1,\cdots,\beta_n)} \supset e^{(2-n)S_0}\left(\inlinefig{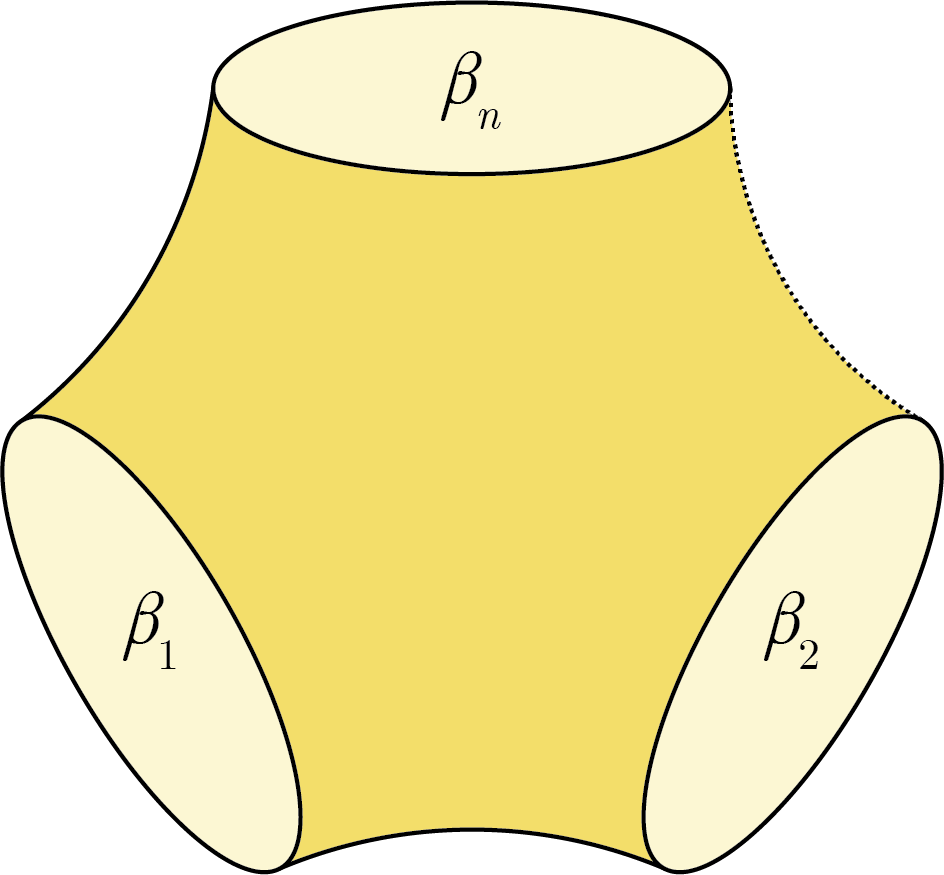}\right)
\end{equation}
where $\supset$ is used to denote summands in the genus expansion. These new terms are especially important after the inclusion of matter. The leading perturbative term in $Z_{n}^{i_1,i_1',\cdots,i_n,i_n'}$ appears with a product of $\delta$-functions. If $i_k\neq i_k'$ in equation \eqref{eq:zn_matter_pert}, then the leading bulk geometry must connect the $k$-th boundary to a disjoint boundary. If we contract the indices cyclically and $i_k\neq i_k'$ $\forall k$, then the leading contribution to the non-perturbative calculation is a fully connected geometry 
\be
\begin{aligned}\label{eq:zn_matter_nonpert}
    \overline{Z_{n}^{i_1,i_2,i_2,\cdots,i_n,i_1}(\beta_{1,L},\beta_{1,R},\cdots,\beta_{n,L},\beta_{n,R})} &\supset e^{(2-n)S_0}\left(\inlinefig{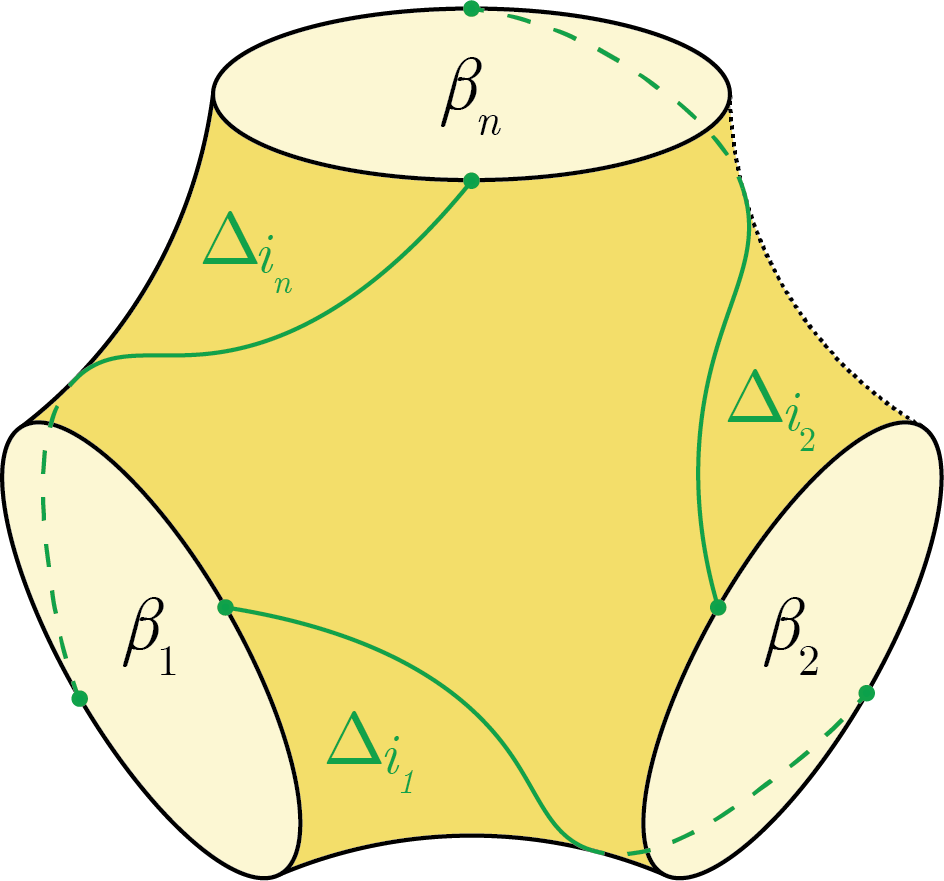}\right), \\
    &= e^{2S_0}\int ds_Lds_R\rho_0(s_L)\rho_0(s_R)\ \prod_{k=1}^n y_k(s_L,s_R) 
\end{aligned}
\ee
where
\begin{equation}
    y_k(s_L,s_R)=e^{-S_0}e^{-\frac{\beta_{k,L} s_L^2}{2}}e^{-\frac{\beta_{k,R} s_R^2}{2}}\gamma_{\Delta_{i_k}}(s_L,s_R).
\end{equation}

Moving forward, we'll refer to this path integral as the pinwheel geometry \cite{Boruch:2024kvv}. It is computed by noting that the geometry is made of two connected components separated by matter worldlines. These components are glued along the matter lines by the two energy integrals. The $y_k$ factors carry the contribution from the $k$-th boundary to the path integral. This consists of a factor of $e^{-S_0}$ needed to compute the Euler characteristic, factors of $e^{-\frac{\beta s^2}{2}}=e^{-\beta E}$ which specify the boundary length on either end of the matter insertion, and the normalization factor $\gamma_\Delta$ introduced in equation \eqref{eq:gamma}.
The pinwheel is dominant for a cyclic index contraction as long as matter indices on the same boundary differ (i.e., $i_k\neq i'_{k}$ $\forall k$). Otherwise, the leading contribution is the disconnected diagram given in equation \eqref{eq:zn_matter_pert}. In the perturbative limit $S_0\to\infty$, all of these corrections vanish, and we recoup equations \eqref{eq:H0}, \eqref{eq:zn_pert}, and \eqref{eq:zn_matter_pert}.

Equation \eqref{eq:dimH_estimate} gives us a way to estimate the size of the Hilbert space, but we can do better. Considering contributions from higher-order moments allows us to find an exact answer. This can be done systematically by introducing the resolvent \cite{Penington:2019kki,Boruch:2024kvv,Balasubramanian:2022gmo} 
\be
\begin{aligned}\label{eq:r}
    R_{ij}(\lambda) &\equiv \left(\frac{1}{\lambda \mathbbm{1}-M}\right)_{ij}, \\
    &= \frac{\delta_{ij}}{\lambda}+\frac{1}{\lambda}\sum_{n=1}^\infty \frac{(M^n)_{ij}}{\lambda^n},
\end{aligned}
\ee
where $M$ is the Gram matrix
\begin{equation}
    M_{ij}\equiv\braket{v_i}{v_j}
\end{equation}
and $\lambda\in\C$. The number of non-zero eigenvalues of the Gram matrix counts the dimension of the Hilbert space spanned by the set $\{\ket{v_i}\}$, $i=1,...,K$. Roughly speaking, an eigenvalue $z$ of $M$ manifests as a non-analyticity on $R_{ij}$ with multiplicity given by the residue $\text{Res}_{\lambda=z}(\sum_{i} R_{ii}(\lambda))$. Integrating $R$ on a contour which excludes $\lambda=0$ and no other eigenvalues counts the dimension of $\cH(K)=\text{Span}(\{\ket{v_i}\})$. If we also compute powers of the Gram matrix using the gravitational path integral, we get an expectation value for $\text{dim}(\cH(K))$ as a subset of the non-perturbative Hilbert space $\cH(K)\subset\Hnonp$. We review the analytic structure of the resolvent and how to arrive at this result from a replica trick in Appendix \ref{app:res_replica}. In summary, we have
\begin{equation}\label{eq:dimH_from_r}
    \overline{\text{dim}(\cH(K))}=\oint_C\frac{d\lambda}{2\pi i} R(\lambda)
\end{equation}
where the contour $C=C_0\sqcup C_\infty$ is defined in Figure \ref{fig:contour} and $R=\sum_i \overline{R_{ii}}$ is the trace of the resolvent.\footnote{
    Based on our earlier convention, we should really be referring to $\overline{R(\lambda)}$. We hope the reader will allow us this conceit to slightly unburden the notation moving forward.
} 

In Appendix \ref{app:reviewBH}, we use equation \eqref{eq:dimH_from_r} to find the size of a $K$-dimensional subspace of the perturbative Hilbert space of two-sided black holes under the non-perturbative inner product. In the double-scaling limit $K\to\infty$, $e^{2S_0}\to\infty$ with $K/e^{2S_0}=O(1)$, we find 
\be\begin{aligned}
    \overline{\dim(\Hnonp(K))}=\begin{cases}
        K &\qquad K<  d^2, \\
        d^2 &\qquad K>d^2,
    \end{cases}
\end{aligned}\ee
where 
\begin{equation}
    d\equiv e^{S_0}\int dE\rho_0(E)
    \label{eq:d}
\end{equation}
is given by an integral over an arbitrary but finite energy window. We reintroduced here the energy variable $E=s^2/2$, with the relationship between $\rho_0(E)$ and $\rho_0(s)$ given in equation \eqref{eq:rhonot}. Since this result is basis-independent, we conclude that the non-perturbative Hilbert space for an arbitrary but finite energy window has dimension $\dim(\Hnonp)=d^2$.

\begin{figure}
    \centering
    \includegraphics[height=20\fontcharht\font`\B]{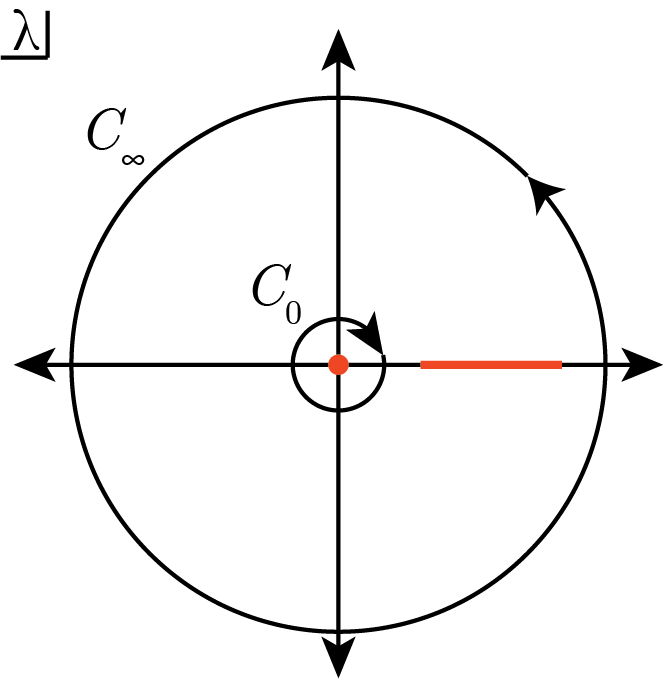}
    \caption{The integration contour needed to calculate the Hilbert space dimension consists of the large defining contour of the resolvent $C_\infty$ and a small contour $C_0$ used to exclude the contributions from zero eigenvalues of the Gram matrix.}
    \label{fig:contour}
\end{figure}

\subsection{The Hilbert space of closed universes}
\label{sec:reviewclosed}

Our goal in this Section is to derive the dimension of the non-perturbative Hilbert space of closed universes. For simplicity, we consider a fixed scaling dimension $\Delta$ and asymptotic length $\beta$. States are only differentiated by an internal flavor index $i$ 
\begin{equation}
    \Hnonp(K)=\text{Span} \left\{\ket{q_i};\ i\in\{1,\cdots, K\}\right\}.
\end{equation}
We establish these conventions strictly for computational convenience.\footnote{
    For example, one could alternatively define
    \begin{equation}
        \Hnonp(K)=\left\{\ket{q_i}=\ket{\Delta_i;\beta};\ i\in\{1,\cdots k\}\text{ and }\Delta_i\approx\Delta\right\}
    \end{equation}
    for similar constants $\Delta$ and $\beta$ or consider states labeled by different values of $\beta$. These sets are linearly independent from each other in the perturbative Hilbert space $\Hp$, but span the same space in $\Hnonp$. 
} We will show that this choice has no effect on our final result by proving that the $\{\ket{q_i}\}$ states are overcomplete in $\Hnonp$ if we take $K$ sufficiently large \cite{Penington:2019kki,Almheiri:2019qdq, Balasubramanian:2022gmo,Balasubramanian:2022lnw,Boruch:2023trc}.

The non-perturbative inner product is given by the topological expansion 
\begin{equation}
    \overline{\braket{q_i}{q_j}}=\inlinefig{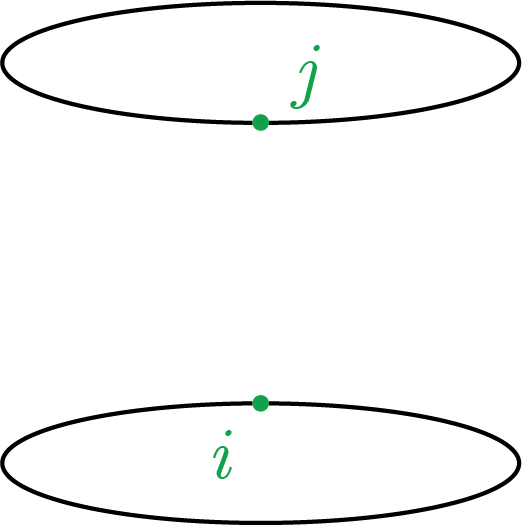}=\delta_{ij}\left(\inlinefig{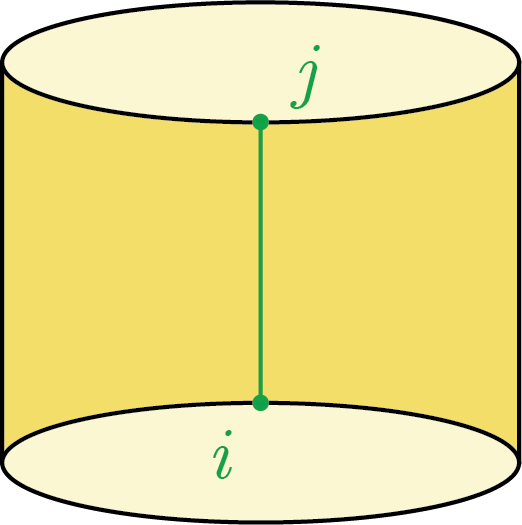}+O(e^{-2S_0})\right)
    \label{eq:overlapclosedreview}
\end{equation}
Note how the leading geometry appears at $O(1)$ when expanding in powers of $e^{S_0}$. The Kronecker delta ensures that lines that connect to each other carry the same flavor index.
The second moment of this overlap looks like 
\be\begin{aligned}\label{eq:closed_square}
    &\overline{|\braket{q_i}{q_j}|^2} = \\
    &\inlinefig{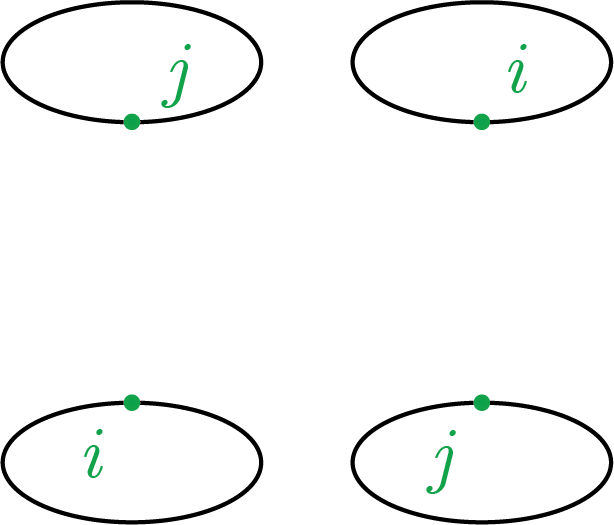} = \delta_{ij}\left(\inlinefig{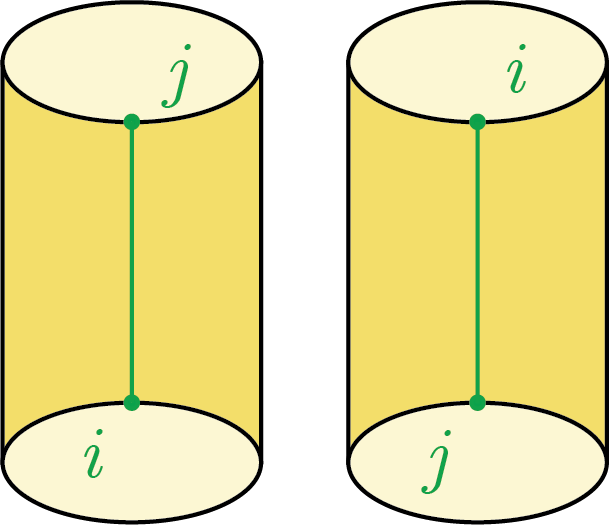}+\inlinefig{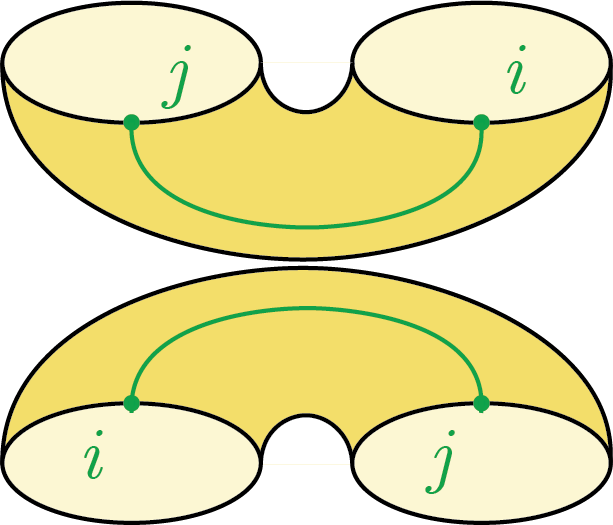}\right) + \inlinefig{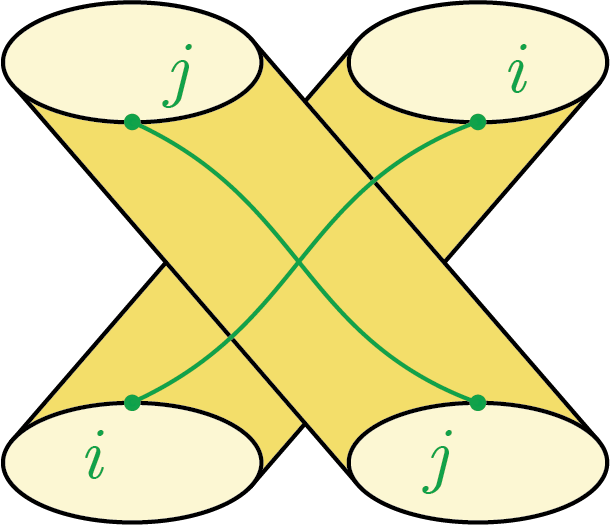} + O(e^{-2S_0}).
\end{aligned}\ee
 Equations \eqref{eq:overlapclosedreview} and \eqref{eq:closed_square} imply that the variance of the overlap is also $O(1)$ in powers of $e^{S_0}$. By equation \eqref{eq:variance}, we can then estimate $\dim(\Hnonp)=O(1)$. Like with the two-sided black hole, we see that the non-perturbative inner product greatly reduces the dimension of the Hilbert space. However, in this case, it seems as though there are very few states. This could not be true for the two-sided black hole because we could produce physically inequivalent states by evolving with the boundary ADM Hamiltonians $H_L$ and $H_R$. We have no such option here, and our result implies that trivial bulk ``evolution" generated by the Wheeler-DeWitt Hamiltonian is enough to span the Hilbert space. In other words, all bulk degrees of freedom are gauge.

To move past order-of-magnitude estimates, we need to include contributions from all moments as prescribed by equation \eqref{eq:dimH_from_r}. As a first step, let us compute the resolvent $R_{ij}$, defined in equation \eqref{eq:r}, using the gravitational path integral. For closed universes, each power of the Gram matrix $M$ necessitates the inclusion of two disconnected boundaries. For a given power $(M^n)_{ij}$, matter indices are assigned cyclically and summed over for all but two adjacent boundaries\footnote{\label{footnote:notation}
    Here, we explain the notation used in the diagrammatic expansion. Blue and green lines carry an index. Blue lines are ``bare propagators." Green lines denote bulk matter. The intersection of any two lines comes with a Kronecker delta, which imposes that they both carry the same index. Internal indices are summed over. In particular, we sum over the index in all loops. Blue lines not contracted with the insertion of a resolvent come with a factor of $1/\lambda$.
}
\be\begin{aligned}
    \inlinefig[5]{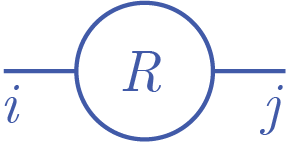} &= \inlinefig[5]{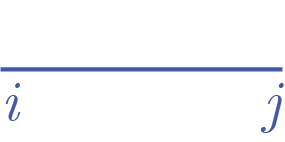}+\inlinefig[5]{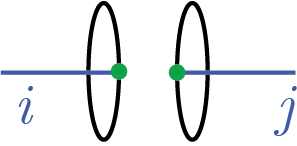}+\inlinefig[5]{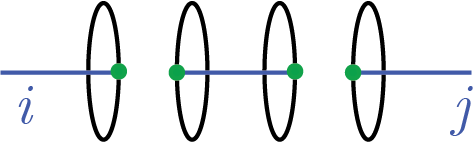}+\cdots \\
    &= \inlinefig[5]{Figures/App_A/dij300ppi.png}+\inlinefig[5]{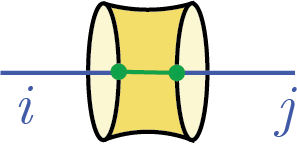}+\left(\substack{
        \inlinefig[8]{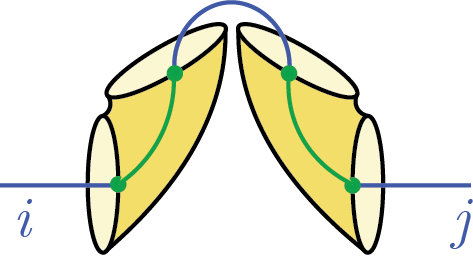} \\
        + \\
        \inlinefig[8]{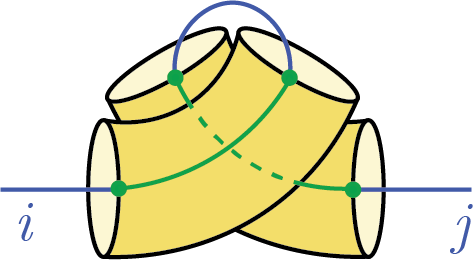} \\
        + \\
        \inlinefig[8]{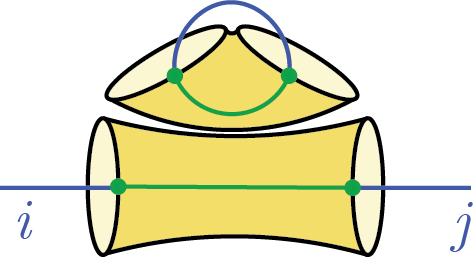}
    }\right)+\cdots
\end{aligned}\ee
The leading geometries are not fully connected. In the black hole case (c.f. Appendix \ref{app:reviewBH}), the pinwheel geometry allowed us to contract like indices at the cost of a higher Euler characteristic. Here, each matter insertion lives on a distinct boundary, meaning that we can form index loops without ever needing to connect more than two boundaries. We drop terms which are exponentially suppressed in $S_0$ 
so that the resolvent is fully calculable in powers of 
\be\begin{aligned}
     Z_1^{\text{closed}} &= \int ds\rho_0(s)\ e^{-\beta s^2}\gamma_{\Delta}(s,s)
     &= \inlinefig{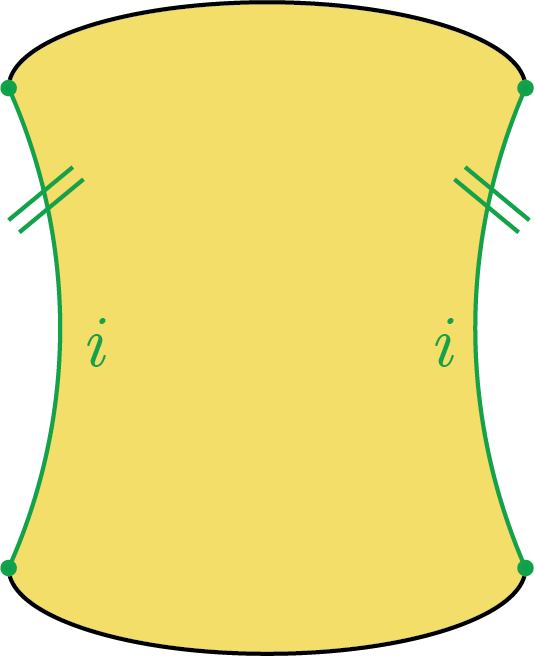}
\end{aligned}\ee
Note that while $Z_1^{\text{closed}}$ is topologically identical to $\overline{Z_2}$ introduced in equation \ref{eq:zn_matter_nonpert}, it has half as many matter insertions. 

In the double scaling limit $e^{2S_0}\to\infty$, $K\to\infty$ with $K/e^{2S_0}=O(1)$, at leading order in $K$ we get a single summand from each power $\Tr M^n$ where $n$ copies of $Z_1^{\text{closed}}$ are used to connect like indices
\begin{equation}
    \inlinefig[5]{Figures/App_A/rij300ppi.png} = \inlinefig[5]{Figures/App_A/dij300ppi.png}+\inlinefig[5]{Figures/Closed_Resolvent/closed_mij1300ppi.png}+\inlinefig[8]{Figures/Closed_Resolvent/closed_m2ij3300ppi.png}+\cdots
\end{equation}
After taking the trace on both sides, in equation form we have
\be\begin{aligned}
    \lambda R(\lambda) &= K+\sum_{n=1}^\infty\left(\frac{K Z_1^{\text{closed}}}{\lambda}\right)^n, \\
    &= K + \frac{KZ_1^{\text{closed}}}{\lambda - KZ_1^{\text{closed}}}.
\end{aligned}\ee
The resolvent has a single non-zero pole at $\lambda=KZ_1^{\text{closed}}$. Plugging into equation \eqref{eq:dimH_from_r} gives precisely $\dim(\Hnonp(K))=1$.

Before concluding that the Hilbert space is trivial, we should check how subleading corrections in $K$ affect our answer when we move away from the $K\to\infty$ limit.  Our approach is recursive. We start with the fact that $\Tr M=KZ_1^{\text{closed}}$ at leading order in $e^{S_0}$. Now suppose that we know $\Tr M^{n-1}$ and want to calculate
\begin{equation}
    \Tr M^n = (M^{n-1})_{i_1,i_n}(M)_{i_n,i_1}
\end{equation}
where repeated indices are summed over. The simplest contributions to the trace come from diagrams where the $i_n$ indices are contracted together through $Z_1^{\text{closed}}$. Such diagrams contribute as
\begin{equation}
    \Tr M^n \supset \Tr M^{n-1}\cdot \Tr M=KZ_1^{\text{closed}}\Tr M^{n-1}.
\end{equation}
Only considering such terms would yield the approximation studied above, but in general we can contract one of the $i_n$ indices with any of the $2(n-1)$ other $i_k$ indices. These contractions give factors of $\delta_{i_k,i_n}Z_1^{\text{closed}}$. We spend the delta by swapping the remaining $i_n$ index for an $i_k$ leaving another copy of $\Tr M^{n-1}$ 
\be\begin{aligned}
    \Tr M^n &\supset Z_1^{\text{closed}} (M^{k-2})_{i_1,i_{k-1}}M_{i_{k-1},i_n}M_{i_n,i_{k+1}}(M^{n-k-2})_{i_{k+1},i_{n-1}}M_{i_{n-1},i_1} \\
    &=Z_1^{\text{closed}}\Tr M^{n-1}.
\end{aligned}\ee
In total we have
\begin{equation}
    \Tr M^n=Z_1^{\text{closed}} (K+2(n-1))\Tr M^{n-1} = \left(2Z_1^{\text{closed}}\right)^n\left(\frac{K}{2}\right)^{(n)}
\end{equation}
where $x^{(n)}=\frac{\Gamma(x+n)}{\Gamma(x)}$ is the Pochhammer symbol.\footnote{
The authors of \cite{Usatyuk:2024mzs} computed $\Tr M$ using an integral over $\alpha$-sectors. Our result can be rewritten as
\begin{equation}
    \Tr M^n=\frac{1}{\Gamma\left(\frac{k}{2}\right)}\int dt\ (2Z_1^{\text{closed}})^nt^{K/2+n-1}e^{-t}
\end{equation}
which, after substituting $t=x/2$, gives
\begin{equation}
    \Tr M^n = (Z_1^{\text{closed}})^n\expval{x^n}
\end{equation}
with expectation values computed using
\begin{equation}
    p(x) = \frac{x^{K/2-1}e^{-x/2}}{2^{K/2}\Gamma\left(\frac{K}{2}\right)}
\end{equation}
as defined in equation (3.17) of \cite{Usatyuk:2024mzs}.
} 

We can now present a formula for the resolvent that is valid for any $K$. After taking the trace on both sides, equation \eqref{eq:r} becomes
\be\begin{aligned}
    \lambda R(\lambda) - K &= \sum_{n=1}^\infty \left(\frac{2Z_1^{\text{closed}}}{\lambda}\right)^n\left(\frac{K}{2}\right)^{(n)}, \\
    &= \frac{1}{\Gamma\left(\frac{K}{2}\right)}\sumint dt\ \left(\frac{2tZ_1^{\text{closed}}}{\lambda}\right)^n t^{K/2-1}e^{-t}, \\
    &= \frac{1}{\Gamma\left(\frac{K}{2}\right)}\int dt\ \frac{2t^{K/2}Z_1^{\text{closed}}e^{-t}}{\lambda-2tZ_1^{\text{closed}}}.
\end{aligned}\ee
The right hand side has poles on the entire real axis so we need a slightly different contour to evaluate equation \eqref{eq:dimH_from_r}. We calculate the principle value 
for a set of concentric contours $C_+^\epsilon$ wrapping $(\epsilon,\infty)\subset\R^+$. This gives
\be\begin{aligned}
    \overline{\dim(\Hnonp(K))} &= \lim_{\epsilon\to 0}\frac{1}{\Gamma\left(\frac{K}{2}\right)}\int_0^\infty dt\ 2t^{K/2}Z_1^{\text{closed}}e^{-t}\oint_{C_+^\epsilon}\frac{d\lambda}{2\pi i}\ \frac{1}{\lambda\left(\lambda-2tZ_1^{\text{closed}}\right)}, \\
    &= \lim_{\epsilon\to 0}\frac{1}{\Gamma\left(\frac{K}{2}\right)}\int_\epsilon^\infty dt\ t^{K/2-1}e^{-t} = 1,
\end{aligned}\ee
where the last line follows from the definition of the Gamma function. Since $\Hnonp(K)$ remains trivial for all values of $K$ we can now be certain that there is only one state in $\Hnonp$.

\section{Setup: the path integral from an observer's point of view}
\label{sec:setup}

In Section \ref{sec:review}, we found that, if we treat all operator insertions on the boundary in the same way, the gravitational path integral gives a non-perturbative Hilbert space of a closed universe that is one-dimensional and a non-perturbative Hilbert space of a two-sided black hole that has dimension $d^2$. As we pointed out in Section \ref{sec:intro}, this result is unsatisfactory if we are interested in the experience of a bulk observer. In fact, the inner product between states with and without an observer have fluctuations---which contribute at leading order in the closed universe case---meaning that the notion of a fixed observer is not well-defined. Moreover, a local observer in a closed universe should experience non-trivial physics, which requires access to a Hilbert space with dimension much larger than one.

Our goal in this Section is to introduce a modification to the rules of the gravitational path integral, which takes into account the presence of a bulk observer. 
To simplify the problem as much as possible, we will model the observer as a localized quantum mechanical system that propagates through the spacetime. Thus, the observer will travel along a bulk worldline starting and ending at the location where the observer is created in the states that we shall prepare using the gravitational path integral. Following the work of \cite{Chandrasekaran:2022eqq, Witten:2023qsv, Witten:2023xze, Kolchmeyer:2024fly}, we take the action of the observer along this worldline to be given by 
\be 
I_\text{observer} = \int d\tau\left[P \partial_{\tau} Q - Q \sqrt{g_{\tau \tau}}\right]\,,
\label{eq:action-of-observer}
\ee
where $Q$ is the Hamiltonian of the quantum system representing the observer, $P$ is its canonical conjugate, and $g_{\tau \tau}$ is the induced metric along the worldline of the observer, which is parametrized by $\tau$. The Hilbert space of the observer $\HO$ is spanned by eigenstates of $Q$, whose eigenergies,  $E^{(i)}_O$, are taken to be bounded from below and contained within some narrow range so that we can approximate $E^{(i)}_O \sim E_O$. We can label a basis of energy eigenstates of the observer by $\ket{\Delta^{(i)}_O} \in \HO$, where  $\Delta^{(i)}_O \sim \Delta_O$ are the AdS scaling dimensions associated to the energies $E^{(i)}_O$ contained within the narrow window. The observer can build a clock from the energy levels ${\Delta^{(i)}_O}$, as long as these energy levels are sufficiently dense.\footnote{The clock can in principle provide an accurate estimate of time for proper times along the worldline of the clock that are much shorter than the inverse of the typical spacing between neighboring energy levels.} We will sometimes dress observables to the time measured by this clock, in which case we will assume that the states of the clock showing different times are roughly orthogonal.

In this setup, states prepared by the gravitational path integral live in a tensor product space
\be
\ket{\psi_\text{grav},\psi_\text{obs}} \in \cH^\text{rel} \otimes \HO
\ee
where $\ket{\psi_\text{grav}}$ specifies the state of the metric and matter fields on a given slice subject to the boundary conditions at the worldline of the observer, which in turn depend on the scaling dimension $\Delta_O^{(i)}$ of the observer. The state of the observer $\ket{\psi_\text{obs}}$ lives in $\text{Span}(\{\ket{\Delta_O^{(i)}}\})$. For example, in the two-sided black hole setup at the perturbative level, we could take $\ket{\psi_\text{grav},\psi_\text{obs}}=\ket{\Delta_O^{(i)};\ell,m}\otimes\ket{\Delta_O^{(i)}}$. To make sense of the Hilbert space that such states belong to, we have to specify a positive definite inner product. As in Section \ref{sec:review}, we will always compute the inner-product between two such states using the gravitational path integral. 

To compute the inner-product $\braket{{\phi_\text{grav},\phi_\text{obs}} }{{\psi_\text{grav},\psi_\text{obs}} }$ at a perturbative level, one finds the leading gravitational saddle that interpolates between two spatial slices, one where the state of the observer is $\psi_\text{obs}$ and other field data is specified by $\psi_\text{grav}$, and one where the state of the observer is $\phi_\text{obs}$ and other field data is specified by $\phi_\text{grav}$. The inner-product is then computed by a perturbative expansion of the on-shell action around such a saddle. We will refer to the Hilbert space that results from this choice of inner-product as $\Hrelp \otimes \HO$ where we must impose the constraint that the energy $\Delta_O^{(i)}$ of the observer is the same in $\Hrelp$ and $\HO$.

We are interested in computing observables with respect to a given observer at a non-perturbative level. We, therefore, want to avoid large overlaps between states in which an observer is present and states in which it is not, and we want the observer to exist on any spatial slice in between a given bra and the corresponding ket when computing an overlap between any two states or moments of an overlap. We achieve this by imposing that the observer's worldline must connect the points at which it is inserted in the preparation of any given bra and the corresponding ket.  This guarantees that we can ask questions about the state of the observer on a given spatial slice or the state of the spatial slice given the state of the observer. In practice, this new prescription is equivalent to introducing an additional boundary at the location of the worldline of the observer and then integrating over all possible geometries in which the worldline of the observer can be included (see Figure \ref{fig:newbc}). The observer's worldline is allowed to fluctuate between the two points where the observer is inserted in the bra and ket, and we sum over all worldlines that are topologically inequivalent. However, we also need to quotient by the mapping class group in the gravitation path integral. We will often perform this quotient by only considering non-winding geodesics. 

We now present a simple example that helps motivate this modification to the usual rules of path integration. Suppose we would like to calculate the probability of transitioning from a tensor product state $\ket{\psi}$ describing the system and observer to a state $\ket{\phi}$. Such a state $\ket{\phi}$ could, for instance, be an eigenstate of an operator that the observer is trying to measure, given that the observer is in a certain state. As an example, they could measure the stress-tensor that they see along their worldline when their clock reads a given time. This probability is captured by the expectation value of a projector $\Pi_\phi$ on state $\ket{\phi}$. In the path integral, both $\ket{\psi}$ and  $\ket{\phi}$ states are defined by boundary conditions. The probability of the observer transitioning to a state $\ket{\phi}$ is given by the probability amplitude squared
\begin{equation}
    p(\phi) = |\braket{\phi}{\psi}|^2=\mel{\psi}{\Pi_\phi}{\psi} = \inlinefig{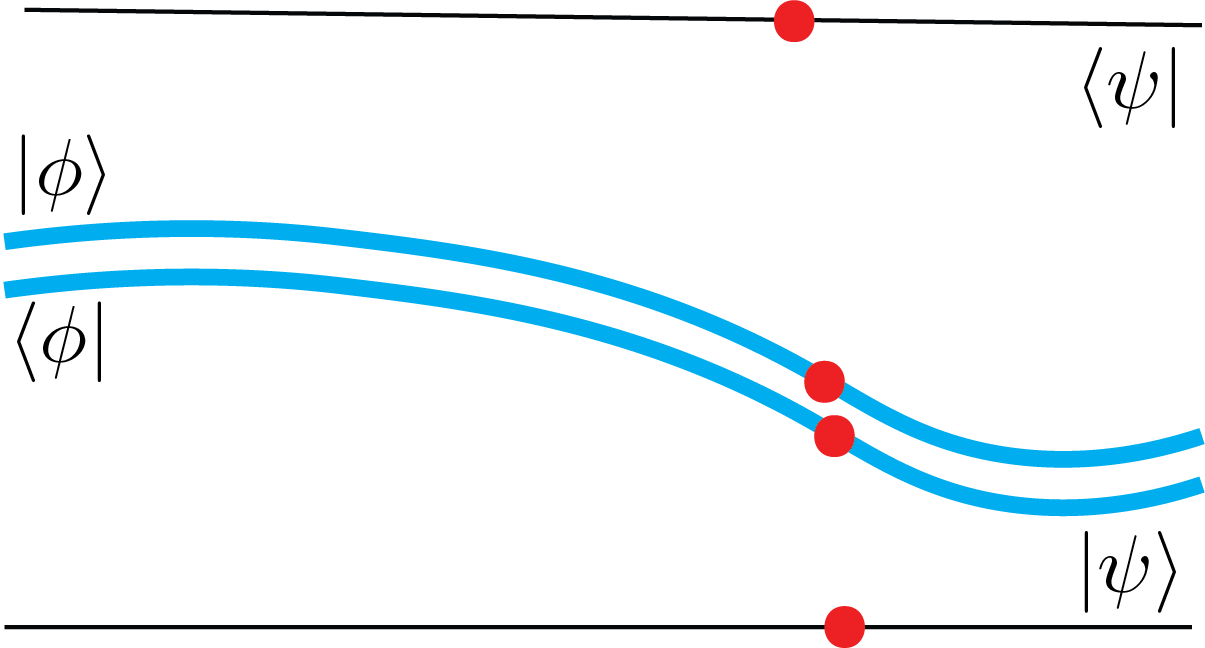},
\end{equation}
which is a path integral with four boundaries. The straight boundaries fix boundary conditions for the fields and metric to be in the state $\ket{\psi}$ while the curved boundaries have the fields and metric in state $\ket{\phi}$. The red dots stand for observer insertions and can be contracted against each other through a bulk worldline. The conventional rules of gravity tell us to consider the contributions of all saddles which satisfy the boundary conditions. Naively this gives
\begin{equation}
    p(\phi) = \inlinefig{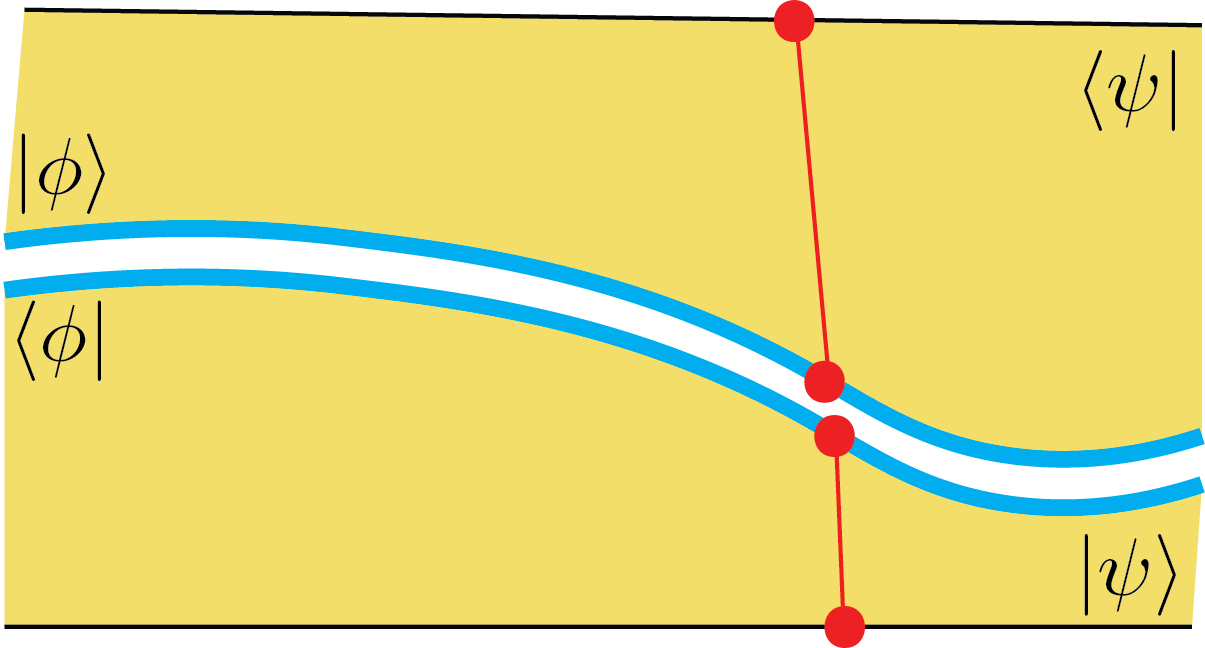}+\inlinefig{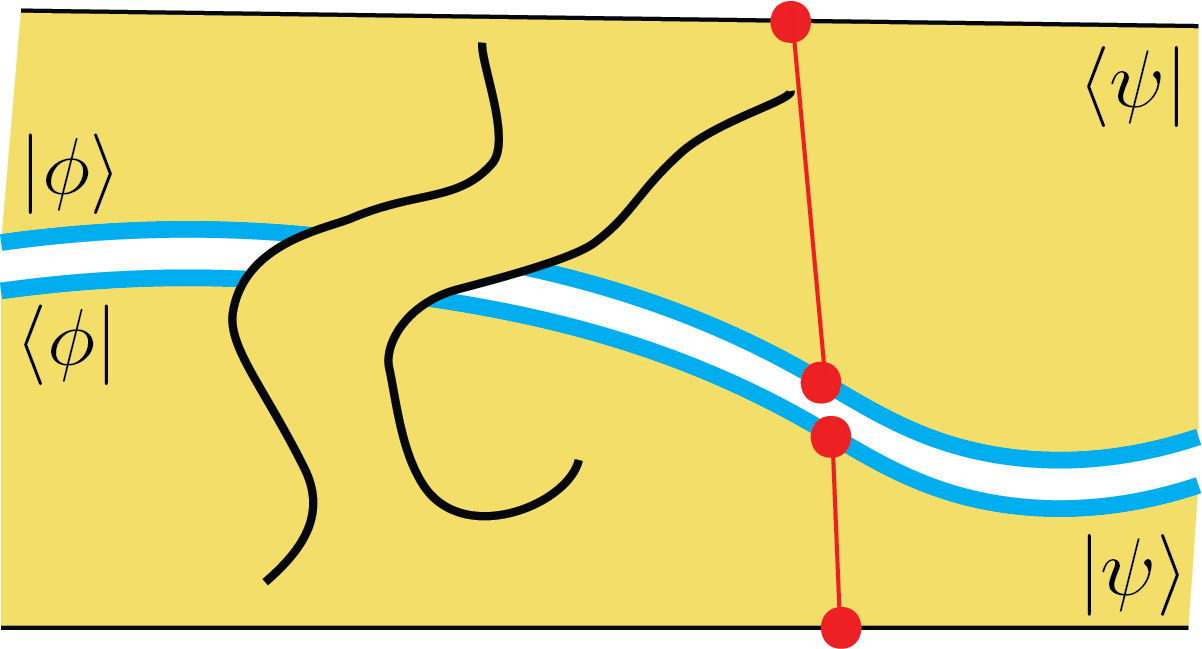}+\underbrace{\inlinefig{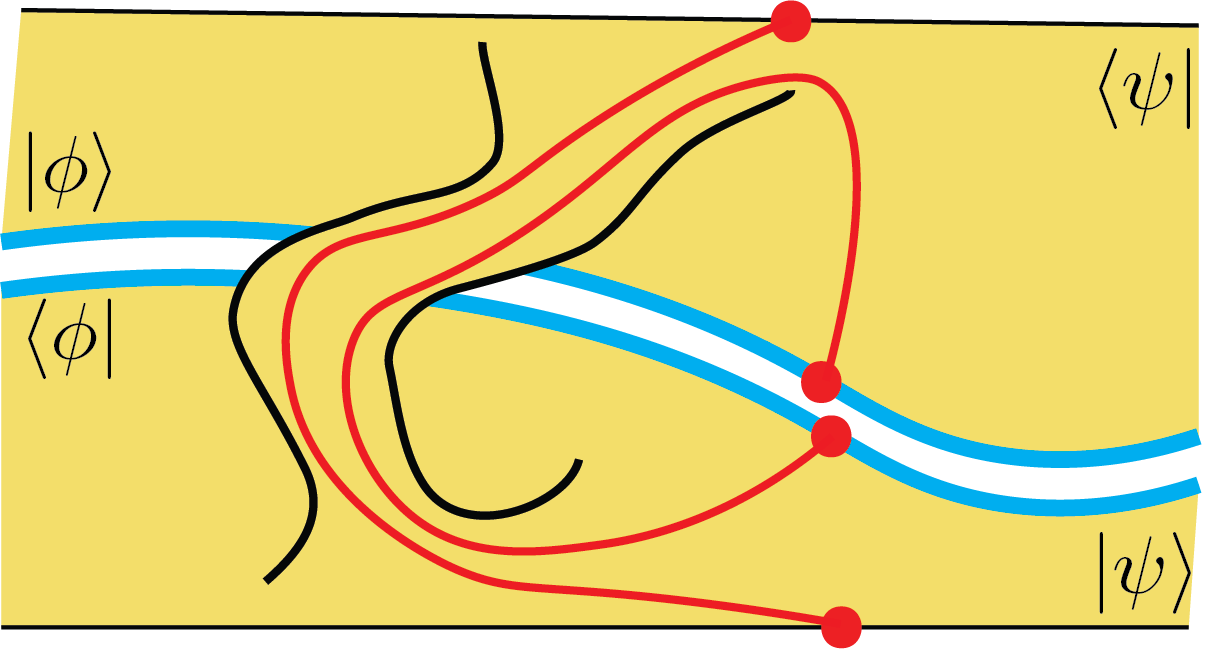}}_\text{Problematic contribution} + \cdots\,,
    \label{eq:transitionadam}
\end{equation}
where the $\cdots$ denotes other possible contributions to the GPI that we drop for simplicity. The third contribution above does not compute the probability that the bulk and observer are in a given state because the slice on which the projector is inserted never intersects the worldline of the observer whose state is prepared by the boundary conditions fixed by $\ket{\psi}$. In fact, the projector inserts a copy of the observer in $\bra{\phi}$, which is then measured in $\ket{\phi}$. 
This observer has nothing to do with the one inserted in $\ket{\psi}$ and is simply a consequence of the insertion of the projector.

As an example, let us consider the states of a closed universe when an observer is present. With the conventional rules of the gravitational path integral, 
\begin{equation}\label{eq:closedtransition}
    p(\phi) = \inlinefig{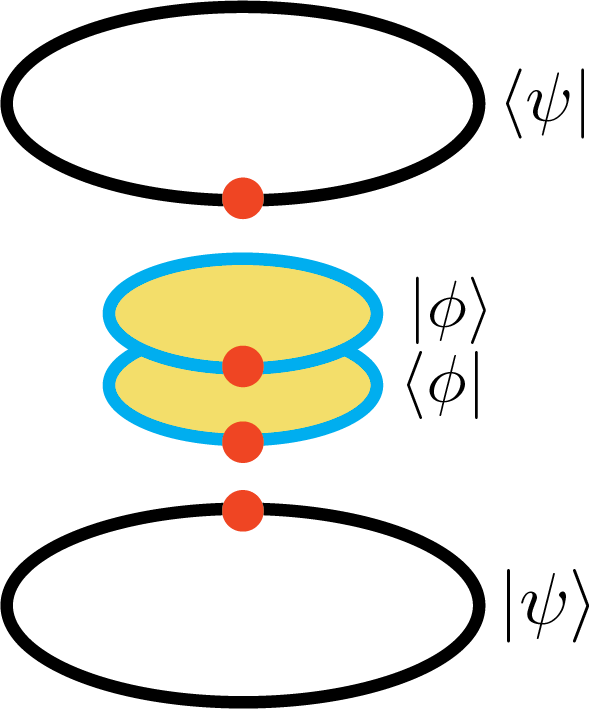} = \inlinefig{Figures/Sec_3/pphi1_closed} + \underbrace{\inlinefig{Figures/Sec_3/pphi_closed_wrong1}+\inlinefig{Figures/Sec_3/pphi_closed_wrong2}}_{\text{Problematic contributions}}
\end{equation} 
at leading order. Similar to the more general example above, the last two contributions do not compute a transition probability for the state of the system and observer from the initial state $\ket{\psi}$ to the desired state $\ket{\phi}$. In the second contribution, the observer connects cyclically within the projector, and the diagram simply computes the product of the norms of the two states $\ket{\psi}$ and $\ket{\phi}$. The third contribution showcases contractions between kets and bras, which are irrelevant for computing a transition probability. In both cases, the observer never encounters the slice where the measurement takes place. Our proposed rules for the gravitational path integral in the presence of an observer, in which an observer created in a given ket must be annihilated in the corresponding ket, naturally exclude the last contribution in equation \eqref{eq:transitionadam} as well as the second and third contributions in equation \eqref{eq:closedtransition}, leading to the correct transition probability for the observer and system.


We will call the Hilbert space that results from this choice of inner-product at the path integral level as $\Hr \otimes \HO$. Here, $\Hr$ is defined as the quantum gravity Hilbert space on which the operators that are relevant to describe the experience of any observer act linearly. We will find that at the non-perturbative level (see Section \ref{sec:nullstates}), states in $\Hr$ are not labeled by the observer's energy $\Delta_O^{(i)}$ which labels states in $\HO$. In particular, this implies that the dimension of the full Hilbert space is $\dim(\Hr)\times\dim(\HO)$ and that $\dim(\Hr)$ is independent of the size of the observer's Hilbert space $\HO$ (see Section \ref{sec:hilbert}). However, the amount of information a given observer can learn about the entire system is bounded by the von-Neumann entropy of the observer. The maximum value of this entropy is $\min(\dim(\HO), \dim(\Hr))$---which is $\dim(\HO)$ if the dimension of the observer's Hilbert space is smaller than that of the rest of the system and $\dim(\Hr)$ if, in likely unrealistic models, the dimension of the observer Hilbert space is greater than that of the rest of system.

We saw in Section \ref{sec:reviewclosed} for the case of closed universes,\footnote{And in Appendix \ref{app:reviewBH} for the case of two-sided black holes.} that allowing the worldlines associated with boundary operator insertions to connect between arbitrary bras and kets leads to a drastic reduction in the dimension of the Hilbert spaces, from that of the perturbative Hilbert space $\Hp$ to that of the non-perturbative Hilbert space $ \Hnonp$. This leads us to expect that our modification of the inner product described above will further modify this result, leading to differences in the properties of  $\cH_\text{non-pert}$, $\Hrelp$, and $\Hr$. To emphasize the drastic differences among these Hilbert spaces (for which the states in the gravitational path integral are defined by the same boundary conditions), 
we start by analyzing how moments of overlaps between closed universe states are modified by the new rules. For simplicity, to start, in this section, we will focus on pure JT gravity in the presence of an observer without any other matter insertions.

\subsubsection*{Moments of an overlap in the presence of an observer}

We consider asymptotic states of a closed universe $|\beta,\Delta_O^{(i)}\rangle$ labeled by the length $\beta$ of the asymptotic boundary and the scaling dimension $\Delta_O^{(i)}$ of the operator insertion associated with the observer. We will discuss alternative choices of basis states later in this Section. As shown in Figure \ref{fig:newbc} (a), an overlap of the form
\be 
\overline{\braket{\beta', \Delta_O^{(i)}}{\beta, \Delta_O^{(j)}}} = \delta^{ij}\int dE \rho_0(E) \gamma_{\Delta}(E,E) e^{-\left(\beta+\beta'\right)E} + O(e^{-2S_0})
\label{eq:overlapclosed}
\ee
is unchanged because there is only one boundary associated with the bra and one associated with the ket. The observer's worldline connects insertions of $\mathcal{O}_O^{(i)}$ in the bra to the corresponding ket regardless of whether we use the old or new rules for the gravitational path integral.
The first significant change comes when computing the square of the overlap $\overline{|\langle \beta',\Delta_O^{(i)}|\beta,\Delta_O^{(j)}\rangle|^2}$, see Figure \ref{fig:newbc} (b). The leading, disconnected geometry is given by a product of two cylinders, which is simply the square of equation \eqref{eq:overlapclosed}. The subleading contribution is given by the genus-zero wormhole depicted on the right of Figure \ref{fig:newbc} (b), which computes the variance of the overlap. This is given by 
\be 
\overline{\left|\braket{\beta', \Delta_O^{(i)}}{\beta, \Delta_O^{(j)}}\right|^2} &- \left|\overline{\braket{\beta', \Delta_O^{(i)}}{\beta, \Delta_O^{(j)}}}\right|^2\nn \\  &= e^{-2S_0} \delta^{ij} \int dE \overline{\rho(E) \rho(E')}_\text{conn.} \gamma_{\Delta_O}(E,E) \gamma_{\Delta_O}(E',E') e^{-\left(\beta+\beta'\right)(E+E')} 
+ O(e^{-4S_0}),
\label{eq:variance}
 \ee
 where $\overline{\rho(E) \rho(E')}_\text{conn.}$ is given by \cite{Saad:2019lba,Iliesiu:2021ari}\footnote{Notice that the connected correlator is between the full densities of states $\rho(s)=e^{S_0}\rho_0(s)$, and the connected part of the correlator is $O(1)$ \cite{Saad:2019lba,Iliesiu:2021ari}. This is why there is an additional factor of $e^{-2S_0}$ in front of \eqref{eq:variance}. 
 } 
\be
\overline{\rho(E) \rho(E')}_\text{conn.} = -\frac{1}{(2\pi)^2} \frac{E_1+E_2}{\sqrt{E_1 E_2} \left(E_1-E_2\right)^{2}}\,.
\label{eq:connectedcorrelator}
\ee
This result is different from what we obtained in Section \ref{sec:reviewclosed} using the old rules for the gravitational path integral.
In fact, if we allow the observer's worldline to connect between any bra/ket and any other bra/ket, the variance of the overlap is computed by the last two disconnected geometries depicted in equation \eqref{eq:closed_square} and is $O(1)$. This implies that the inner product receives corrections at leading order in the absence of an observer. This fact was the key mechanism that led to a one-dimensional Hilbert space for the closed universe in Section \ref{sec:reviewclosed}. With our new rules, the variance of the overlap \eqref{eq:variance} is $O(e^{-2S_0})$ instead. This suggests that the dimension of the non-perturbative quantum gravity relational Hilbert space $\Hr$ is much larger, $\dim\left(\Hr\right)=O(e^{2S_0})$. A similar result can be obtained for a two-sided black hole, where the Hilbert space is also enlarged in the presence of an observer. We explicitly compute the dimension of the Hilbert space in the presence of an observer in the closed universe and two-sided black hole in Section \ref{sec:hilbert}.

\subsubsection*{Geodesic bases, clock states, and the Wheeler-DeWitt constraint}

The bulk state that includes the observer can also be defined away from the asymptotic boundary of the closed universe. By using our proposal for the gravitational path integral, we can ask questions about what happens to the observer on time slices that probe the physics of the bulk. This is necessary if we want to understand what happens to the observer in a Lorentzian closed universe. The Lorentzian closed universe is an analytic continuation of the Euclidean spacetime starting from the minimal closed geodesic that wraps the Euclidean cylinder \cite{Maldacena:2004rf,Chen:2020tes,VanRaamsdonk:2021qgv,Antonini:2022blk,Usatyuk:2024mzs}; in Lorentzian signature this time slice becomes a maximal closed geodesic wrapping around the closed universe, see Figure \ref{fig:closeduniverse}. Specifically, it is the time-reflection symmetric slice of a closed Big Bang-Big Crunch universe which has no asymptotic future or past boundaries, see Figure \ref{fig:closeduniverse}. Consequently, in order to understand what happens to the observer in the Lorentzian spacetime, we need to study the overlaps between the states $\ket{\beta, \Delta_O^{(i)}}$ defined on asymptotic time-slices and the states defined on time-slices at finite Lorentzian time. 

\begin{figure}
\centering
\includegraphics[width=0.8\textwidth]{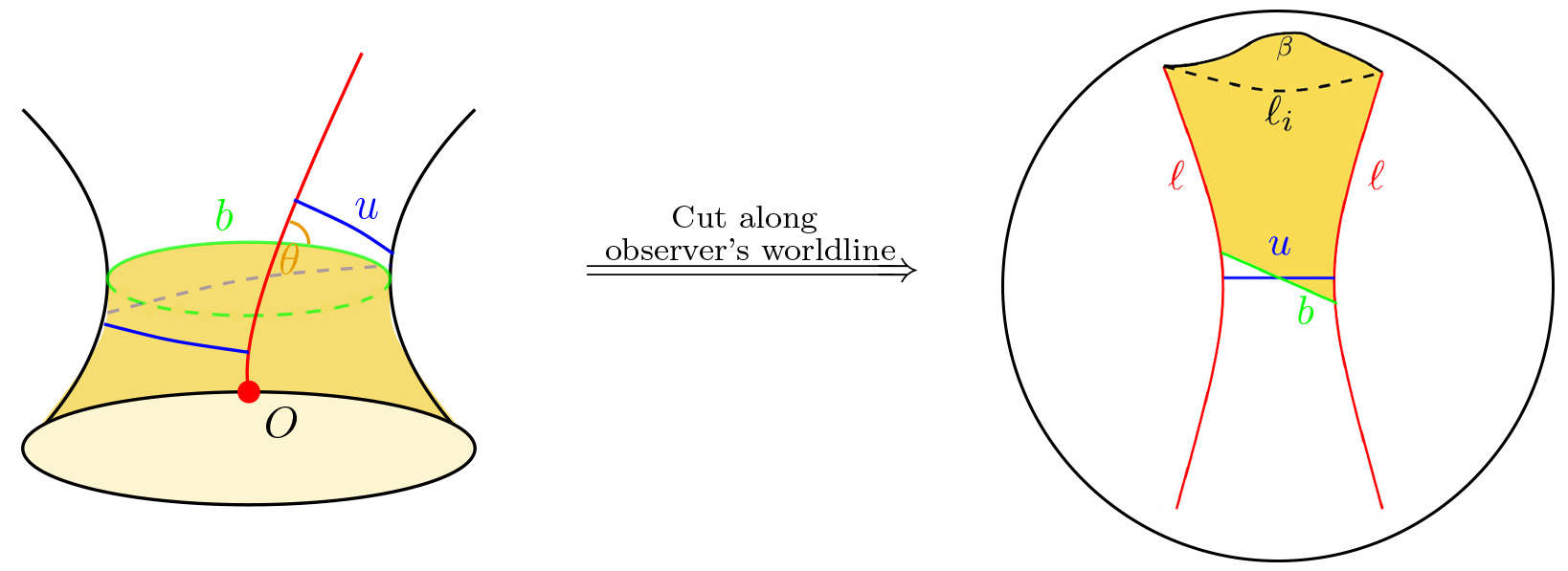}
    \caption{Overlap between an asymptotic closed universe state $|\beta,\Delta_O^{(i)}\rangle$ and a state defined on the closed minimal geodesic (depicted in green) $|b,\Delta_O^{(i)},u\rangle$. Left: the overlap on the Euclidean cylinder. The state on the minimal closed geodesic is labeled by the length $b$ of the closed minimal geodesic and the length $u$ of a geodesic segment wrapping the cylinder once and intersecting the observer's worldline perpendicularly at both ends. Right: the cylinder can be cut along the observer's worldline and embedded in the Poincar\'e disk (depicted in black), with the two red lines identified. The contribution of this geometry to the path integral can be decomposed into a Hartle-Hawking wavefunction $\phi_\beta(\ell_i)$, with $\ell_i$ the length of a geodesic starting and ending at the insertion of the observer on the asymptotic boundary, and a quadrilateral.}
    \label{fig:overlap}
\end{figure}

To start, we can compute the inner-product between a state defined on an asymptotic boundary with a proper length $\beta$ where the observer is inserted and the state defined along the closed minimal geodesic on the Euclidean cylinder (i.e., the closed maximal geodesic in the closed universe) whose length is $b$ (see Figure \ref{fig:overlap}). To define this inner-product we also have to specify how the worldline of the observer intersects this closed geodesic. One can parametrize this intersection in terms of the angle that the worldline of the observer makes (in the geodesic approximation) with the minimal closed geodesic. To simplify the computation, instead of using this angle, we will use the length $u$ of the geodesic segment that is perpendicular to the worldline of the observer at both ends of the segment and wraps the cylinder once; this is the minimal geodesic segment which wraps the cylinder once while starting and ending at some point on the worldline. The relation between this angle and the length $u$ is given by the hyperbolic law of cosines,
\be 
\label{eq:angle-worldline-b-and-u}
\cos(\theta) = \sqrt{\frac{\cosh(b)-\cosh(u)}2} \csch\left(\frac{b}2\right)\,.
\ee
We will denote the resulting state on the closed minimal geodesic by $|b, \Delta_O^{(i)}, u\rangle$. 

\enlargethispage{3\baselineskip}
 
As shown in Figure \ref{fig:overlap}, we can decompose the hyperbolic geometry associated with the overlap $\langle b,\Delta_O^{(i)},u|\beta,\Delta_O^{(j)}\rangle$ in terms of a Hartle-Hawking wavefunction $ \phi_\beta(\ell_i)$, with $\ell_i$ being the renormalized length of the geodesic that starts and ends at the intersection between the asymptotic boundary and the observer's worldline, and a quadrilateral that has two right angles on one of the edges with length $u$ that is opposite to the geodesic with renormalized length $\ell_i$.  Within this quadrilateral, all lengths and angles can be determined in terms of $\ell_i$, $u$, and $b$. Note that this quadrilateral has the same area as the quadrilateral bounded by $b$, $\ell_i$ and half of the observer's worldline, $\ell/2$. This is because the area is completely determined by the angles at the intersection between $\ell$ and the boundary particle. Let us call $Q_b$ the quadrilateral which includes $b$, and $Q_u$ the quadrilateral which includes $u$. In order to compute the overlap, we need the contribution of $Q_b$ to the path integral, which is just the area $A_{Q_b} = A_{Q_u}$. Computing $A_{Q_u}$, we find that the overlap between the asymptotic state and the state defined on the closed geodesic is 
\be 
\overline{\braket{b, \Delta_O^{(i)}, u}{\beta, \Delta_O^{(j)}}} &= \frac{  \delta^{ij} }{16} \int_{-\infty}^\infty d\ell_i \varphi_\beta(\ell_i)  e^{\Delta_O \log(e^{-\frac{\ell_i}2} \sinh \frac{u}2)} e^{-4e^{-\frac{\ell_i}2} \cosh(b/2)}  \frac{\coth^{\frac{1}2}\left(\frac{u}2\right) \sinh^{\frac{1}2}\left(\frac{b}2\right)}{2^{\frac{1}4} (\cosh(b)-\cosh(u))^{\frac{1}4}} \nn \\  & +O(e^{-2S_0})\,,
\label{eq:overlap-beta-and-b}
\ee
at leading order. The first exponential accounts for the weight of the worldline whose renormalized geodesic length is determined by $\ell_i$ and $u$ and, from a repeated application of the hyperbolic law of cosines, is given by $-\log(e^{-\ell_i} \sinh \frac{u}2)$. The second exponential comes from the two angle terms between the worldline of the observer and the geodesic of length $\ell_i$, which determine the area of the quadrilateral. The remaining $u$-dependent and $b$-dependent measure factors account for the fact that the overlap $\overline{\braket{\beta', \Delta_O^{(i)}}{\beta, \Delta_O^{(j)}}}$ in \eqref{eq:overlapclosed} is consistent with \eqref{eq:overlap-beta-and-b} if a resolution of the identity is inserted in the $\ket{b, \Delta_O^{(i)}, u}$ basis.\footnote{This measure factor can also be derived by starting with the Weil-Petersson measure in terms of $b$ and the twist $\tau$ between the bra and ket boundaries, $\Omega_{WP} = db \wedge d\tau$. If we say that the relative twist is zero when the observer's worldline crosses $b$ orthogonally, then more generally, the twist directly determines the angle at which the observer's geodesic meets $b$. Using \eqref{eq:angle-worldline-b-and-u}, we can then solve for the twist, $\tau = \tau(u,b)$, as a function of $b$ and $u$. Using hyperbolic trig-identities, we find the expression \vspace{-0.1cm}\begin{align}
    \tau(b,u) = \log \left({\sqrt{\left({\cosh b - \cosh u}\right)/{2}}+\sinh(b/2)} \right)- \log \left({\sinh(u/2)}\right)\quad.
\end{align} 
Plugging this into the Weil-Petersson measure and then absorbing a square root of this measure into the wavefunction $\overline{\braket{b, \Delta_O^{(i)}, u}{\beta, \Delta_O^{(j)}}}$, we get the same factor as in \eqref{eq:overlap-beta-and-b}.}

Here,  we have normalized the states $\ket{b, \Delta_O^{(i)}, u}$ as 
\be 
\overline{\braket{b, \Delta_O^{(i)}, u}{b', \Delta_O^{(j)}, u'}} = \delta^{ij}\delta(b-b') \delta(u-u')+ O(e^{-2S_0})
\ee
At a perturbative level, this implies that the state $\ket{\beta, \Delta_O^{(i)}}$ defined on the asymptotic boundary can be expressed as a linear combination of states defined on the minimal closed geodesic time slice:
\begin{align} 
\ket{\beta, \Delta_O^{(i)}}&=_\text{pert.}  \frac{1 }{16} \int_{-\infty}^\infty d\ell_i \varphi_\beta(\ell_i) \int_0^{\infty} d u \coth^{\frac{1}2}\left(\frac{u}2\right)   e^{\Delta_O \log(e^{-\frac{\ell_i}2} \sinh \frac{u}2)}  \nn  \\ &\quad  \times \int_{u}^{\infty} { db}\,  \frac{\sinh^{\frac{1}2}\left(\frac{b}2\right)}{2^{\frac{1}4} (\cosh(b)-\cosh(u))^{\frac{1}4}} \times e^{-4e^{-\frac{\ell_i}2} \cosh(b/2)}  \ket{b, \Delta_O^{(i)}, u}\,.
\label{eq:change-of-basis-1}
\end{align}
We can also do the reverse and express $\ket{b, \Delta_O^{(i)}, u}$ as a linear combination of asymptotic states. This can be done more easily by looking at asymptotic states with a fixed ADM energy, $\ket{E, \Delta_O^{(i)}}$, which are related to the states with fixed proper length by  $\ket{\beta, \Delta_O^{(i)}} = \int dE \rho_0(E) e^{-\beta E}\ket{E, \Delta_O^{(i)}} $.\footnote{As we explained in Section \ref{sec:review}, in this convention, the states with fixed ADM energy are normalized as $\braket{E', \Delta_O^{(j)}}{E, \Delta_O^{(i)}} = \frac{\delta(E-E')}{\rho_0(E)}  \gamma_{\Delta}(E,E) $.} Thus,  $\ket{b, \Delta_O^{(i)}, u}$ can be expressed as, 
\be 
 \ket{b, \Delta_O^{(i)}, u} &=_\text{pert.} \frac{ \coth^{\frac{1}2}\left(\frac{u}2\right) \sinh^{\frac{1}2}\left(\frac{b}2\right)}{16 \times 2^{\frac{1}4} (\cosh(b)-\cosh(u))^{\frac{1}4}}  \int dE \frac{\rho_0(E)}{\gamma_{\Delta_O^{(i)}}(E, E)} \int d\ell_i\varphi_E(\ell_i) e^{\Delta_O \log(e^{-\frac{\ell_i}2} \sinh \frac{u}2)}  \nn  \\ &\quad   \times e^{-e^{-4\frac{\ell_i}2} \cosh(b/2)}  \ket{E, \Delta_O^{(i)}}\,.
 \label{eq:change-of-basis-2}
\ee
Both equations \eqref{eq:change-of-basis-1} and \eqref{eq:change-of-basis-2} should be understood as a consequence of the Wheeler-DeWitt constraint, which in a closed universe relates states on all spatial slices related purely by bulk evolution (which are therefore equivalent under diffeomorphisms). They are the closed universe analog of equation \eqref{eq:wvfn_beta_l} which, in the context of two-sided black holes, related the states defined on asymptotic boundaries to those defined on geodesic slices, probing the black hole interior.

\begin{figure}
\centering
\includegraphics[width=0.3\textwidth]{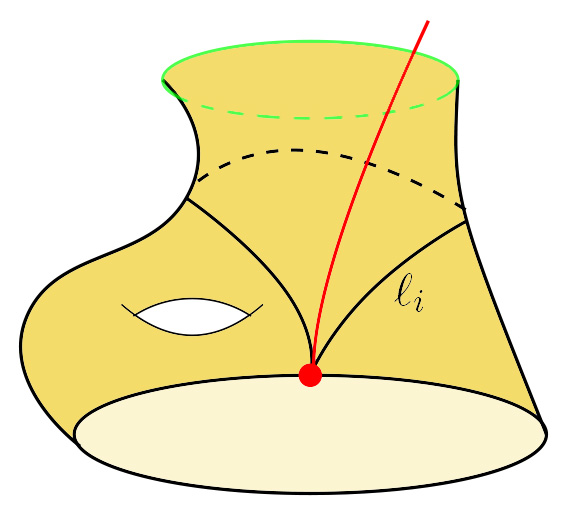}
\caption{Non-perturbative corrections to the overlap $\langle b,\Delta_O^{(i)},u|\beta,\Delta_O^{(j)} \rangle$. The geodesic slice of length $b$ is depicted in green. Any higher-genus geometry contributing to the overlap can be decomposed into a genus-zero quadrilateral similar to Figure \ref{fig:overlap} and a region with a higher genus homologous to the boundary. The contribution of geometries with higher topology is then encoded in the Hartle-Hawking wavefunction.}
\label{fig:nonoverlap}
\end{figure}

We now derive the non-perturbative corrections to equation \eqref{eq:overlap-beta-and-b}. This can easily be done by noticing that, for a given worldline of the observer, we can always choose the unique geodesic $\ell_i$ that starts and ends at the location of the observer on the asymptotic boundary in the preparation of the bra and is in the same homotopy class as the closed geodesic of length $b$ in the preparation of the ket (see Figure \ref{fig:nonoverlap}). Thus, the surface bounded by the geodesic of length $\ell_i$ and the closed geodesic of length $b$ has the topology of a cylinder. For this cylinder, we can use the same decomposition into a quadrilateral as the one used to compute \eqref{eq:overlap-beta-and-b} and a Hartle-Hawking wavefunction that takes into account the contribution of other topologies. All we need to do is replace  $\varphi_\beta(\ell_i) $ in \eqref{eq:overlap-beta-and-b} with
\be\label{eqn:cgwavefunction}
\overline{\varphi_\beta(\ell_i)}=  \int dE \overline{\rho_0(E)} \varphi_E(\ell_i) e^{-\beta E}\,.
\ee
where $\overline{\rho_0(E)}$ is given by the average density of states (multiplied by $e^{-S_0}$) with higher topology corrections included.  A similar logic applies both when computing the coarse-grained value of the overlap 
and when computing higher moments of such overlaps.\footnote{
For example, higher powers of the overlap $\braket{b, \Delta_O^{(i)}, u}{\beta, \Delta_O^{(j)}}$ are thus determined by 
\be \label{eqn:nonpertoverlapstatistics}
\overline{\braket{b, \Delta_O^{(i)}, u}{\beta, \Delta_O^{(j)}}^k} = \frac{  \delta^{ij} }{16} \int \left(\prod_{i=1}^k d\ell_i\right) \overline{\varphi_\beta(\ell_1) \dots \varphi_\beta(\ell_k) }   \prod_{i=1}^k \bigg(e^{\Delta \log(e^{-\frac{\ell_i}2} \sinh \frac{u}2)}  \frac{\coth^{\frac{1}2}\left(\frac{u}2\right)  \sinh^{\frac{1}2}\left(\frac{b}2\right)}{2^{\frac{1}4} (\cosh(b)-\cosh(u))^{\frac{1}4}} e^{-4e^{-\ell_i/2} \cosh(b/2)} \bigg)
\ee
where $ \overline{\varphi_\beta(\ell_1) \dots \varphi_\beta(\ell_k) } $ is determined by the spectral correlator  $\overline{\rho_0(E_1) \dots \rho_0(E_k)}$ that includes higher topology corrections.
} Because non-trivial contribution to all these statistics only involves the spectral correlators  $\overline{\rho_0(E_1) \dots \rho_0(E_k)}$, the statistics agree with coarse-graining or ensemble-averaging the spectrum of a single theory.  If, for a moment, one assumes that the statistics of these inner products come from an ensemble of theories that have a discrete spectrum $\{E_0,\, E_1, \dots\}$ (or by coarse-graining this spectrum in a single theory), then we can write the states defined on the closed geodesic in terms of the states $\ket{E_i, \Delta_O^{(i)}}$ that capture the possible discrete values of the ADM energy on the asymptotic boundary. For example, the non-perturbative version of equation \eqref{eq:change-of-basis-2} becomes 
\be 
 \ket{b, \Delta_O^{(k)}, u} &=_\text{non-pert} \frac{ \coth^{\frac{1}2}\left(\frac{u}2\right) \sinh^{\frac{1}2}\left(\frac{b}2\right)}{16 \times 2^{\frac{1}4} (\cosh(b)-\cosh(u))^{\frac{1}4}}   \sum_{E_i}\frac{1}{\gamma_\Delta(E_i, E_i)} \int d\ell_i\varphi_{E_i}(\ell_i) e^{-\ell_i}   e^{\Delta_O  \log(e^{-\frac{\ell_i}2} \sinh \frac{u}2)}  \nn  \\ &\quad   \times e^{-e^{-4\frac{\ell_i}2} \cosh(b/2)}  \ket{E_i, \Delta_O^{(k)}}\,.
 \label{eq:change-of-basis-non-pert}
\ee
This can be viewed as the non-perturbative generalization of the Wheeler-DeWitt constraint in the presence of an observer we discussed above. A consequence of equation \eqref{eq:change-of-basis-non-pert} is that the states $ \ket{b, \Delta_O^{(k)}, u}$ no longer form an orthogonal basis. Instead, just like the geodesic states in two-sided black holes, these states form an overcomplete set of states that still allow us to do concrete calculations to determine the state on a bulk slice in the presence of the observer. While the assumption of a discrete energy spectrum seems strong at first sight, in Section \ref{sec:hilbert} we will use the non-trivial statistics obtained from the gravitational path integral to determine the dimension of the Hilbert space $\Hr$ and show that it is in fact finite. 
Later, in Section \ref{sec:nullstates}, we will show that these statistics are consistent with a semi-positive definite inner-product in which one has to eliminate the null states -- i.e.,~states with a non-zero norm at the perturbative level that have zero norm once the inner-product is modified at the non-perturbative level -- to get a well-defined Hilbert space. There, we will also comment on how to construct such null states---for instance, by taking non-trivial combinations of the states  $\ket{b, \Delta_O^{(k)}, u}$.

Why stop at defining the states of the gravitational theory on a closed geodesic slice? For example, we might be interested in characterizing the spatial slice on which the observer resides at a Lorentzian time $t$ shown by their clock. Suppose we define the state of the observer's clock at time $t=0$ to be $\ket{O_{t=0}} = \sum_i c_i \ket{\Delta_i}$ and the state at time $t$ to consequently be $\ket{O_{t}} = \sum_i c_i e^{-i \Delta_i t} \ket{\Delta_i}$. We will assume that in the state $\ket{\beta, O_{t=0}}$ the observer is prepared in such a way that their clock shows $t=0$ on the minimal closed geodesic slice on the Euclidean cylinder, which in Lorentzian signature corresponds to the maximal, time-reflection symmetric slice of the crunching universe. We would like to once again characterize the state at time $t$ in terms of the proper length $a$ of the geodesic time slice on which they reside at time $t$ and the angle at which the worldline of the observer intersects this geodesic slice.  We will denote this geodesic timeslice by $\Sigma$. Though dressing to the observer's clock only makes sense in Lorentzian signature, for computational purposes we can perform the calculation in Euclidean signature first, where the slice $\Sigma$ is at a time $\tau$ along the observer's worldline, and then analytically continue our result to Lorentzian signature, by setting $\tau = -i t$, as shown in Figure \ref{fig:timestates}.

\begin{figure}
\centering
\includegraphics[width=0.8\textwidth]{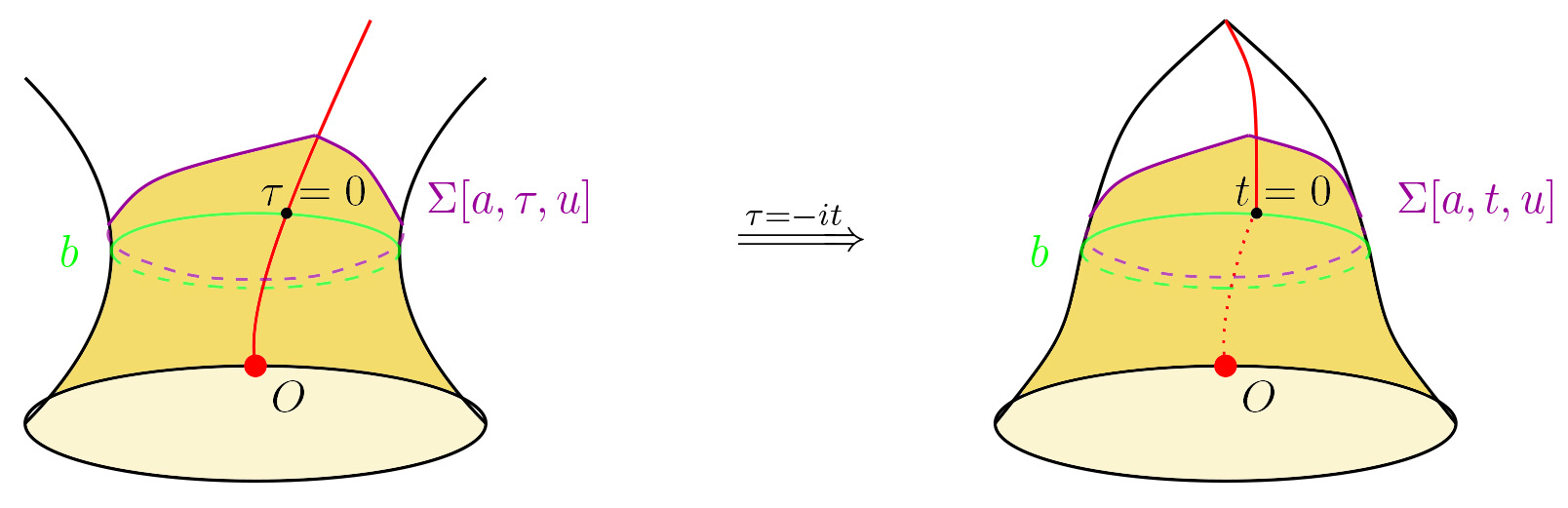}
\caption{
A state on a closed geodesic time slice $\Sigma$ homotopic to a minimal closed geodesic, whose length is $b$ (depicted in green), is characterized by its length $a$ and by the angle between $\Sigma$ and the observer's worldline. In general this geodesic is not smooth for $\tau\neq 0$ (Euclidean -- \textit{left side}) or $t\neq 0$ (Lorentzian -- \textit{right side}). The angle can again be specified in terms of the length of a geodesic segment wrapping around the universe once and intersecting the observer's geodesic perpendicularly at both ends. The state on $\Sigma$ is uniquely determined by the state on the minimal closed geodesic of length $b$. To make the worldline path real in Lorentzian signature, one has to complexify the path in Euclidean signature; for this reason, on the right side, we show the observer worldline by a dotted curve, signifying that that part of the trajectory is complexified. 
}
\label{fig:timestates}
\end{figure}

$\Sigma$ starts and ends at the location of the observer, wrapping the closed universe once. At $t=0$, $\Sigma$ is a minimal closed geodesic. For $t\neq 0$ this geodesic is not smooth; rather, it has a kink at the location of the observer. As we shall explain, the properties of $\Sigma$ are fully determined in terms of the state on the minimal closed geodesic that lies within the same homotopy class -- as above, we will take the length of this minimal closed geodesic to be $b$.\footnote{Note that for 2D surfaces with constant negative curvature, there always exists a closed smooth geodesic (whose length we call $b$) homotopic to $\Sigma$. } We can characterize how the worldline meets $\Sigma$ in terms of their intersection angle. This is once again determined by the length $u$ of the geodesic that is perpendicular to the worldline of the observer at both ends and that, together with the worldline segment in between the two ends, forms a closed curve that is also in the same homotopy class as $\Sigma$. As mentioned above, the value of $u$ also determines the angle at which the worldline of the observer meets the closed geodesic of length $b$. 

When going from Euclidean to Lorentzian signature there is an additional subtlety -- the angle $\theta$ at which the worldline of the observer intersects the minimal closed geodesic of length $b$ has to be complexified and needs to take the form $\frac{\pi}2 + i  \mathbb R$ in order for the observer's worldline to correspond to a real trajectory in Lorentzian signature. From \eqref{eq:angle-worldline-b-and-u}, we see that this can simply be achieved by taking $u >b$; while such a geometry does not make sense purely in Euclidean signature, it can be achieved in 
a complexified geometry. Thus, to prepare a state on the minimal geodesic slice in which the observer travels along a real trajectory in Lorentzian signature, we set $u>b$ in \eqref{eq:change-of-basis-non-pert}.

Having addressed this subtlety, we can now discuss how the state prepared on minimal geodesic slices is related to the state defined at finite time. If we keep $u$ fixed we can now easily express the length $a$ of $\Sigma$ in terms of the Lorentzian proper time $t$ measured along the worldline, with $t=0$ at the minimal closed geodesic slice, 
\be 
\cosh(a) = \cosh(b) - \sin(t)^2\left(\cosh(u)-1\right)\,.
\label{eq:length-of-geodesic-timeslice}
\ee
When $b=u$, the observer hits the time-reflection symmetric slice perpendicularly (i.e., it is at rest), and we see that the size of the geodesic time slice vanishes ($a=0$) at $t=\frac{\pi}2$; this is because the observer at rest hits the crunch singularity at this time. Setting $u>b$ the observer leaves the time-reflection symmetric slice at an angle and we see that the size of the geodesic time slice first vanishes at an earlier time. This is because the proper time that the observer experiences until they hit the singularity decreases once they are no longer at rest.  Past the time at which the observer hits the singularity, we see that $a<0$, and we can no longer make sense of the spacetime as corresponding to a real Lorentzian geometry. 

Since both the intersection angles with $\Sigma$ and the length $a$ of $\Sigma$ are fully determined in terms of $b$ and $u$, the geometry between $\Sigma$ and the closed geodesic of length $b$ is rigid. Therefore, by using \eqref{eq:length-of-geodesic-timeslice}, we can now express the state of the bulk and the observer $\ket{a, O_{t}, u}$, at a fixed proper Lorentzian time $t$, by replacing $b$ in \eqref{eq:change-of-basis-2} in terms of $a$, $t$ and $u$. Note that even though $t$ and $\Delta_O^{(i)}$ are conjugate to each other, we are considering a limit where the spacing of the eigen-energies of the clock is sufficiently dense and close to $\Delta_O$ such that we can keep $t$ fixed while replacing $\Delta_O^{(i)}$ by $\Delta_O$. We, therefore, obtained a new (over)complete basis of states, labeled by the time read by the observer's clock. It would be interesting to understand whether the overcompleteness of such states plays an important role in understanding the fate of the observer close to the crunch singularity.\footnote{We will further comment on this topic in Section \ref{sec:example-obs-closed-universe} where we discuss how correlation functions measured along the worldline of the observer can be used as a probe of this singularity.}

\section{The Hilbert space of quantum gravity from an observer's point-of-view}\label{sec:hilbert}

In this section, we compute the dimension of the relational Hilbert space $\Hr$ using the statistics of our modified inner product. As anticipated in Section \ref{sec:setup}, we find that the dimension of $\Hr$ is much larger than the dimension of $\Hnonp$ obtained in Section \ref{sec:review}. We start with the simple closed universe case in Section \ref{sec:closeduniverse} before moving on to the more involved two-sided black hole case in Section \ref{sec:twosided}.

We saw in Section \ref{sec:setup} that considering a finite window for the scaling dimension $\Delta_O$, and accordingly labeling states by a bulk geodesic length and the time read by the observer's clock is important when calculating correlation functions because operator insertions must be appropriately dressed to the observer's worldline. However, in the present section, we are only interested in computing properties of the Hilbert space $\Hr$ and not observables. For this purpose, we can simplify our analysis by working with the basis of asymptotic states discussed at the beginning of Section \ref{sec:setup}, which are entirely specified by their boundary conditions at the asymptotic boundary. In particular, we will take the scaling dimension $\Delta_O$ of the observer operator to be fixed for all states. We will go back to the study of dressed observables in Section \ref{sec:observables}. Moreover, we will focus on the generic case of JT gravity coupled to matter and restrict our attention to an arbitrary but finite energy window, for which the non-perturbative Hilbert space is finite-dimensional. We can then fix the length of the asymptotic boundary to be $\beta$ for all states and simply label states by the indices of matter insertions at the asymptotic boundary. 

In order to prove that $\Hr$ is, in fact, a Hilbert space, we need to show that the inner product computed by the gravitational path integral using our new rules is positive-definite. We will show in Section \ref{sec:nullstates} that the inner product is, in fact, positive semi-definite due to the presence of null states and $\Hr$ is a well-defined Hilbert space after quotienting out the set of null states. 
We can ignore these details for now because the resolvent calculation of $\dim\left(\Hr\right)$ is valid at the level of vector spaces and is thus independent of the positive-definiteness of the inner product.

\subsection{Closed universe}
\label{sec:closeduniverse}

Let us start by examining a closed universe in JT gravity coupled to matter. The most generic state is prepared by the path integral with one boundary of length $\beta$ where we insert an operator $\mathcal{O}_O$ associated with the observer and a matter operator $\mathcal{O}_{i}$ carrying an index $i$, see Figure \ref{fig:closedstate}. We will take the dimensions of the two operators to be $\Delta_O$ and $\Delta_m$, respectively.
We will label such a state by $\ket{\psi_i}$, where $i=1,...,K$ is a flavor index of the matter operator. We will take the two insertions to be antipodal on each boundary, namely $\beta_l=\beta_r=\beta/2$, where $\beta_l$ and $\beta_r$ are boundary lengths between the insertions of $\mathcal{O}_O$ and $\mathcal{O}_i$ to the left and right of the observer, respectively. As we will see, for sufficiently large $K$, this set of states spans the entire quantum gravity non-perturbative Hilbert space in an arbitrary but finite energy window. Considering states with arbitrary $\beta_l$ and $\beta_r$ would not enlarge the Hilbert space, but simply give us a different basis.\footnote{In terms of the infinite-dimensional, perturbative Hilbert space $\Hrelp$, fixing $\beta_l=\beta_r=\beta/2$ restricts us to a subspace of the full Hilbert space.}

\begin{figure}[h]
    \centering
    \includegraphics[width=0.5\linewidth]{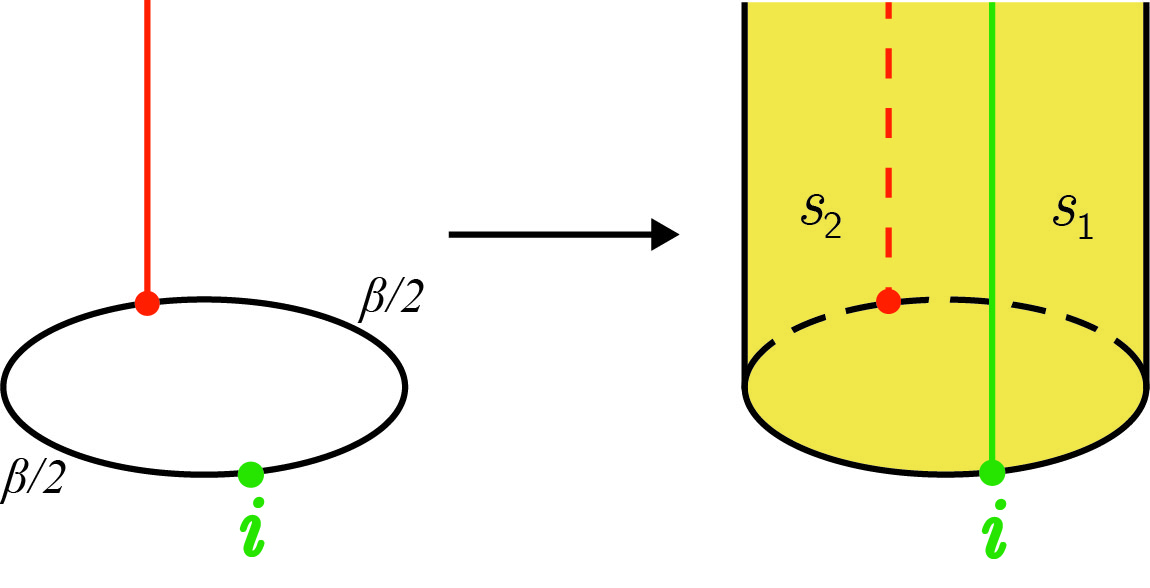}
    \caption{Path integral preparing a generic closed universe state $\ket{\psi_i}$ in the presence of an observer and a matter insertion labeled by $i$. Left: boundary conditions for the path integral, with a worldline for the observer (depicted in red) and the insertion of a matter operator with flavor $i$ and scaling dimension $\Delta_m$ at the asymptotic boundary. The insertions are at antipodal points. Right: the gravitational path integral prepares a state for the closed universe satisfying these boundary conditions. We indicate here the energies $s_1$ and $s_2$ of the patches to the left and right of the observer. We omit these labels in the rest of the Figures in this section, but the same convention is used in all geometries, leading to equations \eqref{eq:Zclosed} and \eqref{eq:tildex}.}
\label{fig:closedstate}
\end{figure}

We will also assume $\Delta_m,\Delta_O\gg S_0$. This assumption guarantees that we can safely neglect contributions to the path integral with intersecting geodesics because they are exponentially suppressed in $\Delta_m$, $\Delta_O$ and, therefore, subdominant with respect to higher genus contributions with no intersections. As we will see, this assumption greatly simplifies our computation of the dimension of $\Hr$. On the other hand, we expect the final answer for $\dim\left(\Hr\right)$ to be independent of the size of the scaling dimensions $\Delta_O,\Delta_m$, as we will discuss at the end of the present subsection.

Let us now compute the dimension of the non-perturbative, relational Hilbert space $\Hr$ for an observer in a closed universe using a resolvent calculation. Imposing that observer worldlines must connect between a bra and the corresponding ket, the leading contribution in the gravitational path integral to the connected $n$-th moment of an overlap,\footnote{The overline indicates that the quantity is computed using the gravitational path integral and therefore averaged over the dual matrix ensemble. Note that indices are not summed over in this formula.}  $\bar{\langle\psi_i|\psi_k\rangle\langle\psi_k|\psi_l\rangle...\langle\psi_m|\psi_j\rangle}_{conn}\approx Z_n^{\textrm{closed}}\delta_{ij}$ is given by a genus-zero pinwheel geometry with  $2n$ boundaries, see Figure \ref{fig:pinwheelclosed}.

\begin{figure}[h]
    \centering
    \includegraphics[width=0.3\linewidth]{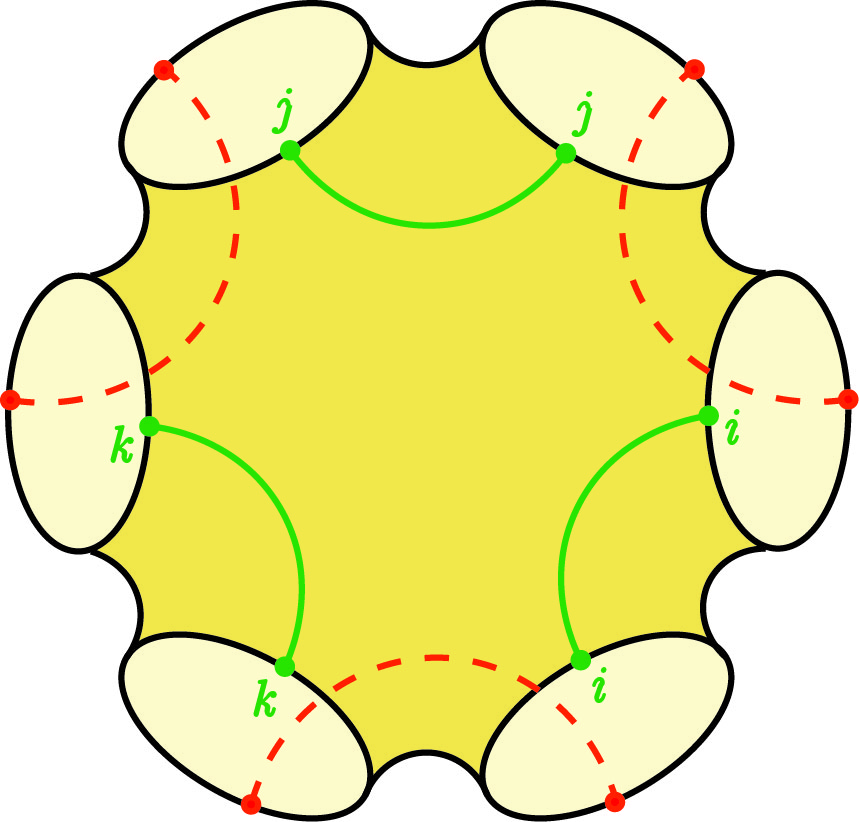}
    \caption{Genus-zero pinwheel geometry contributing to the $n=3$ moment of an overlap $\overline{\langle\psi_i|\psi_j\rangle\langle\psi_j|\psi_k\rangle\langle\psi_k|\psi_i\rangle}$ (no sum over indices). The observer's worldlines (depicted in red) must connect between a bra and the corresponding ket. Matter geodesics (depicted in green) can connect between arbitrary bras and kets.}
    \label{fig:pinwheelclosed}
\end{figure}

Notice that this geometry is analogous to that depicted in equation \eqref{eq:zn_matter_nonpert} 
and relevant for the calculation of $\dim\left(\Hnonp\right)$ in the two-sided black hole case (see Appendix \ref{app:reviewBH}). The only differences are that the geometry computing the $n$-th momentum now has $2n$ boundaries (instead of $n$ boundaries), and that the type of geodesic connecting the asymptotic boundaries now alternate between matter and observer geodesics. Just like in equation \eqref{eq:zn_matter_nonpert}, this surface has two connected patches separated from each other by matter and observer geodesics, and therefore two different energies to be integrated over.
The connected contribution $Z_n^{\textrm{closed}}$, which will play a central role in our resolvent calculation, thus takes the form\footnote{All the integrals over energies (i.e., over $s$) in this section are restricted to an arbitrary but finite energy window.}
\begin{equation}
\begin{aligned}
    Z_n^{\textrm{closed}}&=e^{(2-2n)S_0}\int ds_1ds_2\rho_0(s_1)\rho_0(s_2)\left[e^{-\frac{\beta s_1^2}{2}}e^{-\frac{\beta s_2^2}{2}}\gamma_{\Delta_O}(s_1,s_2)\gamma_{\Delta_m}(s_1,s_2)\right]^n\\
    &=e^{2S_0}\int ds_1ds_2\rho_0(s_1)\rho_0(s_2)\tilde{x}^n(s_1,s_2)
    \label{eq:Zclosed}
    \end{aligned}
\end{equation}
where we took into account that the pinwheel in Figure \ref{fig:pinwheelclosed} has genus zero and $2n$ boundaries, and in the second equality we defined
\begin{equation}
    \tilde{x}=e^{-2S_0}e^{-\frac{\beta s_1^2}{2}}e^{-\frac{\beta s_2^2}{2}}\gamma_{\Delta_O}(s_1,s_2)\gamma_{\Delta_m}(s_1,s_2).
    \label{eq:tildex}
\end{equation}
The normalization factors $\gamma_{\Delta_O},\gamma_{\Delta_m}$ associated respectively with the observer and matter worldlines are given in equation \eqref{eq:gamma}. Notice that each $\tilde{x}$ is associated with a pair of boundaries (namely, with an overlap), each boundary has length $\beta$, and we used $E=s^2/2$ like in Section \ref{sec:review}.

Let us now consider a set of $K$ states $\{\ket{\psi_i}\}$ with $i=1,...,K$, define
\begin{equation}
    d^2\equiv e^{2S_0}\int ds_1ds_2\rho_0(s_1)\rho_0(s_2)
    \label{eq:d2closed}
\end{equation}
similar to equation \eqref{eq:d},
and work in the regime $K\to\infty$, $e^{2S_0}\to\infty$ with $K/e^{2S_0}=O(1)$. Using the definition \eqref{eq:r} of the resolvent and computing moments of the overlap using the gravitational path integral, the resolvent takes the diagrammatic form
\begin{align}
\includegraphics[width=\textwidth]{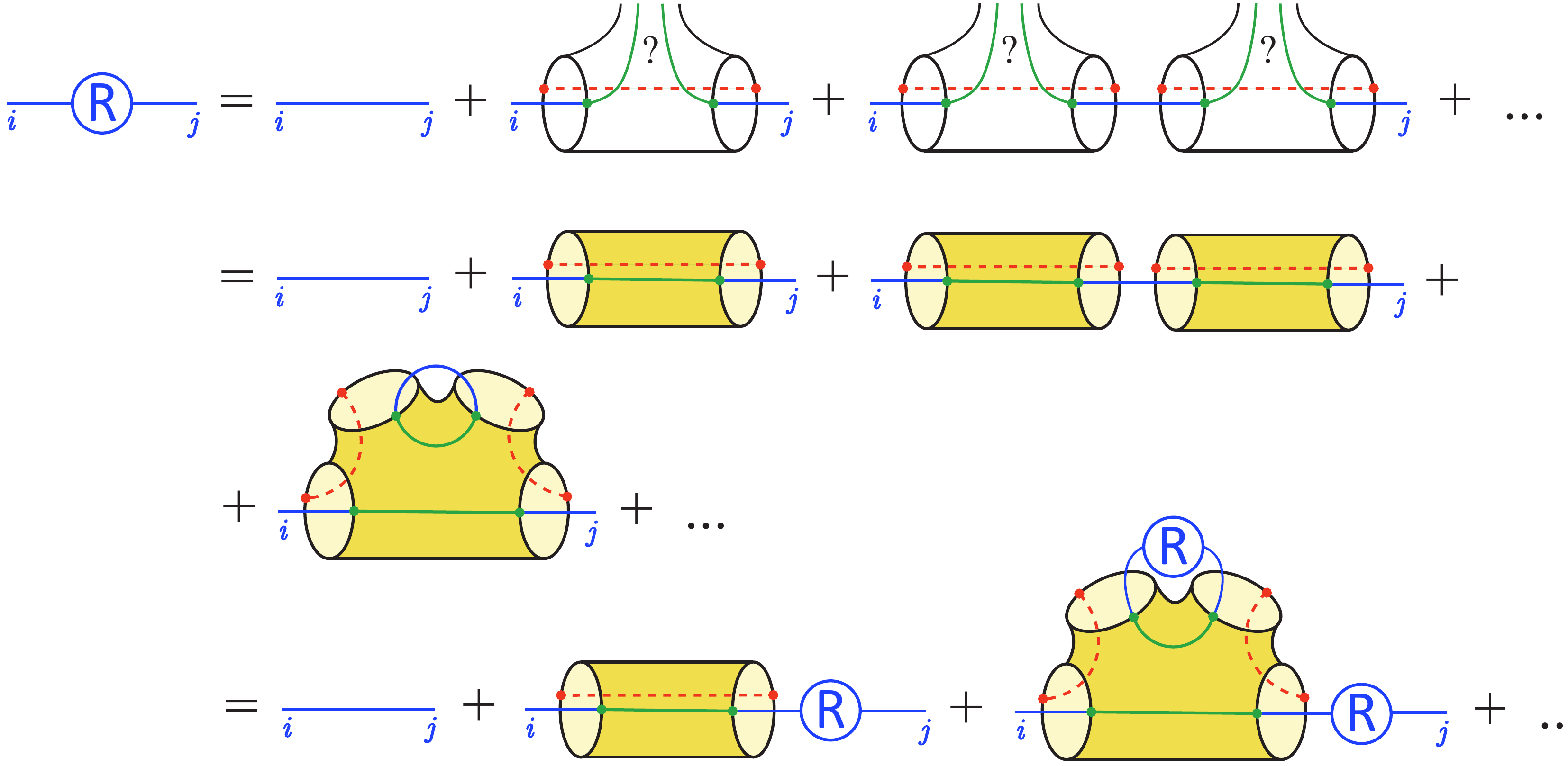}
\end{align}
In the last equality, we rearranged the expansion in terms of the number of boundaries to which the first boundary is connected. This gives us a Schwinger-Dyson equation:
\begin{equation}
    \overline{R_{ij}(\lambda)}=\frac{\delta_{ij}}{\lambda}+\frac{1}{\lambda}\sum_{n=1}^\infty Z_n^{\textrm{closed}}R^{n-1}(\lambda)\overline{R_{ij}(\lambda)}
\end{equation}
where $Z_n^{\textrm{closed}}$ is given in equation \eqref{eq:Zclosed}, we introduced the overline to indicate that the resolvent is computed using the gravitational path integral (and therefore averaging over the dual ensemble of random matrices), and we defined the trace of the resolvent $R(\lambda)=\sum_i\overline{R_{ii}(\lambda)}$. Taking the trace on both sides, using equation \eqref{eq:Zclosed}, and performing the sum over $n$, we obtain
\begin{equation}
    R(\lambda)=\frac{K}{\lambda}+\frac{e^{2S_0}}{\lambda}\int ds_1ds_2\rho_0(s_1)\rho_0(s_2)\frac{R(\lambda)\tilde{x}(s_1,s_2)}{1-R(\lambda)\tilde{x}(s_1,s_2)}.
    \label{eq:closedresolvent}
\end{equation}

Given $R(\lambda)$ in equation \eqref{eq:closedresolvent}, we can compute the rank of the Gram matrix $M$ of overlaps, i.e., the dimension of the Hilbert space $\Hr(K)$ spanned by the $K$ states using equation \eqref{eq:dimH_from_r}. Notice that the analytic structure of the resolvent is identical to that encountered in Appendix \ref{app:reviewBH} when computing $\dim\left(\Hnonp\right)$ in the two-sided black hole case. In particular, $R(\lambda)$ has a branch cut on the positive real axis, it has no pole for $K<d^2$, it has a pole at $\lambda=0$ for $K>d^2$, and it behaves as $R(\lambda)\sim K/\lambda$ for $\lambda\to\infty$ (see \cite{Boruch:2024kvv} and Appendix \ref{app:res_replica} for details). Therefore, for $K<d^2$, the integral over the contour $C=C_0\cup C_\infty$ depicted in Figure \ref{fig:contour} receives a contribution only from $C_\infty$ (the counterclockwise contour at infinity), and we obtain
\begin{equation}
    \dim\left(\Hr(K)\right)=\frac{K}{2\pi i}\oint_{C}\frac{d\lambda}{\lambda}=K, \quad\quad\quad K<d^2.
    \label{eq:dimclosedsmallk}
\end{equation}
On the other hand, for $K>d^2$, $R(\lambda)$ has a pole at $\lambda=0$, and the integral receives contributions from both $C_\infty$ and the clockwise contour $C_0$ around the pole at the origin. The integral over $C_\infty$ is the same as equation \eqref{eq:dimclosedsmallk} and gives a contribution equal to $K$. For the integral over $C_0$, the presence of the pole in $R(\lambda)$ implies that the integrand in the second term of equation \eqref{eq:closedresolvent} is independent of $\lambda$ near $\lambda = 0$, and we obtain
\begin{equation}
    \frac{1}{2\pi i}\oint_{C_0}d\lambda R(\lambda)=\frac{K-d^2}{2\pi i}\oint_{C_0}\frac{d\lambda}{\lambda}=d^2-K,
\end{equation}
where $d^2$ is defined in equation \eqref{eq:d2closed}.
We thus find $\dim\left(\Hr(K)\right)=d^2$ for $K>d^2$. In summary, the dimension of the non-perturbative Hilbert space spanned by $K$ matter states and relevant to describe relational dynamics with respect to an observer in the closed universe is given
\begin{equation}
    \dim\left(\Hr(K)\right)=\begin{cases}
        K \quad\quad\quad K<d^2\\
        d^2 \quad\quad\quad K>d^2
    \end{cases}
    \label{eq:dimclosed}
\end{equation}

A few comments are in order. First, the result \eqref{eq:dimclosed} is similar to that obtained for the Hilbert space of a black hole spanned by microstates of bulk matter \cite{Penington:2019kki,Almheiri:2019qdq,Balasubramanian:2022gmo,Balasubramanian:2022lnw}. In particular, we found that different matter states that are orthogonal to each other at the perturbative level are not orthogonal non-perturbatively due to spacetime wormhole corrections to the inner product. If we consider $K<d^2$ matter states, they span a non-perturbative Hilbert space of dimension $K$, and there are no null states. But if we choose $K>d^2$, the non-trivial overlaps between matter states lead to the existence of null states signaled by the pole of $R(\lambda)$ at $\lambda=0$. The residue $K-d^2$ gives the number of null states among the $K$ basis states considered. The $K$ matter states are overcomplete 
\cite{Penington:2019kki,Almheiri:2019qdq,Balasubramanian:2022gmo,Balasubramanian:2022lnw} in the non-perturbative gravitational Hilbert space $\Hr$, which has dimension $\dim\left(\Hr\right)=d^2$.\footnote{Notice that the dimension of $\Hr$ is finite only when restricting to an arbitrary but finite energy window. From equation \eqref{eq:rhonot} we see that the integral in equation \eqref{eq:d2closed} diverges if we consider an infinite energy window.} We will comment further about null states in this setup in Section \ref{sec:nullstates}. Second, notice that, unlike the global non-perturbative Hilbert space $\Hnonp$ of the closed universe in the absence of an observer considered in Section \ref{sec:reviewclosed}, the non-perturbative relational Hilbert space is non-trivial. 
Remember that $\Hr$ should be regarded as the Hilbert space of gravitational degrees of freedom that an observer in the bulk can interact with, and therefore, the one relevant for the description of the experience of the bulk observer.  Third, it is easy to show that if instead of considering JT gravity coupled to matter, we considered pure JT gravity, we would have obtained a quantum gravity relational Hilbert space of dimension $d$. This is due to the additional constraint that the energies on the two sides of the observer must be equal (i.e., $s_1=s_2$) because there is a single connected bulk patch wrapping around the closed universe.

Let us also comment on the assumption $\Delta_m,\Delta_O\gg S_0$ that allowed us to neglect contributions to the path integral in which worldlines intersect. Relaxing this assumption and taking intersecting geodesics into account would certainly modify the resolvent and its trace $R(\lambda)$. However, our calculation of the dimension of $\Hr$ only relied on the asymptotic behavior $R(\lambda)\sim K/\lambda$ as $\lambda\to\infty$ and on its analytic structure---namely that the only pole is at $\lambda=0$ when $K>d^2$ and the only branch cut on the positive real axis \cite{Boruch:2024kvv}. Although including corrections from intersecting geodesics modifies the resolvent, we expect the analytic properties in the neighborhood of $\lambda=0$, and therefore the result \eqref{eq:dimclosed}, to remain unchanged.\footnote{The universality of the residue of the resolvent at $\lambda=0$ is also manifest if we make the scaling dimensions of all operator insertions different from each other or if we change other details in the preparation of the states $\ket{\psi_i}$ whose statistics are captured by $R(\lambda)$ \cite{Boruch:2024kvv}.}

Finally, we remark that one interesting feature of the full quantum gravity relational Hilbert space $\Hr$ in the presence of matter is that it factorises into a Hilbert space to the right of the observer and a Hilbert space to the left of the observer:
\begin{equation}
    \Hr=\Hrl\otimes \Hrr.
    \label{eq:factorclosed}
\end{equation}
This result, which we explicitly derive in Appendix \ref{app:closedfactorisation}, also holds in the presence of an arbitrary number of matter insertions. This contrasts with pure JT gravity, where $\dim\left(\Hr\right)=d$ and the Wheeler-DeWitt constraint guarantees the existence of a single, non-factorised Hilbert space to the right and left of the observer. The latter case is analogous to the non-factorised, $d$-dimensional Hilbert space for a two-sided black hole in the absence of an observer or matter, and it is due to the additional constraint that the energies to the left and right of the observer are equal. As we will argue in Section \ref{sec:holography}, the result \eqref{eq:factorclosed} suggests that, in the JT gravity with matter setup, a putative holographic dual description living on the worldline of the observer would consist of two entangled, non-interacting copies of a holographic theory.

\subsection{Two-sided black holes }
\label{sec:twosided}

Let us now examine the case of a two-sided black hole in JT gravity coupled to matter. In the presence of the worldline of an observer, the 
most generic state one can consider is prepared by the insertion of a matter operator $\mathcal{O}_{i}$ carrying an index on each side of the observer, see Figure \ref{fig:2sidedstate}. We will label such a generic state by $\ket{\psi_{ii'}}$, where $i$ and $i'$ are the indices of the matter insertions on the right and left side of the observer, respectively.\footnote{Notice that the ordering of the insertions is important, because it determines the specific gluing of the asymptotic boundary and observer worldline when computing overlaps. For example, in the $\Delta_m,\Delta_O\gg 1$ limit of our interest, $\langle\psi_{12}|\psi_{21}\rangle \approx 0$ perturbatively, because the geodesics connecting the matter insertions to each other on the disk need to cross the observer worldline, and are therefore exponentially suppressed in $\Delta_m$ and $\Delta_O$.} 
\begin{figure}[t!]
    \centering
    \includegraphics[width=0.8\linewidth]{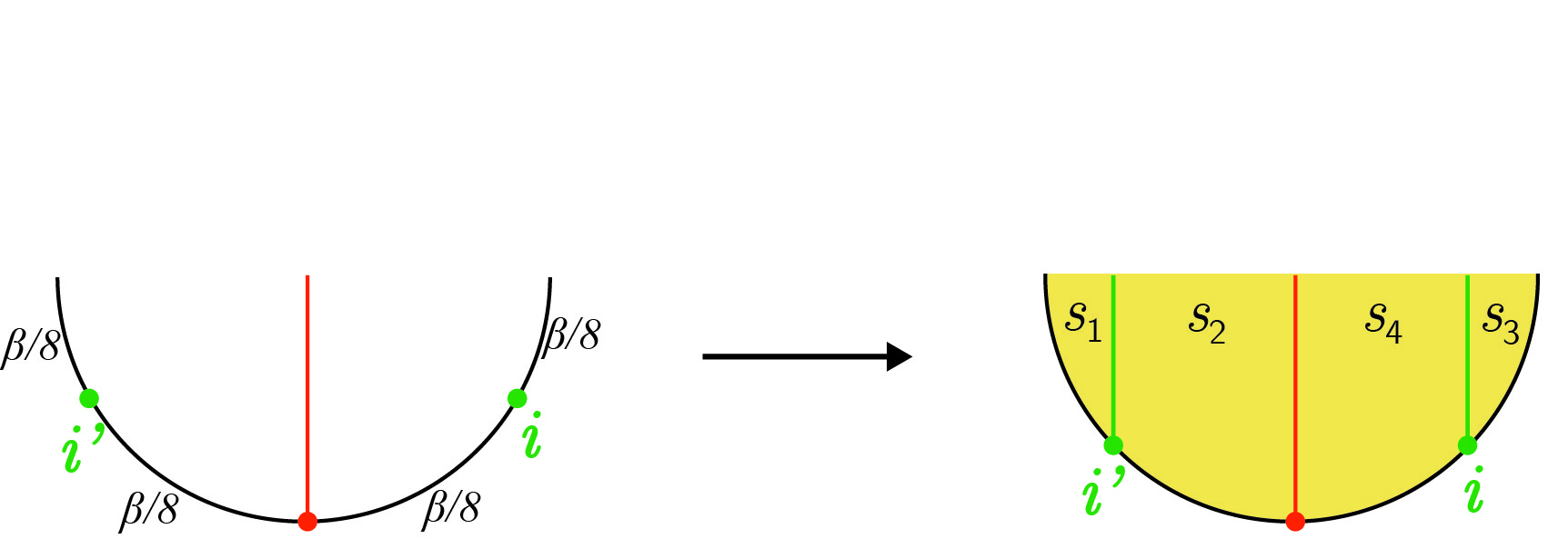}
    \caption{Path integral preparing a state $\ket{\psi_{ii'}}$ for the two-sided black hole in the presence of an observer and matter insertions to the right and left of the observer labeled by $i$ and $i'$. Every boundary segment in the preparation of a state is taken to have length $\beta/8$. Left: boundary conditions for the path integral, with a worldline for the observer (depicted in red) and the insertion of matter operators with flavors $i$ and $i'$ and scaling dimension $\Delta_m$ at the asymptotic boundary. Right: the gravitational path integral prepares a state for the two-sided black hole satisfying these boundary conditions. We indicate here the energies $s_1$, $s_2$, $s_3$, $s_4$ of the four patches. We omit these labels in the rest of the Figures in this section, but the same convention is used in all geometries, leading to equations \eqref{eq:BHZn} and \eqref{eq:tildey}.}
    \label{fig:2sidedstate}
\end{figure}
Similar to the closed universe case, throughout this section, we will take each boundary segment in the preparation of the state to have the same length $\beta/8$, where $\beta$ is the length of a full boundary, see Figure \ref{fig:2sidedstate}.\footnote{Like in the closed universe case, this specific set of states spans only a subspace of the perturbative relational Hilbert space $\Hrelp$, but is an overcomplete basis for the non-perturbative relational Hilbert space $\Hr$.}
We will further assume that the dimension of the matter operators is the same for all insertions---i.e., $\Delta_{i}\equiv \Delta_m$ for all $i$---and that, just like in the closed universe case, the dimension of all the operators considered is large, $\Delta_m,\Delta_O\gg S_0$ and we can therefore neglect contributions to the path integral where geodesics intersect. As we have discussed in Section \ref{sec:closeduniverse}, we do not expect this assumption to affect the analytic structure of the resolvent and, therefore, the computation of the dimension of $\Hr$.

Let us now compute the dimension of the Hilbert space seen by an observer using a resolvent calculation. Imposing that observer worldlines connect a bra to the corresponding ket and that no geodesics intersect, the leading contribution to the connected $n$-th moment of an overlap is given by a surface with $n$ boundaries and genus $n-1$, see Figure \ref{fig:BHsurfaces}. 
\begin{figure}[h]
    \centering
    \includegraphics[width=0.7\linewidth]{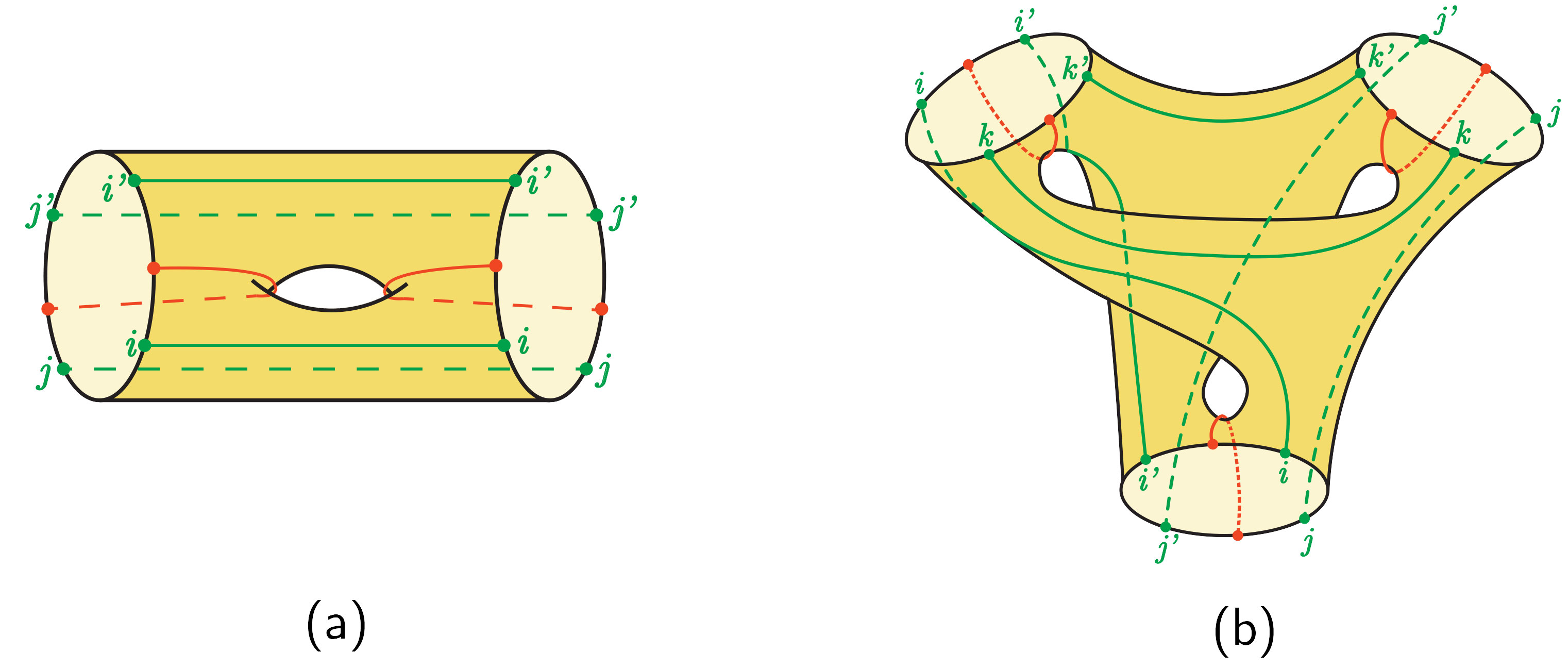}
    \caption{Genus $n-1$ geometries with $n$ boundaries contributing to the $n$-th moment of an overlap in the two-sided black hole in the presence of an observer. The observer's worldlines (depicted in red) must connect between a bra and the corresponding ket. Matter geodesics (depicted in green) can connect between arbitrary bras and kets. Geodesics do not intersect in these leading, connected contributions. (a) Genus $g=1$, 2-boundary geometry contributing to the square of an overlap $\overline{|\langle\psi_{jj'}|\psi_{ii'}\rangle|^2}$. (b) Genus $g=2$, 3-boundary geometry contributing to the $n=3$ moment of an overlap $\overline{\langle\psi_{jj'}|\psi_{ii'}\rangle\langle\psi_{ii'}|\psi_{kk'}\rangle\langle\psi_{kk'}|\psi_{jj'}\rangle}$ (no sum over indices).}
    \label{fig:BHsurfaces}
\end{figure}
This geometry can be built by considering two copies of a $n$-boundary pinwheel geometry where each boundary is composed by a segment associated with an asymptotic boundary and a segment associated with the worldline of an observer, see Figure \ref{fig:gluing}. The genus $n-1$ surface is then obtained by gluing the two pinwheels along the observer worldlines.
\begin{figure}[t!]
    \centering
    \includegraphics[width=0.9\linewidth]{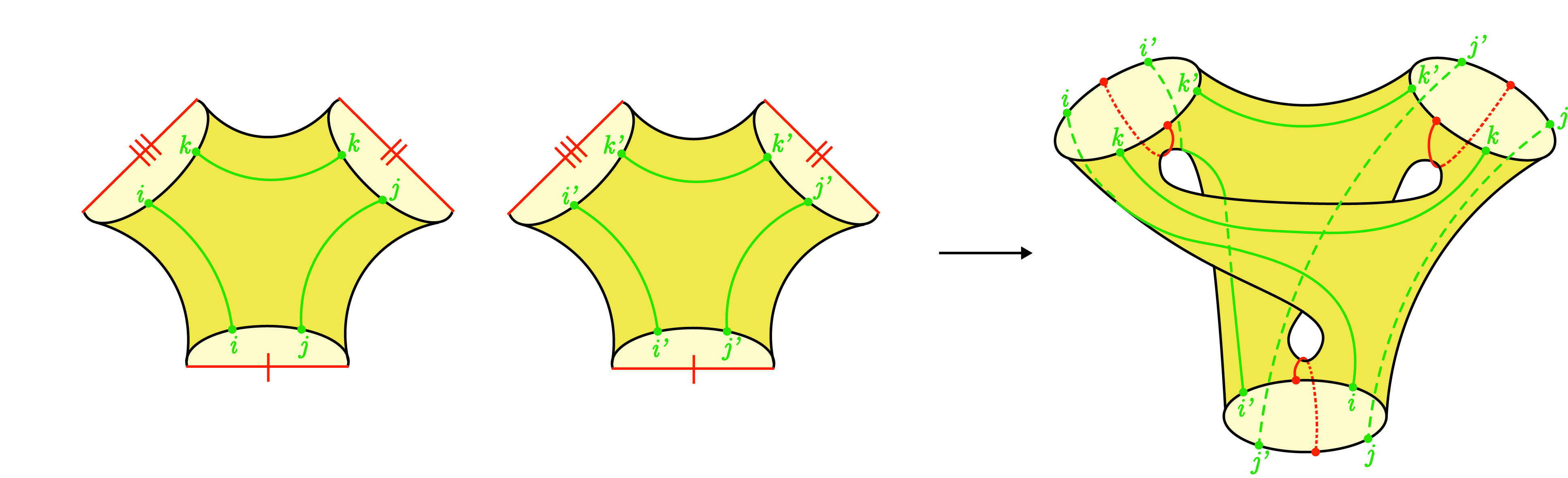}
    \caption{The genus $n-1$, $n$-boundary geometry contributing to the $n$-th moment of an overlap can be obtained by gluing together two genus-0 pinwheel geometries with $n$ boundaries. Here we depict the pinwheels and the resulting glued geometry for $n=3$. Each boundary of the pinwheels is composed of a segment of asymptotic boundary and a segment of observer's worldline. The two pinwheels are glued together along the observer's worldlines as depicted.}
    \label{fig:gluing}
\end{figure}
It is then clear that such a geometry has four different connected patches (two for each pinwheel) separated from each other by the observer and matter geodesics, and therefore four different energies to be integrated over.
The leading contribution to the connected $n$-th moment of an overlap\footnote{Notice that indices are not being summed over in this formula.} $Z_n^{\textrm{BH}}\approx \overline{\langle \psi_{ii'}|\psi_{jj'}\rangle...\langle\psi_{mm'}|\psi_{ii'}\rangle}$ is then given by 
\begin{equation}
\begin{aligned}
    Z_n^{\textrm{BH}}\equiv 
    e^{[2-2(n-1)-n]S_0}&\int ds_1ds_2ds_3ds_4\rho_0(s_1)\rho_0(s_2)\rho_0(s_3)\rho_0(s_4)
   \\
    &\times\left[e^{-\frac{\beta\left(s_1^2+s_2^2+s_3^2+s_4^2\right)}{8}}\gamma_{\Delta_m}(s_1,s_2)\gamma_{\Delta_m}(s_3,s_4)\gamma_{\Delta_O}(s_2,s_4)\right]^n\\
    =e^{4S_0}\int ds_1ds_2&ds_3ds_4\rho_0(s_1)\rho_0(s_2)\rho_0(s_3)\rho_0(s_4)\tilde{y}^n(s_1,s_2,s_3,s_4)
    \end{aligned}
    \label{eq:BHZn}
\end{equation}
where we used $\chi=(2-2g-n)$ for the Euler characteristic (our geometry has genus $g=n-1$ and $n$ boundaries) and in the second equality we defined
\begin{equation}
    \tilde{y}(s_1,s_2,s_3,s_4)=e^{-3S_0}e^{-\frac{\beta\left(s_1^2+s_2^2+s_3^2+s_4^2\right)}{8}}\gamma_{\Delta_m}(s_1,s_2)\gamma_{\Delta_m}(s_3,s_4)\gamma_{\Delta_O}(s_2,s_4).
\label{eq:tildey}
\end{equation}
$\gamma_{\Delta_m}$, $\gamma_{\Delta_O}$ are the normalization factors defined in equation \eqref{eq:gamma}.

Let us consider a set of $K^2$ states $\{\ket{\psi_{ii'}}\}$ where $i,i'=1,...,K$, and denote the overlap between two states by $M_{ii',jj'}=\langle\psi_{ii'}|\psi_{jj'}\rangle$. We will work in the regime $K\to \infty$, $e^{2S_0}\to\infty$ with $K/e^{2S_0}=O(1)$. Similar to equations \eqref{eq:d} and \eqref{eq:d2closed}, let us define
\begin{equation}
    d^4\equiv e^{4S_0}\int ds_1ds_2ds_3ds_4\rho_0(s_1)\rho_0(s_2)\rho_0(s_3)\rho_0(s_4).
\end{equation}
We assume once again that the integrals over energies are to be restricted to an arbitrary but finite energy window. The resolvent is now defined as\footnote{The identity in this space is given by $\mathbbm{1}_{ii',jj'}=\delta_{ij}\delta_{i'j'}$.}
\begin{equation}
    R_{ii',jj'}(\lambda)=\left(\frac{1}{\lambda \mathbbm{1}- M}\right)_{ii',jj'}=\frac{\delta_{ij}\delta_{i'j'}}{\lambda}+\frac{1}{\lambda}\sum_{n=1}^\infty \frac{\left(M^n\right)_{ii',jj'}}{\lambda^n}.
    \label{eq:BHresolventdef}
\end{equation}
Using the gravitational path integral to compute moments of the overlap as explained above, we can rewrite equation \eqref{eq:BHresolventdef} in terms of spacetime geometries:\footnote{The notation used in the diagrammatic expansion for the 4-index resolvent is completely analogous to that explained in Footnote \ref{footnote:notation} for the 2-index resolvent.}
\begin{align}
    \includegraphics[width=\textwidth]{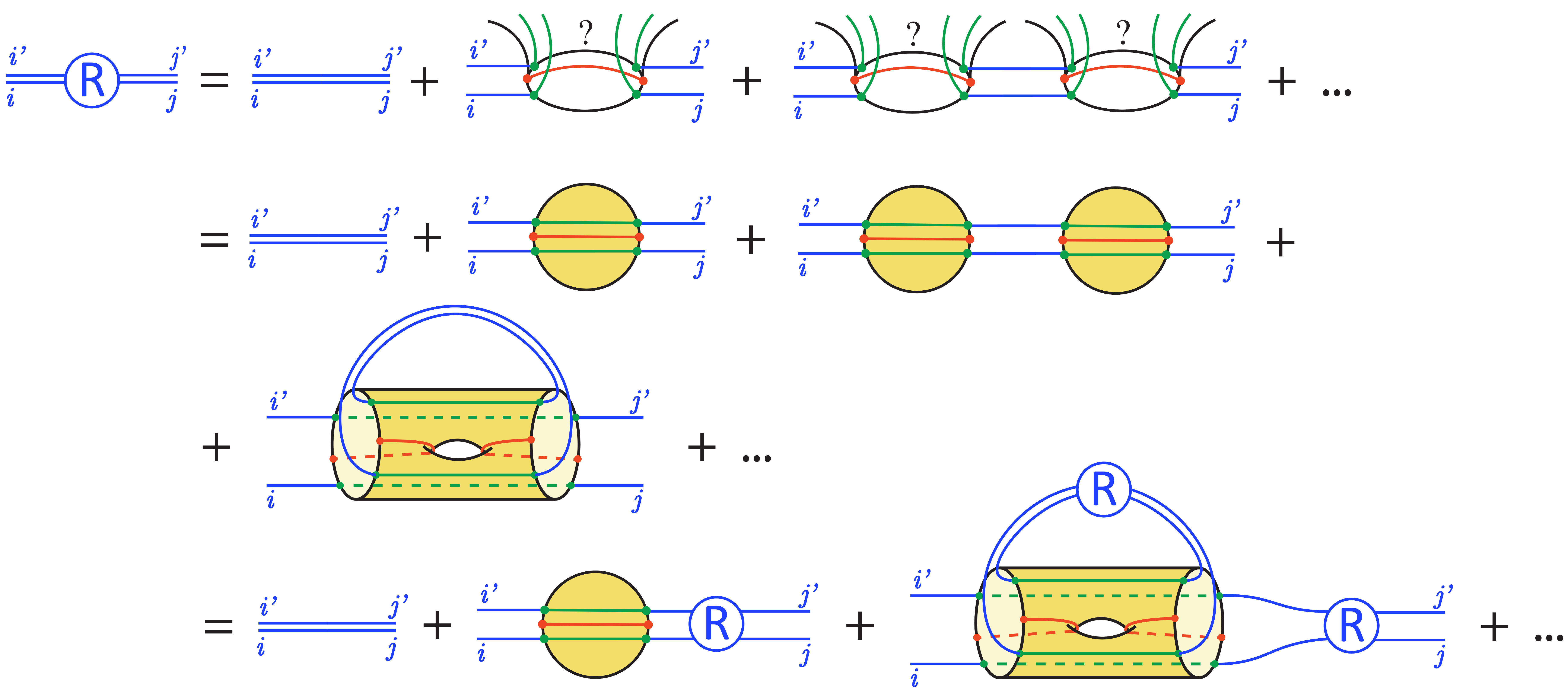}
\end{align}
In the last equality, we rearranged the expansion in terms of the number of boundaries to which the first boundary is connected. The corresponding Schwinger-Dyson equation takes the form:
\begin{equation}
    \overline{R_{ii',jj'}(\lambda)}=\frac{\delta_{ij}\delta_{i'j'}}{\lambda}+\frac{1}{\lambda}\sum_{n=1}^\infty Z_n^{\textrm{BH}} R^{n-1}(\lambda) \overline{R_{ii',jj'}(\lambda)}
    \label{eq:BHSD}
\end{equation}
where $Z_n^{\textrm{BH}}$ is given by equation \eqref{eq:BHZn}, we introduced again the overline to indicate that the resolvent is computed using the gravitational path integral, and $R(\lambda)=\sum_{i, i'=1}^K \overline{R_{ii',ii'}(\lambda)}$ is the trace of the resolvent. Taking the trace of equation \eqref{eq:BHSD} and performing the sum, we obtain
\begin{equation}
    R(\lambda)=\frac{K^2}{\lambda}+\frac{e^{4S_0}}{\lambda}\int ds_1ds_2ds_3ds_4\rho_0(s_1)\rho_0(s_2)\rho_0(s_3)\rho_0(s_4)\frac{R(\lambda)\tilde{y}(s_1,s_2,s_3,s_4)}{1-R(\lambda)\tilde{y}(s_1,s_2,s_3,s_4)}.
    \label{eq:BHR}
\end{equation}
With this result in hand, we can now calculate the dimension of the Hilbert space $\Hr(K)$ spanned by the $K$ states, which is related to the trace of the resolvent by equation \eqref{eq:dimH_from_r}.

Notice that the structure of the Schwinger-Dyson equation \eqref{eq:BHR} is analogous to that obtained in the closed universe case in Section \ref{sec:closeduniverse}, and in the computation of $\dim\left(\Hnonp\right)$ in Appendix \ref{app:reviewBH}. 
Therefore, the analytic properties of $R(\lambda)$ are also similar (see \cite{Boruch:2024kvv} and Appendix \ref{app:res_replica} for details). The calculation of $\dim\left(\Hr(K)\right)$ is then completely analogous to the closed universe case, and we will skip the details here. 
In the end, we find that the dimension of the Hilbert space spanned by the $K$ matter states and relevant to describe relational dynamics with respect to a bulk observer in the two-sided black hole case is
\begin{equation}
    \dim\left(\Hr(K)\right)=\begin{cases}
        K^2 \hspace{2cm} K^2<d^4\\
        d^4 \hspace{2.16cm} K^2>d^4
    \end{cases}
    \label{eq:dimBH}
\end{equation}
which shows that the underlying non-perturbative, relational quantum gravity Hilbert space has dimension $\dim\left(\Hr\right)=d^4$. Notice that this is again much larger than the 
Hilbert space of the two-sided black hole $\Hnonp$ in the absence of an observer computed in Appendix \ref{app:reviewBH}, which had dimension $\dim\left(\Hnonp\right)=d^2$. 

Computing the dimension of the Hilbert space in pure JT gravity (in the absence of matter) in the presence of an observer gives $\dim\left(\Hr\right)=d^2$ as opposed to $\dim\left(\Hnonp\right)=d$ in the absence of an observer. This is due to the additional constraint $s_1=s_2$ and $s_3=s_4$, namely that there is only one connected patch of spacetime to the left of the observer, and one to the right of the observer. Just like in the closed universe case, we expect that relaxing the technical assumption that $\Delta_m,\Delta_O\gg S_0$ does not affect the analytic structure of the resolvent, and therefore the result \eqref{eq:dimBH}.

Finally, the non-perturbative quantum gravity relational Hilbert space also factorises. In the presence of matter, we have (this result is derived in Appendix \ref{app:BHfactorisation})
\begin{equation}
    \Hr=\HrR\otimes \Hrr\otimes \HrL\otimes \Hrl,
    \label{eq:factorBH}
\end{equation}
where $\HrR$, $\HrL$ are associated with the right and left asymptotic boundaries, and $\Hrr$, $\Hrl$ with the right and left sides of the observer, respectively. For pure JT gravity without matter, the $d^2$-dimensional Hilbert space $\Hr$ factorises into two Hilbert spaces: one associated with the region between the observer and the left asymptotic boundary, and the other with the region between the observer and the right asymptotic boundary. This is because of the additional constraints that $s_1=s_2$ and $s_3=s_4$ arise in the absence of matter. As we will discuss in Section \ref{sec:holography}, the result \eqref{eq:factorBH} for factorisation also suggests that a putative dual holographic description of the two-sided black hole with matter in the presence of an observer consists of four non-interacting holographic theories: two living on the two asymptotic boundaries, and two on the worldline of the observer.

\section{Positivity of the inner product 
  and null states}
\label{sec:nullstates}

In the previous section, we computed properties of different Hilbert spaces in the presence of an observer. Technically speaking, however, we still did not show that the vector spaces $\Hrelp$ and $\Hr$ are Hilbert spaces, because we did not prove that the modified inner product in the presence of an observer introduced in Section \ref{sec:setup} is positive semi-definite. We turn now to filling this logical gap, and, as a by-product, we will find a simple description of the null states in the various non-perturbative Hilbert spaces in the presence of an observer.

\subsection{Inner product is positive semi-definite}

We will discuss the inner product in the closed universe case, leaving the two-sided black hole case, discussed in Section \ref{sec:twosided}, as an exercise for the reader since the discussion is quite similar. We saw in Section \ref{sec:setup} that we can describe the set of states with an observer by their asymptotic boundary conditions, with possible matter insertions along the asymptotic boundary. We could also describe the states in terms of the geometry and the observer state on a closed geodesic slice. In this section, we will find it more convenient to work with asymptotic states, although everything we say can be extended to closed geodesic slices by using the transform in \eqref{eq:change-of-basis-1} (or its generalization when matter is also present). Such asymptotic states are labeled by the two asymptotic lengths, $\beta_l$ and $\beta_r$, which lie between the observer and the matter insertion. The matter insertions will also be labeled by the flavor index, which runs over $i = 1,...,K$. We denote these states by $\ket{\beta_l, \beta_r, i}$. Note that this set of states spans the whole perturbative Hilbert space. This is unlike our choice of states in Section \ref{sec:hilbert}, where we restricted to the subspace of the perturbative Hilbert space where $\beta_l = \beta_r$ for convenience since, as far as the non-perturbative Hilbert space is concerned, both sets of states are vastly over-complete and span the full space. 

The non-perturbative inner-product in the presence of an observer is defined by a path integral in the presence of boundary conditions given by
\begin{align}\label{eqn:innerproductbcs}
    \bra{\beta_l, \beta_r, \Delta_i}\ket{\beta_l', \beta_r', \Delta_j}_H \quad = \quad \int d\ell\ e^{-\Delta_O \ell}\quad
    \begin{tikzpicture}[scale=.8, baseline={([yshift=-0.1cm]current bounding box.center)}]
    \coordinate (A) at (0,1);
    \coordinate (B) at (-1.4,1);
    \coordinate (C) at (-1.4,-1);
    \coordinate (D) at (0,-1);
    \coordinate (E) at (-.75,1.3);
    \coordinate (F) at (-.75,-1.3);
    \draw[thick] (0,1) arc[start angle=45,end angle=135,radius=1cm];
    \draw[thick] (0,-1) arc[start angle=-45,end angle=-135,radius=1cm];
    \fill[red] (A) circle (2pt);
    \fill[red] (B) circle (2pt);
    \fill[red] (C) circle (2pt);
    \fill[red] (D) circle (2pt);
    \fill[green] (E) circle (2pt);
    \fill[green] (F) circle (2pt);
    \draw[thick, red] (A) .. controls (-.5, .5) and (-.5,-.5) .. (D);
     \draw[thick, red] (B) .. controls (-.9, .5) and (-.9,-.5) .. (C);
    \node[right] at (-.2,0) {$\ell$};
    \node[left] at (-1.2,0) {$\ell$};
    \node[above] at (-1.2,1.2) {$\beta_l$};
    \node[above] at (-.2,1.2) {$\beta_r$};
    \node[above] at (E) {$i$};
    \node[below] at (F) {$j$};
    \node[below] at (-1.2,-1.2) {$\beta_l'$};
    \node[below] at (-.2,-1.2) {$\beta_r'$};
\end{tikzpicture}\quad .
\end{align}
As in \cite{IliLev24}, and as we mentioned in Section \ref{sec:review} and the discussion around \eqref{eqn:cgwavefunction}, we will find it useful to use the matrix integral perspective on these overlaps. In equations \eqref{eqn:nonpertoverlapstatistics} and \eqref{eq:change-of-basis-non-pert}, the corrections due to higher topology of the overlaps between the closed geodesic state $\ket{b, \Delta_O^{(i)}, u}$ and the asymptotic state $\ket{\beta, \Delta_O^{(j)}}$ were interpreted as coming from a random average over possible boundary Hamiltonians, $H$. Similarly, here, we interpret the non-perturbative corrections to the overlaps in \eqref{eqn:innerproductbcs} as also coming from a random average over possible boundary theories (i.e., a matrix integral). The subscript $H$ on the left-hand side of \eqref{eqn:innerproductbcs} is then to denote that these boundary conditions define an operator in the matrix integral, which depends upon a draw of the Hamiltonian $H$ from the ensemble. To leading order in the genus expansion, this inner product reads 
\begin{align}\label{eqn:disklevelinnerproduct}
    &\overline{\bra{\beta_l, \beta_r, \Delta_i}\ket{\beta_l', \beta_r', \Delta_j}_H} \quad = \quad \int d\ell\ e^{-\Delta_O \ell}\quad
    \begin{tikzpicture}[scale=0.8, baseline={([yshift=-0.1cm]current bounding box.center)}]
    \coordinate (A) at (0,1);
    \coordinate (B) at (-1.4,1);
    \coordinate (C) at (-1.4,-1);
    \coordinate (D) at (0,-1);
    \coordinate (E) at (-.75,1.3);
    \coordinate (F) at (-.75,-1.3);
    \fill[outsideyellow] (0,1) arc[start angle=45,end angle=135,radius=1cm] 
        -- (B) .. controls (-.9,.5) and (-.9,-.5) .. (C) 
        (C) arc[start angle=-135,end angle=-45,radius=1cm]
        -- (D) .. controls (-.5, -.5) and (-.5,.5) .. (A);    
    \draw[thick] (0,1) arc[start angle=45,end angle=135,radius=1cm];
    \draw[thick] (0,-1) arc[start angle=-45,end angle=-135,radius=1cm];
    \fill[red] (A) circle (2pt);
    \fill[red] (B) circle (2pt);
    \fill[red] (C) circle (2pt);
    \fill[red] (D) circle (2pt);
    \fill[green] (E) circle (2pt);
    \fill[green] (F) circle (2pt);
    \draw[thick, red] (A) .. controls (-.5, .5) and (-.5,-.5) .. (D);
    \draw[thick, red] (B) .. controls (-.9, .5) and (-.9,-.5) .. (C);
    \draw[thick, green] (E) -- (F);
    \node[right] at (-.2,0) {$\ell$};
    \node[left] at (-1.2,0) {$\ell$};
    \node[above] at (-1.2,1.2) {$\beta_l$};
    \node[above] at (-.2,1.2) {$\beta_r$};
    \node[above] at (E) {$i$};
    \node[below] at (F) {$j$};
    \node[below] at (-1.2,-1.2) {$\beta_l'$};
    \node[below] at (-.2,-1.2) {$\beta_r'$};
\end{tikzpicture}\quad \nonumber \\
& = \int ds_lds_r \, \rho_0(s_l) \rho_0(s_r) \,e^{-(\beta_l +\beta_l')\frac{s_l^2}{2} - (\beta_r+\beta_r')\frac{s_r^2}{2}} \times \gamma_{\Delta_O}(s_l,s_r) \times \left(\delta^{ij} \gamma_{\Delta_i}(s_l,s_r)\right) \quad .
\end{align}
As in previous sections, the overline denotes that we are computing quantities using the gravitational path integral or, equivalently, we are averaging the quantities over the dual ensemble. We can re-arrange this equation for the inner product into bras and kets to make the inner product structure more manifest. Namely we can define wave-functions
\begin{align}
    \bra{s_l,s_r,\Delta_i}\ket{\beta_l, \beta_r,\Delta_i} \equiv e^{-\beta_l \frac{s_l^2}{2} -\beta_r \frac{s_r^2}{2}} \delta^{ij} \sqrt{\gamma_{\Delta_i}(s_l,s_r)}.
\end{align}
Then equation \eqref{eqn:disklevelinnerproduct} may be written suggestively as 
\begin{align}\label{eqn:pertinnerproductmatrix}
    &\overline{\bra{\beta_l, \beta_r, \Delta_i}\ket{\beta_l', \beta_r', \Delta_j}_H}\nonumber \\
    &=\sum_{i'j'}\int ds_l ds_r\, ds_l'ds_r' \bra{\beta_l, \beta_r,\Delta_i}\ket{s_l,s_r,\Delta_{i'}}\ g^{i'j'}_{\Delta_O}(s_l,s_r;s_l',s_r')\ \bra{s_l',s_r',\Delta_{j'}}\ket{\beta_l', \beta_r',\Delta_j}.
\end{align}
where
\begin{align}
    g^{ij}_{\Delta_O}(s_l,s_r;s_l',s_r')\equiv \delta^{ij} \delta(s_l-s_l')\delta(s_r - s_r') \rho_0(s_l) \rho_0(s_r) \gamma_{\Delta_O}(s_l,s_r).
\end{align}
We, therefore, see that the perturbative inner product expressed in the $\ket{s_l,s_r,\Delta_i}$ basis is diagonal and furthermore has positive eigenvalues.  Therefore, we confirmed that $\Hrelp$ is a well-defined Hilbert space. This result was expected because the perturbative inner product on the subspace of perturbative states with an observer is unmodified by our new rules for the gravitational path integral.

To include higher genus effects and study the non-perturbative inner product, we now rely on the mapping between this model and a dual matrix integral, which captures all non-perturbative effects. Indeed, since we are ignoring all worldline crossings, all possible relevant contractions on higher-genus surfaces (including in calculations of higher moments of the inner product matrix) are captured by the JT matrix model of \cite{Saad:2019lba} together with $K$ operators $\mathcal{O}^i$ for $i = 1,..., K$ which model the matter. As pointed out in Section 4 of \cite{JafKol22}, these matter operators have matrix elements that are drawn from a Gaussian ensemble of mean zero and variance 
\begin{align}\label{eqn:mattervar}
\overline{\bra{E_{a_1}}\mathcal{O}^i\ket{E_{a_2}}\bra{E_{a_3}}\mathcal{O}^j\ket{E_{a_4}}} \equiv \overline{\mathcal{O}^i_{a_1a_2}\mathcal{O}^j_{a_3a_4}} = \delta^{ij}\delta_{a_1a_4}\delta_{a_2a_3} \gamma_{\Delta_i}(E_{a_1},E_{a_2}).
\end{align}
As discussed in \cite{IliLev24} and at the end of Section \ref{sec:pert_h}, when the matter fields on a geodesic slice are in their $\Delta = m =0$ ground state, fixed $\ell$ boundary conditions correspond non-perturbatively to an insertion of $\Tilde{\varphi}_{\sqrt{2H}}(\ell)$.\footnote{Here $\Tilde{\varphi}_{\sqrt{2E}}(\ell)$ is the length wavefunction given in \eqref{eqn:lengthwf} and $\Tilde{\varphi}_{\sqrt{2H}}(\ell)$ is the corresponding function of the Hamiltonian.} 
The following diagram can then be computed to all orders in the genus expansion by inserting into the matrix integral the operator 
\begin{align}\label{eqn:fixedellmatter}
    \begin{tikzpicture}[scale=1.0, baseline={([yshift=-0.1cm]current bounding box.center)}]
    \coordinate (A) at (0,1);
    \coordinate (B) at (-1.4,1);
    \coordinate (C) at (-1.4,-1);
    \coordinate (D) at (0,-1);
    \coordinate (E) at (-.75,1.3);
    \coordinate (F) at (-.75,-1.3);
    \draw[thick] (0,1) arc[start angle=45,end angle=135,radius=1cm];
    \draw[thick] (0,-1) arc[start angle=-45,end angle=-135,radius=1cm];
    \fill[red] (A) circle (2pt);
    \fill[red] (B) circle (2pt);
    \fill[red] (C) circle (2pt);
    \fill[red] (D) circle (2pt);
    \fill[green] (E) circle (2pt);
    \fill[green] (F) circle (2pt);
    \draw[thick, red] (A) .. controls (-.5, .5) and (-.5,-.5) .. (D);
     \draw[thick, red] (B) .. controls (-.9, .5) and (-.9,-.5) .. (C);
    \node[right] at (-.2,0) {$\ell$};
    \node[left] at (-1.2,0) {$\ell$};
    \node[above] at (-1.2,1.2) {$\beta_l$};
    \node[above] at (-.2,1.2) {$\beta_r$};
    \node[above] at (E) {$i$};
    \node[below] at (F) {$j$};
    \node[below] at (-1.2,-1.2) {$\beta_l'$};
    \node[below] at (-.2,-1.2) {$\beta_r'$};
\end{tikzpicture}\quad = \quad \text{Tr} \left(\Tilde{\varphi}_{\sqrt{2H}}(\ell) e^{-\beta_l' H} \mathcal{O}^j e^{-\beta_r'H} \Tilde{\varphi}_{\sqrt{2H}}(\ell) e^{-\beta_r H} \mathcal{O}^i e^{-\beta_l H} \right),
\end{align}
where we take a trace since the boundary conditions form a closed cycle. To write this expression in a form more akin to \eqref{eqn:pertinnerproductmatrix}, it is convenient to introduce a doubled Hilbert space so that operators, such as $\mathcal{O}^i$, get mapped to entangled states as
\begin{align}\label{eqn:matterstate}
    \mathcal{O}^i = \sum_{a,b} \mathcal{O}^i_{ab} \ket{E_a}\bra{E_b} \to \sum_{a,b} \mathcal{O}^i_{a,b} \ket{E_a}_l \ket{E_b}_r \equiv \ket{\mathcal{O}^i}_{lr}.
\end{align}
One can then write expression \eqref{eqn:fixedellmatter} more suggestively as
\begin{align}
    \begin{tikzpicture}[scale=1.0, baseline={([yshift=-0.1cm]current bounding box.center)}]
    \coordinate (A) at (0,1);
    \coordinate (B) at (-1.4,1);
    \coordinate (C) at (-1.4,-1);
    \coordinate (D) at (0,-1);
    \coordinate (E) at (-.75,1.3);
    \coordinate (F) at (-.75,-1.3);
    \draw[thick] (0,1) arc[start angle=45,end angle=135,radius=1cm];
    \draw[thick] (0,-1) arc[start angle=-45,end angle=-135,radius=1cm];
    \fill[red] (A) circle (2pt);
    \fill[red] (B) circle (2pt);
    \fill[red] (C) circle (2pt);
    \fill[red] (D) circle (2pt);
    \fill[green] (E) circle (2pt);
    \fill[green] (F) circle (2pt);
    \draw[thick, red] (A) .. controls (-.5, .5) and (-.5,-.5) .. (D);
     \draw[thick, red] (B) .. controls (-.9, .5) and (-.9,-.5) .. (C);
    \node[right] at (-.2,0) {$\ell$};
    \node[left] at (-1.2,0) {$\ell$};
    \node[above] at (-1.2,1.2) {$\beta_l$};
    \node[above] at (-.2,1.2) {$\beta_r$};
    \node[above] at (E) {$i$};
    \node[below] at (F) {$j$};
    \node[below] at (-1.2,-1.2) {$\beta_l'$};
    \node[below] at (-.2,-1.2) {$\beta_r'$};
\end{tikzpicture} \quad = \quad \bra{\beta_l, \beta_r, \mathcal{O}^i} \psi_{H_{l}}(\ell) \otimes \psi_{H_r}(\ell) \ket{\beta_l', \beta_r', \mathcal{O}^j}_{lr}\quad ,
\end{align}
where $\ket{\beta_l, \beta_r, \mathcal{O}^i} \equiv e^{-\beta_l H_l -\beta_r H_r} \ket{\mathcal{O}^i}$. Integrating against $e^{-\Delta_O \ell}$, we find
\begin{align}\label{eqn:npip}
    \bra{\beta_l, \beta_r, \Delta_i}\ket{\beta_l', \beta_r', \Delta_j}_H = \bra{\beta_l, \beta_r, \mathcal{O}^i} g_{\Delta_O}(H_l, H_r) \ket{\beta_l', \beta_r', \mathcal{O}^j}_{lr}
\end{align}
where\footnote{We remind the reader that we are viewing $\gamma_{\Delta_O}(H_l,H_r)$ as a matrix function of the two operators $H_l \equiv H \otimes 1$ and $H_r \equiv 1 \otimes H$ which act on the doubled Hilbert space.}
\begin{align}\label{eqn:gnp}
    g_{\Delta_O}(H_l,H_r) = \gamma_{\Delta_O}(H_l,H_r).
\end{align}
Since $g$ is a positive definite matrix on the tensor-product Hilbert space $\Hr=\Hrl\otimes \Hrr$, then the non-perturbative modified inner product in the presence of an observer is also positive semi-definite. The reason this inner product is only positive semi-definite and not positive definite is because of the presence of null states, which we describe in the next subsection. 

First, however, we briefly point out what the operator $g_{\Delta_O}$ would be if we had treated the observer like yet another matter operator, with matrix elements pulled from the Gaussian ensemble. In that case, the boundary conditions would just be two disconnected circles, each with two green dots---one for the observer insertion and one for the matter insertion. We can still introduce the left and right Hilbert spaces, and after doing so, it is not hard to see that, as a consequence of the Gaussian statistics of the operator, $g_{\Delta_O}(H_l,H_r)$ as defined in \eqref{eqn:gnp} should be replaced by the projector
\begin{align}\label{eqn:ipreplacement}
    g_{\Delta_O}(H_l,H_r) \to \ketbra{\mathcal{O}_{O}}{\mathcal{O}_{O}}_{lr}.
\end{align}
This inner product tells us that all perturbative states in $\cH_{lr}$ which are orthogonal to $\ket{\mathcal{O}_{O}}_{lr}$ become null, and so the non-perturbative Hilbert space is one-dimensional. Equation \eqref{eqn:ipreplacement} clearly explains how replacing the observer with a ``regular" matter particle reduces the non-perturbative Hilbert space to one dimension. 
The replacement in \eqref{eqn:ipreplacement} also makes clear what one needs to do in order to return to the inner product in \eqref{eqn:gnp}: average over the matrix elements of the observer's wavefunction coefficients, $\mathcal{O}_{O}(E_l,E_r)$, used to define $\ket{\mathcal{O}_{O}} = \sum_{E_l,E_r} \mathcal{O}_O(E_l,E_r)\ket{E_l} \ket{E_r}$. Namely, using the formula \eqref{eqn:mattervar}, we see that 
\begin{align}
    \overline{\ketbra{\mathcal{O}_{O}}{\mathcal{O}_{O}}}_{lr} = \gamma_{\Delta_O}(H_l,H_r).
\end{align}
The inner product in the presence of an observer is, in this sense, an average over non-perturbative Hilbert spaces without an observer. This result also clarifies how to compute quantities relevant to describing the experience of an observer from a matrix integral point of view. First, we identify a given Gaussian matrix $\mathcal{O}_O$ describing matter to represent our observer. Second, we compute the observable of interest in a single realization of the Hamiltonian and matter ensembles. Finally, we average over the $\mathcal{O}_O$ ensemble. Notice that we average only over $\mathcal{O}_O$ and not the Hamiltonian or other matter matrix ensembles. This corresponds to the special treatment reserved to the observer in our rules for the gravitational path integral.

\subsection{Null states}

Armed with formula \eqref{eqn:npip}, for the non-perturbative inner product in the presence of an observer, we can now simply read off the null states. By ``null states," we have in mind states in the perturbative Hilbert space that have zero norm with respect to the non-perturbative inner product. States in the perturbative Hilbert space in the presence of an observer form a continuum, labeled by the continuous labels $\beta_l$ and $\beta_r$ (or $E_l$ and $E_r$), along with the discrete index labeling the flavor of matter particle. For analyzing null states, since the inner product operator $g_{\Delta_O}(H_l,H_r)$ is a function of just $H_l$ and $H_r$, it is convenient to work with the basis of states labeled by continuous energies $E_l$ and $E_r$. We can then write a general state, $\ket{\psi}$, in the perturbative Hilbert space as a linear superposition of such states as
\begin{align}\label{eqn:generalpertstate}
\ket{\psi} = \sum_k \int dE_l dE_r \ \hat{\psi}_{\Delta_k}(E_l,E_r)\  \ket{E_l,E_r,\Delta_k}.
\end{align}
For a given draw of the ensemble of JT Hamiltonians $H$ together with the Gaussian-distributed matrix elements for the matter, the spectrum of $H_l$ and $H_r$ will be discrete. Thus, when we compute the norm of the general state in \eqref{eqn:generalpertstate} using the inner product in \eqref{eqn:npip}, we find 
\begin{align}\label{eqn:npnorm}
    \braket{\psi} = \sum_{k,k',a,b} \left(\hat{\psi}_{\Delta_k}(E_a,E_b)\right)^*\hat{\psi}_{\Delta_{k'}}(E_a,E_b) \times \left(\mathcal{O}_{ab}^{\Delta_k}\right)^* \mathcal{O}_{ab}^{\Delta_{k'}}  \times g_{\Delta_O}(E_a,E_b).
\end{align}
Clearly, null states can be formed just by demanding that $\hat{\psi}_{\Delta_k}$ have no support on the discrete energy spectrum of either $H_l$ or $H_r$. 
Furthermore, if we allow the matter flavor to vary, we see that there is an even larger multiplicity of null states due to the fact that $g_{\Delta_O}(H_l,H_r)$ does not have a label for the matter index. 

This pattern of null states should be directly compared to what was found for a two-sided black hole in the presence of matter but without an observer in \cite{IliLev24}. There, again, the perturbative Hilbert space is labeled by three numbers $\ket{E_l, E_r,\Delta_i}$, where $E_l$ and $E_r$ are continuous labels at the disk level, as reviewed in Section \ref{sec:pert_h}. Non-perturbatively, however, the Hilbert space becomes labeled by a (random) set of discrete numbers, $\ket{E_l^i, E_r^j}$, and states are not labeled by the flavor/matter index. This means the null states in the two-sided black hole with matter but without an observer are of the same structural form as the closed universe with matter and an observer. Of course, this is not a coincidence. By the formula in equation \eqref{eqn:innerproductbcs}, we see that we can think of the closed universe with an observer as a two-sided black hole where the two asymptotic boundaries are replaced by geodesic worldlines and then glued together via the integral in \eqref{eqn:innerproductbcs}. Thus, the non-perturbative Hilbert space structure of the closed universe with an observer is inherited from that of the two-sided black hole without an observer.

Note that although the presence of flavor indices means that, in a sense, there are ``more" null states than in the case without matter, it does not mean that any two states that differ only in their flavor indices are the same state non-perturbatively. This is due to the fact that the states $\ket{E_l,E_r,\Delta_i}$ and $\ket{E_l,E_r,\Delta_j}$ for $E_l,E_r$ in the non-perturbative spectrum of the theory are only the same up to the matter wavefunctions $\mathcal{O}^{i,j}(E_l,E_r)$ in \eqref{eqn:matterstate}. Thus, superpositions of such states over different energies will not be proportional to each other. This explains how, in Section \ref{sec:closeduniverse}, we found that a collection of states with varying flavor index can still span the full non-perturbative Hilbert space. In Section \ref{sec:closeduniverse}, we worked with states at fixed temperatures $\beta_l = \beta_r$ but varying flavor index. In that context, we found that null states only became important with a flavor index number $K$ of at least $e^{2S_0}$.  

Given the null states described above, we then have a non-isometric map, of the type discussed in \cite{Akers:2023fqr,IliLev24}, connecting the perturbative Hilbert space of a closed universe with an observer $\Hrelp$ to the non-perturbative Hilbert space of a closed universe with an observer $\Hr$. Furthermore, there is yet another map connecting the perturbative Hilbert space with an observer $\Hrelp$ to the non-perturbative Hilbert space without one $\Hnonp$. In other words, we get two non-isometric maps
\begin{align}\label{eqn:nonisomaps}
&\Hrelp \xlongrightarrow[]{V_{\text{Obs.}}} \Hr, \nonumber \\
&\Hrelp \xlongrightarrow[]{V_{\text{no-Obs.}}} \Hnonp,
\end{align}
where $V_{\text{Obs.}}$ is implemented by the inner product $g_{\Delta_O}(H_l,H_r)$ in \eqref{eqn:gnp} and $V_{\text{no-Obs.}}$ is implemented by the projector onto the observer's state $\ketbra{\mathcal{O}_O}{\mathcal{O}_O}_{lr}$ as in \eqref{eqn:ipreplacement}. 
An exactly analogous structure as in \eqref{eqn:nonisomaps} exists for the case of the two-sided black hole, except in that case, obviously, the dimensions of $\Hr$ and $\Hnonp$ are different, namely $e^{4S_0}$ and $e^{2S_0}$ respectively. Note that in both cases, we can also find a non-isometric map between $\Hr$ and $\Hnonp$. In the context of two-sided black holes, one can consequently use the ideas of \cite{Akers:2022qdl} to rewrite operators on $\Hr$ as non-linear, state-dependent observables on $\Hnonp$. It would be interesting to work out explicit formulae for such a non-linear reconstruction.

\section{{Observables along the worldline}}
\label{sec:observables}

We will now analyze the possible observables that act on the Hilbert space $\Hr$ that a gravitating observer can measure. Before that, we should address a basic point. The evolution of a gravitating observer is, in principle, fully determined through the Wheeler-deWitt constraint, and the observer cannot make any measurements not determined by this constraint. The only quantities that we can consequently calculate are the transition probabilities between the different states of the observer as in Section \ref{sec:setup}. We remark that we have used an oversimplified model of the observer: in the action \eqref{eq:action-of-observer}, we have, for instance, neglected the coupling between the observer's worldline and other matter fields in the theory. By computing correlation functions of operators constructed from these fields and dressed to the worldline of the observer, we can understand what would happen to the observer in more realistic models and quantify when non-perturbative effects severely modify the transition probabilities between different states in $\Hr \otimes \HO$.\footnote{Specifically, such correlation functions determine the transition amplitudes when the action \eqref{eq:action-of-observer} is modified by source terms. For example, we can modify the action by coupling a field $\phi(x)$ that exists everywhere in the bulk to sources, $I_\text{source} = \int du \sqrt{h} j(u) \phi(u)$. }     In the case of a closed universe, we will see that, in contrast to the global picture, correlation functions of matter operators dressed to the worldline are not affected by non-perturbative effects, at least far from the Big Bang and Big Cruch singularities. For two-sided black holes, we will see that for some observables, such as the length of the Einstein-Rosen bridge seen by an observer inserted at late times, the global picture and the observer's point of view agree; in both cases, non-perturbative effects become important at times $t=O(e^{S_0})$ and cause the length of the Einstein-Rosen bridge to plateau. For other observables, which capture the experience of an infalling observer more accurately, the two pictures prove drastically different. For example, the center-of-mass (CM) collision energy between the observer and a perturbation behind the horizon of the black hole receives large non-perturbative corrections at times $t=O(S_0)$---i.e., close to the Page time---in the global picture. On the contrary, the perturbative calculation of this collision energy proves reliable until times $t=O(e^{S_0})$ when describing physics from the point-of-view of the observer using our proposal.   

\subsection{Examples of what an observer sees in a closed universe}
\label{sec:example-obs-closed-universe}

To contrast the difference between observables measured in the global Hilbert space and from the point of view of a gravitating observer, let us compute a correlation function for a matter field $\phi$ measured along the observer's worldline, with the operator insertions dressed to the state of the clock that the observer carries. For simplicity, we will take this correlation function to be a two-point function and assume that $\phi$ interacts with the observer only gravitationally. As in Section \ref{sec:setup}, we will prepare the state of the observer $\ket{\beta, O_{t=0}}$ on the asymptotic boundary so that the clock reads $t=0$ on the smooth closed geodesic slice and keep track of the time along the worldline with respect to this slice. While the notion of the clock of the observer only makes sense in Lorentzian signature, for computational purposes it is useful to first consider the Euclidean two-point function $\bra{\beta, O_{t=0}} \phi(\tau_1) \phi(\tau_2) \ket{\beta, O_{t=0}}$. The relevant geometry is shown in Figure \ref{fig:two-pt-function-closed-universe}.
\begin{figure}
\centering
\includegraphics[width=0.9\textwidth]{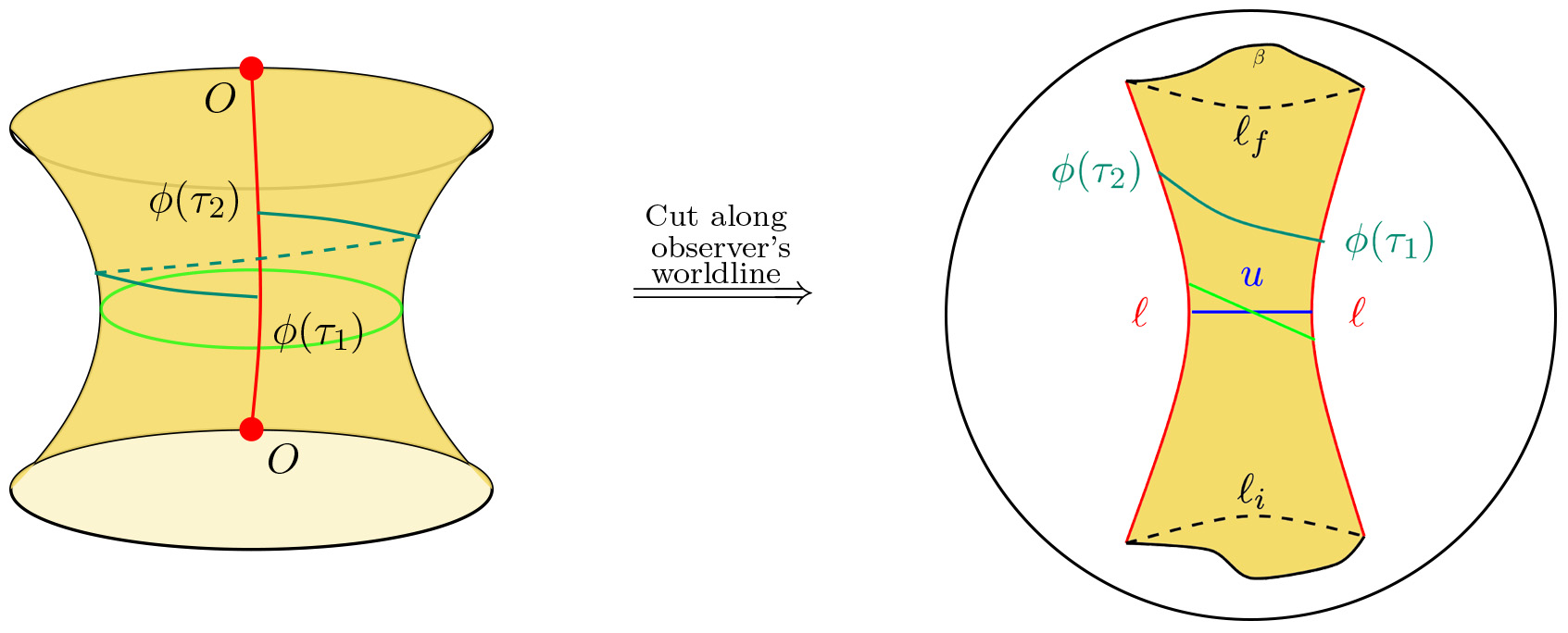}
\caption{\label{fig:two-pt-function-closed-universe} Euclidean diagrams showing the two-point function of a scalar field $\phi$, dressed to the worldline of the observer. The figure on the right is obtained by cutting the figure on the left along the observer's worldline, embedding the resulting geometry in the Poincar\'e disk (depicted in black), and then identifying the two red geodesics of length $\ell$ in the right diagram. We give an example of the worldline connecting the two operator insertions that winds once around the cylinder, which we show in dark green in both diagrams. The light green circle (left) and line (right) represent the shortest closed geodesic on the geometry, while the blue line (only shown on the right) is the shortest geodesic (not closed) perpendicular to the worldline of the observer on both ends.  To evaluate the two-point function in the Lorentzian closed universe we perform the analytic continuation $\tau_{1,2} = -i t_{1,2}$. }
\end{figure}
We will then analytically continue $\tau_{1,2} = -i t_{1,2}$ to obtain the Lorentzian two-point function.

\subsection*{The global picture}

To start, let us briefly discuss the global picture, in which the observer is treated as an ordinary matter field. In this case, it is impossible to come up with a well-defined clock for the observer. Because the Hilbert space $\Hnonp$ is one-dimensional, the overlap between states defined on spatial slices with the observers' clocks reading different times $t_1$ and $t_2$ always has large fluctuations. In fact,
\be 
|\braket{\beta, O_{t=t_1}}{\beta ', O_{t=t_2}}|^2 = 1
\ee
if the states are unit-normalized. Similarly, the two-point function when $\phi$ is inserted at different times has a large standard-deviation and can only differ by a phase, 
\be 
\frac{|\bra{\beta, O_{t=0}} \phi(t_1) \phi(t_2) \ket{\beta,O_{t=0}}|^2}{|\bra{\beta, O_{t=0}} \phi(t_1') \phi(t_2') \ket{\beta,O_{t=0}}|^2 } = 1\,,
\ee
regardless of the times $t_{1,2}$ and $t_{1,2}'$. This is again because, in the one-dimensional Hilbert space $\Hnonp$, unit-normalized states labeled by any time can only differ from each other by a phase of the closed universe, and operators can only be multiples of the identity. Of course, this simply means that physics is trivial. This is not what we expect an observer to see in a closed universe. To obtain a sensible result, we will recompute the same two-point function in the same closed universe state following our new proposed rules. 

\subsection*{Correlation functions from an observer's point of view}

When computing correlation functions using the gravitational path integral from an observer's point of view, the perturbative answer dominates.\footnote{We believe this happens at least when the observer is not parametrically close (in $e^{-S_0}$) to the singularity.} This is again because the disconnected diagrams that connect the different bras and different kets in the global picture (see Section \ref{sec:reviewclosed}) no longer contribute once we impose that the observer's worldline connects each bra to the corresponding ket. Since even the perturbative answer is physically interesting in a closed universe, we will focus on finding the two-point function at leading order in $e^{-S_0}$, leaving all computations of non-perturbative corrections for future work.

Before discussing the computation of the two-point function $\bra{{\beta,O_{t=0}}} \phi(\tau_1) \phi(\tau_2) \ket{{\beta,O_{t=0}}}$, it will be useful to rewrite the inner-product between two asymptotic states by gluing patches bounded by geodesic segments whose length is fixed. The Euclidean version of a closed universe can be decomposed into a quadrilateral with geodesic boundaries -- with two opposite sides that are equal --  and two Hartle-Hawking disk wavefunctions, $\varphi_\beta(\ell)$; see the right diagram in Figure \ref{fig:two-pt-function-closed-universe}. We will denote the path integral contribution of the quadrilateral by $I_4(\ell_i, \ell, \ell_f, \ell) $, 
where $\ell_{i}$ and $\ell_f$ are the renormalized lengths of the geodesics wrapping the cylinder that start and end at the location of the observer insertion at the asymptotic boundary and $\ell$ is the renormalized length of the observer's worldline between the two asymptotic boundaries. The inner-product between two asymptotic states $\ket{\beta,O_{t=0}}$ can then be rewritten as
\be
\braket{{\beta,O_{t=0}}}{\beta',O_{t=0}} = \int d\ell_i d \ell_f d \ell \varphi_{\beta}(\ell_i) \varphi_{\beta'}(\ell_f) e^{-\Delta_O \ell} I_4(\ell_i, \ell, \ell_f, \ell),
\ee
where
\be 
I_4(\ell_i, \ell, \ell_f, \ell) = e^{S_0}\int dE \rho_0(E) \varphi_E(\ell_i)\varphi_E(\ell)\varphi_E(\ell)\varphi_E(\ell_f)\,.
\ee
In this decomposition of the geometry, it is also useful to note that the distance $u$ between the two geodesics whose renormalized lengths are $\ell$ in the quadrilateral, i.e., the length of the geodesic that wraps the cylinder once and is perpendicular to the worldline of the observer at both ends, is fixed to be\footnote{This can be computed using the fact that three lengths in a quadrilateral with two right angles determine the fourth length. We will use this property for two quadrilaterals, one involving the geodesic of length $u$ and renormalized length $\ell_i$ and the other involving the geodesic of length $u$ and renormalized length $\ell_f$ (see Figure \ref{fig:two-pt-function-closed-universe}). In the quadrilateral involving the geodesic of length $u$ and the geodesic of proper length $\tilde \ell_i$ we have, 
 \be 
 \sinh^2\left(\frac{\tilde \ell_i}2\right) = \sinh^2\left(\frac{u}2\right) \cosh(\tilde \ell_1)\cosh(\tilde \ell_2) + \sinh^2\left(\frac{\tilde \ell_1-\tilde \ell_2}2\right) \Rightarrow {e^{\ell_i}} \approx  \sinh^2\left(\frac{u}2\right) e^{\ell_1+\ell_2}\,,
 \label{eq:right-quadrilateral-relation}
 \ee
 where $\tilde \ell_1$ and $\tilde \ell_2$ are the proper lengths for the side edges of the quadrilateral with two right angles and $ \ell_1$, $ \ell_2$ and $ \ell$ are renormalized lengths. In the quadrilateral involving the geodesic of length $u$ and the geodesic of length $\tilde \ell_f$ we have 
 \be 
 \sinh^2(\frac{\tilde \ell_f}2) = \sinh^2\left(\frac{u}2\right) \cosh(\tilde \ell_3)\cosh(\tilde \ell_4) + \sinh^2\left(\frac{\tilde \ell_3-\tilde \ell_4}2\right) \Rightarrow {e^{\ell_f}}{} \approx   \sinh^2\left(\frac{u}2\right) e^{\ell_3+\ell_4}\,,
 \ee
where, once again, $\tilde \ell_3$ and $\tilde \ell_4$ are the proper lengths for the side edges of the quadrilateral with two right angles and $ \ell_3$, $ \ell_4$ and $ \ell_f$ are renormalized lengths. Here, we have made use of the fact that all proper lengths diverge in the same way, which lets us ignore the exponential factors of their differences. Using the fact that $\ell = \ell_3+\ell_1 = \ell_2+\ell_4$ ,  \eqref{eq:modulus-length} follows.}
\be
\label{eq:modulus-length}
\sinh^2\left(\frac{u}2\right)  = e^{\frac{\ell_i + \ell_f}2-\ell} \,.
\ee
When the observer is heavy ($\Delta_O \gg 1$), the relative twist between the observer insertions on the two asymptotic boundaries is minimized, and the geodesic of length $u$ agrees with the closed smooth geodesic that wraps the cylinder, whose length we denote by $b$. In this limit, the observer enters the Lorentzian spacetime perpendicular to the time-reflection symmetric slice, and we don't have to deal with the subtlety of the observer not traveling on a real Lorentzian path as discussed in Section \ref{sec:setup}. We take this limit, which will, therefore, greatly simplify the calculation. As mentioned in Section \ref{sec:setup}, when performing the analytic continuation $\tau \to -i t$ to Lorentzian signature, there is a big bang singularity located at $t=-\frac{\pi}2$ and a crunch singularity located at $t=\frac{\pi}2$ along the worldline of the observer when the observer worldline is perpendicular to the time-reflection symmetric slice.

To account for the two-operator insertion, we now simply need to compute the lengths of all possible geodesic connecting the two points where we insert the operators $\phi$ on the worldline of the observer. The Euclidean proper times for these points are $\tau_{1}$ and $\tau_2$, where $\tau=0$ is fixed to be on the minimal closed geodesic. The shortest such geodesic follows the observer's worldline. However, in a closed universe setup, there are additional geodesics connecting the two points that wrap around the Euclidean cylinder (or, in Lorentzian signature, the closed universe). In terms of $u$, the lengths $\ell_{12}^{(n)}$ of such geodesics in Euclidean signature are given by\footnote{This once again follows from the fact that three lengths in a quadrilateral with two right angles determine the fourth length -- e.g., see \eqref{eq:right-quadrilateral-relation}.  }
\be 
\sinh^2 \left(\frac{\ell_{12}^{(n)}}2\right) = \sinh^2\left(\frac{n\,u}2\right) \cosh(\tau_1 )\cosh(\tau_2 ) + \sinh^2\left(\frac{\tau_1-\tau_2}2\right),
\ee
where $n$ is the number of times that a given geodesic wraps around the universe (with $n=0$ corresponding to the shortest geodesic). Analytically continuing to Lorentzian signature ($\tau_{1,2} \to -i t_{1,2}$) we find that the geodesic distances $\ell_{12}^{(n)}$ are given by 
\be 
\sinh^2 \left(\frac{\ell_{12}^{(n)}}2\right) = \frac{1}2 \left(-1 + \frac{1}{\xi_{12}^{(n)}}\right), \label{eq:geodesic-distance}
\ee
where $\xi_{12}^{(n)}$ are the chordal distances
\be
\xi_{12}^{(n)} = \frac{1}{\sin(t_1) \sin(t_2) + \cosh(n u) \cos(t_1) \cos(t_2)}\,. \label{eq:chordal-distance}
\ee
\begin{figure}[t]
    \centering
    \includegraphics[width=0.28\textwidth]{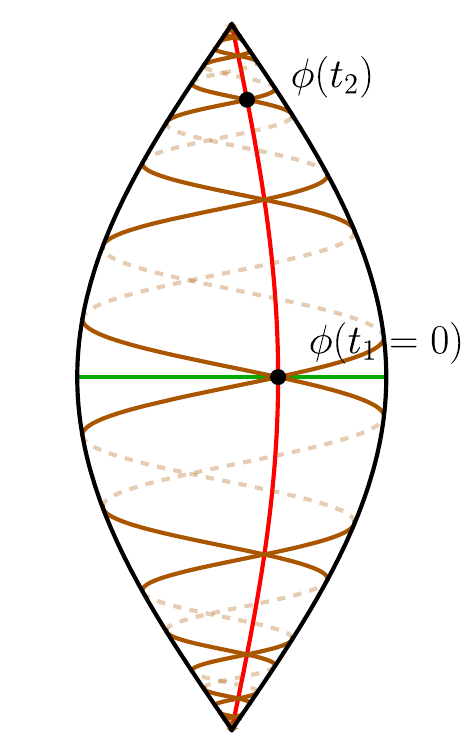}
    \hspace{1.25cm}
    \includegraphics[width=0.28\textwidth]{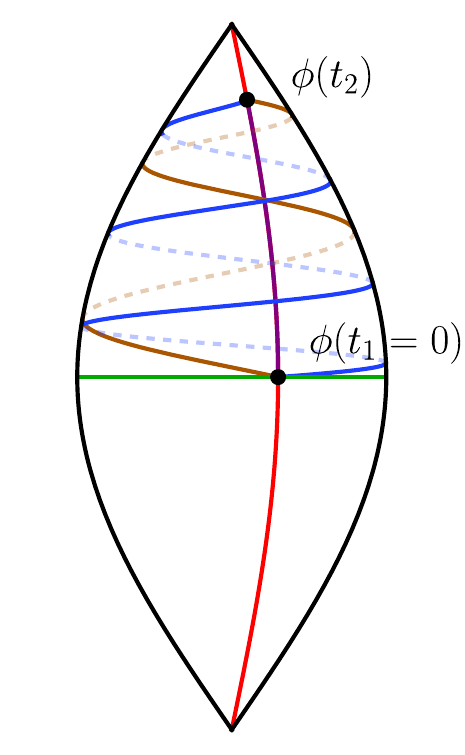}
    \caption{Two frontal views of the closed universe in Lorentzian signature. The observer's worldline is depicted in red and the maximal Cauchy slice of length $b$ is depicted in green. We insert a matter insertion at time $t_1=0$, defined by where the observer intersects this Cauchy slice. Another matter insertion is place sometime later along the observer's worldline. (\textit{Left}) Left and right null rays emanating from the first matter insertion are shown in brown. Each time the second operator insertion is placed at the intersection of these null rays with the worldline of the observer, the two-point function on a fixed background diverges. (\textit{Right}) Generically, an infinite number of winding geodesics contribute to the two-point function. A timelike geodesic along the observer's worldline is drawn in purple, a winding spacelike geodesic is drawn in blue, and a null winding geodesic is shown in brown. }
\label{fig:closedUniverseCorrelator}
\end{figure}

When the RHS of equation \eqref{eq:geodesic-distance} is negative and $\ell_{12}^{(n)}$ is purely imaginary, the geodesic connecting the two operator insertions is timelike ($\xi_{12}^{(n)}>1$); when it vanishes, the geodesic is null  ($\xi_{12}^{(n)}=1$); when it is positive and $\ell_{12}^{(n)}$ is purely real, the geodesic is spacelike  ($0\leq \xi_{12}^{(n)}<1$). If we set the first point at $t_1 = 0$,  then the two points become null separated whenever $\cos(t_2) = 1/\cosh(nu)$. Thus, the worldline of the observer intersects its own past lightcone that starts at the $t=0$ point. As the crunch singularity (or, similarly, the bang singularity in the past) is approached, the number of lightcone intersections grows, and the observer intersects this past lightcone an infinite number of times as they approach the singularity. This is shown in Figure \ref{fig:closedUniverseCorrelator}. In terms of the chordal distance \eqref{eq:chordal-distance}, the Wightman propagator in AdS$_2$ for the field $\phi$, with scaling dimension $\Delta$, is given by \cite{DHoker:2002nbb}
\be 
G_\Delta(\xi) = \frac{2^{-\Delta} \Gamma(\Delta)}{(2\Delta-1)\pi^{\frac{1}2} \Gamma\left(\Delta-\frac{1}2\right)} \xi^{\Delta} F\left(\frac{\Delta}2, \frac{\Delta}2+\frac{1}2 ; \Delta + \frac{1}2; \xi^2 \right)\,.
\ee 
This propagator has a divergence in the coincident point limit or whenever the points become null-separated.
Thus, we expect that the observer will see a divergence in the two-point function whenever they insert the second operator at a point on the future lightcone of the first operator insertion. As we have seen, the number of such points---and therefore the number of divergences---itself diverges as the observer approaches the singularity.

To obtain the Wightman propagator in the fluctuating spacetime, we need to sum the contribution from all the geodesics that wind around the closed universe and write this propagator in terms of the geodesic lengths in the quadrilateral with length $\ell_i,\, \ell, \,\ell,$ and $\ell_f$. The resulting (un-normalized) Wightman propagator is given by 
\be 
 \bra{{\beta,O_{t=0}}} \phi(t_1) \phi(t_2) \ket{{\beta,O_{t=0}}} = \sum_{n=-\infty}^\infty \int d\ell_i d \ell_f d \ell \varphi_{\beta}(\ell_i) \varphi_{\beta'}(\ell_f) e^{-\Delta_O \ell} I_4(\ell_i, \ell, \ell_f, \ell) G_\Delta(\xi^{(n)}_{12}),
 \label{eq:Wightman-propagator-closed-universe}
\ee
where $\xi_{12}^{(n)}$ is given by equation \eqref{eq:chordal-distance} with $u$ given by equation \eqref{eq:modulus-length}. Because the propagator has only a mild logarithmic divergence when the points of the two operator insertions become null separated (i.e., when $\xi^{(n)}_{12} \to 1$), this divergence is smoothed out by the fluctuation in the lengths $\ell$, $\ell_i$ and $\ell_f$ that are captured by the integral in equation \eqref{eq:Wightman-propagator-closed-universe}. Thus, the two-point function of $\phi$ becomes well defined along the entire worldline of the observer due to quantum gravity fluctuations. While this might be a desirable feature, we do not see a contradiction with such divergences appearing when computing observables in the Hilbert space of closed universes. This is because even though the Hilbert space we computed in Section \ref{sec:closeduniverse} was finite, there, we restricted to asymptotic boundaries whose ADM energy (labeled by $E = s^2/2$) was within a given energy window that could be arbitrarily large but finite. When acting with the operator $\phi$ on a time slice, the resulting state might have support outside of the space spanned by the states specified by the asymptotic boundary condition whose ADM energy is within a finite window. In fact, we can explicitly see that more serious divergences (for example, those appearing in the two-point function of $\partial^k_t \phi(t)$ for $k>1$) persist even after performing the integral over metric fluctuations.\footnote{One can cure such divergences by smearing the insertions of $\phi$ along (or in the neighborhood of) the worldline. We also expect such a smearing to decrease the contributions of intermediate states (in between the two operator insertions) that have large energies, providing an explanation for why such divergences are cured.}  Interestingly, the accumulation of such divergences provides a way to probe the crunch singularity of the closed universe. Even more interestingly, when probing this accumulation of divergences, the relevant Euclidean geodesic in the calculation of the propagator \eqref{eq:Wightman-propagator-closed-universe} winds a very large number of times around the closed universe and has an arbitrarily long length, $n u$. Typically, the appearance of such long geodesics -- such as the long geodesics probing the length of the Einstein-Rosen bridge at times $t\sim e^{S_0}$ that we shall analyze in the next subsection -- signals the fact that non-perturbative effects become important and could drastically alter the perturbative calculation presented above. It would be interesting to perform such a non-perturbative analysis in future work.

\subsection{Examples of what an observer sees when falling into a black hole (I): \\ 
the length of an ER bridge}

Our computation of correlators dressed to the worldline of the observer can easily be generalized for observers probing two-sided black hole states. However, for two-sided black holes, such observables are less interesting than their closed universe analog: the winding geodesics discussed in Section \ref{sec:example-obs-closed-universe} that appear even at the perturbative level are absent in the two-sided black hole setting. We do expect non-perturbative corrections to affect such correlators even for two-sided black holes when the time separation between operator insertions along the worldline of the observer becomes very large. While in JT gravity such large time separations are possible, for realistic black holes we expect the observer to hit the black hole singularity after a finite proper time, making such large time separations impossible. Instead, for two-sided black holes, we will discuss two other observables -- how the length of the Einstein-Rosen bridge changes with time (in this subsection) and the center-of-mass collision energy between the observer and an arbitrary matter perturbation coming from the opposite side (in Section \ref{sec:casimir}) -- that probe what happens to the observer after they pass the horizon.

Following the work of \cite{StaSus14, Sus15, Sus20}, we ask what happens to an observer who jumps into one side (say the right side) of a two-sided black hole. Unless the black hole was perfectly in the unperturbed thermofield double state, there will be matter excitations that may have fallen into the other side of the black hole.\footnote{There may also have been excitations which fell in on the side of the observer, but for simplicity, we restrict ourselves to the situation where all excitations other than the observer fall in from the left.} 
\begin{figure}[t!]
\begin{center}
\[\begin{tikzpicture}[baseline={([yshift=-0cm]current bounding box.center)}]
\fill[lightgray, opacity=0.3] (-2,-2) rectangle (2,2);
\fill[lightgray, opacity=0.3] (-2,-2) -- (0,-4) -- (2,-2);
\fill[lightgray, opacity=0.3] (-2,2) -- (0,4) -- (2,2);
\draw[dashed] (-2,-2) -- (2,2);
\draw[dashed] (-2,2) -- (2,-2);
\draw[black,very thick] plot [smooth, tension=1.3] coordinates {(-2,-2) (-1.7,0) (-2,2)};
\draw[black,very thick] plot [smooth, tension=1.3] coordinates {(2,-2) (1.7,0) (2,2)};
\draw[blue, thick, dotted] (-2,-1.8) -- node[above] {WH shrinking} (2,-1.8) ;
\fill[blue] (-1.95,-1.8) circle (0.1) node[left] {$V$};
\draw[blue,->] (-1.95,-1.8) -- (-1.95+1,-1.8+1);
\fill[red] (1.95,-1.8) circle (0.1) node[right] {$O$};
\draw[red,->] (1.95,-1.8) -- (1.95-1,-1.8+1);
\node at (0,0) {{\bf ouch!}};
\end{tikzpicture} 
\hspace{0.4cm}
\begin{tikzpicture}[baseline={([yshift=-0cm]current bounding box.center)}]
\fill[lightgray, opacity=0.3] (-2,-2) rectangle (2,2);
\fill[lightgray, opacity=0.3] (-2,-2) -- (0,-4) -- (2,-2);
\fill[lightgray, opacity=0.3] (-2,2) -- (0,4) -- (2,2);
\draw[dashed] (-2,-2) -- (2,2);
\draw[dashed] (-2,2) -- (2,-2);
\draw[black,very thick] plot [smooth, tension=1.3] coordinates {(-2,-2) (-1.7,0) (-2,2)};
\draw[black,very thick] plot [smooth, tension=1.3] coordinates {(2,-2) (1.7,0) (2,2)};
\draw[red, thick, dotted] (-2,1.8) -- node[below] {WH growing} (2,1.8) ;
\fill[blue] (-1.95,1.8) circle (0.1) node[left] {$V$};
\draw[blue,->] (-1.95,1.8) -- (-1.95+1,1.8+1);
\fill[red] (1.95,1.8) circle (0.1) node[right] {$O$};
\draw[red,->] (1.95,1.8) -- (1.95-1,1.8+1);
\end{tikzpicture} \simeq
\begin{tikzpicture}[baseline={([yshift=-0cm]current bounding box.center)}]
\fill[lightgray, opacity=0.3] (-2,-2) rectangle (2,2);
\fill[lightgray, opacity=0.3] (-2,-2) -- (0,-4) -- (2,-2);
\fill[lightgray, opacity=0.3] (-2,2) -- (0,4) -- (2,2);
\draw[dashed] (-2,-2) -- (2,2);
\draw[dashed] (-2,2) -- (2,-2);
\draw[black,very thick] plot [smooth, tension=1.3] coordinates {(-2,-2) (-1.7,0) (-2,2)};
\draw[black,very thick] plot [smooth, tension=1.3] coordinates {(2,-2) (1.7,0) (2,2)};
\draw[red, thick, dotted] plot [smooth, tension=.3] coordinates {(-1.95+.25,0) (-1.95+.55,0) (-1.95+1.2,.2) (0.9,1.8) (1.7,2) (2,2)};
\fill[blue] (-1.95+.25,0) circle (0.1) node[left] {$V$};
\draw[blue,->] (-1.95+.25,0) -- (-1.95+.25+1,0+1);
\fill[red] (2,2) circle (0.1) node[right] {$O$};
\draw[red,->] (2,2) -- (2-1,2+1);
\end{tikzpicture} \]
\end{center}
\caption{\label{fig:collision}  The left-most diagram illustrates how if the observer and particle jump in both at early times, the geodesic slice for the wormhole is shrinking. This leads to a large relative boost between the observer and the particle and a correspondingly high energy collision. If both insertions happen far in the future (center diagram), then the observer and particle experience a growing wormhole length and only a minor collision. In the right-most diagram, we illustrate how the boost symmetry of the thermofield double can be used to move one of the insertions down to $T=0$. In this work we choose to fix $V$ at $T=0$ on the left and vary the time that the observer jumps in. The above figure is reproduced from Figure 1 of \cite{IliLev24}. }
\end{figure}
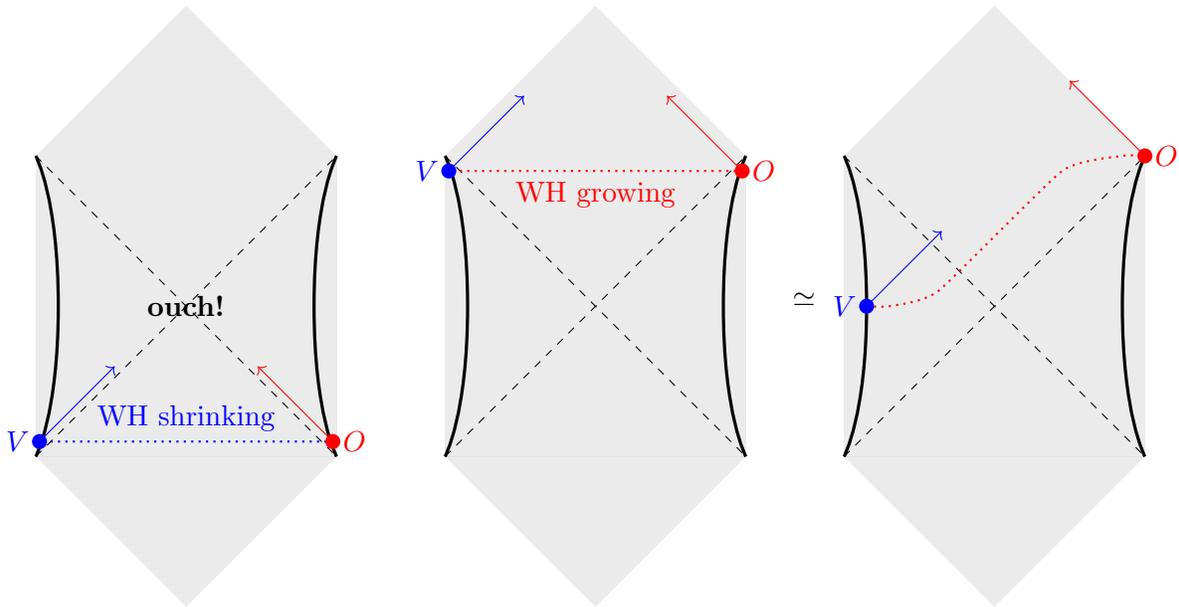
For simplicity, we will model all the matter excitations on the left as a single operator insertion on top of the thermofield double. As illustrated in Figure \ref{fig:collision}, if both the observer and matter fall in at early times, then there will be a large relative blueshift between the two, which leads to a very high collision energy. On the other hand, if they fall in at late times, this relative blueshift becomes a redshift, and the observer feels very little. In \cite{StaSus14,Stanford:2022fdt}, it was pointed out that a gauge invariant way of measuring blueshift/redshift is to compute the change in time of the length $\ell$ of the geodesic slice connecting the observer and matter particle, denoted by $p = \dot{\ell}$. If $p<0$, the system is in the so-called ``white hole" phase, and the observer and excitation collide with high energy. If $p>0$, the system is in the ``black hole" phase, and there is no collision. In \cite{Stanford:2022fdt}, it was observed that non-perturbative wormhole effects could cause a black hole to tunnel into a white hole at very late times. Thus, even though semiclassically an observer jumping into a very old black hole would expect to see no collision, a non-perturbative, quantum tunneling could lead the observer to see one. 

In \cite{IliLev24,Stanford:2022fdt,Blommaert:2024ftn}, the length of the geodesic slice and its time dependence were computed in various ways, and non-perturbative effects were taken into account. Since $p(t)$ and $\ell(t)$ are just properties of the geodesic slice, these computations can be done in pure JT gravity without matter. When non-perturbative effects are present, there may be more than one geodesic connecting the same two boundary points---a phenomenon that does not occur at disk level in JT gravity. There is, therefore, a non-perturbative ambiguity about which geodesic slice is used to compute $\ell(t)$. In \cite{Stanford:2022fdt}, a proposal was made for how to identify a unique geodesic slice whose properties could be analyzed, but it was not clear whether this proposal corresponded to a linear operator (on \emph{any} Hilbert space), and so it was not clear how to compute probabilities in a principled manner. In \cite{IliLev24}, an alternate proposal was made for an infinite family of length operators, which all act linearly on the non-perturbative Hilbert space of the two-sided black hole, $\Hnonp$, and agree with the disk-level predictions for the length operator. 

We now ask how the computations of \cite{IliLev24} get modified in the presence of an observer. Namely, we want to investigate length operators which act linearly on $\Hr$. Note that such operators can \emph{not} correspond to linear operators on $\Hnonp$, in agreement with arguments that observers measure non-linear operators on the boundary Hilbert space \cite{MarPol15,Papadodimas:2012aq}. We can start with a disk-level computation. To compute how the length of the geodesic slice connecting two boundary times changes in the presence of an observer, we want to compute the diagram
\begin{align}\label{eqn:lengthoppert}
   \bra{\mathcal{O}_{O}(T)} \hat{\ell}_{AB} \ket{\mathcal{O}_{O}(T)} \quad = \inlinefig[20]{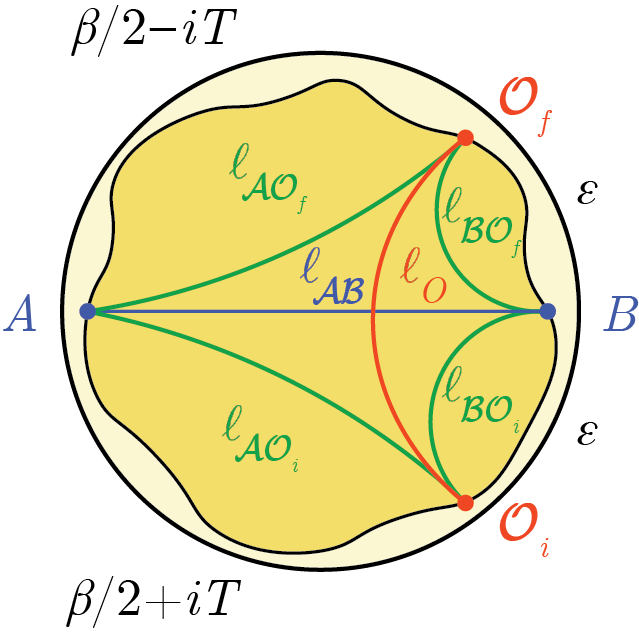}
\end{align}
where we keep in mind that all inner products are computed in $\Hrelp$. Here, we insert the observer at Lorentzian time $T$ and smear its insertion by $\varepsilon$ in Euclidean time to make the state normalizable. Since at the disk-level the observer worldline cannot connect with any other operators, the diagram in \eqref{eqn:lengthoppert} could be viewed as a calculation in either $\Hp$ or $\Hrelp$. 

This diagram can be computed by decomposing the disk into two triangular patches bounded by geodesics, with the observer's geodesic $\ell_{O}$ common to both. These triangles have vertices at $A,\ \mathcal{O}_i,\ \mathcal{O}_f$ and $B,\ \mathcal{O}_i,\ \mathcal{O}_f$ in the Figure in equation \eqref{eqn:lengthoppert}. Sewing along $\ell_{O}$ with the weighting $e^{-\Delta_O \ell_{O}}$, we find\footnote{Notice that here we normalize the state for convenience.}
\be 
& \bra{\mathcal{O}_{O}(T)} \hat{\ell}_{AB} \ket{\mathcal{O}_{O}(T)} = \frac{e^{S_0}}{\langle \cO_O, \frac{\beta}2- iT, \epsilon | \cO_O, \frac{\beta}2+ iT, \epsilon \rangle }\int ds_{i1} d{s_{i2}} ds_{f1} d{s_{f2}}\, \rho_0(s_{i1}) \dots \rho_0(s_{f2}) \nn \\ & \times \int d\ell_O d\ell_{A \cO_i} d\ell_{A \cO_f} d\ell_{B \cO_i} d\ell_{B \cO_f}\  \ell_{AB} \ e^{-\Delta_O \ell_O} I_3(\ell_{A\cO_i}, \ell_{A\cO_f}, \ell_O)I_3(\ell_{B\cO_i}, \ell_{B\cO_f}, \ell_O) \nn \\ &\times \Tilde{\varphi}_{s_{i1}}(\ell_{A\cO_i}) \Tilde{\varphi}_{s_{f1}}(\ell_{A\cO_f})\Tilde{\varphi}_{s_{i2}}(\ell_{\cO_i B})  \Tilde{\varphi}_{s_{f2}}(\ell_{\cO_f B})  e^{-\varepsilon\frac{s_{i2}^2}2 -  \varepsilon \frac{s_{f2}^2}2-(\beta/2+iT)\frac{s_{i1}^2}2-(\beta/2-iT)\frac{s_{f1}^2}2}
\ee
where $\ell_{AB}$ is a function of all five lengths in the problem $\ell_{AB} = \ell_{AB} \left( \ell_{O}, \ell_{A \cO_i}, \ell_{A \cO_f}, \ell_{B \cO_i}, \ell_{B \cO_f}\right)$. The factors of $I_3$ correspond to the path integral of JT gravity over triangular regions of the hyperbolic disk
\begin{align}
    I_3(\ell_1,\ell_2,\ell_3) \quad = \quad\begin{tikzpicture}[scale=0.7, baseline={([yshift=-0.1cm]current bounding box.center)}]
    \coordinate (A) at (0, 0);
    \coordinate (B) at (4, 0);
    \coordinate (C) at (2, 3);
    \fill[outsideyellow] (A) to[bend left=30] (B) -- (B) to[bend left=30] (C) -- (C) to[bend left=30] (A);
    \draw[thick] (A) to[bend left=30] node[midway, below] {$\ell_1$} (B);
    \draw[thick] (B) to[bend left = 30] node[midway, above right] {$\ell_2$} (C);
    \draw[thick] (C) to[bend left=30] node[midway, above left] {$\ell_3$} (A);
    \fill[red] (A) circle (2pt);
    \fill[red] (B) circle (2pt);
    \fill[red] (C) circle (2pt);
\end{tikzpicture} \quad = \quad e^{S_0} \int dE\rho_0(E) \ \Tilde{\varphi}_{s}(\ell_1) \Tilde{\varphi}_{s}(\ell_2) \Tilde{\varphi}_{s}(\ell_3),
\end{align}
where in the last equality, we used the results of \cite{Yang:2018gdb}. We remind the reader that the parameter $\varepsilon$ is present to ``smear" the state of the observer so that it is normalizable. In the limit of small $\varepsilon$, the $\ell$-integrals are dominated by very large, negative $\ell_{B\cO_i}$, $\ell_{B\cO_f}$ and $\ell_{O}$. The integrals over these three lengths then decouple from the remaining $\ell_{A \cO_i}$ and $\ell_{A\cO_f}$ integrals. Furthermore, in the limit of very negative $\ell_{O}$, $I_3(\ell_{A\cO_i},\ell_{A\cO_f},\ell_{O}) \sim \delta(\ell_{A\cO_i}-\ell_{A\cO_f})$ up to order one constants which are canceled by the normalization $\langle \cO_O, \frac{\beta}2- iT, \epsilon | \cO_O, \frac{\beta}2+ iT, \epsilon \rangle$. Additionally, in the small $\varepsilon$ limit $\ell_{A\cO_i} \approx \ell_{AB}$, and so all-in-all we get
\be
\bra{\cO_{O}(T)} \hat{\ell}_{AB} \ket{\cO_{O}(T)} \approx  \frac{e^{S_0}}{Z(\beta)}&\int ds_{i} ds_{f}\, \rho_0(s_{i}) \rho_0(s_{f})   \left(\int d\ell_{AB} \, \ell_{AB}\, \Tilde{\varphi}_{s_{i}}(\ell_{AB}) \Tilde{\varphi}_{s_{f}}(\ell_{AB})\right) \nn \\ &\quad \times e^{-(\beta/2+iT)E_{i} -(\beta/2-iT)E_{f} }.
\label{eq:616}
\ee
Since $\varepsilon$ controls how far away from the black hole the observer jumps from, in the $\varepsilon \to 0$ limit, we see that the presence of the observer does not change the expectation value $\bra{\cO_{O}(T)} \hat{\ell}_{AB} \ket{\cO_{O}(T)}$ to leading order in $\varepsilon$. Thus, the disk-level answer for the growth of the geodesic slice's length is approximately unchanged relative to what was found in \cite{IliLev24}. Indeed, this simplification actually occurs to all orders in the genus expansion. The only modification to equation \eqref{eq:616} when including higher genus corrections is to replace $\rho_0(s_i)\rho_0(s_f)$ with the non-perturbative correlator $\overline{\rho_0(s_i)\rho_0(s_f)}$. In other words, the only higher genus effects that are important at leading order in $\varepsilon$ come from wormholes connecting the two patches in the Figure of \eqref{eqn:lengthoppert} that are bounded by $T$-dependent boundary lengths. Thus, at least when the observer jumps in very far from the black hole (when $\varepsilon$ is small), the calculation of the growth of the length reduces to that discussed in \cite{IliLev24} to all orders in the genus expansion. It would be interesting to analyze such non-perturbative effects when $\varepsilon$ is not so small, but we leave this analysis for future work.

\subsection{Examples of what an observer sees when falling into a black hole (II):\\ collision energy with a shockwave past the horizon}
\label{sec:casimir}

In \cite{IliLev24}, the authors studied non-perturbative corrections to the center-of-mass (CoM) energy of the collision between the matter particle, inserted on the left, and the observer, inserted on the right. The collision energy is a natural observable since it can be measured locally by the observer -- by simply measuring the components of the stress tensor along their worldline -- as opposed to the change in length of the geodesic slice which cannot. Furthermore, it was shown in \cite{IliLev24} that statistics of the CoM energy can be computed analytically by studying the distribution of Casimir energies of the matter-plus-observer state. The definition of the Casimir energy will be reviewed shortly. Surprisingly, even though this is a low complexity observable, the authors found that the non-perturbative corrections became important for this observable at times that are polynomial in the entropy of the black hole as opposed to exponential in the entropy. The calculations in \cite{IliLev24} seemed to suggest that the notion of an observer can cease to make sense at times much earlier than exponential in the black hole entropy, and so EFT can break down at times of order the Page time. In this section, we revisit the CoM observable and modify it to be an operator acting on the relational Hilbert space $\Hr$. First, we review the discussion of \cite{IliLev24}.

The observable discussed in \cite{IliLev24} was the center-of-mass energy of a collision experienced by an observer falling into one side of a black hole in the thermofield double state when a particle has been dropped into the black hole on the other side. We return now to this observable, but compute what the observer experiences using our new rules for the inner product on the non-perturbative relational Hilbert space. Consider the state $\ket{V\Oo(-T)}$, where $V$ is inserted on the left at boundary time $t=0$ and $\Oo$ on the right at boundary time $t=-T$. Perturbatively, we can understand the resulting state as living in the Hilbert space of JT gravity with matter \cite{IliLev24, Penington:2023dql,Kolchmeyer:2023gwa}. As we reviewed in Section \ref{sec:review}, the matter states can be organized into representations of the background SL(2,$\mathbb{R}$)  symmetry, with ordinary matter falling into discrete series representations. These representations are labeled by a Casimir value $C_{\Delta}$ related to the mass/dimension $\Delta$ of the field via $C_{\Delta} = \Delta (\Delta -1)$. 

When two operators labeled by masses $\Delta_O$ and $\Delta_V$ are inserted at the boundary, the full bulk state, including the perturbative gravitational fluctuations, can then be expanded as a superposition over different representations by using ``gravitationally-dressed blocks" \cite{Jafferis:2022wez}. The operator that measures the SL(2, $\mathbb{R}$) Casimir in each representation appearing in this expansion can be directly related to the operator which measures the center-of-mass collision energy experienced by the two particles \cite{IliLev24}. By measuring the Casimir, we are measuring the collision energy experienced by the observer.

We now explain these words in more detail. In \cite{IliLev24, Jafferis:2022wez}, the gravitationally-dressed blocks can be represented by the diagram
\def\red{\color{red}}
\def\blue{\color{blue}}
\begin{align}
   P^{\Delta_V, \Delta_O}_n(s;s_L,s_R) \quad = \quad   \begin{tikzpicture}[scale=0.7, baseline={([yshift=-0.1cm]current bounding box.center)}]
    \node at (-1.1,0) {$s_L$} ;
    \node at (1.1,0) {$s_R$} ;
    \node at (0,1.5) {$s$} ;
 	\node at (-1.4,1.8) {$\blue \Delta_V$} ;
 	\node at (1.4,1.8) {$\red \Delta_O$} ;
 	\draw[thick, blue] (-1,1.5) -- (0,0.5); 
 	\draw[thick, red] (1,1.5) -- (0,0.5);
 	\draw[thick] (0,0.5) -- (0,-0.72) node[below] {$\Delta_V + \Delta_O + n$};
\end{tikzpicture}\,,\label{P}
\end{align}
where the observer and particle have masses $\Delta_O$ and $\Delta_V$ respectively. Like in previous sections, the $s$-labels are related to the energy of the boundary particle in that region by $E = s^2/2$. The resulting Casimir value, $C_n$, after fusing these two operators is that of a particle with mass $\Delta_n = \Delta_V + \Delta_O +n$, namely $C_n = \Delta_n (\Delta_n -1)$. Furthermore, these blocks obey an orthogonality relation\footnote{Note that due to a factor of 2 difference between $\rho_0$ in this work compared to \cite{Jafferis:2022wez}, our $P^{\Delta_V,\Delta_O}_n$'s differ by a factor of $\sqrt{2}$ from \cite{Jafferis:2022wez} so that equation \eqref{eqn:Portho} still holds.}
\begin{align}\label{eqn:Portho}
    \int ds \  \rho_0(s) P^{\Delta_V,\Delta_O}_n(s;s_L,s_R)P^{\Delta_V,\Delta_O}_m(s;s_L,s_R) = \delta_{nm}
\end{align}
and a completeness relation
\begin{align}
    \frac{\delta(s-s')}{\rho_0(s)}=\sum^{\infty}_{n=0} P_n^{\Delta_V, \Delta_O}(s;s_L, s_R) P_n^{\Delta_V, \Delta_O}(s';s_L,s_R).    \label{schannelIdentity}
\end{align}
This completeness relation can be summarized in the following schematic \cite{Jafferis:2022wez} 
\begin{align}
\begin{tikzpicture}[scale=0.7, baseline={([yshift=-0.1cm]current bounding box.center)}]
    \node at (-2.2,1.8) {$\blue \Delta_V$} ;
    \node at (2.2,1.8) {$\red \Delta_O$} ;
    \node at (-2.2,-1.6) {$\blue \Delta_V$} ;
    \node at (2.2,-1.6) {$\red \Delta_O$} ;
    \node at (-2.2,0) {$s_L$} ;
    \node at (2.2,0) {$s_R$} ;
    \node at (0,1.5) {$s$} ;
    \node at (0,-1.5) {$s'$} ;
 	\draw[thick, blue] (-2,1.3) -- (-1,0); 
 	\draw[thick, blue] (-2,-1.3) -- (-1,0); 
 	\draw[thick, dashed] (-1,0) -- (1,0) node[midway,above] {$\mathbf{1}$} ;
 	\draw[thick, red] (1,0) -- (2,-1.3);
 	\draw[thick, red] (1,0) -- (2,1.3);
\end{tikzpicture}  \quad 
=  \quad \sum_{n=0}^\infty 
\begin{tikzpicture}[scale=0.7, baseline={([yshift=-0.1cm]current bounding box.center)}]
    \node at (-1.2,0) {$s_L$} ;
    \node at (1.2,0) {$s_R$} ;
    \node at (0,1.5) {$s$} ;
    \node at (0,-1.5) {$s'$} ;
 	\node at (-1.4,1.8) {$\blue \Delta_V$} ;
 	\node at (1.4,1.8) {$\red \Delta_O$} ;
 	\node at (-1.4,-1.6) {$\blue \Delta_V$} ;
 	\node at (1.4,-1.6) {$\red \Delta_O$} ;
 	\draw[thick, blue] (-1,1.5) -- (0,0.5); 
 	\draw[thick, red] (1,1.5) -- (0,0.5);
 	\draw[thick] (0,0.5) -- (0,-0.5) node[midway, right] {$n$};
 	\draw[thick, blue] (0,-0.5) -- (-1,-1.5);
 	\draw[thick, red] (0,-0.5) -- (1,-1.5);
\end{tikzpicture}\label{identityBlock}
\end{align}
Using these relations, we can define states of definite Casimir by integrating over states defined by asymptotic boundary conditions as\footnote{From this equation forward, we will often drop the super-script $\Delta_V, \Delta_O$ on the gravitational-blocks $P^{\Delta_V,\Delta_O}_n$'s since we will always be considering the same two operators, $V$ and $\cO_O$.} 
\begin{align}\label{eqn:casimirstates}
    \ket{n, s_L, s_R}_{LR} = e^{-S_0/2} \int ds\, \rho_0(s) \frac{P_n(s;s_L, s_R)}{\sqrt{\gamma_{\Delta_V}(s_L, s) \gamma_{\Delta_O}(s_R,s)}} \ket{s_L, s_R, V \Oo s}_{LR}
\end{align}
where the normalization factors $\gamma$ are given in equation \eqref{eq:gamma}. By construction, the eigenvalue of the state $\ket{n, s_L, s_R}$ under the Casimir operator $\widehat{C}$ is given by $\widehat{C}\ket{n,s_L,s_R} = C_n \ket{n, s_L,s_R}$. The state $\ket{s_L, s_R, V \Oo s}_{LR}$ is a bulk state defined by asymptotic boundary conditions, pictorially represented as
\begin{align}
\ket{s_L, s_R, V\Oo s}_{LR} \quad = \quad
\begin{tikzpicture}[scale=1, baseline={([yshift=-0.1cm]current bounding box.center)}]
    \coordinate (A) at (0, 0);
    \coordinate (B) at (1.414,-1.414);
    \coordinate (C) at (-1.414,-1.414);
    \coordinate (D) at (1,0);
    \coordinate (E) at (-1,0);
    \draw[thick, fill=outsideyellow] (2,0) arc[start angle=0,end angle=-180,radius=2cm];
    \fill[red] (B) circle (2pt);
    \fill[red] (C) circle (2pt);
    \draw[thick, red] (C) .. controls (-1.2, -1) and (-1,-.4) .. (E);
    \draw[thick, red] (B) .. controls (1.2, -1) and (1,-.4) .. (D);
    \node[below right] at (B) {$\Oo$};
    \node[below left] at (C) {$V$};
    \node[above] at (0,-1) {$s$};
    \node[above right] at (-1.8,-.75) {$s_L$};
    \node[above left] at (1.8,-.75) {$s_R$};
\end{tikzpicture}
\quad .
\end{align}
From this definition, we have
\begin{align}
    \leftindex_{LR}{\bra{s_L, s_R, V \Oo s}}\ket{s_L', s_R',V \Oo s' }_{LR} =e^{S_0} \frac{\delta(s_L - s_L') \delta(s_R - s_R')\delta(s -s')}{\rho_0(s_L)\rho_0(s_R)\rho_0(s)} \gamma_{\Delta_V}(s_L, s)\gamma_{\Delta_O}(s_R,s)
    \label{eq:VOnorm}
\end{align}
Therefore, the norm of the states in \eqref{eqn:casimirstates} is\footnote{The states $\ket{n,s_L,s_R}$ are the same (up to a normalization constant) as the states $\ket{\Delta,E_L,E_R}$ introduced in equation \eqref{eq:hpert_matter}, where $\Delta=C_n$.}
\begin{align}
\bra{n, s_L, s_R} \ket{m,s_L',s_R'} &=\frac{\delta(s_L - s_L') \delta(s_R - s_R') }{\rho_0(s_L) \rho_0(s_R)} \times \nonumber \\
&\int ds ds' \rho_0(s) \rho_0(s') P_n(s; s_L, s_R) P_m(s'; s_L, s_R) \frac{\delta(s - s')}{\rho_0(s')} = \frac{\delta_{nm} \delta(s_L - s_L') \delta(s_R - s_R') }{\rho_0(s_L) \rho_0(s_R)},
\end{align}
where we used \eqref{eqn:Portho}. We then see that the $\ket{n,s_L,s_R}$ states are also plane-wave normalizable, and form a complete basis of states in which we can expand a given state. To make these normalizable in the standard sense, we should imagine integrating $s_L$ and $s_R$ over a small window of size $\delta s$ centered around $\bar{s}_{L,R}$. To normalize the states we divide by $\sqrt{\rho_0(\bar{s}) \delta s}$ so that the normalized states are 
\begin{align}\label{eqn:smearedcasimirstates}
   \ket{n, \bar{s}_L, \bar{s}_R} =  \int^{\bar{s}_{L,R} + \delta s/2}_{\bar{s}_{L,R}-\delta s/2} ds_L ds_R \frac{\rho_0(s_L)\rho_0(s_R)}{\delta s \sqrt{\rho_0(\bar{s}_L) \rho_0(\bar{s}_R)}} \ket{n, s_L, s_R}.
\end{align}
We will often leave this smearing over $s_L$ and $s_R$ implicit below unless necessary.

\subsubsection*{Perturbative probability of detecting a fixed Casimir}
To understand the distribution of collision energies discussed in \cite{IliLev24}, we would like to expand the state with an observer inserted at Lorentzian boundary time $t=-T$ in the past on the right and a particle inserted at boundary time $t=0$ on the left, denoted $\ket{V(0) \OR(-T)}$ in the basis of Casimir states we just introduced. To make the state $\ket{V(0) \OR(-T)}$ normalizable we move the two operators off the $\tau=0$ slice in Euclidean time by an amount $\varepsilon$. The overlap between the two states is 
\be
\begin{aligned}\label{eqn:overlapbwcasandstate}
 \Big\langle n,s_L, s_R\Big |&V(0)\OR(-T)\Big\rangle \\
 &=  e^{-S_0/2} \int ds \rho_0(s) \frac{P_n(s; s_L, s_R)}{\sqrt{\gamma_{\Delta_V}(s_L, s) \gamma_{\Delta_O}(s_R,s)}}\bra{s_L, s_R, V\OR s}\ket{V(0)\OR(-T)}
\end{aligned}
\ee
At the disk level, the inner product inside the integral is just
\be
\begin{aligned}
\Big\langle s_L, s_R, V\OR s\Big|&V(0)\OR(-T)\Big\rangle \\
&= e^{S_0} e^{-\frac{\varepsilon s_L^2}{2} - \frac{\varepsilon s_R^2}{2}-(\beta/2-iT)\frac{s^2}2} \times\sqrt{\gamma_{\Delta_V}(s_L,s) \gamma_{\Delta_V}(s_L,s)\gamma_{\Delta_O}(s_R,s) \gamma_{\Delta_O}(s_R,s)}
\end{aligned}
\ee
and so 
\be
\begin{aligned}
\Big\langle n,s_L,s_R\Big |&V(0)\OR(-T)\Big\rangle \\
&= e^{S_0/2}\int ds \rho_0(s) P_n(s; s_L, s_R) e^{-\frac{\varepsilon s_L^2}{2} - \frac{\varepsilon s_R^2}{2}-(\beta/2-iT)\frac{s^2}2 } \times 
\sqrt{\gamma_{\Delta_V}(s_L,s)\gamma_{\Delta_O}(s_R,s) }.
\label{eq:624}
\end{aligned}
\ee
With these overlaps in hand, we can then compute the expectation value in the $\ket{V \cO_O}$ state of the projector onto a fixed $n$ eigenspace 
\begin{align}\label{eqn:nproj}
    \Pi_n \equiv \int ds_Lds_R\,\rho_0(s_L) \rho_0(s_R) \ketbra{n,s_L,s_R}
\end{align}
We thus get that the distribution of $n$-eigenvalues takes the form
\begin{align}\label{eqn:fulldistributionmain}
    p(n) =& \frac{e^{S_0}}{\braket{V\OR}} \int_0^{\infty} \prod_{j=1}^4 \left( ds_j \rho_0(s_j) e^{-\tau_j \frac{s_j^2}2}\right)\times \nonumber \\
& \left[\gamma_{\Delta_V}(s_1,s_2)\gamma_{\Delta_O}(s_2,s_3)\gamma_{\Delta_O}(s_3,s_4)\gamma_{\Delta_V}(s_4,s_1)\right]^{1/2} P_n^{\Delta_V \Delta_O}(s_4;s_1, s_3) P_n^{\Delta_V \Delta_O}(s_2;s_1,s_3),
\end{align}
where here $\lbrace \tau_1, \tau_2, \tau_3, \tau_4 \rbrace = \lbrace 2\varepsilon, \beta/2-iT, 2\varepsilon, \beta/2+iT\rbrace$ and $s_1 = s_L,\ s_3 = s_R,\ s_2=s$ and $s_4 = s'$. This is the Casimir distribution quoted in \cite{IliLev24}. By using equation \eqref{schannelIdentity} together with the fact that
\begin{align}\label{eqn:statenorm}
    \braket{V\OR} = e^{S_0}\int_0^{\infty} \prod_{j=1}^3 \left( ds_j \rho_0(s_j) e^{-\tau_j \frac{s_j^2}2}\right) \gamma_{\Delta_V}(s_1,s_2)\gamma_{\Delta_O}(s_2,s_3), \quad \lbrace \tau_1, \tau_2, \tau_3 \rbrace = \lbrace 2\varepsilon, \beta, 2\varepsilon\rbrace,
\end{align}
we see that this distribution is normalized in $n$.

Without Schwarzian, gravitational effects, this distribution is sharply peaked for early times about $n \sim \sqrt{\Delta_V \Delta_O} e^{\pi T/\beta}$.\footnote{For an explanation of this point, see Sec. 7.2 of \cite{IliLev24}.} Thus, the collision energy experienced by the observer grows exponentially. To understand gravitational effects, it is then warranted to expand the gravitational blocks above at large $n$. As explained in Appendix B of \cite{IliLev24}, at large $n$ the blocks $P_n(s;s_L,s_R)$ take the form
\begin{align}\label{eqn:Pnlargen}
P_n(s;s_L,s_R) =\sqrt{\frac{4 \Gamma(\Delta_V \pm is_L \pm is) \Gamma(\Delta_O \pm is_R \pm is)}{n}} \left(  \frac{\Gamma(2is)e^{2is \log n}}{\Gamma(\Delta_V \pm is_L + is) \Gamma(\Delta_O \pm is_R + is)} + c.c.\right).
\end{align}
Furthermore, at large $\Delta_V/\varepsilon$ and $\Delta_O/\varepsilon$, it is natural to work with re-scaled variables $s_{1,3} = \Delta_{V,O}/\varepsilon \sigma_{1,3}$ in the integral in \eqref{eqn:fulldistributionmain}. Expanding the Gamma-functions in \eqref{eqn:Pnlargen} at large $\Delta_{V,O}/\varepsilon$ and plugging the result into \eqref{eqn:fulldistributionmain}, we can then perform the $\sigma_1$ and $\sigma_3$ integrals by saddle point, and indeed one finds saddle point values at $\sigma_{1,3} \sim \mathcal{O}(1)$. Furthermore, in this limit, the $\sigma_{1,3}$ integrals decouple from the $s_2,s_4$ integrals in \eqref{eqn:fulldistributionmain}, and, to leading order in $\Delta/\varepsilon$, the $\sigma_{1,3}$ integrals just give a factor of $\braket{V\OR}/Z(\beta)$ with $\braket{V\OR}$ evaluated in the $\Delta/\varepsilon \to \infty$ limit of \eqref{eqn:statenorm}. 

When the dust settles, one finds that the full normalized distribution in \eqref{eqn:fulldistributionmain} is \cite{IliLev24} 
\begin{align}\label{eqn:totalprobdist}
& p(n) = \frac{4}{n Z(\beta)}  \int_{-\infty}^{\infty} ds_2 ds_4 \rho_0(s_2) \rho_0(s_4)\exp \left(-\frac{\beta}4 (s_2^2 + s_4^2) + i\frac{T}2 (s_2^2 - s_4^2)  \right)   \times \nonumber \\
&\Gamma(2i s_4)\Gamma(-2is_2) \exp \left(i2 (s_2 - s_4) \log \frac{n}{4 \frac{\Delta}{\varepsilon}}\right)  + \mathcal{O}(\varepsilon/(\Delta)).
\end{align} 
Finally changing to sum and difference variables, $s_{\pm} = s_2 \pm s_4$, we see that the integral is dominated by $s_- \ll s_+$. The resulting integral becomes a sharply peaked Gaussian in $\log n/n_0$ with mean $\sim \pi T/\beta$ and width growing linearly in $T$. Changing from $n$ to $C_n \sim n^2$ variables, where $C_n$ is the eigenvalue under $\widehat{C}$, one finds the Casimir distribution quoted in \cite{IliLev24}
\begin{align}\label{eqn:casimirdistributionpert}
p(C) = \frac{\beta^{3/2}\Phi_b^{\frac{1}{2}}\log \frac{C}{C_0}}{C \sqrt{8\pi^3}T^2} \exp\left(-\frac{\beta\Phi_b}{2T^2}\left(\log \frac{C}{C_0} - 2\pi T/\beta\right)^2\right),
\end{align}
where $C_0 = \frac{16 \Delta^2 \Phi_b^2}{\varepsilon^2}$. Here we reintroduced the Schwarzian coupling $\Phi_b$, with $\Phi_b\gg \beta,T$ in the semiclassical limit which we are interested in. Integrating over $C$, one can check that this distribution is indeed normalized. The important point about this distribution is just that near the peak at $C \sim C_0 e^{2\pi T/\beta}$,\footnote{The distribution \eqref{eqn:casimirdistributionpert} is only valid for $C\gg 1$, $C\gg C_0$. The peak is at $C \sim C_0 e^{2\pi T/\beta}$ only in the semiclassical limit $\gamma\gg\beta,T$.} the distribution has a height which is exponentially decaying in time, due to the factor of $1/C$ in front of the Gaussian. In other words, at times of order $S_0$, the distribution is dominated by values which are exponentially small in the entropy and so are susceptible to $e^{-S_0}$-suppressed corrections. We turn now to discussing such non-perturbative corrections and how they are suppressed in the presence of an observer. 

 \subsubsection*{Wormhole corrections to the CM collision energy: the global picture}
So far, we have computed the disk-level/semi-classical contribution to the distribution of Casimir energies in the state $\ket{V(0) \OR(-T)}$. This can be visualized as computing the contribution to the gravitational path integral from the spacetime
\begin{align}
\begin{tikzpicture}[baseline={([yshift=0cm]current bounding box.center)}, scale=.8]
    \draw[thick] (0,0) circle (2.0);
     \draw[thick, fill=outsideyellow]
        plot [smooth cycle, tension=1] coordinates {
            (1.9, 0)
            (1.75, 0.5)
            (1.25, 1.4)
            (0.8, 1.6)
            (0.4, 1.8)
            (0, 1.9)
            (-0.5, 1.7)
            (-1.1, 1.5)
            (-1.7, 1)
            (-1.9, 0.3)
               (-1.95, 0)
            (-1.8, -0.4)
            (-1.5, -1.1)
            (-1.2, -1.6)
            (-0.7, -1.8)
            (-0.2, -1.9)
            (0.5, -1.7)
            (1.1, -1.3)
            (1.6, -0.8)
            (1.8, -0.2)
        };
    \draw[red, thick] ({2*cos(10)-0.2},{-2*sin(10)}) arc (270:90:0.348623);
    \node[black] at (0,-2.4) {$\beta+i t$};
     \node[black] at (0,2.4) {$\beta-i t$};
    \node[purple] at (0,0.25) {\Large $\Pi_n$} ;
    \draw[thick, purple, dashed] (-1.95,0)--(1.9,0) ;
    \fill[red] ({2*cos(10)-0.2},{-2*sin(10)}) circle (0.1);
     \node[below left] at ({2*cos(10)-0.2},{-2*sin(10)}) {$O_i$}; 
    \fill[red] ({2*cos(10)-0.2},{2*sin(10)}) circle (0.1);
         \node[above left] at ({2*cos(10)-0.2},{2*sin(10)}) {$O_f$}; 
    \fill[blue] ({-(2*cos(10)-0.15)},{-2*sin(10)}) circle (0.1);
     \node[blue, below left] at ({-(2*cos(10)-0.15)},{-2*sin(10)}) {$V$}; 
    \fill[blue] ({-(2*cos(10)-0.15)},{2*sin(10)}) circle (0.1);
          \node[blue, above left] at ({-(2*cos(10)-0.15)},{2*sin(10)}) {$V$}; 
              \draw[blue, thick] ({-(2*cos(10)-0.15)},{-2*sin(10)}) arc (-90:90:0.348623);=
    \end{tikzpicture},
    \end{align} where we have inserted the projector onto fixed $n$ states, $\Pi_n$, defined in \eqref{eqn:nproj}. Now we would like to compute the contribution from higher genus wormholes. The contribution we will be most interested in is when the projector, $\Pi_n$, ``threads" the wormhole as follows:\footnote{One can check that contributions in which the projector does not thread the wormhole are exponentially suppressed in $S_0$ and have the same $1/n$-suppression as the perturbative result.}

\begin{align}\label{eqn:caswh}
\begin{tikzpicture}[scale=.7,baseline={([yshift=-0.0cm]current bounding box.center)}]
\fill[blue] (2.2,2.4) circle (1.5pt) node[below left] {$P_2$};
\draw[thick,  blue, fill = outsideyellow] (2, 0) ellipse (1 and 3);
\fill[blue] (1,0) circle (0.1) node[below right] {$V$};
\fill[blue] (1.1,1) circle (0.1) node[above right] {$V$};
\fill[red] (2.75,2.05) circle (0.1) node[above right] {$O$};
\fill[red] (2.9,1.2) circle (0.1) node[above right] {$O$};
 \draw[thick, purple, dashed]
        plot [smooth, tension=1] coordinates {
            (1.05,0.5)
            (2,0)
            (3,0)
        };
\clip (2.25,-1.51) rectangle (5.6,1.51);
\draw[smooth, thick, blue, fill = outsideyellow] (0,0.5) to[out=0,in=0] (2,0.5) to[out=0,in=210] (3,1) to[out=30,in=150] (5,1) to[out=-30,in=30] (5,-1) to[out=210,in=-30] (3,-1) to[out=150,in=0] (2,-0.5) to[out=0,in=0] (0,-0.5) to[out=150,in=-150] (0,0.5);
\fill[smooth, white] (3.52,0) .. controls (3.8,-0.2) and (4.2,-0.2) .. (4.48,0);
\draw[smooth, blue, fill = white] (3.5,0) .. controls (3.8,0.2) and (4.2,0.2) .. (4.5,0);
\draw[smooth, blue] (3.4,0.1) .. controls (3.8,-0.25) and (4.2,-0.25) .. (4.6,0.1);
 \draw[thick, purple, dashed]
        plot [smooth, tension=1] coordinates {
            (1.05,0.5)
            (2,0)
            (3.52,0)
        };
  \draw[thick, purple, dashed]
        plot [smooth, tension=1] coordinates {
            (2.9, 1.7)
            (2.6, 1)
            (2.8,0.85)
        };       
    \node[purple] at (3.85,-0.5) {\Large $\Pi_n$} ;      
\end{tikzpicture} \quad.
\end{align}

Now by equation \eqref{eqn:nproj}, computing the expectation value of $\Pi_n$ requires us to compute the square of $\bra{n, s_L, s_R}\ket{V(0)\OR(-T)}$. Accordingly, we can cut the wormhole in \eqref{eqn:caswh} along the dashed line and view it as a bra-ket to bra-ket wormhole between the two copies of $\bra{n, s_L, s_R}\ket{V(0)\OR(-T)}$ in the expectation value of $\Pi_n$. Via the definition of the states in \eqref{eqn:casimirstates}, we can then replace the $\ket{n,s_L,s_R}$ states with an integral over $s$ of $\ket{s_L, s_R, V\OR s}$ states. In this way, we can view the wormhole in \eqref{eqn:caswh} as (an integral of) the cylinder between two asymptotic boundaries. These two steps of cutting the wormhole in \eqref{eqn:caswh} and then replacing $\ket{n,s_L,s_R}$ boundaries with asymptotic boundaries are depicted in Figure \ref{fig:cutwormhole}.\footnote{Note that the factor of $e^{-S_0/2}$ in \eqref{eqn:overlapbwcasandstate} accounts for the $e^{S_0}$ difference between the geometry in \eqref{eqn:caswh} (genus one) and the geometries in Figure \ref{fig:cutwormhole} (genus zero).}

\begin{figure}[h]
    \centering
    \begin{subfigure}[c]{.45\textwidth}
        \includegraphics[width=\linewidth]{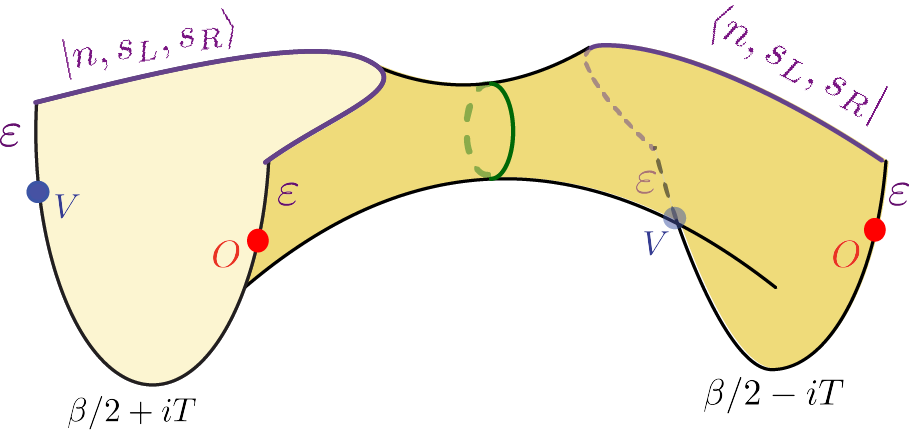}
    \end{subfigure}
    \hspace{1cm}
    \begin{subfigure}[c]{.45\textwidth}
        \includegraphics[width=\linewidth]{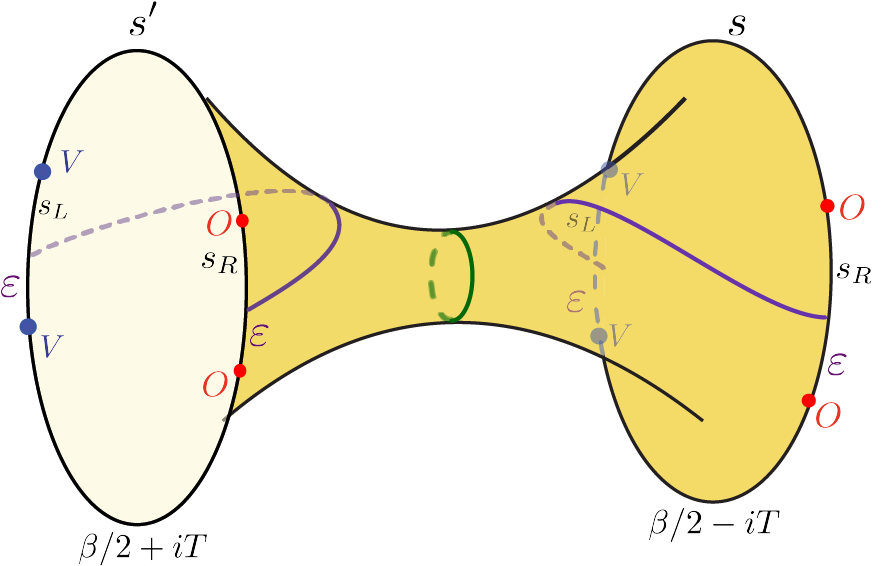}
    \end{subfigure}
    \caption{The left figure illustrates the geometry obtained upon cutting the wormhole in \eqref{eqn:caswh} along the dashed line. The resulting geometry corresponds to the bra-ket to bra-ket wormhole contributing to the square of $\bra{n, s_L, s_R}\ket{V(0)\OR(-T)}$. In the right figure, the $\ket{n,s_L,s_R}$ boundary conditions are replaced by (an integral of) asymptotic boundary segments given by the state $\ket{s_L,s_R, V \OR s}$.}
    \label{fig:cutwormhole}
\end{figure}

We can then compute the two-boundary wormhole in Figure \ref{fig:cutwormhole} by summing over different contractions of the various boundary operators. Without any restrictions on which contractions we sum over, the dominant contraction will be the one where each operator pairs up with its corresponding operator on the opposite boundary. Since the two-boundary wormhole can be constructed by doing an identification of global AdS$_2$, the relevant contraction can be visualized as a diagram on global AdS$_2$ as 
\def\red{\color{red}}
\def\blue{\color{blue}}
\def\yellow{\color{yellow}}
\begin{align}\label{eqn:contractionclass3}
\begin{tikzpicture}[scale=1, baseline={([yshift=-0.1cm]current bounding box.center)}]
    \coordinate (T1) at (-1.5, 1.5);
    \coordinate (T2) at (0, 1.5);
    \coordinate (T3) at (1, 1.5);
    \coordinate (T4) at (2.5, 1.5);
    \coordinate (T5) at (4, 1.5);
    \coordinate (B1) at (-1.5, -1.5);
    \coordinate (B2) at (0, -1.5);
    \coordinate (B3) at (1, -1.5);
    \coordinate (B4) at (2.5, -1.5);
    \coordinate (B5) at (4, -1.5);
    \coordinate (slashl) at (-1.5,0);
    \coordinate (slashr) at (4, 0);
    \fill[fill=red!20] (B4) -- (T4) -- (T5)--(B5)--cycle;
    \fill[fill=green!20] (B1) -- (T1) -- (T2)--(B2) -- cycle;
    \fill[fill=yellow!20] (B3) -- (T3) -- (T4)--(B4) -- cycle;
    \draw[thick, dashed] (T1) -- (B1);
    \draw[thick, red] (T5) -- (B5);
    \draw[thick, red] (T4) -- (B4);
    \fill[blue!20] (T2) -- (B2) -- (B3) -- (T3) -- cycle;
    \draw[thick, blue] (T2) -- (B2);
    \draw[thick, blue] (T3) -- (B3);
    \draw[thick] (T1) -- (T2) -- (T3) -- (T4) -- (T5);
    \draw[thick] (B1) -- (B2) -- (B3) -- (B4) -- (B5);
    \node[above] at (.5, -1.58) {$s_L$};
    \node[below] at (.5, 1.58) {$s_L$};
    \node[above] at (3.25, -1.58) {$s_R$};
    \node[below] at (3.25, 1.58) {$s_R$};
    \node[above] at (1.75, -1.58) {\scriptsize $\beta/2 + iT$};
    \node[below] at (1.75, 1.58) {\scriptsize $\beta/2 - iT$};
    \node[above] at (-.75, -1.58) {$s$};
    \node[below] at (-.75, 1.58) {$s'$};
    \node at (slashl) {$=$};
    \node at (slashr) {$=$};
    \node[above] at (T2) {$V$};
    \node[above] at (T3) {$V$};
    \node[above] at (T4) {$\Oo$};
    \node [above right] at (T5) {$\Oo$};
    \node[below] at (B2) {$V$};
    \node[below] at (B3) {$V$};
    \node[below] at (B4) {$\Oo$};
    \node [below right] at (B5) {$\Oo$};
    \node at (-7,0) {\includegraphics[width=8cm]{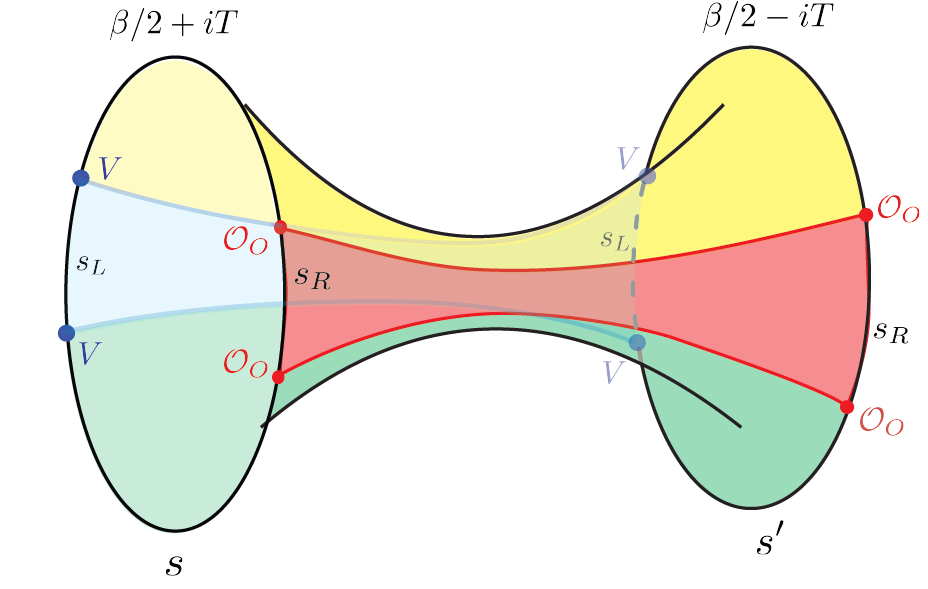}};
    \node at (-2.5,0) {$=$};
    \fill[red] (T1) circle (2pt);
    \fill[blue] (T2) circle (2pt);
    \fill[blue] (T3) circle (2pt);
    \fill[red] (T4) circle (2pt);
    \fill[red] (T5) circle (2pt);
    \fill[red] (B1) circle (2pt);
    \fill[blue] (B2) circle (2pt);
    \fill[blue] (B3) circle (2pt);
    \fill[red] (B4) circle (2pt);
    \fill[red] (B5) circle (2pt); 
\end{tikzpicture}\ .
\end{align}
From now on, below, we will directly draw the diagram on the right that schematically shows the operator contraction.  Such diagrams can be computed using standard rules derived in \cite{Jafferis:2022wez}. The rules state that for each intersection of a matter/observer worldline with an asymptotic boundary particle worldline, the diagram receives a $\sqrt{\gamma_{\Delta}(s_1,s_2)}$ vertex factor. Furthermore, asymptotic boundary regions that are shaded the same color share the same energy or $s$-variable. Putting these rules together, this diagram gives
\begin{align}\label{eqn:domcontraction}
    &\int d\lambda\, \rho_0(\lambda)\, e^{-\varepsilon (s_L^2 + s_R^2)-\beta\frac{\lambda^2}2} \sqrt{\gamma_{\Delta_V}(s_L,s)\gamma_{\Delta_O}(s_R,s)\gamma_{\Delta_V}(s_L,s')\gamma_{\Delta_O}(s_R,s')}  \gamma_{\Delta_V}(s_L,\lambda) \gamma_{\Delta_O}(s_R, \lambda) \frac{\delta(s - s')}{\rho_0(s)}.
\end{align}
Integrating this expression against $P_n(s)$ and $P_n(s')$ to project onto fixed Casimir states as in \eqref{eqn:casimirstates}, we see that the $\gamma$ factors inside the square-root in \eqref{eqn:domcontraction} cancel against those in \eqref{eqn:casimirstates}. Plugging into \eqref{eqn:casimirstates} and using the orthogonality relations for the $P_n$'s in \eqref{eqn:Portho} gives
\begin{align}\label{eqn:integrateddomcontraction}
&|\bra{n,s_L,s_R}\ket{V_L(0)\cO_O(-T)} |^2 \supset e^{-S_0} \int d\lambda\, \rho_0(\lambda)\, e^{-\varepsilon (s_L^2 + s_R^2)-\beta\frac{\lambda^2}2}  \gamma_{\Delta_V}(s_L,\lambda) \gamma_{\Delta_O}(s_R, \lambda).
\end{align}
To compute the contribution of this overlap to the probability distribution in $n$, we need to compute the expectation value of the projector in \eqref{eqn:nproj}, which involves integrating \eqref{eqn:integrateddomcontraction} over $s_L,s_R$. Doing so, we see that \eqref{eqn:integrateddomcontraction} just contributes an amount equal to the norm of the state $\ket{V_L(0) \cO_O(-T)}$ given in \eqref{eqn:statenorm} up to a factor of $e^{-2S_0}$. Dividing (the integral over $s_L,s_R$ of) \eqref{eqn:integrateddomcontraction} by the norm of the state \eqref{eqn:statenorm} to compare with the disk distribution, we arrive at a wormhole contribution to the probability distribution $p(n)$ which is $n$-independent but suppressed by an amount $e^{-2S_0}$. At the peak of the distribution in \eqref{eqn:casimirdistributionpert}, this wormhole correction will become dominant at times $T/\beta$ of order $S_0$. Note that such a correction, in fact, leads to a non-normalizable distribution for $n$ (or the Casimir). This is just reflective of the fact that the $\ket{n, s_L, s_R}$ states are over-complete with respect to the non-perturbative inner product.

 \subsubsection*{Wormhole corrections to the CM collision energy: the observer's point of view}

We will now discuss how this effect is eliminated when using our modified inner product in the presence of an observer. Using our rules for the quantum gravity relational Hilbert space, we should always contract the operator $\Oo$ with a partner in the same bra-ket pair. In other words, when using the non-perturbative inner product in the presence of an observer, the observer's worldline should never cross the wormhole. This property disturbs the contraction that led to the factor of $\delta(s -s')$ in \eqref{eqn:domcontraction}. For other matter particle contractions, the use of the orthogonality relation gets spoiled, leaving behind $P_n$'s, which are suppressed by factors of $1/\sqrt{n}$ due to \eqref{eqn:Pnlargen}. This means that the wormhole corrections in the non-perturbative observer Hilbert space are of the same order in $n$-counting as the perturbative answer \eqref{eqn:casimirdistributionpert}, but they are suppressed in $e^{S_0}$-counting.

We now explain these words in more detail by examining the form of other contractions and how they contribute to $|\bra{n,s_L,s_R}\ket{V(0)\OR(-T)}|^2$. In our modified inner product rules, the observer operator inside the definition of the state $\ket{n,s_L,s_R}$ in \eqref{eqn:casimirstates} will always connect with the observer operator in the corresponding bra/ket. Assuming that no particle worldlines cross (as we have done throughout this work), we are just left with two classes of diagrams, depending upon whether the worldline of $V_L$ traverses the wormhole or not. \footnote{We are focusing here again on diagrams for which the projector threads the wormhole. We will find that these diagrams are suppressed by $1/C$ and $e^{S_0}$; we remark that diagrams in which the projector does not thread the wormhole are also suppressed by $1/C$ and $e^{S_0}$ and therefore give corrections to the perturbative answer at the same order. However, unlike one of the diagrams we will compute, they are time-independent and will therefore not be enhanced at exponential times.} For the class of diagrams where the $V_L$ operator's worldline crosses the wormhole, there are two possible contractions which we need to evaluate to compute $|\bra{n,s_L,s_R}\ket{V(0)\OR(-T)}|^2$. The important contractions can be illustrated as 
\def\red{\color{red}}
\def\blue{\color{blue}}
\begin{align}\label{eqn:contractionclass1}
    \begin{tikzpicture}[scale=1, baseline={([yshift=-0.1cm]current bounding box.center)}]
    \coordinate (T1) at (-1.5, 1.5);
    \coordinate (T2) at (0, 1.5);
    \coordinate (T3) at (1, 1.5);
    \coordinate (T4) at (2.5, 1.5);
    \coordinate (T5) at (4, 1.5);
    \coordinate (B1) at (-1.5, -1.5);
    \coordinate (B2) at (0, -1.5);
    \coordinate (B3) at (1, -1.5);
    \coordinate (B4) at (2.5, -1.5);
    \coordinate (B5) at (4, -1.5);
    \coordinate (slashl) at (-1.5,0);
    \coordinate (slashr) at (4, 0);
    \fill[fill=green!20] (B1) -- (T1) -- (T5)--(B5)-- cycle;
    \fill[fill=blue!20] (B2) -- (T2) -- (T3)--(B3)--cycle;
    \draw[thick, red, fill=red!20] (T4) .. controls (3.0, 1.0) and (3.5, 1.0) .. (T5) -- cycle;
    \draw[thick, red, fill=red!20] (B4) .. controls (3.0, -1.0) and (3.5, -1.0) .. (B5) -- cycle;
    \draw[thick] (T1) -- (T2) -- (T3) -- (T4) -- (T5);
    \draw[thick] (B1) -- (B2) -- (B3) -- (B4) -- (B5);
    \draw[thick, dashed] (T1) -- (B1);
    \draw[thick, blue] (T2) -- (B2);
    \draw[thick, blue] (T3) -- (B3);
    \draw[thick, dashed] (T5) -- (B5);
    \node[above] at (.5, -1.58) {$s_L$};
    \node[below] at (.5, 1.58) {$s_L$};
    \node[above] at (3.25, -1.58) {$s_R$};
    \node[below] at (3.25, 1.58) {$s_R$};
    \node[above] at (1.75, -1.58) {\scriptsize $\beta/2 + iT$};
    \node[below] at (1.75, 1.58) {\scriptsize $\beta/2 - iT$};
    \node[above] at (-.75, -1.58) {$s$};
    \node[below] at (-.75, 1.58) {$s'$};
    \node at (slashl) {$=$};
    \node at (slashr) {$=$};
    \node[above] at (T2) {$V$};
    \node[above] at (T3) {$V$};
    \node[above] at (T4) {$\Oo$};
    \node [above right] at (T5) {$\Oo$};
    \node[below] at (B2) {$V$};
    \node[below] at (B3) {$V$};
    \node[below] at (B4) {$\Oo$};
    \node [below right] at (B5) {$\Oo$};
    \fill[red] (T1) circle (2pt);
    \fill[blue] (T2) circle (2pt);
    \fill[blue] (T3) circle (2pt);
    \fill[red] (T4) circle (2pt);
    \fill[red] (T5) circle (2pt);
    \fill[red] (B1) circle (2pt);
    \fill[blue] (B2) circle (2pt);
    \fill[blue] (B3) circle (2pt);
    \fill[red] (B4) circle (2pt);
    \fill[red] (B5) circle (2pt); 
\end{tikzpicture}
+ \quad 
    \begin{tikzpicture}[scale=1, baseline={([yshift=-0.1cm]current bounding box.center)}]
    \coordinate (T1) at (-1.5, 1.5);
    \coordinate (T2) at (0, 1.5);
    \coordinate (T3) at (1, 1.5);
    \coordinate (T4) at (2.5, 1.5);
    \coordinate (T5) at (4, 1.5);
    \coordinate (B1) at (-1.5, -1.5);
    \coordinate (B2) at (0, -1.5);
    \coordinate (B3) at (1, -1.5);
    \coordinate (B4) at (2.5, -1.5);
    \coordinate (B5) at (4, -1.5);
    \coordinate (VL) at (-1.5, -.5);
    \coordinate (VR) at (4, -.5);
    \fill[fill=green!20] (B1) -- (T1) -- (T5) -- (B5) -- cycle;
    \fill[fill=blue!20]  (B2) .. controls (-.6, -.9) and (-.9,-.75)  .. (VL) -- (T1) -- (T2) -- (B3) -- cycle;
    \fill[fill=blue!20]  (VR) .. controls (.9, 0) and (1.3, 1) .. (T3) --(T5) -- cycle;
    \draw[thick, red, fill=red!20] (T4) .. controls (3.0, 1.0) and (3.5, 1.0) .. (T5) -- cycle;
    \draw[thick, red, fill=red!20] (B4) .. controls (3.0, -1.0) and (3.5, -1.0) .. (B5) -- cycle;
    \draw[thick] (T1) -- (T2) -- (T3) -- (T4) -- (T5);
    \draw[thick] (B1) -- (B2) -- (B3) -- (B4) -- (B5);
    \draw[thick, dashed] (T1) -- (B1);
    \draw[thick, dashed] (T5) -- (B5);
    \draw[thick, blue] (B2) .. controls (-.6, -.9) and (-.9,-.75)  .. (VL);
    \draw[thick,blue] (B3) -- (T2);
    \draw[thick, blue] (VR) .. controls (.9, 0) and (1.3, 1) .. (T3);
    \node[above] at (.5, -1.58) {$s_L$};
    \node[below] at (.5, 1.58) {$s_L$};
    \node[above] at (3.25, -1.58) {$s_R$};
    \node[below] at (3.25, 1.58) {$s_R$};
    \node[above] at (1.75, -1.58) {\scriptsize $\beta/2 + iT$};
    \node[below] at (1.75, 1.58) {\scriptsize $\beta/2 - iT$};
    \node[above] at (-.75, -1.58) {$s$};
    \node[below] at (-.75, 1.58) {$s'$};
    \node at (slashl) {$=$};
    \node at (slashr) {$=$};
    \node[above] at (T2) {$V$};
    \node[above] at (T3) {$V$};
    \node[above] at (T4) {$\Oo$};
    \node [above right] at (T5) {$\Oo$};
    \node[below] at (B2) {$V$};
    \node[below] at (B3) {$V$};
    \node[below] at (B4) {$\Oo$};
    \node [below right] at (B5) {$\Oo$};
    \fill[red] (T1) circle (2pt);
    \fill[blue] (T2) circle (2pt);
    \fill[blue] (T3) circle (2pt);
    \fill[red] (T4) circle (2pt);
    \fill[red] (T5) circle (2pt);
    \fill[red] (B1) circle (2pt);
    \fill[blue] (B2) circle (2pt);
    \fill[blue] (B3) circle (2pt);
    \fill[red] (B4) circle (2pt);
    \fill[red] (B5) circle (2pt);
\end{tikzpicture}
.
\end{align}
As in \eqref{eqn:domcontraction}, to get to the overlap $|\bra{n,s_L,s_R}\ket{V(0)\OR(-T)}|^2$, we need to integrate these diagrams over $s,s'$ against the kernel in \eqref{eqn:casimirstates}.
Again following the Feynman diagram-like rules discussed in \cite{Jafferis:2022wez} for computing the contribution of such spacetimes, the left diagram gives
\begin{align}\label{eqn:subdomcontraction1}
    &|\bra{n,s_L,s_R}\ket{V(0)\OR(-T)}|^2 \supset e^{-S_0} \int ds \ \rho_0(s) e^{-\varepsilon (s_L^2 + s_R^2)-\beta \frac{s^2}2}  \gamma_{\Delta_V}(s_L,s)\gamma_{\Delta_O}(s_R,s) P_n^2(s,s_L,s_R),
\end{align}
The time dependence of this contribution has canceled out because the two asymptotic regions whose length depends on $T$ share the same energy variable $s$. Importantly, the presence of the Boltzmann factor has spoiled the enhancement in the integral over $s$ that the diagram computed in \eqref{eqn:integrateddomcontraction} exhibited. One way to see that the enhancement is gone is that in the calculation that leads to equation \eqref{eqn:integrateddomcontraction}, the piece of the $s$-integral that scales as $n$ comes from the large $s$ region, where the integrand oscillates sinusoidally. In \eqref{eqn:subdomcontraction1}, the large-$s$ region of the integral is damped by the factor of $e^{-\beta s^2}$, which was not present in the $s$-integral in \eqref{eqn:domcontraction}. 
The second diagram gives 
\begin{align}
    &|\bra{n,s_L,s_R}\ket{V(0)\OR(-T)}|^2 \supset e^{-S_0} e^{-\varepsilon (s_L^2 + s_R^2) -\beta \frac{s_L^2}2} \gamma_{\Delta_V}(s_L,s_L)\gamma_{\Delta_O}(s_R,s_L) P^2_n(s_L;s_L,s_R),
\end{align}
where the $s$ and $s'$ integrals have been localized to $s,s' = s_L$. Again, this diagram is time-independent and decays like $1/n$ at large $n$ from the factors of $P_n(s_L;s_L,s_R)$.

In the second class of diagrams, the matter operator $V$ does not traverse the wormhole. Now, there are three diagrams to consider, which look like
\def\red{\color{red}}
\def\blue{\color{blue}}
\begin{align}\label{eqn:contractionclass2}
    \begin{tikzpicture}[scale=.75, baseline={([yshift=-0.1cm]current bounding box.center)}]
    \coordinate (T1) at (-1.5, 1.5);
    \coordinate (T2) at (0, 1.5);
    \coordinate (T3) at (1, 1.5);
    \coordinate (T4) at (2.5, 1.5);
    \coordinate (T5) at (4, 1.5);
    \coordinate (B1) at (-1.5, -1.5);
    \coordinate (B2) at (0, -1.5);
    \coordinate (B3) at (1, -1.5);
    \coordinate (B4) at (2.5, -1.5);
    \coordinate (B5) at (4, -1.5);
    \coordinate (slashl) at (-1.5,0);
    \coordinate (slashr) at (4, 0);
    \fill[fill=green!20] (B1) -- (T1) -- (T5)--(B5)-- cycle;
    \draw[thick] (T1) -- (T2) -- (T3) -- (T4) -- (T5);
    \draw[thick] (B1) -- (B2) -- (B3) -- (B4) -- (B5);
    \draw[thick, dashed] (T1) -- (B1);
    \draw[thick, dashed] (T5) -- (B5);
    \draw[thick, red, fill=red!20] (T4) .. controls (3,.7) and (3.5,.7) .. (T5) -- cycle;
    \draw[thick, red, fill=red!20] (B4) .. controls (3,-.7) and (3.5,-.7) .. (B5) -- cycle;
    \draw[thick, blue, fill=blue!20] (T2) .. controls (.25, 1.0) and (.75, 1.0) .. (T3) -- cycle;
    \draw[thick, blue, fill=blue!20] (B2) .. controls (.25, -1.0) and (.75, -1.0) .. (B3) -- cycle;
    \node[above] at (.5, -1.58) {$s_L$};
    \node[below] at (.5, 1.58) {$s_L$};
    \node[above] at (3.25, -1.5) {$s_R$};
    \node[below] at (3.25, 1.5) {$s_R$};
    \node[above] at (1.75, -1.58) {\scriptsize $\beta/2 + iT$};
    \node[below] at (1.75, 1.58) {\scriptsize $\beta/2 - iT$};
    \node[above] at (-.75, -1.58) {$s$};
    \node[below] at (-.75, 1.58) {$s'$};
    \node at (slashl) {$=$};
    \node at (slashr) {$=$};
    \node[above] at (T2) {$V$};
    \node[above] at (T3) {$V$};
    \node[above] at (T4) {$\Oo$};
    \node [above right] at (T5) {$\Oo$};
    \node[below] at (B2) {$V$};
    \node[below] at (B3) {$V$};
    \node[below] at (B4) {$\Oo$};
    \node [below right] at (B5) {$\Oo$};
    \fill[red] (T1) circle (2pt);
    \fill[blue] (T2) circle (2pt);
    \fill[blue] (T3) circle (2pt);
    \fill[red] (T4) circle (2pt);
    \fill[red] (T5) circle (2pt);
    \fill[red] (B1) circle (2pt);
    \fill[blue] (B2) circle (2pt);
    \fill[blue] (B3) circle (2pt);
    \fill[red] (B4) circle (2pt);
    \fill[red] (B5) circle (2pt); 
\end{tikzpicture}+
    \begin{tikzpicture}[scale=.75, baseline={([yshift=-0.1cm]current bounding box.center)}]
    \coordinate (T1) at (-1.5, 1.5);
    \coordinate (T2) at (0, 1.5);
    \coordinate (T3) at (1, 1.5);
    \coordinate (T4) at (2.5, 1.5);
    \coordinate (T5) at (4, 1.5);
    \coordinate (B1) at (-1.5, -1.5);
    \coordinate (B2) at (0, -1.5);
    \coordinate (B3) at (1, -1.5);
    \coordinate (B4) at (2.5, -1.5);
    \coordinate (B5) at (4, -1.5);
    \fill[fill=red!20] (B1) -- (T1) -- (T5) -- (B5) -- cycle;
    \fill[green!20] (B4) .. controls (2.2, 0) and (-1.2, 0) .. (B1) --cycle;
     \fill[yellow!20] (T4) .. controls (2.2, 0) and (-1.2, 0) .. (T1) --cycle;
    \draw[thick] (T1) -- (T2) -- (T3) -- (T4) -- (T5);
    \draw[thick] (B1) -- (B2) -- (B3) -- (B4) -- (B5);
    \draw[thick, dashed] (T1) -- (B1);
    \draw[thick, dashed] (T5) -- (B5);
    \draw[thick, red] (T4) .. controls (2.2, 0) and (-1.2, 0) .. (T1);
    \draw[thick, red] (B4) .. controls (2.2, 0) and (-1.2, 0) ..(B1); 
    \draw[thick, blue, fill=blue!20] (T2) .. controls (.25, 1.0) and (.75, 1.0) .. (T3) -- cycle;
    \draw[thick, blue, fill=blue!20] (B2) .. controls (.25, -1.0) and (.75, -1.0) .. (B3) -- cycle;
    \node[above] at (.5, -1.58) {$s_L$};
    \node[below] at (.5, 1.58) {$s_L$};
    \node[above] at (3.25, -1.5) {$s_R$};
    \node[below] at (3.25, 1.5) {$s_R$};
    \node[above] at (1.75, -1.58) {\scriptsize $\beta/2 + iT$};
    \node[below] at (1.75, 1.58) {\scriptsize $\beta/2 - iT$};
    \node[above] at (-.75, -1.58) {$s$};
    \node[below] at (-.75, 1.58) {$s'$};
    \node at (slashl) {$=$};
    \node at (slashr) {$=$};
    \node[above] at (T2) {$V$};
    \node[above] at (T3) {$V$};
    \node[above] at (T4) {$\Oo$};
    \node [above right] at (T5) {$\Oo$};
    \node[below] at (B2) {$V$};
    \node[below] at (B3) {$V$};
    \node[below] at (B4) {$\Oo$};
    \node [below right] at (B5) {$\Oo$};
    \fill[red] (T1) circle (2pt);
    \fill[blue] (T2) circle (2pt);
    \fill[blue] (T3) circle (2pt);
    \fill[red] (T4) circle (2pt);
    \fill[red] (T5) circle (2pt);
    \fill[red] (B1) circle (2pt);
    \fill[blue] (B2) circle (2pt);
    \fill[blue] (B3) circle (2pt);
    \fill[red] (B4) circle (2pt);
    \fill[red] (B5) circle (2pt);
\end{tikzpicture}
+ \quad  \begin{tikzpicture}[scale=.75, baseline={([yshift=-0.1cm]current bounding box.center)}]
    \coordinate (T1) at (-1.5, 1.5);
    \coordinate (T2) at (0, 1.5);
    \coordinate (T3) at (1, 1.5);
    \coordinate (T4) at (2.5, 1.5);
    \coordinate (T5) at (4, 1.5);
    \coordinate (B1) at (-1.5, -1.5);
    \coordinate (B2) at (0, -1.5);
    \coordinate (B3) at (1, -1.5);
    \coordinate (B4) at (2.5, -1.5);
    \coordinate (B5) at (4, -1.5);
    \coordinate (VL) at (-1.5, -.5);
    \coordinate (VR) at (4, -.5);
    \coordinate (UL) at (-1.5,.5);
    \coordinate (UR) at (4, .5);
    \fill[fill=blue!20] (B1) -- (T1) -- (T5) -- (B5) -- cycle;
    \fill[fill=yellow!20]  (T1) -- (T2) .. controls (-.4, .9) and (-.9, .6)  .. (UL)--cycle;
    \fill[fill=green!20]  (B1) -- (B2) .. controls (-.4, -.9) and (-.9, -.6)  .. (VL)--cycle;
    \fill[fill=yellow!20]  (T5) -- (UR) .. controls (2.5, .5) and (1.2,.9)  .. (T3)--cycle;
    \fill[fill=green!20]  (B5) -- (VR) .. controls (2.5, -.5) and (1.2,-.9)  .. (B3)--cycle;
    \draw[thick] (T1) -- (T2) -- (T3) -- (T4) -- (T5);
    \draw[thick] (B1) -- (B2) -- (B3) -- (B4) -- (B5);
    \draw[thick, dashed] (T1) -- (B1);
    \draw[thick, dashed] (T5) -- (B5);
    \draw[thick, red, fill=red!20] (T4) .. controls (3,.7) and (3.5,.7) .. (T5) -- cycle;
    \draw[thick, red, fill=red!20] (B4) .. controls (3,-.7) and (3.5,-.7) .. (B5) -- cycle;
    \draw[thick, blue] (B2) .. controls (-.4, -.9) and (-.9,-.6)  .. (VL);
    \draw[thick, blue] (VR) .. controls (2.5, -.5) and (1.2,-.9)  .. (B3);
    \draw[thick, blue] (T2) .. controls (-.4, .9) and (-.9, .6)  .. (UL);
    \draw[thick, blue] (UR) .. controls (2.5, .5) and (1.2,.9)  .. (T3);
    \node[above] at (.5, -1.58) {$s_L$};
    \node[below] at (.5, 1.58) {$s_L$};
    \node[above] at (3.25, -1.5) {$s_R$};
    \node[below] at (3.25, 1.5) {$s_R$};
    \node[above] at (1.8, -1.58) {\scriptsize $\beta/2 + iT$};
    \node[below] at (1.8, 1.58) {\scriptsize $\beta/2 - iT$};
    \node[above] at (-.75, -1.58) {$s$};
    \node[below] at (-.75, 1.58) {$s'$};
    \node at (slashl) {$=$};
    \node at (slashr) {$=$};
    \node[above] at (T2) {$V$};
    \node[above] at (T3) {$V$};
    \node[above] at (T4) {$\Oo$};
    \node [above right] at (T5) {$\Oo$};
    \node[below] at (B2) {$V$};
    \node[below] at (B3) {$V$};
    \node[below] at (B4) {$\Oo$};
    \node [below right] at (B5) {$\Oo$};
    \fill[red] (T1) circle (2pt);
    \fill[blue] (T2) circle (2pt);
    \fill[blue] (T3) circle (2pt);
    \fill[red] (T4) circle (2pt);
    \fill[red] (T5) circle (2pt);
    \fill[red] (B1) circle (2pt);
    \fill[blue] (B2) circle (2pt);
    \fill[blue] (B3) circle (2pt);
    \fill[red] (B4) circle (2pt);
    \fill[red] (B5) circle (2pt);
\end{tikzpicture} 
\end{align}
These diagrams are a bit more subtle due to the issue that we need to mod-out by the action of the mapping-class-group (MCG) on the geometry. When matter geodesics traversed the wormhole, geometries related by MCG transformations were really different since the traversing geodesic could be used to distinguish the two geometries by measuring its winding number. Without such a geodesic, MCG transformations are pure gauge. The effect of doing this quotient by the MCG is just to insert a factor of the density-density correlator, $\overline{\rho(s) \rho(s')}_{\text{conn.}}$, into the calculations. This connected piece of the density-density correlator can be computed in JT gravity to be \cite{Saad:2019lba}\footnote{Notice that the connected correlator is between the full densities of states $\rho(s)=e^{S_0}\rho_0(s)$, and the connected part of the correlator is $O(1)$ \cite{Saad:2019lba,Iliesiu:2021ari}.}
\begin{align}\label{eqn:rrcorr}
    \overline{\rho(s) \rho(s')}_{\text{conn.}} = \frac{-1}{\pi^2}\frac{s^2 + s'^2}{(s^2-s'^2)^2} \quad .
\end{align}

Therefore, after inserting \eqref{eqn:contractionclass2} into the square of \eqref{eqn:overlapbwcasandstate}, the first term gives
\begin{align}\label{eqn:class2term1}
    e^{-S_0} \int ds ds' & e^{-\varepsilon (s_L^2 + s_R^2) - (\beta/2+iT)\frac{s^2}{2} - (\beta/2-iT)\frac{s'^2}{2}} \nonumber \\
    & \times \overline{\rho(s) \rho(s')}_{\text{conn.}}  P_n(s;s_L, s_R) P_n(s';s_L, s_R) \sqrt{\gamma_{\Delta_V}(s_L,s) \gamma_{\Delta_V}(s_L,s') \gamma_{\Delta_O}(s_R,s) \gamma_{\Delta_O}(s_R,s')}.
\end{align}
As before, it is convenient to go to $s_{\pm} = s \pm s'$ variables. The connected correlator of $\rho$'s in \eqref{eqn:rrcorr} has a double pole at $s_- = 0$ and so, at large $T$, this integral is dominated by the $s_-=0$ region. Again, however, the presence of the Boltzmann factors spoils the enhancement coming from the large $s_+$ region, and so at large $n$, this diagram goes like $1/n$. Furthermore, after insertion into the square of \eqref{eqn:overlapbwcasandstate}, the second diagram in \eqref{eqn:contractionclass2} gives
\begin{align}
    e^{-S_0}&\int ds ds' e^{-\varepsilon (s_L^2 + s_R^2) - (\beta/2+iT)\frac{s^2}{2} - (\beta/2-iT)\frac{s'^2}2} \nonumber \\
    & \times \overline{\rho(s_R) \rho(s_R)}_{\text{conn.}}  P_n(s;s_L, s_R) P_n(s';s_L, s_R) \sqrt{\gamma_{\Delta_V}(s_L,s) \gamma_{\Delta_V}(s_L,s') \gamma_{\Delta_O}(s_R,s) \gamma_{\Delta_O}(s_R,s')}.
\end{align}
Naively this looks divergent since we are evaluating $\left( \rho(s) \rho(s') \right)_{\text{conn.}}$ directly on its double pole, but we should remember that we defined our $\ket{n,s_L,s_R}$ states to involve a smearing over $s_L$ and $s_R$ as in \eqref{eqn:smearedcasimirstates} so that $\ket{n,\bar{s}_L, \bar{s}_R}$ is normalizable. The effect of this smearing over $s_L$ is just to replace $\overline{\rho(s) \rho(s')}_{\text{conn.}}$ with a factor of $1/\delta s$. As long as $\delta s$ is not exponentially small in $S_0$, again, this term will be suppressed by both $e^{-2S_0}$ and $1/n$ relative to the leading perturbative answer in \eqref{eqn:casimirdistributionpert} once we account for the normalization of the state $\ket{V\OR}$. Finally, this same analysis for the second diagram in \eqref{eqn:contractionclass2} applies directly to the third diagram in \eqref{eqn:contractionclass2} but with $\overline{\rho(s) \rho(s')}_{\text{conn.}}$ evaluated at $s = s' = s_R$.

In summary, when using the inner product in the presence of an observer, all genus-one corrections---such as \eqref{eqn:caswh}---to $|\bra{n,s_L,s_R}\ket{V(0)\OR(-T)}|^2$ are suppressed at large $n$ by $1/n$ like the disk answer, and they are further suppressed by  $e^{-S_0}$ relative to the disk answer. One could also wonder about higher-genus contributions to these observables. For example, in the various diagrams of \eqref{eqn:contractionclass1} and \eqref{eqn:contractionclass2}, we could consider geometries with handles connecting the various regions of different or the same color. The essential point is that the large $s$ behavior in all of these diagrams will be damped by the same Boltzmann factors, so the large $n$ behavior will always be $1/n$. Therefore, the higher-genus contributions are even more suppressed than the geometries studied above.

The upshot of our analysis is then that the probability distribution in the presence of the observer does not get corrected at times of order $S_0$. Note, however, that there are contributions to the probability distribution which are exponentially suppressed in $S_0$ and suppressed in $1/n$ but have non-trivial time dependence. For example, the first contraction in \eqref{eqn:contractionclass2} has non-trivial time-dependence as exhibited in \eqref{eqn:class2term1}. At large enough time, this term gets a contribution that grows linearly in $T$. This is just the same linear growth in time exhibited by the spectral form factor. In other words, we expect that the Casimir distribution, defined using the inner product in the presence of an observer, agrees with the perturbative answer in \eqref{eqn:casimirdistributionpert} until times of order $e^{S_0}$. Therefore, at exponential times, non-perturbative corrections to the collision energy experienced by an observer become important, and perturbative effective field theory breaks down. We remark that this result is substantially different from the puzzling results of \cite{IliLev24}, in which effective field theory breaks down at linear times, namely around the Page time.


\section{Discussion}
\label{sec:discussion}

In this paper, we have proposed a modification of the rules of the non-perturbative gravitational path integral to take into account the presence of a gravitating observer. In our proposal, we require the observer to always be present on any spatial slice when computing overlaps between quantum gravity states and their moments. Specifically, a worldline for the observer must always connect a bra and the corresponding ket. These rules should be applied when asking questions about the experience of the gravitating observer.

Using our new proposal in the context of two-dimensional JT gravity with a negative cosmological constant, we have computed the dimension of the Hilbert space $\Hr$ relevant for the description of an observer's experience in a closed universe and in a two-sided black hole, and found it is much larger than the corresponding global Hilbert spaces $\Hnonp$ computed with the old rules for the gravitational path integral. We then studied the properties of this Hilbert space, including its factorisation, the positivity of the inner product defined by the path integral with our new rules, and the presence of null states. 

We also studied various relational observables probing the experience of the observer. In the closed universe setting, we computed correlation functions between points on the observer's worldline. In the two-sided black hole setup, we studied the length of the Einstein-Rosen bridge and the center-of-mass collision energy between an observer and a shockwave behind the horizon. The behavior of these observables (except the length of the Einstein-Rosen bridge) is substantially modified in our new framework with respect to the global picture for the gravitational path integral in the absence of an observer: non-perturbative corrections 
are typically smaller from the point of view of the observer than in the global perspective. Nevertheless, such non-perturbative effects can become important at leading order for the observer, giving in-principle testable corrections to their experience, when some scale in the setup scales exponentially with $S_0$. Such corrections can be critical even when the proper time that the observer travels in the spacetime is $O(1)$. For example, in the case of an observer falling into a two-sided black hole, the observer travels an $O(1)$ proper time until they cross the horizon. Once they cross the horizon of a an old black hole, they could see (with tragic consequences) that both the length of the Einstein-Rosen bridge and the center-of-mass collision energy with a shockwave sent in from the other side receive large non-perturbative corrections when the age of the black hole scales as $e^{S_0}$. It would be interesting to see if we also encounter large non-perturbative corrections to the two-point function measured at $O(1)$ proper times along the observer's worldline in the closed universe. In this case, contributions from arbitrarily long winding geodesics become important close to the singularity leading to an accumulation of divergences; whether this accumulation of divergences can be resolved by non-perturbative corrections remains to be seen. 

The results of this paper open up several new avenues of research. A first important task is to generalize our results beyond the two-dimensional toy-model discussed here to higher-dimensional theories of gravity. We remark that, although they also immediately hold for pure JT gravity, all our results were obtained in JT gravity with matter. In this context, most of the wormhole geometries considered in this paper, e.g., in all our resolvent calculations, are saddles of the gravitational path integral \cite{Stanford:2020wkf,Hsin:2020mfa}. Similar calculations that do not rely on off-shell contributions to the gravitational path integral can also be carried out in higher dimensions, see, e.g., \cite{Balasubramanian:2022gmo,Balasubramanian:2022lnw,Antonini:2023hdh,deBoer:2023vsm,Climent:2024trz}. We, therefore, expect the extension of our results to these setups to be rather straightforward. 

In the rest of this section, we will explore additional future directions. Specifically, in Section \ref{sec:relational}, we discuss the relationship between our proposal and other approaches to relational dynamics in quantum gravity and draw an analogy between the quantization of gravity in the presence of an observer and Chern-Simons theory in the presence of sources. In Section \ref{sec:holography}, we comment on the consequences of our results for the holographic description of the experience of a gravitating observer. Finally, in Section \ref{sec:desitter}, we explain how to extend our results to describe the experience of an observer in de Sitter space.

\subsection{Relational dynamics and the gravitational path integral}
\label{sec:relational}

An interesting future direction is to study the relationship between our proposal and other formalisms for describing relational dynamics in gravitating systems. As it was recently discussed in detail \cite{Held:2024rmg}, physical states in quantum gravity can be represented in two different ways. The first one is in terms of co-invariants, i.e., equivalence classes of states under the action of the gravitational constraint. In this approach, relational dynamics emerges naturally as gauge evolution within a given equivalence class \cite{Held:2024rmg}. This is the approach used in \cite{Chandrasekaran:2022cip}. The second one is to define physical states to be annihilated by the gravitational constraint operators \cite{wheeler,dewitt,Isham:1992ms}. These states are sometimes called ``invariants''. In this case, relational dynamics can be studied by selecting a suitable subsystem to play the role of a clock (i.e., an observer) and describing physics with respect to this clock using the Page-Wootters formalism. \cite{pagewooters,Hoehn:2019fsy,DeVuyst:2024pop}\footnote{We thank Elliott Gesteau and Ronak Soni for emphasizing this point to us.}. At least at the perturbative level, the two approaches describe the same physics

We expect that our gravitational path integral approach to be compatible with these relational approaches at the perturbative level, and, in particular, it implements the co-invariant approach, which arises naturally when defining states in terms of boundary conditions on geodesic boundaries \cite{Held:2024rmg}. On the other hand, our prescription is also well-defined non-perturbatively and should therefore represent a non-perturbative extension of these ideas. We leave a more precise characterization of the role of our proposal in the description of relational dynamics in quantum gravity to future work. 

Before moving on to discuss implications for holography, we would like to point out an analogy between our proposal for the gravitational path integral from the point of view of an observer and Chern-Simons theory in the presence of a source.\footnote{A similar comparison between Chern-Simons and the quantum gravity of closed universes was made in \cite{Sha21,Sha23}.}
Exact quantization in the presence of a detector is an old and well-studied problem in quantum field theory. In the course of this work, we found it useful to draw analogies to models where this quantization can be accomplished exactly. One such case is Chern-Simons (CS) theory in the presence of sources. CS is a topological 3D gauge theory of a Lie algebra-valued connection \cite{Witten:1988hf,Frohlich:1989gr}.\footnote{We provide a more complete review of CS in Appendix \ref{app:chernsimons} and summarize some of the motivating similarities here.}

Pure JT gravity and pure CS are both theories with constraints. In JT, the most commonly remarked upon constraint is $H_0$, defined in equation \eqref{eq:constraint}, which involves the left and right boundary Hamiltonians in a two-sided black hole in the absence of matter. In CS, the constraint dictates that the field strength vanishes, or, equivalently, that the connection is flat. We have two options when attempting to implement these constraints \cite{Witten:1988hf}. We could ``quantize then constrain" (also known as Dirac quantization \cite{Isham:1992ms}) by quantizing the phase space of field configurations before implementing the constraint as an operator whose kernel determines physical states. Alternatively, we could ``constrain then quantize" by restricting to the subspace of configurations satisfying the constraint before quantization. When we ``constrain then quantize", the constraint is a manifestly vanishing operator on physical states, and it is natural to refer to this as the Hilbert space of invariants, as described above. The states in the kernel of the constraint in the ``quantize then constrain" approach are, by definition, invariants. However, this approach is most rigorously defined with respect to co-invariants to account for cases where the kernel of the constraint contains non-normalizable states.\footnote{
    It is not generally known whether these procedures result in identical Hilbert spaces \cite{Isham:1992ms,guillemin1982geometric}, but these toy models are simple enough that we will neglect this concern. Specifically, in the case of JT, one can quantize, constrain, and define invariants at the cost of division by the infinite volume of the gauge group \cite{Penington:2023dql}.
}

The addition of matter modifies the constraint. That modification is determined by how the underlying symmetry acts on the coupled matter.  This is a gauge symmetry for a generic quantum field theory, and the background isometry in the specific case of gravity. We saw part of how this modification manifests gravitationally in equation \eqref{eq:matter_constraint}. In CS, the modification is the addition of a source current density. Regardless of our quantization procedure, the Hilbert space in the presence of matter will be different because the definition of a state has changed. However, simply adding matter is not enough to make observations. Observables in gravity need to be dressed to a clock. Similarly, observables in CS are Wilson lines, holonomies of the gauge connection traced in a particular representation. Just as gravity need not provide us a clock, \textit{a priori} CS does not dictate a representation \cite{Murayama:1989we}. A ``detector" is a special type of matter field that is localized to a worldline $\Gamma$ and tells us how to calculate observables. In gravity, that detector is the observer, and in CS, it is the source charge.

The similarities between these formalisms are abundant. CS assigns the source a Hilbert space $\HO$ that lives on their worldline \cite{Alekseev:1994nzg,Elitzur:1989nr}. The field configuration space is manifestly the direct product $\HO\times\Hr$ where $\Hr$ is the space of allowed gauge connections. Observables are calculated via a path integral with $\Gamma$ acting as an additional boundary. Our new rules are justified by this order of operations. Choosing a location for the detector before path integration dictates that this degree of freedom is not subject to the usual dynamics of the theory. A source so-defined is not influenced by external fields, just as the observer worldline must connect bra and ket independent of topology change. As we have seen, these rules mean that the Hilbert space can change in the presence of a detector. A once trivial theory unveils hidden dynamics in the presence of an observer. We give additional details of this analogy in Appendix \ref{app:chernsimons}.

\subsection{Lessons for holography}
\label{sec:holography}

One particularly interesting future direction is understanding the holographic dual theory that arises from our proposal and describes the experience of the observer. In AdS/CFT---our best-understood example of holography---the dual theory lives on the asymptotic AdS boundary. For instance, in the two-sided black hole case, it is given by two entangled CFTs living on the two asymptotic boundaries \cite{Maldacena:2001kr}. Despite its many successes, this class of setups presents some limitations. Most notably, it is unclear how to holographically describe the experience of a bulk observer in generic settings \cite{Antonini:2024mci}. This is particularly true when the dynamics experienced by a bulk observer are not directly related to the dynamics of the boundary CFT. 
Paradigmatic examples of this fact are bulk observers in the interior of a black hole at late times or moving within a single Wheeler-DeWitt patch of a pure AdS spacetime. In both cases, the experience of the observer is not uniquely determined from the holographic dual theory.  
In the black hole case, this is a consequence of the non-isometricity of the bulk-to-boundary map \cite{Akers:2022qdl,Antonini:2024yif}. The most extreme example of these ambiguities is provided by an isolated closed universe, in which there is no boundary where to define a dual theory at all.

\begin{figure}[t!]
    \centering
    \includegraphics[width=0.85\linewidth]{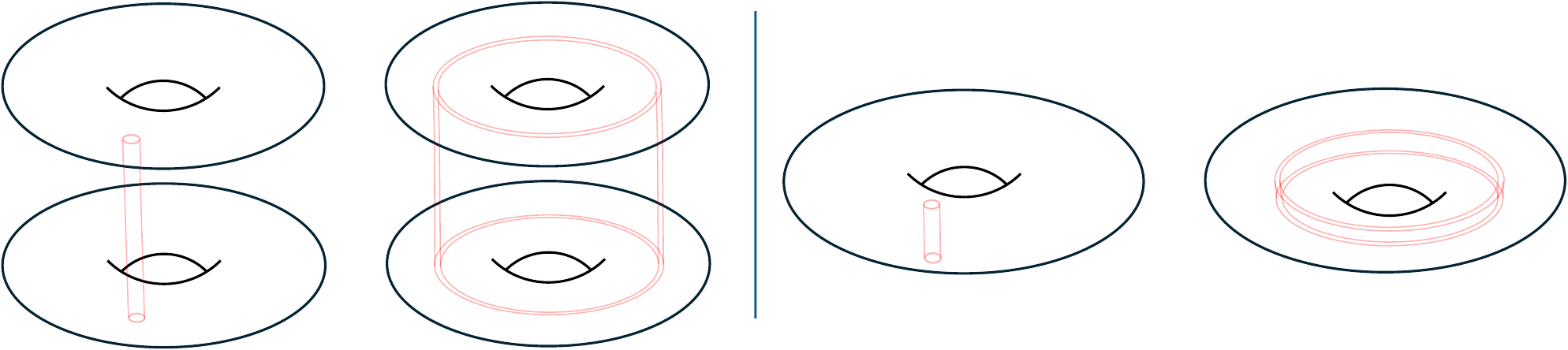}
    
    \caption{Examples of two-dimensional boundary configurations relevant for studying dressed observables in a closed universe (left figures) or in a two-sided black hole (right figures). The undeformed CFT is placed on all tori (asymptotic boundaries); a future task is to find the precise deformation of the CFT to be placed on the red cylinders (boundaries of the observer worldvolumes). The first figure in each set assumes that the worldvolume of each observer is bounded by the thin red cylinder, while the second figure in each set assumes that the observer is circularly symmetric and their worldvolume is bounded by the two red cylinders. The latter case is related through dimensional reduction to the analysis performed in this paper.}
    \label{fig:lessons-for-holography}
\end{figure}

Our proposed modification of the gravitational path integral in the presence of an observer suggests a path to address this issue. As we have discussed throughout this paper, the worldline of the observer is treated as an additional timelike boundary in the gravitational path integral and thus provides a natural location for the holographic dual theory. In less idealized settings, the worldline should be replaced by a codimension-0 worldvolume for the observer, which defines a codimension-1 timelike boundary in the spacetime where the dual theory would be defined, see Figure \ref{fig:lessons-for-holography}. The results of Section \ref{sec:hilbert} about the dimension and factorisation of the quantum gravity relational Hilbert space $\Hr$ (see also Appendix \ref{app:factorisation}) are compatible with this intuition. 

In the closed universe case, $\dim\left(\Hnonp\right)=1$ in the absence of an observer, which can be interpreted as the absence of a dual theory (whose Hilbert space must be isomorphic to the non-perturbative quantum gravity Hilbert space). On the other hand, $\dim\left(\Hr\right)=d^2$ when we fix the worldline (or, more generally, the worldvolume) of the observer. The fact that $\Hr$ in the presence of matter factorises into two Hilbert spaces of dimension $d$ to the left and right of the observer (see Appendix \ref{app:closedfactorisation} for details) signals that the dual description could consist of a pair of non-interacting holographic theories living on the left and right boundaries of the observer's worldvolume (which in the two-dimensional case under examination would simply be a thin strip). Similarly, in the two-sided black hole case in the presence of matter, $\dim\left(\Hnonp\right)=d^2$ without an observer and $\dim\left(\Hr\right)=d^4$ when fixing the worldline of the observer. $\Hr$ now factorizes into a tensor product of four Hilbert spaces (see Appendix \ref{app:BHfactorisation}). This is compatible with having two additional dual theories on the boundaries of the observer's worldvolume besides those living on the two asymptotic AdS boundaries. A generic two-sided black hole state, in this case, would be dual to an entangled state of four CFTs with a specific four-partite entanglement structure determined by the operator insertions associated with the observer and matter.

What could this holographic theory look like? One important observation is that the worldline of the observer, or, more accurately, the boundary of their worldvolume, should not be regarded as an asymptotic boundary; rather, it is a timelike boundary embedded into the bulk spacetime. Therefore, it seems plausible that the holographic theory living on such a boundary is not an ordinary CFT but some deformed theory, perhaps similar to $T\bar{T}$-deformed CFTs \cite{Zamolodchikov:2004ce,Smirnov:2016lqw,Cavaglia:2016oda,Gorbenko:2018oov,Araujo-Regado:2022gvw} relevant for the holographic description of spacetimes in the presence of timelike boundaries. It would be interesting to further explore this possibility.

As we have discussed in this paper, the Hilbert space $\Hr$ computed by the modified gravitational path integral is the Hilbert space relevant for the description of relational dynamics with respect to the observer. In other words, it is the Hilbert space needed to capture the observer's experience in the bulk. The Hilbert space of the holographic theory living on the boundary of the observer's worldvolume (plus those on any asymptotic boundary) would clearly be isomorphic to $\Hr$. This implies that, unlike ordinary AdS/CFT, the experience of the bulk observer would certainly be encoded in such a dual theory. This would, therefore, be a consistent step forward in the description of local bulk physics in holography. Notice that this holographic approach could also help us extend our results beyond the toy model of this paper to higher dimensional settings in which the gravitational path integral is not as well understood.

\subsection{Observers in de-Sitter space}
\label{sec:desitter}

Our results for the gravitational path integral in the presence of an observer can be extended to de Sitter JT gravity. Unlike the AdS-JT case discussed in this paper, there are no asymptotic boundaries in Euclidean signature in dS-JT gravity. Nonetheless, a generic state\footnote{In this discussion, we will consider the pure JT gravity case for simplicity, but the inclusion of matter does not alter our conclusions.} $|a,K,\Delta_O^{(i)}\rangle$ can be specified in terms of the length $a$ and extrinsic curvature $K$ of an initial slice of Euclidean dS${}_2$ where we insert the operator $\mathcal{O}_O^{(i)}$ of scaling dimension $\Delta_O^{(i)}$ associated with the observer. The resulting state---which is prepared on the reflection-symmetric slice of the Euclidean geometry computing the norm of the state $\langle a,K,\Delta_O^{(i)}|a,K,\Delta_O^{(i)}\rangle$---can then be evolved in Lorentzian time, yielding two-dimensional Lorentzian de Sitter, see Figure \ref{fig:desitterstate}. We remark that this set of states $|a,K,\Delta_O^{(i)}\rangle$ includes (but is not limited to) the Hartle-Hawking no-boundary state \cite{Hartle:1983ai,Ivo:2024ill}.\footnote{The Hartle-Hawking no-boundary state is obtained through the fact that on slices of constant $K$, $K$ determines $a$ on $S^2$ with 
\be 
a=\frac{2\pi}{\sqrt{1+K^2}}\,,
\ee 
for spheres with $R=2$. States that do not satisfy this relation between $K$ and $a$ are not defined on a smooth $S^2$ and are therefore different from the no-boundary state. An analogous property holds in AdS$_2$ where on surfaces that have the topology of a disk, $K$ also determines the proper length of the curve \cite{Goel:2020yxl}.  } In fact, for generic boundary conditions on a given time slice, the Euclidean time evolution to the past of that slice does not yield a smooth no-boundary geometry but rather a punctured sphere.

\begin{figure}[t!]
    \centering
    \includegraphics[width=0.25\linewidth]{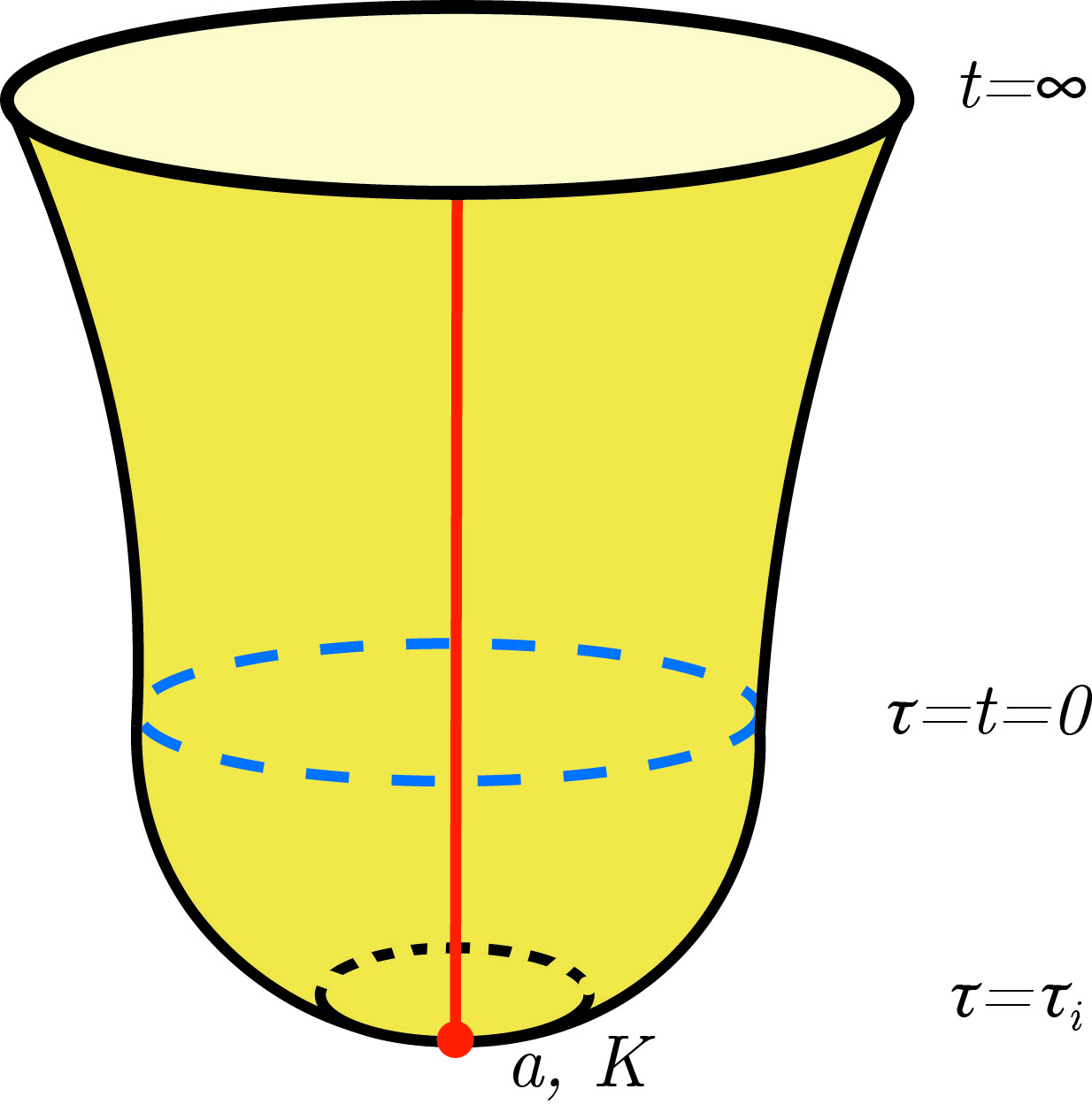}
    
    \caption{The Euclidean preparation (bottom half) and subsequent Lorentzian evolution (top half) of a de Sitter state $|a,K,\Delta_O^{(i)}\rangle$ specified by the length $a$ and extrinsic curvature $K$ of an initial slice at Euclidean time $\tau=\tau_i$ where the observer is inserted. The Lorentzian part of the evolution from the Euclidean sphere's equator at $\tau=0$ (depicted in blue) yields Lorentzian dS${}_2$, with a spacelike asymptotic boundary at Lorentzian time $t\to\infty$. Our set of states $|a,K,\Delta_O^{(i)}\rangle$ includes (but is not limited to) the smooth Hartle-Hawking no-boundary state.}
    \label{fig:desitterstate}
\end{figure}

The inner product between two states can be computed using our rules for the gravitational path integral in the presence of an observer (see Figure \ref{fig:desitter} (a)). Similarly, we can compute higher moments of an overlap. With our rules, the leading order correction to the product of two overlaps is given by the connected geometry depicted on the right of Figure \ref{fig:desitter} (b), which is a sphere with four holes. This geometry computes the variance of the overlap, which is then 
\be 
\overline{|\langle a',K',\Delta_O^{(j)}|a,K,\Delta_O^{(i)}\rangle|^2} - \left|\overline{\langle a',K',\Delta_O^{(j)}|a,K,\Delta_O^{(i)}\rangle}\right|^2=\delta_{ij}O(e^{-2S_0}).
\ee

\begin{figure}[t!]
    \centering
    \includegraphics[width=\linewidth]{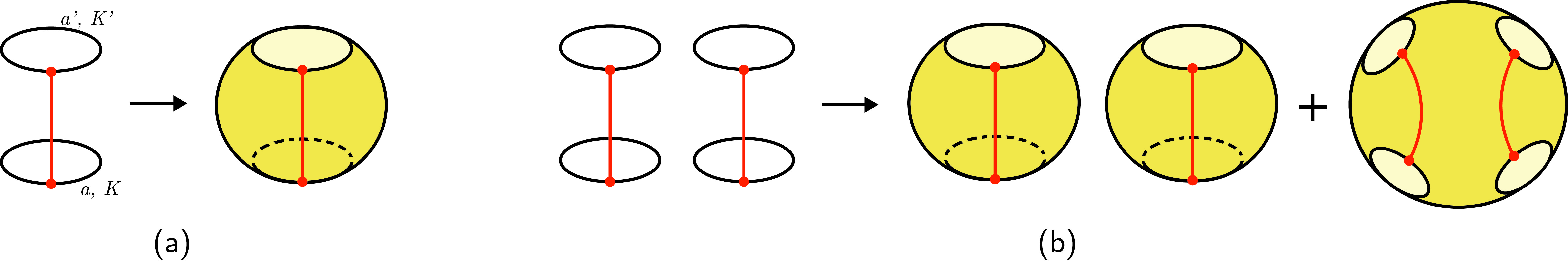}
    \caption{The gravitational path integral in the presence of an observer for de Sitter spacetime. A state is defined in terms of the length $a$ and extrinsic curvature $K$ of an initial slice where we insert the observer with scaling dimension $\Delta_O$. The worldline of the observer (depicted in red) must connect a bra and the corresponding ket when computing moments of an overlap. (a) The overlap $\overline{\langle a',K'|a,K\rangle}$ between two states. (b) The square of an overlap $\overline{|\langle a',K'|a,K\rangle|^2}$ in the presence of an observer. The variance of the overlap is given by the connected geometry on the right, namely a sphere with four holes, and is therefore $O(e^{-2S_0})$. This suggests a Hilbert space dimension $\dim\left(\Hr\right)=O(e^{2S_0})$. We plan to explicitly compute this dimension in future work.  }
    \label{fig:desitter}
\end{figure}
\noindent Following the discussion at the beginning of Section \ref{sec:non-pert_h}, this suggests a dimension of the quantum gravity relational Hilbert space \be 
\dim\left(\Hr\right)=O(e^{2S_0})
\ee
for global de Sitter. Similar to the closed universe in AdS-JT discussed at the beginning of Section \ref{sec:setup}, this result should be contrasted with the $O(1)$ variance and one-dimensional Hilbert space we would obtain with the old rules by allowing the observer's worldline to connect between all bras and kets in the gravitational path integral.

It would be interesting to explore these results in more detail, determine the precise dimension of the Hilbert space $\Hr$ in dS-JT gravity, and study relational observables in de Sitter using our proposal. In fact, a prescription similar to our proposal for how to treat an observer was discussed in the context of dS-JT gravity in \cite{LevSha22}. There, it was shown that treating the observer as a non-perturbatively well-defined boundary condition in the path integral resolved apparent contradictions between the GPI and the no-cloning theorem. The generalization to higher dimensional setups is also of great interest and could set the stage for a non-perturbative generalization of the results obtained in \cite{Chandrasekaran:2022cip}. A holographic realization of this setup similar to that suggested in Section \ref{sec:holography} could represent a new framework for de Sitter holography. 

Finally, our results do not depend on specific global properties of the spacetime; instead, they are centered on the experience of a gravitating observer. Therefore, they could be generalized to describe quantum gravity in generic settings, including realistic cosmological spacetimes. We leave the investigation of these intriguing new avenues to future work.

\section*{Acknowledgements}

We would like to thank Jan Boruch, Raphael Bousso, Elliott Gesteau, Daniel Harlow, Jorrit Kruthoff, Guanda Lin, Geoff Penington, Pratik Rath, Arvin Shahbazi Moghaddam, Edgard Shaghoulian, Ronak Soni, Misha Usatyuk and Ying Zhao for useful discussions. AA is supported by the NSF Graduate Research Fellowship Program under grant no. DGE 2146752. S.A. is supported by the U.S.
Department of Energy through DE-FOA-0002563. LVI is supported by the DOE Early Career Award DE-SC0025522 and by the DOE QuantISED Award DE-SC0019380. AL is supported by the Heising-Simons foundation under grant no. 2023-4430 and the Packard Foundation.

\appendix

\section{Resolvents and replicas}\label{app:res_replica}

In this appendix we derive the analytic properties of the resolvent used in the main text. Along the way, we also give a slightly different derivation of the Hilbert space dimension starting from the Schwinger-Dyson equation for the resolvent. We start by returning to the toy model which gave equation \eqref{eq:dimH_estimate} and considering the maximally mixed state on $\cH$\footnote{
We temporarily make use of boldface for operators to distinguish them from their traces.
}
\begin{equation}
    \bm{\rho}\equiv\sum_{i=1}^K \ket{v_i}\bra{v_i}.
\end{equation}
We suspect that $\rho$ is not full rank due to wormhole corrections and seek to calculate
\begin{equation}
    \overline{\text{dim}(\cH)}=\overline{\text{rank}(\bm{\rho})}
\end{equation}
by using replicas
\begin{equation}\label{eq:replica_trick}
    \overline{\text{dim}(\cH)}=\lim_{n\to 0}\overline{\Tr_{\cH}{\bm{\rho}}^n}.
\end{equation}
The trace is computed using the non-perturbative inner product and we assume $\overline{\Tr_{\cH}}\approx\Tr_{\Hnonp}$.\footnote{
The equality is approximate due to coarse-graining. Taking $\Tr_{\Hnonp}$ requires specifying a UV theory. Instead, we calculate $\overline{\Tr_{\cH}}$ which calculates an average over an ensemble of theories \cite{Marolf:2020xie,Saad:2019lba,Marolf:2024jze}. 
}

We emphasize some technical details before proceeding to the calculation. Null states are introduced into $\cH$ by taking $S_0$ finite (c.f. Section \ref{sec:nullstates}). We see that the replica limit does not commute with the choice to send $S_0\to\infty$ by considering 
\begin{equation}\label{eq:zero_to_zero}
    0=\lim_{n\to 0^+}\overline{\braket{\lambda}^n}\neq\overline{\lim_{n\to 0^+}\braket{\lambda}^n}=1
\end{equation}
for a null state $\ket{\lambda}$. The first equality is essential to counting the dimension of the non-perturbative Hilbert space. Otherwise, we would always find a space of dimension $K$. With this operational understanding, equation \ref{eq:replica_trick} counts the number of non-zero eigenvalues of $\bm{\rho}$. 

To perform this calculation, we will need to calculate arbitrary powers of the Gram matrix
\begin{equation}
    M_{ij}\equiv\braket{v_i}{v_j}
\end{equation}
which is just the metric on $\cH$. The rank of the Gram matrix is the same as that of $\bm{\rho}$. 

The negative powers of the Gram matrix are not well defined since the matrix is singular. We introduce the pseudo-inverse to remedy this. For an operator $\bm{A}$ acting on $\cH$ the Moore-Penrose pseudo-inverse $\bm{A}^{-1}$ is defined such that
\begin{equation}
    \bm{A}\bm{A}^{-1}\bm{A}=\bm{A}
\end{equation}
and $(\bm{A}^{-1})^{-1}=\bm{A}$. Although unnecessary for our purposes, it can be shown that $\bm{A}^{-1}$ is unique in the space of observables. Furthermore, we have
\begin{claim}\label{thm:inv_rank}
    $\Tr_{\cH}{\bm{A}^{-1}\bm{A}}=\Tr_{\cH}{\bm{A}\bm{A}^{-1}}=\text{rank}(\bm{A})
    $
\end{claim}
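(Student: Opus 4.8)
The plan is to recognize $\bm{A}\bm{A}^{-1}$ and $\bm{A}^{-1}\bm{A}$ as idempotent operators (projectors) whose rank equals $\text{rank}(\bm{A})$, and then to invoke the fact that on the finite-dimensional space $\cH$ the trace of an idempotent equals its rank. Everything will follow from the single relation $\bm{A}\bm{A}^{-1}\bm{A}=\bm{A}$; the Hermiticity clauses of the full Moore--Penrose definition will not be needed for the trace, since idempotency alone forces the spectrum into $\{0,1\}$.

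First I would establish idempotency. Writing $P\equiv \bm{A}\bm{A}^{-1}$ and $Q\equiv \bm{A}^{-1}\bm{A}$, one computes $P^2=\bm{A}\bm{A}^{-1}\bm{A}\bm{A}^{-1}=(\bm{A}\bm{A}^{-1}\bm{A})\bm{A}^{-1}=\bm{A}\bm{A}^{-1}=P$, and identically $Q^2=\bm{A}^{-1}(\bm{A}\bm{A}^{-1}\bm{A})=\bm{A}^{-1}\bm{A}=Q$, each step using only $\bm{A}\bm{A}^{-1}\bm{A}=\bm{A}$. Hence $P$ and $Q$ are idempotent, so their minimal polynomials divide $x(x-1)$, they are diagonalizable, and all eigenvalues lie in $\{0,1\}$. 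For any such operator on a finite-dimensional space, $\Tr_{\cH}P$ counts the eigenvalues equal to $1$, i.e. $\Tr_{\cH}P=\dim\big(\text{range}(P)\big)=\text{rank}(P)$, and likewise for $Q$.

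Next I would identify the ranks. Using the standard rank inequality for products $\text{rank}(XY)\le\min\big(\text{rank}(X),\text{rank}(Y)\big)$ together with $\bm{A}=\bm{A}\,(\bm{A}^{-1}\bm{A})=(\bm{A}\bm{A}^{-1})\,\bm{A}$, I get $\text{rank}(\bm{A})=\text{rank}(\bm{A}\,Q)\le\text{rank}(Q)\le\text{rank}(\bm{A})$ and $\text{rank}(\bm{A})=\text{rank}(P\,\bm{A})\le\text{rank}(P)\le\text{rank}(\bm{A})$, so that $\text{rank}(P)=\text{rank}(Q)=\text{rank}(\bm{A})$. Combining this with the trace--rank identity of the previous step yields $\Tr_{\cH}\bm{A}\bm{A}^{-1}=\Tr_{\cH}\bm{A}^{-1}\bm{A}=\text{rank}(\bm{A})$, which is the claim.

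The only real subtlety---and the step I expect to require the most care---is the trace--rank identity for idempotents, which relies on $\cH$ being finite-dimensional so that traces are well defined; this is precisely the regime of interest, where the non-perturbative inner product renders $\cH\subset\Hnonp$ finite-dimensional. Equivalently, one may include the Hermiticity clause $(\bm{A}\bm{A}^{-1})^\dagger=\bm{A}\bm{A}^{-1}$ of the Moore--Penrose conditions, so that $P$ is an orthogonal projector and $\Tr_{\cH}P$ is manifestly the dimension of the subspace onto which $P$ projects.
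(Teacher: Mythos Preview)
Your proof is correct and follows essentially the same strategy as the paper: establish that $\bm{A}^{-1}\bm{A}$ and $\bm{A}\bm{A}^{-1}$ are idempotent with rank equal to $\text{rank}(\bm{A})$, and then use that the trace of an idempotent equals its rank. The paper reaches the trace identity via a rank factorization $\bm{I}=\bm{L}\bm{R}$ and cyclicity of the trace, while you use the spectral characterization (eigenvalues in $\{0,1\}$) together with a sandwich rank inequality; these are minor technical variations on the same argument.
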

\begin{proof}
    Let $\bm{I}=\bm{A}^{-1}\bm{A}$. By construction, $\bm{A}$ and $\bm{A}^{-1}$ share a kernel so $\bm{I}$ is of the same rank as $\bm{A}$. Moreover, this operator is a projection
    \begin{equation}
        \bm{I}^2=\bm{A}^{-1}\bm{A}\bm{A}^{-1}\bm{A}=\bm{A}^{-1}\bm{A}=\bm{I}
    \end{equation}
    so it admits a rank factorization into some $\bm{I}=\bm{L}\bm{R}$ where $\bm{L}$ is left invertible and $\bm{R}$ is right invertible. Because $\bm{I}$ is a projection, $\bm{RL}$ must be the identity matrix of dimension $\text{rank}(\bm{I})$. By cyclicity of the trace
    \begin{equation}
        \Tr_{\cH}{\bm{I}}=\text{rank}(\bm{I}).
    \end{equation}
    The argument follows similarly for $\bm{I}=\bm{A}\bm{A}^{-1}$.
\end{proof}
\noindent As a corollary we have that $\bm{M}^{-1}\bm{M}\neq\bm{\rho}$ since
\begin{equation}
    \overline{\text{rank}(\bm{M})}=\overline{\Tr_{\cH}{\bm{M}^{-1}\bm{M}}}\neq \overline{\Tr_{\cH}{\bm{\rho}}}=K
\end{equation}
where the third equality holds assuming normalized $\{\ket{v_i}\}$. This clarifies that $\bm{\rho}$ is never a resolution of the identity if we endow $\cH$ with the non-perturbative inner product.

The correct limit of the Gram matrix is
\begin{equation}\label{eq:dimH_from_m}
    \overline{\text{dim}(\cH)}=\lim_{n\to 0}\overline{\Tr_{\cH}{\bm{\rho}^n}}=\lim_{n\to -1}\overline{\Tr_{\cH}{\bm{M}^{n}\bm{M}}}.
\end{equation}
We will compute this limit using the resolvent of the Gram matrix
\be\begin{aligned}\label{eq:rb}
    R_{ij}(\lambda) &\equiv \left(\frac{1}{\lambda\mathbbm{I}-\bm{M}}\right)_{ij}, \\
    &= \frac{1}{\lambda}\left(\delta_{ij}+\frac{1}{\lambda}(\bm{M})_{ij}+\frac{1}{\lambda^2}(\bm{M}^2)_{ij}+\cdots\right)
\end{aligned}\ee
The power series is taken literally when $|\lambda|\gg||\bm{M}||$ and the analytic continuation is assumed outside of the radius of convergence. 

Given a complex function $f(\lambda)$ that is analytic in the interior of some contour, we can define 
\begin{equation}\label{eq:hol_calc}
    f(\bm{A})=\oint \frac{d\lambda}{2\pi i}\ \frac{f(\lambda)}{\lambda\mathbbm{I}-\bm{A}}
\end{equation}
to be its generalization to square matrices. For holomorphic $f$, this definition is independent of the choice of contour and matches the standard matrix function. The pseudo-inverse is not analytic on general operators so we cannot use equation $\eqref{eq:hol_calc}$ to calculate it. Instead, we consider functions of the form
\begin{equation}
    f(\bm{A})\bm{M}=\oint\frac{d\lambda}{2\pi i}\ \frac{f(\lambda)\bm{M}}{\lambda\mathbbm{I}-\bm{A}}
\end{equation}
which are analytic on operators $\bm{A}$ with $\ker(\bm{A})=\ker(\bm{M})$. The contour is irrelevant for holomorphic $f(\lambda)$. Conversely, we can extend to non-holomorphic $f$ by a choice of contour. The contour which correctly assigns equation \eqref{eq:zero_to_zero} for $f(\lambda)=\lim_{n\to -1}\lambda^{n}\cdot\lambda$ is $C=C_\infty\sqcup C_0$ (c.f. Figure \ref{fig:contour}). Equation \eqref{eq:dimH_from_m} is of this form so we aim to calculate
\begin{equation}\label{eq:dimH_from_rm}
    \overline{\text{dim}(\cH)} = \overline{\Tr\left[\oint_{C}\frac{d\lambda}{2\pi i\lambda}\ R_{ij}(\lambda)M_{ji}\right]}=\overline{\Tr\left[\oint_{C}\frac{d\lambda}{2\pi i}\ R_{ij}(\lambda)\right]}
\end{equation}
where repeated indices are summed over. The second equality follows since $f(\lambda)$ is holomorphic in the interior of $C$. 

The integrand can be rewritten as
\begin{align}\label{eq:contour_integrand}
    (\bm{R}\bm{M})_{ij} &= \frac{1}{\lambda}(\bm{M})_{ij}+\frac{1}{\lambda^2}(\bm{M}^2)_{ij}+\cdots, \\
    &= \lambda R_{ij}-\delta_{ij}.
\end{align}
In Appendix \ref{sec:review} and Section \ref{sec:hilbert}, we find that the trace of the right-hand-side can be rewritten using a SDE of the form
\be\begin{aligned}\label{eq:sde_general}
    \lambda R = K+\int \mu(\Vec{E})\ \frac{y(\Vec{E}) R}{1-y(\Vec{E})R}
\end{aligned}\ee
where $\Vec{E}$ and $\mu(\Vec{E})$ generalize the energy bases and measures introduced in Sections \ref{sec:review} and \ref{sec:hilbert}. We also generalize by defining
\begin{equation}
    D = \int\mu(\Vec{E}).
\end{equation} 
Several useful properties follow. Proofs are given at a physicist's level of rigor.
\begin{claim}
    The resolvent is finite on $\C/0$. 
\end{claim}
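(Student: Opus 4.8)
The plan is to read off the finiteness of $R(\lambda)$ directly from the Schwinger--Dyson equation \eqref{eq:sde_general}, exploiting the fact that its left-hand side can only blow up while the right-hand side stays bounded if the prefactor $\lambda$ vanishes. First I would recast \eqref{eq:sde_general} into a more transparent form using the algebraic identity $\frac{yR}{1-yR}=-1+\frac{1}{1-yR}$ together with the definition $D=\int\mu(\Vec{E})$, giving
\begin{equation}
\lambda R = K - D + \int \mu(\Vec{E})\,\frac{1}{1-y(\Vec{E})R}\,.
\end{equation}
This isolates a constant $K-D$ plus an integral whose integrand is suppressed precisely when $R$ becomes large.

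Next I would argue by contradiction. Suppose $R(\lambda_0)$ diverges at some $\lambda_0\neq 0$. For every $\Vec{E}$ with $y(\Vec{E})\neq 0$ the integrand obeys $\frac{1}{1-y(\Vec{E})R}\to 0$ as $|R|\to\infty$, so the integral term vanishes in the limit and the right-hand side approaches the finite value $K-D$. On the other hand, the left-hand side is $\lambda_0 R\to\infty$ because $\lambda_0\neq 0$, a contradiction. Hence $R$ can have no divergence anywhere except possibly at $\lambda=0$. I would then emphasize that the argument fails exactly at the origin: there $\lambda R$ can remain finite even as $R\to\infty$, which is precisely the mechanism allowing the pole that encodes the null states (with residue $K-\dim\Hr$) found in Section \ref{sec:hilbert}.

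To sharpen this into a clean statement I would also invoke the positivity established in Section \ref{sec:nullstates}: since $M$ is a Gram matrix it is positive semi-definite, so all its eigenvalues lie on $[0,\infty)$ and the ensemble-averaged resolvent is manifestly analytic away from the non-negative real axis. Combining this with the Schwinger--Dyson argument confines every non-analyticity of $R$ to $\mathbb{R}^+$, with the only genuine divergence sitting at the origin; on the rest of $\mathbb{C}\setminus\{0\}$, including the branch cut where $R(\lambda\pm i0)$ exhibits only a finite discontinuity set by the spectral density, $R$ is finite.

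The hard part will be justifying the interchange of the $|R|\to\infty$ limit with the $\Vec{E}$-integral, since as $\lambda$ is brought onto the support of the spectrum the denominator $1-y(\Vec{E})R$ can vanish for isolated $\Vec{E}$ and the integrand develops an integrable singularity rather than staying uniformly small. At a physicist's level of rigor I would dispatch this by noting that for $\lambda\notin\mathbb{R}$ one has $|1-yR|\ge y\,|\operatorname{Im}R|>0$, so the integrand is bounded and dominated convergence applies directly; the boundary values on the cut are then recovered by continuity, their finiteness following from the finiteness of the interior limit together with the finiteness of $\operatorname{Im}R(\lambda\pm i0)$.
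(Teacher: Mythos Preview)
Your argument is correct and uses essentially the same approach as the paper: both argue by contradiction from the Schwinger--Dyson equation that the right-hand side stays bounded as $R\to\infty$, so any divergence must sit at $\lambda=0$. The paper's one-liner phrases this as ``$yR=1$ would require $y=0$, but $\gamma_\Delta$ never vanishes,'' which is the same observation as your rewriting $\lambda R = K-D+\int\mu/(1-yR)\to K-D$; your version is simply more explicit and adds the positivity and dominated-convergence caveats that the paper's physicist-level proof omits.
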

\begin{proof}
    It suffices to show that $\lambda R(\lambda)<\infty$ on $\C$. Suppose otherwise. Then there is some nonzero $\lambda$ such that $y R=1$ meaning $y$ vanishes. This can only happen if $\gamma_\Delta$ vanishes but this is impossible by the properties of the $\Gamma$ function (c.f. equation \eqref{eq:gamma}).
\end{proof}
\noindent As a corollary we have that the function $f_y(\lambda)=\frac{y R}{1-y R}$ has no poles for $y\in\R^+$.
\begin{claim}
    The resolvent has the following structure as $\lambda\to 0$
    \begin{equation}
        R(\lambda)\sim\begin{cases}
            -R_0(K)&\qquad\text{if } K\leq D \\
            \frac{K-D}{\lambda}, &\qquad\text{otherwise}
        \end{cases}
    \end{equation}
    where $R_0(K)$ is a finite positive function on $0\leq K\leq D$.
\end{claim}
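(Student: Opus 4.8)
The plan is to analyze the behavior of the trace of the resolvent $R(\lambda)$ as $\lambda \to 0$ directly from the Schwinger-Dyson equation \eqref{eq:sde_general}, treating the two cases $K \leq D$ and $K > D$ separately. The key structural input is that the integrand $f_y(\lambda) = \frac{yR}{1-yR}$ has no poles for $y \in \mathbb{R}^+$ (the corollary to the previous claim), so any non-analyticity of $R$ at $\lambda = 0$ must come from the behavior of $R$ itself, not from a divergence of the integrand at fixed $y$.

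First I would suppose that $R(\lambda)$ stays finite as $\lambda \to 0$ and see whether the SDE admits a consistent solution. Setting $\lambda = 0$ in \eqref{eq:sde_general} gives the candidate equation
\be
0 = K + \int \mu(\vec{E}) \, \frac{y(\vec{E}) R_0}{1 - y(\vec{E}) R_0}\,,
\ee
where I write $R(0) = -R_0$ with $R_0$ to be determined. Rearranging, this reads $K = \int \mu(\vec{E}) \frac{y R_0}{1 + y R_0}$ (after flipping the sign of $R_0$ as in the claim). The right-hand side, viewed as a function of $R_0 \geq 0$, vanishes at $R_0 = 0$ and increases monotonically, approaching $\int \mu(\vec{E}) = D$ as $R_0 \to \infty$ (since each summand $\frac{y R_0}{1 + y R_0} \to 1$). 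By monotonicity and the intermediate value theorem, a finite positive solution $R_0(K)$ exists precisely when $0 \leq K < D$, and it is a continuous positive function of $K$ on this range; this establishes the first case. One should check that $R_0$ remains bounded away from the branch cut and that this is the physically relevant root consistent with the large-$\lambda$ behavior $R \sim K/\lambda$.

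For the second case $K > D$, the finite-$R_0$ solution no longer exists because the integral saturates at $D < K$, so $R$ cannot remain bounded. The natural ansatz is then that $R$ develops a simple pole, $R(\lambda) \sim \frac{c}{\lambda}$ as $\lambda \to 0$. Substituting into \eqref{eq:sde_general}, when $R \to \infty$ each summand $\frac{yR}{1-yR} \to -1$, so the integral tends to $-\int \mu(\vec{E}) = -D$. The SDE then becomes $\lambda R \to K - D$, which forces the residue $c = K - D$, giving $R(\lambda) \sim \frac{K-D}{\lambda}$. This is precisely the structure responsible for the null states and the Hilbert space dimension computation: the residue counts $K - D$ null directions, consistent with the earlier resolvent analyses in Sections \ref{sec:reviewclosed} and \ref{sec:closeduniverse}.

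I expect the main obstacle to be making the transition between the two regimes rigorous and confirming that the pole is simple (rather than a branch point or higher-order singularity) exactly at $K = D$. The delicate point is verifying that as $K$ increases through $D$, the solution $R_0(K)$ of the finite-branch equation diverges in such a way that it smoothly matches onto the pole ansatz, and that no other non-analytic behavior (e.g.\ from the branch cut on $\mathbb{R}^+$ colliding with $\lambda = 0$) interferes. This requires controlling the integral $\int \mu(\vec{E}) \frac{yR}{1-yR}$ uniformly near $\lambda = 0$, which hinges on the measure $\mu$ and the weight $y$ being well-behaved (finite, positive) over the restricted energy window — a fact guaranteed by the positivity of $\gamma_\Delta$ and the finiteness of the density of states $\rho_0$ established in \eqref{eq:gamma} and \eqref{eq:rhonot}. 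The rest is a straightforward monotonicity-and-continuity argument.
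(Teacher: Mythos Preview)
Your proposal is essentially the same approach as the paper's: plug a finite ansatz $R(0)=-R_0$ into the SDE, observe that $K=\int\mu\,\frac{yR_0}{1+yR_0}$ has a solution only for $K\leq D$, then for $K>D$ posit a pole and read off the residue $K-D$. The one place the paper goes slightly further is the point you flag as your main obstacle: rather than only checking the simple-pole ansatz, the paper substitutes a general $R\sim R_{0,n}/\lambda^n$ into the SDE and observes $R_{0,n}\sim\lambda^{n-1}(K-D)$, which forces $n=1$ and rules out higher-order singularities directly.
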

\begin{proof}
    By definition, if $R$ is finite at $\lambda=0$ then it is real and non-positive. Suppose that $-\infty<R(0)<0$ and let $R_0=-R(0)$. The SDE becomes 
    \begin{equation}
        \int\mu(\Vec{E})\ \frac{y R_0}{1+y R_0}=K.
    \end{equation}
    Since $yR_0>0$, the integrand is less than one and this ansatz can only work for $K\leq D$. In this domain we take this as a definition for $R_0(K)$. Now suppose that we encountered a singularity so that
    \begin{equation}
        \lim_{\lambda\to 0}R\sim\frac{R_{0,n}}{\lambda^n}
    \end{equation}
    with $R_{0,n}\neq 0$. By the SDE
    \begin{equation}
        R_{0,n} \sim \lambda^{n-1}\left(K+\int\mu(\Vec{E})\frac{y R_{0,n}}{\lambda^n-y R_{0,n}}\right)
    \end{equation}
    so at most $R$ has a simple pole at $\lambda=0$ with residue
    \begin{equation}
        R_{0,n} = \begin{cases}
            K-D &\qquad\textit{if } n=1, \\
            0 &\qquad\textit{otherwise}.
        \end{cases}
    \end{equation}
\end{proof}
\noindent As a corollary we have that
\begin{equation}
    f_y(0) = \begin{cases}
        \frac{-yR_0(K)}{1+yR_0(K)} &\qquad\textit{if } K\leq D, \\
        -1 &\qquad\textit{otherwise}.
    \end{cases}
\end{equation}
\noindent For completeness, we also verify
\begin{claim}
    The resolvent has no branch points for $\lambda\in\{0,\infty\}$.
\end{claim}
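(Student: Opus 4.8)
The plan is to show that $R(\lambda)$ is single-valued in a punctured neighborhood of each of the two points, since the absence of multivaluedness is exactly the absence of a branch point. At $\lambda=\infty$ this is immediate from the construction: the resolvent is defined by the Laurent series $R(\lambda)=\sum_{n\ge 0}\overline{\Tr \bm{M}^n}\,\lambda^{-(n+1)}$, which converges for $|\lambda|$ larger than the spectral radius of $\bm{M}$. Hence $R$ is a holomorphic function of $w=1/\lambda$ near $w=0$, in particular single-valued there, so there is no branch point at infinity.

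The substantive part is $\lambda=0$, where I would invoke the holomorphic implicit function theorem applied to the Schwinger-Dyson equation \eqref{eq:sde_general}. Writing $F(R,\lambda)\equiv \lambda R-K-\int\mu(\Vec{E})\,\frac{y(\Vec{E})R}{1-y(\Vec{E})R}$, the resolvent is the solution branch of $F(R,\lambda)=0$ that continues from large $\lambda$, and I would split into the two regimes identified in the previous claim. For $K\le D$, where $R(0)=-R_0$ is finite, I would compute $\partial_R F|_{(-R_0,0)}=-\int\mu(\Vec{E})\,\frac{y}{(1+yR_0)^2}$, which is strictly negative because $y>0$ (a consequence of the nonvanishing of $\gamma_\Delta$, established through the $\Gamma$-function properties invoked in the first claim) and $\mu>0$. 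Since this derivative is nonzero, the implicit function theorem produces a unique holomorphic solution near $\lambda=0$; matching its value at $\lambda=0$ identifies it with $R$, so $R$ is holomorphic there and has no branch point.

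For $K>D$, where $R\sim (K-D)/\lambda$, I would instead work with the regular combination $G(\lambda)=\lambda R(\lambda)$, which satisfies $\tilde F(G,\lambda)\equiv G-K-\int\mu(\Vec{E})\,\frac{y G}{\lambda-yG}=0$ and obeys $G(0)=K-D\neq 0$. A direct computation gives $\partial_G\bigl(\frac{yG}{\lambda-yG}\bigr)=\frac{y\lambda}{(\lambda-yG)^2}$, which vanishes at $\lambda=0$, so $\partial_G\tilde F|_{(K-D,0)}=1\neq 0$. The implicit function theorem then makes $G$ holomorphic in a neighborhood of $\lambda=0$, whence $R=G/\lambda$ is meromorphic with at worst a simple pole and again no branch point.

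The main obstacle I anticipate is not any single computation but the verification of the non-degeneracy hypothesis of the implicit function theorem in each regime together with the bookkeeping that the local analytic branch it produces is genuinely the analytic continuation of the resolvent and not a spurious root of the Schwinger-Dyson equation. The non-degeneracy rests on the strict positivity of the measure $\mu$ and of $y(\Vec{E})$, the latter following from the $\Gamma$-function structure of $\gamma_\Delta$ used in the earlier claims; this positivity simultaneously guarantees that the integrand defining $F$ (respectively $\tilde F$) is nonsingular at the relevant base point, so that the integral term is itself analytic in the implicit variable there. The identification with the continued resolvent is then handled by observing that the value of $R$ (or $G$) at $\lambda=0$ is already fixed by the preceding claim, and the implicit function theorem solution is the unique branch through that datum, so it must coincide with the analytically continued resolvent.
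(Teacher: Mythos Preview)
Your argument is correct and rests on the same underlying mechanism as the paper's: a branch point of the solution of $F(R,\lambda)=0$ is precisely where $\partial_R F$ vanishes, and both proofs rule this out at the two endpoints. The paper proceeds by differentiating the Schwinger--Dyson equation to obtain
\[
R=\frac{dR}{d\lambda}\left(\int\mu(\vec E)\,\frac{y}{(1-yR)^2}-\lambda\right),
\]
so that at a branch point the bracket must vanish; it then checks that this fails at $\lambda=0$ (would force $y=0$) and at $\lambda\to\infty$ (would force $yR=1$, already excluded). Your organization differs in two respects. At $\lambda=\infty$ you avoid the branch-point condition altogether and simply observe that the defining Laurent series in $1/\lambda$ converges, which is more direct. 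At $\lambda=0$ you split explicitly into $K\le D$ and $K>D$ and, in the latter case, pass to $G=\lambda R$ before invoking the implicit function theorem; this is more careful than the paper's one-line check, which tacitly treats only the finite-$R$ situation. Your computations of $\partial_R F|_{(-R_0,0)}$ and $\partial_G\tilde F|_{(K-D,0)}$ are correct, and the positivity of $y$ (from the $\Gamma$-function structure of $\gamma_\Delta$) together with the finite energy window makes the integrals and the integrand poles well controlled, so the analyticity hypotheses of the implicit function theorem hold.
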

\begin{proof}
    Differentiating the SDE gives
    \begin{equation}
        R(\lambda) = \frac{dR}{d\lambda}\left(\int\mu(E)\frac{y}{(1-yR(\lambda))^2} - \lambda\right).
    \end{equation}
    At a branch point, we must have
    \begin{equation}
        \lambda = \int\mu(E)\frac{y}{(1-yR(\lambda))^2}.
    \end{equation}
    For $\lambda=0$ this requires $y$ to vanish which is ruled out by equation \eqref{eq:gamma}. As $\lambda\to\infty$ this requires $yR=1$. This is ruled out by the first claim.
\end{proof}
\noindent As a corollary we have that $f_y$ has no branch points for $\lambda\in\{0,\infty\}$. We will assume without proof that $f_y$ has no branch cuts in a neighborhood of these points. This is not always true (e.g., a closed universe without observers) and should be checked case-by-case.

Plugging the trace of \eqref{eq:contour_integrand} into equation \eqref{eq:dimH_from_rm} using equation \eqref{eq:sde_general} gives
\begin{equation}
    \overline{\dim(\cH)} = \int \mu(\Vec{E})\ \oint_{C}\frac{dz}{2\pi i\lambda}\ f_y(\lambda) 
\end{equation}
where $C=C_\infty\sqcup C_0$ is a contour defined in a neighborhood without branch points. We pick up no contribution from the $C_\infty$ integral since the integrand goes as $\frac{1}{\lambda}R\sim\frac{1}{\lambda^2}$ for large $\lambda$. The residue at $\lambda=0$ is $-f_y(0)$ since $C_0$ wraps clockwise. At last, performing the phase space integral gives
\begin{equation}
    \overline{\dim(\cH)} = \begin{cases}
        K &\qquad\textit{if }K\leq D, \\
        D &\qquad\textit{otherwise}
    \end{cases}
\end{equation}
which confirms the results of the main text.

\section{The Hilbert space of two-sided black holes}
\label{app:reviewBH}

We now move on to studying the non-perturbative Hilbert space of two-sided black holes. We work in the limit $K\to\infty$, $e^{2S_0}\to\infty$ with $K/e^{2S_0}=O(1)$. We study spaces spanned by states carrying an unspecified matter index $i$, a fixed scaling dimension $\Delta$, and a fixed asymptotic boundary length $\beta_L=\beta_R=\beta/4$ to the left and right of the matter insertion
\begin{equation}
    \Hnonp(K)=\text{Span}\left\{\ket{q_i};\ i\in\{1,\cdots,K\}\right\}.
\end{equation}
As discussed in the main text, these conventions are established purely for notational convenience.\footnote{
For example, one could equivalently define
\begin{equation}
    \Hnonp(K)=\text{Span} \left\{\ket{q_i}=\ket{\Delta_i;\beta_L=\beta_R=\beta/4};\ i\in\{1,\cdots, K\}\text{ and } \Delta_i\approx\Delta\right\}
\end{equation}
for fixed $\Delta$ and $\beta$. Basis independence is discussed in Appendix B of \cite{Boruch:2024kvv}.
}

Inner products are calculated using the topological expansion
\begin{equation}
    \overline{\braket{q_i}{q_j}} = \inlinefig{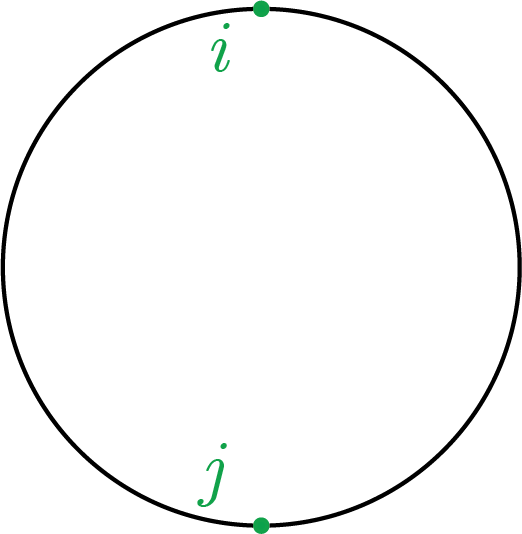} = \delta_{ij}\left(\inlinefig{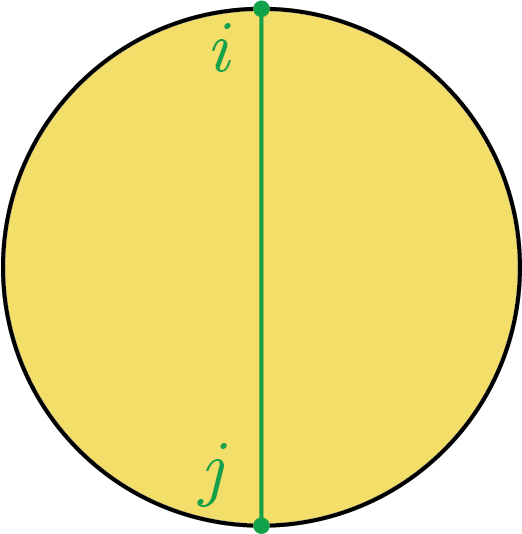} + \inlinefig{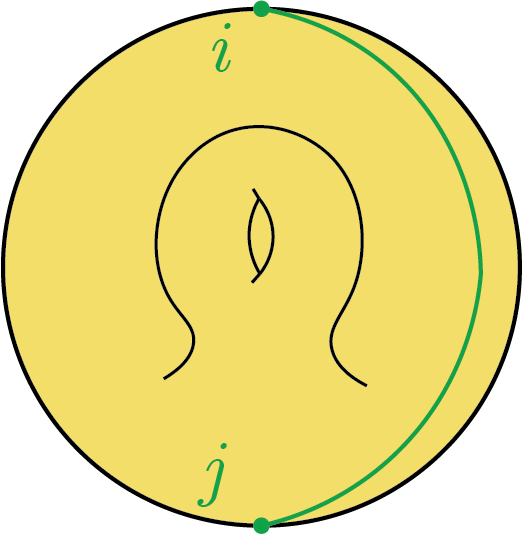} + \cdots\right)
\end{equation}
Note how the asympotic boundaries are forced to connect while the matter insertions float freely until the geometry is filled in with the gravitational path integral. The second moment is calculated similarly
\be\begin{aligned}
    \overline{|\braket{q_i}{q_j}|^2} = \inlinefig{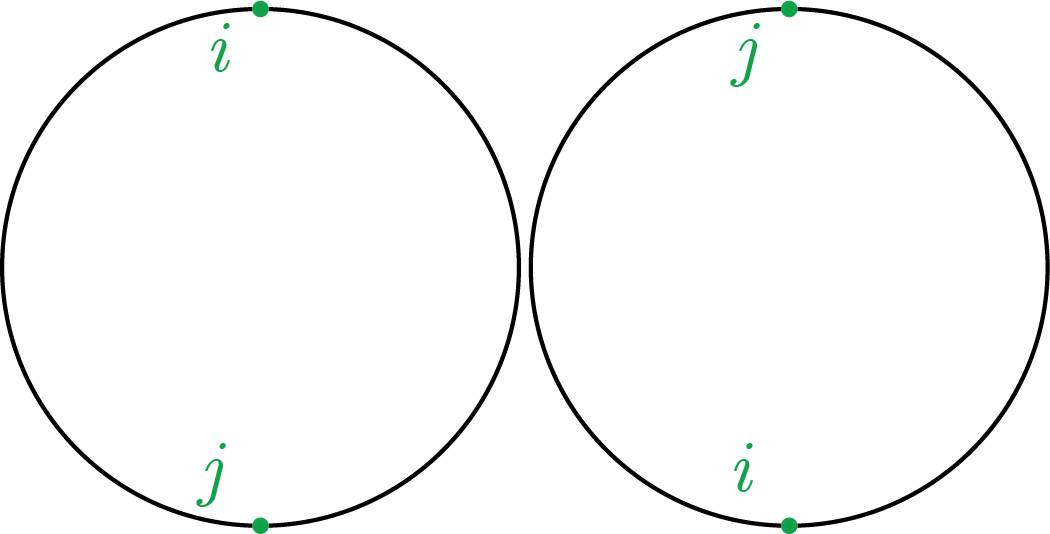}&=\delta_{ij}\left(\inlinefig{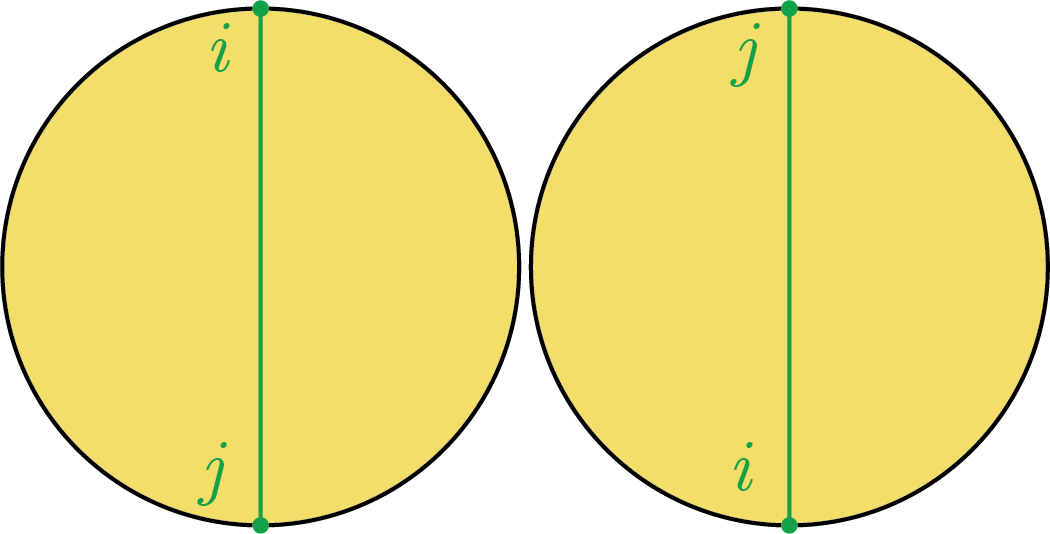}+O(e^{-2S_0})\right)  \\
    &+e^{-2S_0}\left(\inlinefig{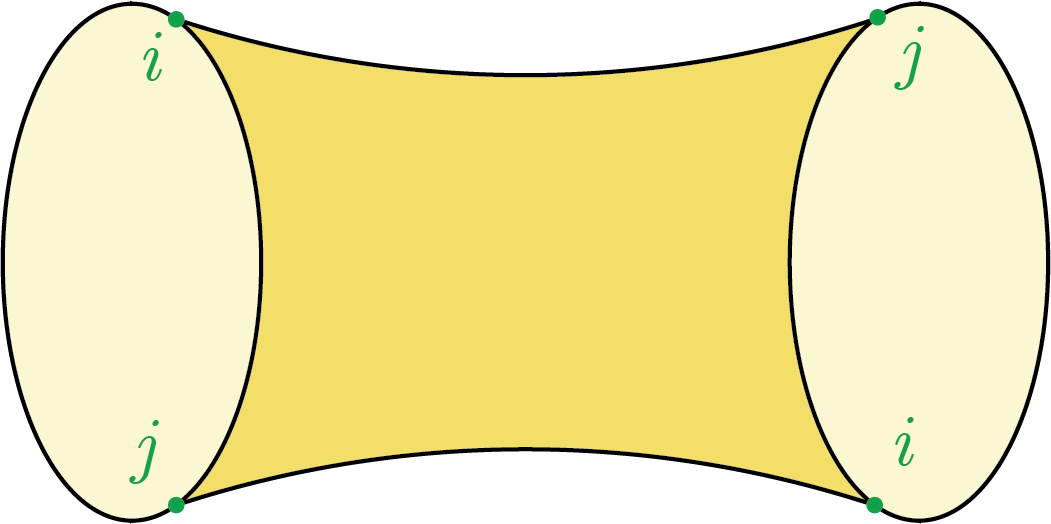}+O(e^{-2S_0})\right).
    \label{eq:squareoverlap}
\end{aligned}\ee
The pinwheel \eqref{eq:zn_matter_nonpert} is the generalization to the $n$-th moment
\begin{equation}
    \overline{\braket{q_{i_1}}{q_{1_2}}\cdots\braket{q_{i_n}}{q_{i_1}}}=\overline{Z_n}=\int ds_Lds_R\rho_0(s_L)\rho_0(s_R)\ y_\Delta^n(s_L,s_R)
\end{equation}
where 
\begin{equation}\label{eq:y_d}
    y_\Delta(s_L,s_R)=e^{-S_0}e^{-\frac{\beta s_L^2}{4}}e^{-\frac{\beta s_R^2}{4}}\gamma_{\Delta}(E_L,E_R).
\end{equation}
We are free to drop indices and lengths when defining the pinwheel thanks to our choice of states $\{\ket{q_i}\}$. 

Already, we can see why the Hilbert space dimension is reduced by the non-perturbative inner product. The first moment is proportional to $\delta_{ij}$ while higher moments are nonzero even when $i\neq j$. These extra terms correspond exclusively to wormhole corrections. We ignored them in Section \ref{sec:pert_h} because they are non-perturbatively suppressed by factors of $e^{-S_0}$. Undoing that choice is precisely what gives us a finite-dimensional Hilbert space. We introduce
\begin{equation}
    d\equiv e^{S_0} \int dE\rho_0(E)
\end{equation}
as the relevant parameter for discussions of the Hilbert space dimension, where $E=s^2/2$ and the relationship between $\rho_0(E)$ and $\rho_0(s)$ is given in equation \eqref{eq:rhonot}. For this parameter to be finite we need to work in an arbitrary but finite energy window $E\in (E_{min},E_{max})$. To leading order in $e^{S_0}$, the variance is given by the first connected wormhole geometry in equation \eqref{eq:squareoverlap}
\begin{equation}
    \sigma^2 = \overline{|\braket{q}{q_j}|^2}-\left|\overline{\braket{q_i}{q_j}}\right|^2 = O(e^{-2S_0}). 
\end{equation}
Using equation \eqref{eq:variance} gives us an estimate $\dim(\Hnonp)=O(d^2)$. This result is in agreement with our semiclassical understanding of this system as a two-sided black hole where each horizon carries an entropy $S_0$.\footnote{
In non-perturbative pure JT gravity, the constraint $H_0$ \eqref{eq:constraint} reduces the dimension of the Hilbert space by imposing that the left and right horizons are highly entangled. This gives $\dim(\Hnonp)=O(e^{S_0})$ \cite{IliLev24}. We saw in equation \eqref{eq:matter_constraint} that the inclusion of matter breaks the constraint, allowing us to explore the full Hilbert space of the two-sided black hole.
} 

We once again refer to a resolvent for a more detailed analysis. The $n$-th power of the Gram Matrix $(M^n)_{ij}$ is given by pinwheel boundary conditions where all but two adjacent indices are identified and summed over. This means that we can write the resolvent diagrammatically as 
\be\begin{aligned}
    \inlinefig[5]{Figures/App_A/rij300ppi.png} &= \inlinefig[5]{Figures/App_A/dij300ppi.png}+\inlinefig[5]{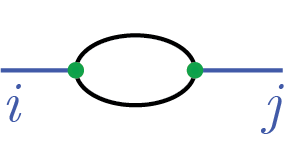}+\inlinefig[5]{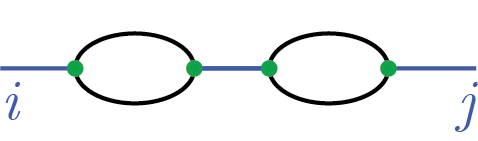}+\cdots \\
    &= \inlinefig[5]{Figures/App_A/dij300ppi.png}+\inlinefig[5]{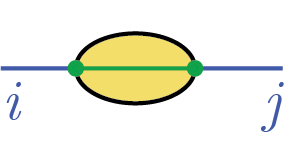}+ \left(\substack{
        \inlinefig[3.75]{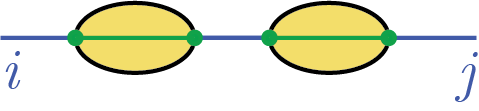} \\
        + \\
        \inlinefig[8]{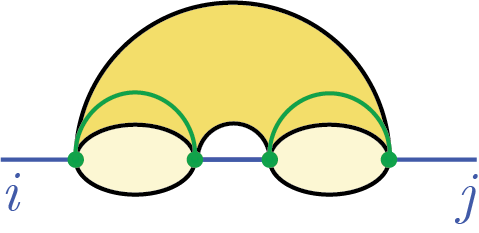} \\
        + \\
        \inlinefig[8]{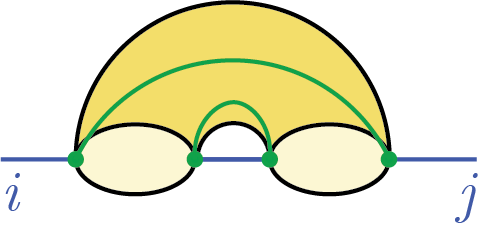}
    }\right)
    +\cdots
\end{aligned}\ee
We can reorganize this sum recursively based on how many boundaries the first boundary connects to. This yields the following Schwinger-Dyson equation (SDE)
\be\begin{aligned}
    \inlinefig[5]{Figures/App_A/rij300ppi.png} &=\inlinefig[5]{Figures/App_A/dij300ppi.png}+\inlinefig[5]{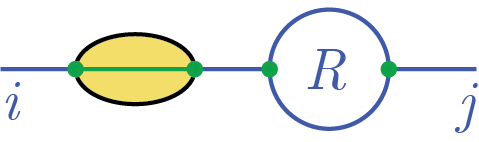}+\\
    &\qquad \inlinefig[13]{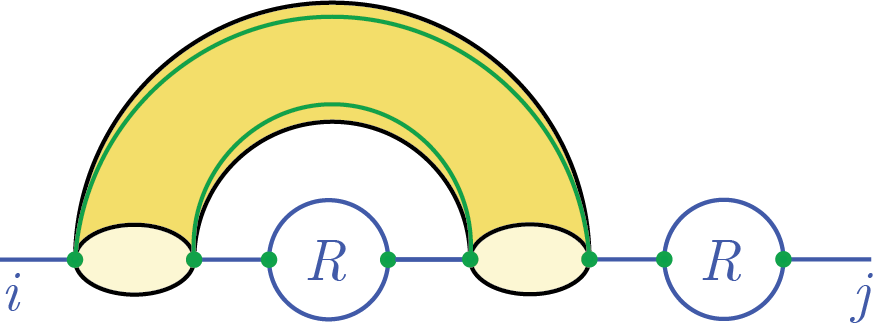}+\cdots
\end{aligned}\ee
For a fixed topology, we are ignoring terms like
\begin{equation}
    \inlinefig{Figures/App_A/bh_m2ij2300ppi.png}
\end{equation}
because they are subleading in $K$ compared to a cyclic contractions of the indices. This means that internal factors of the resolvent are traced over. 

The diagrams translate to the following equations
\be\begin{aligned}
    \lambda \overline{R_{ij}(\lambda)} &= \delta_{ij}+\sum_{n=1}^\infty Z_nR^{n-1}(\lambda)R_{ij}(\lambda), \\
    &= \delta_{ij}+e^{2S_0}\sumint ds_Lds_R\rho(s_L)\rho(s_R)\ y_\Delta^n(s_L,s_R) R^{n-1}(\lambda)\overline{R_{ij}(\lambda)}.
\end{aligned}\ee
Tracing over the external indices gives
\be\begin{aligned}\label{eq:sde_algebraic}
    \lambda R &= K+e^{2S_0}\sumint ds_Lds_R\rho(s_L)\rho(s_R)\ y_\Delta^n(s_L,s_R) R^{n}(\lambda), \\
    &= K+e^{2S_0}\int ds_Lds_R\rho(s_L)\rho(s_R)\ \frac{y_\Delta R(s_L,s_R)}{1-y_\Delta(s_L,s_R) R(\lambda)}.
\end{aligned}\ee
This SDE is of the form described in equation \ref{eq:sde_general}. By the results of Appendix \ref{app:res_replica}, we obtain\footnote{
    We could also compute equation \eqref{eq:dimH_from_r} using the integration contour given in Figure \ref{fig:contour}. This calculation is very similar to that explicitly carried out in Section \ref{sec:closeduniverse}, to which we refer for details.
} 
\begin{equation}
    \overline{\dim(\Hnonp(K))}=\begin{cases}
        K &\qquad \text{if }K\leq  d^2, \\
        d^2 &\qquad otherwise.
    \end{cases}
    \label{eq:two-sided-BH-dimension}
\end{equation}
For $K<d^2$, the $\{\ket{q_i}\}$ states span a $K$-dimensional subspace of $\Hnonp$. When $K>d^2$, the states become overcomplete and the subspace dimension saturates at $\dim(\Hnonp(K))=d^2$. Since this result is independent of basis, we conclude that $\dim(\Hnonp)=d^2$. The difference between the number of states and the dimension of the span is $K-d^2$, the residue of the resolvent at $\lambda=0$. Equivalently, the Gram matrix $M$ has eigenvalue $\lambda=0$ with multiplicity $K-d^2$. We call a state in $\ker(M)$ a null state \cite{Penington:2019kki,IliLev24}. A detailed description of these states is provided in Section \ref{sec:nullstates}.

\section{Factorisation of the Hilbert space in the presence of an observer}
\label{app:factorisation}

In this Appendix, we will show that the non-perturbative quantum gravity relational Hilbert space $\Hr$ factorises as discussed at the end of Sections \ref{sec:closeduniverse} and \ref{sec:twosided}.

\subsection{Closed universe}
\label{app:closedfactorisation}

Given the resolvent analysis carried out in Section \ref{sec:closeduniverse}, we can now show that the non-perturbative, relational Hilbert space $\Hr$ for JT gravity coupled to matter in a closed universe factorises: 
\be
\Hr=\Hrl\otimes\Hrr.
\ee
Here $\Hrl$ and $\Hrr$ denote the Hilbert spaces to the left and right of the observer. To show this, let us follow \cite{Boruch:2024kvv} and show that\footnote{Technically, we will show this relationship using the gravitational path integral, and therefore in terms of averages over the dual random matrix ensemble.}
\begin{equation}
    \tr_{\Hr}\left(k_lk_r\right)=\tr_{\Hrl}(k_l)\tr_{\Hrr}(k_r),
\end{equation}
where $k_l$ and $k_r$ are operator insertions to the left and right of the observer. In particular, we will consider insertions of $k_l$ and $k_r$ on an asymptotic boundary, and focus on the simple case in which $k_l=e^{-\beta_l H_l}$ and $k_r=e^{-\beta_r H_r}$, see Figure \ref{fig:kinsertions}, where $H_l$ and $H_r$ are the generators of translations along the asymptotic boundary on the left and right of the observer. Our results can be generalized to arbitrary operators $k_l$, $k_r$, but this requires dressing the operators to the observer's worldline, as we discussed in Section \ref{sec:observables}.\footnote{In general, when considering a correlation function $\langle\psi_i|k_lk_r|\psi_j\rangle$, we need to specify where the operators are acting with respect to the time read by a clock along the worldline. For instance, we must specify whether they are acting on the boundary associated with the bra or the ket. We will clarify this procedure in Section \ref{sec:observables}. The operators $e^{-\beta H}$ we consider here do not require dressing for our purposes because changing the length of an asymptotic boundary in a bra or the corresponding ket yields the same result for the overlap between the two states and its moments, which is all we will need. This is manifest from equation \eqref{eq:tildexk}, which only depends on the sum of the lengths of the bra and ket asymptotic boundaries to the left and right of the observer.} If we start with one of the states $\ket{\psi_i}$, $i=1,..,K$ considered above---for which the boundary length is fixed to be $\beta$---the boundary associated with the resulting state $|\psi_{i,\tilde{\beta}}\rangle=k_lk_r\ket{\psi_i}$
will then have length $\tilde{\beta}=\beta+\beta_l+\beta_r$. 

\begin{figure}[h]
    \centering
    \includegraphics[width=0.7\linewidth]{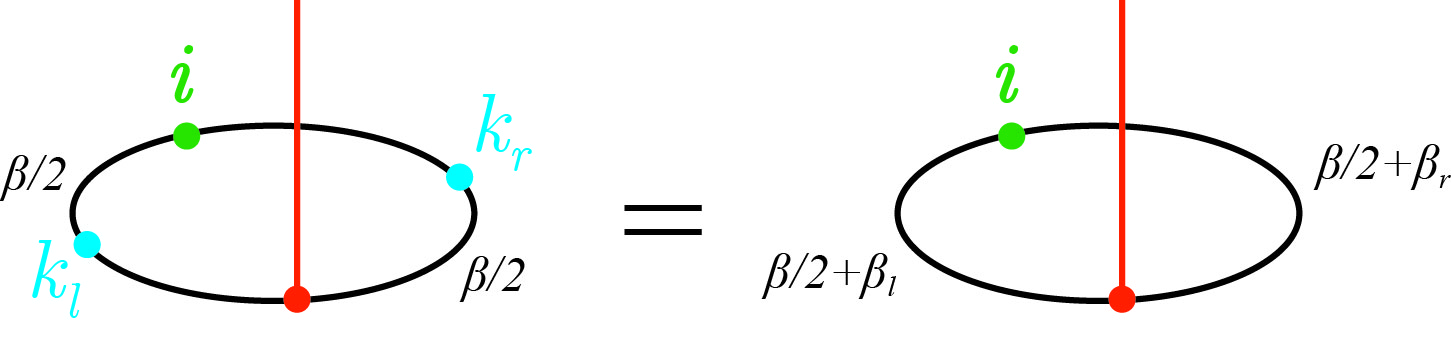}
    \caption{Boundary conditions for a closed universe state prepared by the insertion of the observer, a matter operator of flavor $i$, and operators $k_l=e^{-\beta_lH_l}$ and $k_r=e^{-\beta_rH_r}$ to the left and right of the observer. The insertion of $k_l$ and $k_r$ in a state with boundary length $\beta/2$ to the left and right of the observer (left picture) is equivalent to a state with boundary length $\beta/2+\beta_l$ to the left of the observer and $\beta/2+\beta_r$ to the right of the observer (right picture).}
    \label{fig:kinsertions}
\end{figure}

The trace over the non-perturbative, relational Hilbert space spanned by $K$ matter basis states $\Hr(K)$ is given by \cite{Boruch:2024kvv}
\begin{equation}
    \tr_{\Hr(K)}(k_lk_r)=\lim_{n\to -1} \sum_{ij}\left(M^n\right)_{ij}\langle \psi_j|k_lk_r|\psi_i\rangle,
\end{equation}
where $M_{ij}$ is again the Gram matrix of overlaps and $M^{-1}$ is its generalized inverse.
To relate the trace to the resolvent, we can contract both sides of equation \eqref{eq:r} with $\langle\psi_j|k_lk_r|\psi_k\rangle$ and obtain
\begin{equation}
    \sum_jR_{ij}(\lambda)\langle\psi_j|k_lk_r|\psi_k\rangle=\frac{1}{\lambda}\sum_{n=0}^\infty \frac{\sum_j\left(M^n\right)_{ij}\langle\psi_j|k_lk_r|\psi_k\rangle}{\lambda^n}.
    \label{eq:traceres}
\end{equation}
The trace can then be obtained by the residue integral
\begin{equation}
    \tr_{\Hr(K)}(k_lk_r)=\lim_{n\to -1}\frac{1}{2\pi i}\oint_{C} d\lambda\lambda^n \sum_{ij}R_{ij}(\lambda)\langle\psi_j|k_lk_r|\psi_i\rangle.
    \label{eq:traceresolventclosed}
\end{equation}
The integral here is on the same contour $C=C_0\cup C_\infty$ depicted in Figure \ref{fig:contour}.\footnote{Also in this case, the contour must exclude the origin because we must exclude any vanishing eigenvalues when defining the generalized inverse.} 
The integrand in equation \eqref{eq:traceresolventclosed} can be obtained by a diagrammatic expansion of equation \eqref{eq:traceres}. This is very similar to the one used above for the resolvent. The main difference is that geometries contributing the $n$-th term in the sum have $2n+2$ boundaries, with $2n+1$ boundaries of length $\beta$ and one boundary of length $\tilde{\beta}$.
The leading connected contribution is given by a genus-zero pinwheel with an observer insertion and a matter insertion on each of the $2n+2$ boundaries, depicted in Figure \ref{fig:kpinwheelclosed}.
\begin{figure}
    \centering
    \includegraphics[width=0.4\linewidth]{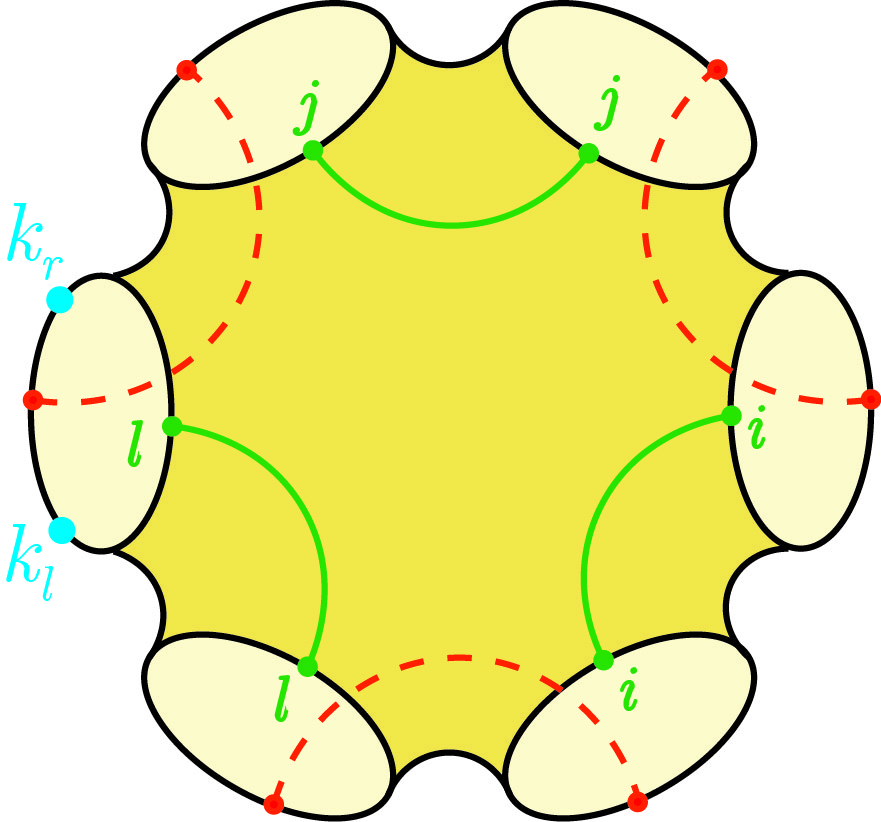}
    \caption{Genus-zero pinwheel geometry contributing to the $n=2$ term $\overline{\langle\psi_l|\psi_i\rangle\langle\psi_i|\psi_j\rangle\langle\psi_j|k_lk_r|\psi_l\rangle}$ on the right hand side of equation \eqref{eq:traceres}.}
    \label{fig:kpinwheelclosed}
\end{figure}
This geometry gives a contribution\footnote{Here we label the energies by $l$ and $r$ to indicate the patches to the left and right of the observer. These are the same patches that we labeled by 1 and 2 in the calculation of the Hilbert space dimension in Section \ref{sec:closeduniverse}, see the caption of Figure \ref{fig:closedstate}.}
\begin{equation}
    Z_{n+1}^{\textrm{closed}}(k_l,k_r)=e^{2S_0}\int ds_lds_r\rho_0(s_l)\rho_0(s_r)\tilde{x}^n(s_l,s_r)\tilde{x}_k(s_l,s_r)
\end{equation}
where $\tilde{x}$ is given in equation \eqref{eq:tildex} and $\tilde{x}_k$ is defined as
\begin{equation}
    \tilde{x}_k(s_l,s_r)=e^{-2S_0}e^{-\frac{(\beta+\beta_l)s_l^2}{2}}e^{-\frac{(\beta+\beta_r)s_r^2}{2}}\gamma_{\Delta_O}(s_l,s_r)\gamma_{\Delta_m}(s_l,s_r).
    \label{eq:tildexk}
\end{equation}
After setting $i=k$ and summing over $i$, the Schwinger-Dyson equation obtained from the diagrammatic expansion of equation \eqref{eq:traceres} reads
\begin{equation}
\begin{aligned}
    \overline{\sum_{ij}R_{ij}(\lambda)\langle\psi_j|k_lk_r|\psi_i\rangle}&=\sum_{n=0}^\infty Z_{n+1}^{\textrm{closed}}(k_l,k_r)R^{n+1}(\lambda)\\
    &=e^{2S_0}\int ds_lds_r\rho_0(s_l)\rho_0(s_r)\frac{R(\lambda)\tilde{x}_k(s_l,s_r)}{1-R(\lambda)\tilde{x}(s_l,s_r)}.
    \end{aligned}
\end{equation}
We can now finally compute $\overline{\tr_{\Hr(K)}(k_lk_r)}$ using equation \eqref{eq:traceresolventclosed}:
\begin{equation}
    \overline{\tr_{\Hr(K)}(k_lk_r)}=\frac{e^{2S_0}}{2\pi i}\int ds_lds_r\rho_0(s_l)\rho_0(s_r)\tilde{x}_k(s_l,s_r)\oint_{C}\frac{d\lambda}{\lambda}\frac{R(\lambda)}{1-R(\lambda)\tilde{x}(s_l,s_r)}.
\end{equation}
Because $R(\lambda)\sim K/\lambda^2$ for $\lambda\to\infty$, the integral over the counterclockwise contour $C_\infty$ at infinity vanishes, and we can focus solely on the integral over the clockwise contour $C_0$ around $\lambda=0$. As we have discussed above, if we consider $K<d^2$ basis states, $R(\lambda)$ has no poles. We can thus write $R(0)=-R_0(K)$ \cite{Boruch:2024kvv} and obtain
\begin{equation}
    \overline{\tr_{\Hr(K)}(k_lk_r)}=e^{2S_0}\int ds_lds_r\rho_0(s_l)\rho_0(s_r)\frac{R_0(K)\tilde{x}_k(s_l,s_r)}{1+R_0(K)\tilde{x}(s_l,s_r)} \quad \quad \quad K<d^2.
\end{equation}
In this case, the trace clearly does not factorise due to the non-trivial dependence of $\gamma_{\Delta_O}(s_l,s_r)$ and $\gamma_{\Delta_m}(s_l,s_r)$ on $s_l$ and $s_r$, see equation \eqref{eq:gamma}. The reason for this lack of factorisation is that $K<d^2$ states do not span the full quantum gravity relational Hilbert space $\Hr$, but rather an arbitrary subspace of $\Hr$, which need not factorise in general. 

On the other hand, for $K>d^2$, $\Hr(K)=\Hr$, namely, the $K$ states span the entire quantum gravity
relational Hilbert space. Note that for $K>d^2$, $R(\lambda)$ has a pole at $\lambda=0$ and therefore drops out of the integral over $C_0$. We then obtain
\begin{equation}
    \begin{aligned}
        \overline{\tr_{\Hr(K)}(k_lk_r)}&=\overline{\tr_{\Hr}(k_lk_r)}=e^{2S_0}\int ds_lds_r\rho_0(s_l)\rho_0(s_r)\frac{\tilde{x}_k(s_l,s_r)}{\tilde{x}(s_l,s_r)}\\[10pt]
        &=\left(e^{S_0}\int ds_l\rho_0(s_l)e^{-\frac{\beta_ls_l^2}{2}}\right)\left(e^{S_0}\int ds_r\rho_0(s_r)e^{-\frac{\beta_rs_r^2}{2}}\right)\\[10pt]
        &=\overline{\tr_{\Hrl}(k_l)}\overline{\tr_{\Hrr}(k_r)}\quad \quad \quad K>d^2
    \end{aligned}
\end{equation}
We then conclude that the full quantum gravity relational Hilbert space $\Hr$ factorises into a Hilbert space to the right of the observer and a Hilbert space to the left of the observer.

\subsection{Two-sided black hole}
\label{app:BHfactorisation}

Similarly, it is simple to argue that the Hilbert space $\Hr$ of two-sided black holes factorises into a tensor product of four Hilbert spaces
\begin{equation}
    \Hr=\HrR\otimes \Hrr\otimes \HrL\otimes \Hrl,
\end{equation}
where $\HrR$, $\HrL$ are associated with the right and left asymptotic boundaries, and $\Hrr$, $\Hrl$, with the right and left sides of the observer. Here we will only sketch the main steps of the argument, because it is very similar to that used to show the factorisation of $\Hr$ in the closed universe case.

We can consider the insertion of four operators $k_l=e^{-\beta_lH_l}$, $k_r=e^{-\beta_rH_r}$, $k_L=e^{-\beta_LH_L}$, $k_R=e^{-\beta_RH_R}$ in the preparation of a ket $\ket{\psi_{ii'}}$, $i,i'=1,...,K$ as shown in Figure \ref{fig:BHfactor}.
\begin{figure}[h]
    \centering
    \includegraphics[width=0.8\linewidth]{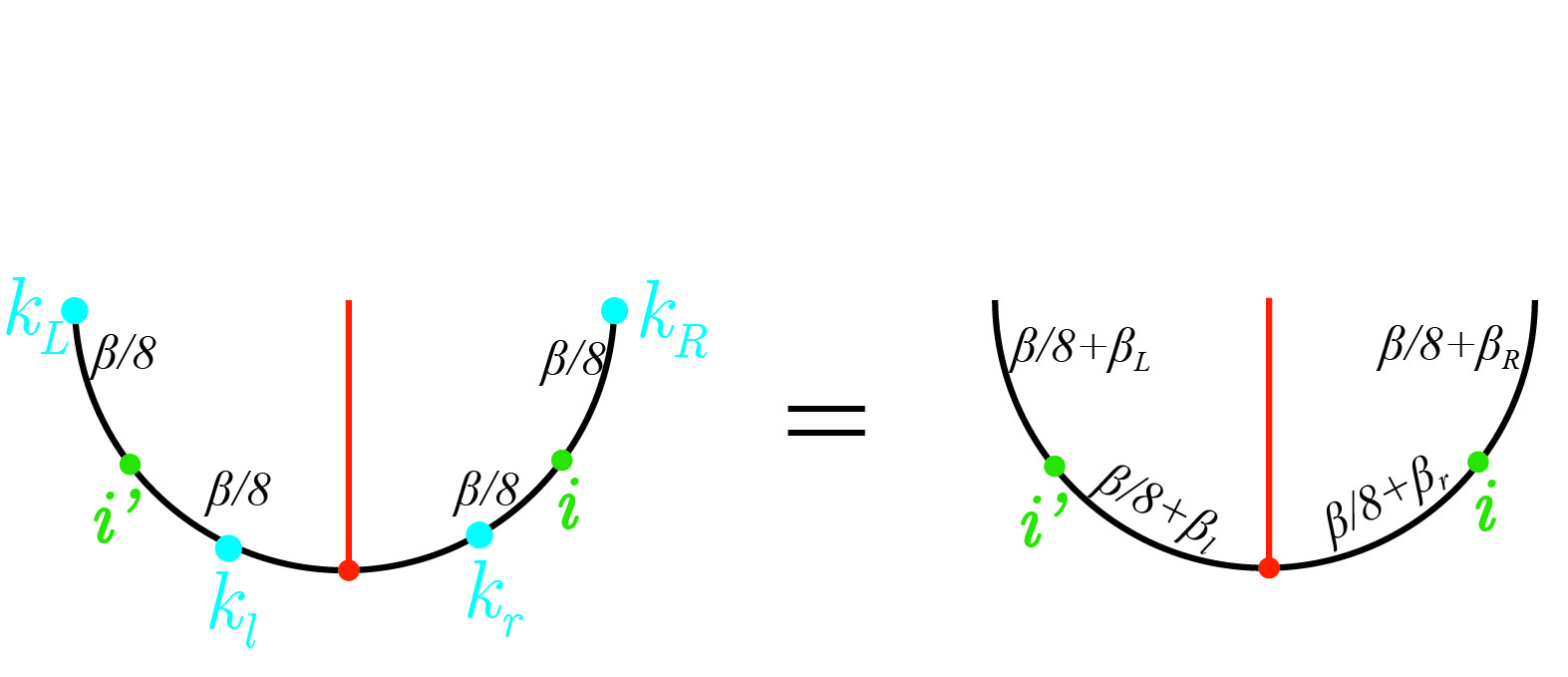}
    \caption{Boundary conditions for a two-sided black hole state prepared by the insertion of the observer, two matter operators to the left and right of the observer, and operators $k_l=e^{-\beta_lH_l}$, $k_r=e^{-\beta_rH_r}$, $k_L=e^{-\beta_LH_L}$, $k_R=e^{-\beta_RH_R}$. The insertion of $k_l$, $k_r$, $k_L$, $k_R$ in a state with every boundary segment of length $\beta/8$ (left picture) is equivalent to a state with boundary segments of length $\beta/8+\beta_l$, $\beta/8+\beta_r$, $\beta/8+\beta_L$, and $\beta/8+\beta_R$ (right picture).}
    \label{fig:BHfactor}
\end{figure}
We then have
\begin{equation}
    \sum_{jj'}R_{ii',jj'}(\lambda)\langle\psi_{jj'}|k_Lk_lk_Rk_r|\psi_{kk'}\rangle=\frac{1}{\lambda}\sum_{n=0}^{\infty}\frac{\sum_{jj'}\left(M^n\right)_{ii',jj'}\langle\psi_{jj'}|k_Lk_lk_Rk_r|\psi_{kk'}\rangle}{\lambda^n}
    \label{eq:reskBH}
\end{equation}
and the leading connected contribution to the $n$-th term in the sum comes from the genus-$n$ geometry with $2n+1$ boundaries depicted in Figure \ref{fig:kpinwheelBH}.
\begin{figure}[h]
    \centering
    \includegraphics[width=0.5\linewidth]{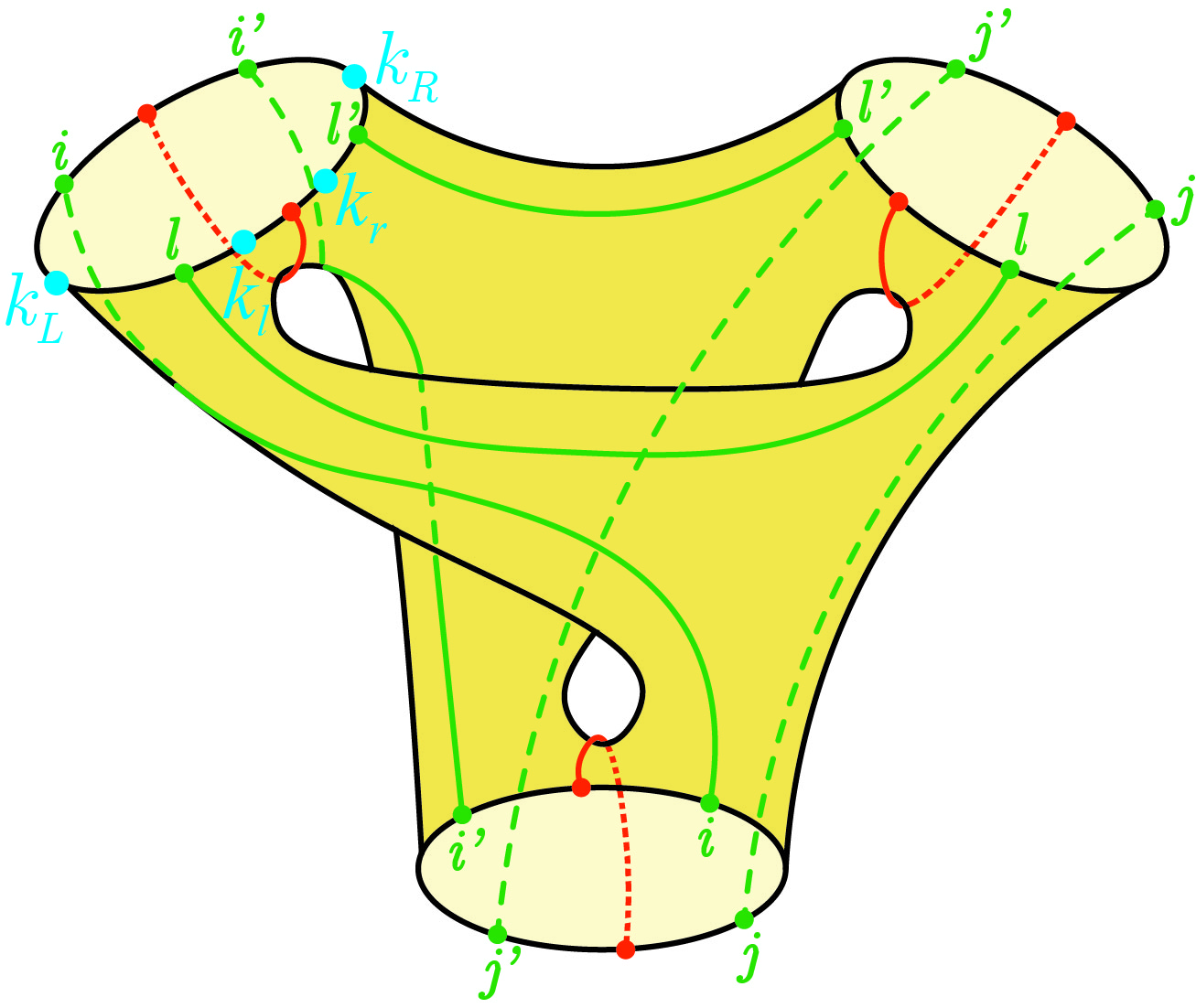}
    \caption{Genus $g=2$ geometry contributing to the $n=2$ term $\overline{\langle\psi_{ll'}|\psi_{jj'}\rangle\langle\psi_{jj'}|\psi_{ii'}\rangle\langle\psi_{ii'}|k_Lk_lk_Rk_r|\psi_{ll'}\rangle}$ on the right hand side of equation \eqref{eq:reskBH}.}
    \label{fig:kpinwheelBH}
\end{figure}
Its contribution is given by\footnote{$s_L$, $s_l$, $s_r$, $s_R$ here correspond respectively to $s_1$, $s_2$, $s_4$, $s_3$ in Section \ref{sec:twosided}, see the caption of Figure \ref{fig:2sidedstate}.} 
\begin{equation}
    Z_{n+1}^{BH}(k_L,k_l,k_R,k_r)=e^{4S_0}\int ds_Lds_lds_Rds_r\rho_0(s_L)\rho_0(s_l)\rho_0(s_R)\rho_0(s_r)\tilde{y}^n(s_L,s_l,s_R,s_r)\tilde{y}_k(s_L,s_l,s_R,s_r)
    \label{eq:zkBH}
\end{equation}
where $\tilde{y}(s_L,s_l,s_R,s_r)$ is given in equation \eqref{eq:tildey} and
\begin{equation}
    \tilde{y}_k(s_L,s_l,s_R,s_r)=e^{-3S_0}e^{-\frac{(\beta+4\beta_L)s_L^2}{8}}e^{-\frac{(\beta+4\beta_l)s_l^2}{8}}e^{-\frac{(\beta+4\beta_R)s_R^2}{8}}e^{-\frac{(\beta+4\beta_r)s_r^2}{8}}\gamma_{\Delta_m}(s_L,s_l)\gamma_{\Delta_m}(s_r,s_R)\gamma_{\Delta_O}(s_l,s_r).
\end{equation}
Using a diagrammatic expansion of equation \eqref{eq:reskBH} along with equation \eqref{eq:zkBH}, we obtain the Schwinger-Dyson equation
\begin{equation}
    \overline{\sum_{ii'jj'}R_{ii',jj'}(\lambda)\langle\psi_{jj'}|k_Lk_lk_Rk_r|\psi_{ii'}\rangle}=e^{4S_0}\int ds_Lds_lds_Rds_r\rho_0(s_L)\rho_0(s_l)\rho_0(s_R)\rho_0(s_r)\frac{R(\lambda)\tilde{y}_k(s_L,s_l,s_R,s_r)}{1-R(\lambda)\tilde{y}(s_L,s_l,s_R,s_r)}.
    \label{eq:integrandtraceBH}
\end{equation}
Similar to equation \eqref{eq:traceresolventclosed} in the closed universe case, we can then relate the left-hand side of equation $\eqref{eq:integrandtraceBH}$ to $\overline{\tr_{\Hr(K)}(k_Lk_lk_Rk_r)}$. Performing the contour integral, we find that the Hilbert space does not factorise if $K^2<d^4$. Again, this is due to the $K^2<d^4$ states spanning a subspace of $\Hr$, which need not factorise in general. On the other hand, if $K^2>d^4$, the $K^2$ states span the full quantum gravity relational Hilbert space, i.e. $\Hr(K)=\Hr$. In this case, we obtain
\begin{equation}
\begin{aligned}
    &\overline{\tr_{\Hr}(k_Lk_lk_Rk_r)}=e^{4S_0}\int ds_Lds_lds_Rds_r\rho_0(s_L)\rho_0(s_l)\rho_0(s_R)\rho_0(s_r)\frac{\tilde{y}_k(s_L,s_l,s_R,s_r)}{\tilde{y}(s_L,s_l,s_R,s_r)}\\[10pt]
    &=\left(e^{S_0}\int ds_L\rho_0(s_L)e^{-\frac{\beta_Ls_L^2}{2}}\right)\left(e^{S_0}\int ds_l\rho_0(s_l)e^{-\frac{\beta_ls_l^2}{2}}\right)\left(e^{S_0}\int ds_R\rho_0(s_R)e^{-\frac{\beta_Rs_R^2}{2}}\right)\left(e^{S_0}\int ds_r\rho_0(s_r)e^{-\frac{\beta_rs_r^2}{2}}\right)\\[10pt]
    &=\overline{\tr_{\HrL}(k_L)}\overline{\tr_{\Hrl}(k_l)}\overline{\tr_{\HrR}(k_R)}\overline{\tr_{\Hrr}(k_r)}.
    \end{aligned}
\end{equation}
This result signals the factorisation of $\Hr$ into four Hilbert spaces, with two of them ($\HrL$ and $\HrR$) associated with the two asymptotic boundaries of the two-sided black hole, and the other two ($\Hrl$ and $\Hrr$) associated with the two sides of the observer.

\section{Chern-Simons theory as a model of relational dynamics}
\label{app:chernsimons}

We find a practical and fully soluble example of a theory with constraints in Chern-Simons. Following closely the work of \cite{Witten:1988hf,Elitzur:1989nr}, the quantization procedure we outline has several analogues to our prescription for quantizing gravity in the presence of an observer. In particular, this example illustrates how the introduction of an observer, or ``source charge" as this modification is more commonly known in the Chern-Simons literature, changes the constraint equation and the associated Hilbert space of physically relevant states. 

We formulate the theory on a manifold $M=\R\times\Sigma$ with $\Sigma$ a 2-manifold. The action is
\begin{equation}
    S=\frac{k}{4\pi}\int_M \Tr[AdA+\frac{2}{3}A^3]
\end{equation}
where $A$ is a Lie-Algebra valued one-form $A=A^a_\mu T_a dx^\mu$ which serves as a connection over $M$ and $\Tr{\cdot}$ is its associated Killing form. For now we leave both the group $G$ and the manifold $\Sigma$ unspecified. Let us call the real-line in $M$ time and decompose the connection into a temporal part, labeled by the variable $t$, and spatial parts, labeled by the index $i$, so that 
\begin{equation}
    A=A_tdt+A_idx^i.
\end{equation}
We define $\Tilde{A}=A_idx^i$ and $d[\cdot]=dt\wedge\partial_t[\cdot]+\Tilde{d}[\cdot]$. With these conventions the action becomes
\begin{equation}\label{eq:cs}
    S=-\frac{k}{4\pi}\int_{M}\Tr[dt\wedge\Tilde{A}\wedge \partial_t\Tilde{A}]+\frac{k}{2\pi}\int_M\Tr[A_t\wedge (\Tilde{d}\Tilde{A}+\Tilde{A}\wedge\Tilde{A})]
\end{equation}
where we have unsupressed the wedge product for clarity. The derivative of the time component of the connection, $A_t$, does not enter into the action. Integrating it out imposes a constraint 
\begin{equation}\label{eq:cs_constraint}
    0=\Tilde{F}=\Tilde{d}\Tilde{A}+\Tilde{A}\wedge\Tilde{A}.
\end{equation}
The equations of motion constrain the connection to be flat on $\Sigma$. We refer to this equation as Gauss's law in analogy with electromagnetism.

After integrating out $A_t$ we are left with
\begin{equation}
    S_{CS}(A)=\frac{k}{4\pi}\int_{M}\Tr{dt\wedge\partial_t\Tilde{A}\wedge\Tilde{A} }.
\end{equation}
As we outlined in Section \ref{sec:discussion}, we have two options for quantizing this theory. The first is to elevate the Poisson bracket
\begin{equation}
    \left\{\Tilde{A}_i^a(x),\Tilde{A}_j^b(y)\right\}\propto\epsilon_{ij}\delta^{ab}\delta^{(2)}(x-y)
\end{equation}
to a commutator. This phase space includes connections which violate Gauss's law. In the two-sided black hole this would be like forgetting to entangle the two horizons. There, as here, neglecting the constraint causes us to overestimate the number of physical states. In principle, we can promote the constraint to an operator statement and solve the resulting differential equation, but in practice, this is quite difficult. We referred to this approach as ``quantize then constrain" in Section \ref{sec:relational}. Instead, it is easier to ``constrain then quantify" through geometric quantization, a process described in \cite{guillemin1982geometric,Witten:1988hf,Alekseev:1994nzg}. This technique is well-suited to the phase space of flat connections on a closed manifold which can easily be equipped with a symplectic structure, so we proceed by limiting the allowed field configurations to this subset. The resulting Hilbert space depends on $\Sigma$. 

For concreteness, we consider the case of Chern-Simons on a disk $(\Sigma=D)$ \cite{Elitzur:1989nr}. We solve the constraint by parametrizing the gauge field as 
\begin{equation}
    \Tilde{A} = -\Tilde{d}U U^{-1}
\end{equation}
for some function $U:\Sigma\to G$. Substituting this into equation \eqref{eq:cs} yields a 2D CFT known as the chiral Wess-Zumino-Witten (cWZW) model
\begin{equation}
    S_D(U)=\frac{k}{4\pi}\int_{\partial D\times\R}d\varphi dt\  \Tr [U^{-1}\partial_\varphi UU^{-1}\partial_t U] +\frac{k}{12\pi}\int_{D\times\R}\Tr [(U^{-1}dU)^3].
\end{equation}
Currents in cWZW come in representations of the Kac-Moody algebra of $G$. One can show that $A_\varphi$ is one such current. The model is redundant under transformations of the form 
\begin{equation}
    U\mapsto V(\phi)UW(t)
\end{equation}
with $V$ and $W$ maps from the circle and real-line into $G$ respectively. The identification makes the phase space a quotient of the loop group $LG$, the set of maps from $S^1$ into $G$. Since the action is first-order in time-derivatives the symplectic form is
\begin{equation}
    \omega_D=\frac{k}{4\pi}\oint \Tr [(U^{-1}\delta U)\partial_\varphi (U^{-1}\delta U)]
\end{equation}
where $\delta$ is the exterior derivative on phase space. It follows that the Hilbert space is the trivial representation of the Kac-Moody algebra.  

So far, we have only considered pure Chern-Simons. This is analogous to considering pure gravity. Of course, there is no way to perform experiments in such a theory since observations require a detector. In gravity, we call this detector the observer. In this setting, it's more fitting to refer to it as a source. But, in both cases, we have to modify the constraints to impose the presence of our detectors. Just as in $AdS_2$ gravitational detectors come in representation of the isometry group $SL(2,\R)$ (c.f. Section \ref{sec:review}), Chern-Simons detectors come in representations of the gauge symmetry. Call this representation $R$ and let $\{T^a\}_{a=1}^{\dim G}$ form a basis of this representation. Given $N$ sources, each at position $x_n$, we require
\begin{equation}\label{eq:cs_source_constraint}
    \frac{k}{8\pi}\epsilon^{ij}F^a_{ij}=\sum_{n=1}^N \delta^2(x-x_n)T^a_{(n)}.
\end{equation}
The left-hand side is the Chern-Simons equivalent of $H_0=H_L-H_R$. The constraint vanishes in the pure theory and is a function of the matter representation after coupling. However, unlike equation \eqref{eq:matter_constraint}, here we are forcing the theory to put the source at a fixed position. We can, in principle, impose this constraint before quantizing. This is reminiscent of our prescription introduced in Section \ref{sec:setup} of fixing the worldline of the observer in the gravitational path integral. Just as the observer's worldline is protected from topology change, the source is protected from the influence of nearby charges. We will see that, as with gravity, these types of constraints have significant consequences for the resultant Hilbert space.

Unfortunately, imposing equation \eqref{eq:cs_source_constraint} is even harder than imposing equation \eqref{eq:cs_constraint}. In the sourceless case, we relied on the fact that the space of flat connections on a closed manifold can easily be endowed with a symplectic structure. But the connection is no longer flat. What is worse, the classical phase space of connections must be non-commutative since generally, the $\{T^a\}$'s satisfies some non-abelian algebra. We can make progress by taking an approach intermediate between the two strategies outlined above \cite{Alekseev:1994nzg}. Suppose that equation \eqref{eq:cs_source_constraint} is a statement in some quantum theory. How would we go about finding a classical theory that realizes this equation? Away from the $\{x_n\}$ we would just have vanilla Chern-Simons, but the sources require special treatment. Let $T\subset G$ be the maximal torus in $G$.\footnote{
A torus $T'$ is a compact, connected, abelian subgroup of a group $G$. They are the Lie group equivalents of Cartan subalgebras. All torii look like $T=(S^1)^k=\R^k/\Z^k$ for some integer $k$. The maximal torus $T$ is such that $T\subset T'$ implies $T=T'$. The group $G/T$ is often called the flag manifold.
} The Borel-Weil-Bott theorem tells us that there is a unique symplectic structure $\omega_R$ we can put on $G/T$ such that $R$ is the Hilbert space associated with the phase space $G/T$. We realize the constraint by placing a copy of $G/T$ at each $\{x_n\}$. Quantizing $G/T$ using $\omega_R$ gives a distinct Hilbert space living on select worldlines in the spacetime manifold. This is exactly analogous to what we referred to earlier as $\HO$.  

All that remains is to choose a functional on this phase space, which gives $T^a_{(n)}$ after quantization. If $\lambda_n$ is the highest weight in $R$ then the desired functional is the Wilson line \cite{Elitzur:1989nr,Alekseev:1994nzg,Murayama:1989we}
\begin{equation}
    S_{x_n}(A,g_n) = \int d^3x\ \delta^2(x-x_n)\Tr[\lambda_n g_n^{-1}(t)(\partial_t+A_t)g_n(t)].
\end{equation}
It is clear that $g_n \in G/T$ since the integral is invariant under $g\mapsto gh$ for $h\in T$. The analogy to gravity is clear: this is an integral over the worldline of our detector like the one we used to introduce the observer in equation \eqref{eq:action-of-observer}. The full theory is given by the action 
\begin{equation}\label{eq:d13}
    S=S_{CS}+\sum_{n=1}^N S_{x_n}(A,g_n)
\end{equation}
and an associated constraint equation
\begin{equation}
    \frac{k}{2\pi}\Tilde{F}-\sum_{n=1}^N \delta^{2}(x-x_n)g_n(t)\lambda_n g_n^{-1}(t)=0.
\end{equation}
Observables are calculated via the path integral \cite{Murayama:1989we}
\be\begin{aligned}
    W_R(\Gamma) &= \int\mathcal{D}A\ e^{iS_{CS}}\Tr_R[ \mathcal{P}\{e^{\int_\Gamma A}\}], \\
    &= \int\mathcal{D}A\mathcal{D}g\ e^{iS},
\end{aligned}\ee
where $\Gamma=x_1(t)$. We calculate an expectation value for a fixed connection $A$ by only integrating over $g$. This is analogous to how we propose calculating observables in $\Hr$ by integrating over the gravitational degrees of freedom for some fixed observer. Note that we had no way of computing this observable before introducing the source since the trace over $R$ is outside of the exponentials. 

Let us study the example of the disk for the new action in equation \eqref{eq:d13} in the presence of a source \cite{Elitzur:1989nr}. Much is the same as before. The action is modified by the addition of a $\lambda$-dependent term
\begin{equation}
    S_D \mapsto S_D+\frac{1}{2\pi}\int_{\partial D\times\R}\Tr [\lambda U^{-1}\partial_t U].
\end{equation}
We can continue parametrizing the gauge field using $U$, only now we need
\begin{equation}
    U\mapsto U\exp{\frac{1}{k}g(t)\lambda g^{-1}(t)\varphi}
\end{equation}
to satisfy the equations of motion. The global symmetry on $U$ also remains intact, although now we also need $W(t)$ to commute with $\lambda$. The symplectic form is also slightly modified
\begin{equation}
    \omega_D \mapsto \omega_D+\frac{1}{2\pi}\oint\Tr[\lambda(U^{-1}\delta U)^2].
\end{equation}
The theory is no longer cWZW, but quantization still gives a representation of the Kac-Moody algebra, now with the highest weight $\lambda$. Adding multiple sources gives a tensor product of the associated representations. 

The disk is a slightly misleading example. It is not always true that the addition of a source changes the Hilbert space. For example, the Hilbert space of Chern-Simons on a sphere does not change if we add one or two marked points, assuming we choose representations compatible with charge conservation. But the core point remains. The introduction of a detector changes what defines physical states. That detector should be considered a protected degree of freedom that is not subject to the usual dynamics of the theory. In gravity, as in the Chern-Simons examples outlined above, this can lead to a marked difference in the size and structure of the Hilbert space. 

\bibliographystyle{jhep}
\bibliography{references}
\end{document}